\newtheorem{theorem}{Theorem}
\newtheorem{lemma}{Lemma}
\newtheorem{assumption}{Assumption}
\newtheorem{definition}{Definition}
\newtheorem{corollary}{Corollary}
\let\oldproofname=\proofname
\renewcommand{\proofname}{\rm\bf{\oldproofname}}
\title{Identification and Estimation of Nonseparable Triangular Equations with Mismeasured Instruments}
\author{Shaomin Wu}
\begin{document}
\abovedisplayskip=15pt
\belowdisplayskip=15pt
\maketitle
\centerline{\textbf{Abstract}}
In this paper, I study the nonparametric identification and estimation of the marginal effect of an endogenous variable $X$ on the outcome variable $Y$, given a potentially mismeasured instrument variable $W^*$, without assuming linearity or separability of the functions governing the relationship between observables and unobservables. To address the challenges arising from the co-existence of measurement error and nonseparability, I first employ the deconvolution technique from the measurement error literature to identify the joint distribution of $Y, X, W^*$ using two error-laden measurements of $W^*$. I then recover the structural derivative of the function of interest and the ``Local Average Response'' (LAR) from the joint distribution via the ``unobserved instrument'' approach in \cite{matzkin2016independence}. I also propose nonparametric estimators for these parameters and derive their uniform rates of convergence. Monte Carlo exercises show evidence that the estimators I propose have good finite sample performance. 

\section{Introduction}
This paper studies the nonparametric identification and estimation of the marginal effect of an endogenous variable $X$ on the outcome variable $Y$, given a potentially mismeasured instrument variable $W^*$, without assuming linearity or separability of the functions governing the relationship between observables and unobservables. Without measurement error, this type of model is referred to as ``nonseparable triangular equations''
and its identification was studied by \cite{chesher2003identification}, \cite{imbens2009identification}, and \cite{shaikh2011partial}. Measurement error on the instrument variable poses additional challenges to the identification and estimation of the model primitives. Because of nonseparability, simply using an error-laden measurement as the instrument leads to inconsistent results. Because of measurement error, the true value of $W^*$ is unobserved, making it impossible to proceed using existing methods in \cite{imbens2009identification}. In this paper, I propose a way to deal with the two difficulties mentioned above and show the identification of average and individual-level marginal effects. I also propose estimators for these parameters. To illustrate ideas, I study the following model: 
 \begin{align*}
    Y &= m(X, \epsilon)\\
    X &= h(W^*,\eta),
\end{align*}
where $X$ is endogenous in the sense that it's correlated with $\epsilon$, and $W^*$ is the instrument variable independent with both of the error terms $\epsilon, \eta$ but cannot be measured accurately. Following the measurement error literature, I assume that there are two error-laden measurements of $W^*$, denoted as $W_1, W_2$. Denote the corresponding measurement errors as $\Delta W_1, \Delta W_2$, i.e. $W^*=W_1+\Delta W_1$, $W^*=W_2+\Delta W_2$. I could allow a vector of observed exogenous control variables $Z$ to enter both equations and all the arguments will carry over by conditioning on $Z$, so they are omitted here for simplicity. \\
\indent To give an example of where this model can be used, consider the Engel curve estimation studied in \cite{blundell2007semi}. They used Sieve Minimum Distance to estimate a semi-nonparametric model, but the curve can be estimated under less restrictive modeling assumptions as done in \cite{imbens2009identification}. $Y$ here is the share of expenditure on a commodity of a household. $X$ is the log of the household’s total expenditure. $Z$ is a vector capturing the household's demographic composition and $\epsilon$ captures the household's unobserved heterogeneity. Researchers might be interested in the average response of households' expenditure on food $Y$, to changes in household's total expenditure $X$, holding the distribution of unobserved household heterogeneity $F_{\epsilon\mid X=x}$ fixed. This parameter is called ``Local Average Response'' (LAR) by \cite{altonji2005cross} and if the function $m$ is differentiable, it can be written as
\begin{align}
    \mathbf{LAR}(x) = \int \frac{\partial m(x, e)}{\partial x}f_{\epsilon\mid X=x}(e)de. \label{introLAR}
\end{align}
In addition, researchers might also be interested in the structural derivative of the function $m$. This is a more disaggregate level parameter. It stands for the marginal response of $Y$ to changes in $X$ for a specific household, say the household with log total expenditure equal $\bar x$, and share of food expenditure equal $\bar y$. Denote the structural derivative of this household as $\rho(\bar y, \bar x)$. If $m$ is differentiable and strictly monotone in its second argument for all values of $X$, $\rho(\bar y, \bar x)$ can be written as 
\begin{align}
    \rho(\bar y, \bar x) = \left.\frac{\partial m(x, \epsilon)}{\partial x}\right|_{x=\bar x, \epsilon = m^{-1}(\bar y, \bar x)}, \label{introdmdx}
\end{align}
where $m^{-1}$ is the inverse of $m$ with respect to its second argument, and $m^{-1}(\bar y, \bar x)$ is the value of $\epsilon$ of the specific household one is interested in. To estimate these parameters, under the assumption that heterogeneity in earnings is not correlated with households’ preferences over consumption, one can use the income of the head of the household as the IV $W^*$. The income variable is likely to suffer from measurement errors. Researchers could obtain multiple measurements of it  from panel data, for example.\\
\indent To see how nonseparability makes it harder to identify parameters like the LAR and the structural derivative when the IV $W^*$ is mismeasured, consider the case when both equations are linear:
 \begin{align*}
    Y &= \alpha_0+\alpha_1 X+\epsilon\\
    X &= \beta_0+\beta_1 W^*+\eta,
\end{align*}
where $E[\epsilon W^*]=0$ (exclusion restriction) and $E[XW^*]\neq E[X]E[W^*]$ (relevance condition). Suppose I have $W_2=W^*+\Delta W_2$ as an error-laden measurement of $W^*$, it's not hard to verify that $E[\epsilon W_2]=0$ and $X\not\perp W_2 $ still hold under mild assumptions on $\Delta W_2$ (e.g. $E\left[\Delta W_2|Y,X\right]=0$). This means in a linear model, even if the instrument variable suffers from measurement errors, one can still use the error-laden measurement $W_2$ as an IV and proceed as usual. However, this is not the case in nonseparable models. Plugging $W_2$ into the second equation yields $X=h(W_2-\Delta W_2, \eta)$, where both $\Delta W_2$ and $\eta$ are unobervable and importantly, $W_2\not\perp\Delta W_2$. This means if I were to use $W_2$ as an IV, both of the two equations in the triangular system would contain endogenous variables and nothing can be done without additional IVs outside of this system. This also means that if one simply uses $W_2$ as the instrument variable and proceeds with the standard techniques in \cite{imbens2009identification}, they would get inconsistent results.\\
\indent Given that nonseparability makes the problem much harder to solve, a natural question is why one wants to deal with nonseparable models instead of an additive separable or linear model. There are multiple reasons why researchers might prefer a nonseparable model. First, nonseparable models allow the observed variable $X$ and the unobserved variable $\epsilon$ to interact in a flexible way. For example, a recent paper by \cite{brancaccio2020geography} estimated the matching function $m(s,e)$ between ships (of number $s$) and exporters (of number $e$ which is unobserved) at a seaport. Not imposing functional form assumptions (including separability) is important. It allows the authors to remain agnostic about the nature of the meeting process. In addition, the flexibility of functional forms can be key when deriving welfare and policy implications (see \cite{brancaccio2020geography}). Second, nonseparability is also important when $X$ and $\epsilon$ are correlated, since in many cases the source of endogeneity is the nonseparable nature of the model. For example, $X$ could come from the optimization problem of maximizing the expected value of $Y$ minus the production cost, given an exogenous variable $W$, and some noisy information about $\epsilon$. Think of $X$ as an individual's education level, or a firm's input level, and $Y$ as the individual's lifetime earnings, or the firm's output. Then $X$ could be the solution of $\max_{x}\left\{E[m(x,\epsilon)|\eta,W]-c(x,W)\right\}$, leading to $X=h(W,\eta)$, where $\eta$ is some noisy proxy of $\epsilon$, and $c(x,W)$ is the cost function. If the function $m$ were additively separable in $\epsilon$, the optimal choice of $x$ would not even depend on $\eta$.\\
\indent When there is no measurement error, \cite{imbens2009identification} proposed a way to identify the model primitives in a nonseparable triangular system, making use of the fact that $X\perp\epsilon|\eta$. They first estimate the control variable $\eta$ (or a strictly monotone function of $\eta$, denoted as $V$ in their paper) from the second equation, and then estimate the model primitives in the first equation by first conditioning on the estimated $\eta$, and then integrate it out. Their method cannot directly apply when the instrument variable $W^*$ in the second equation is mismeasured, because the control variable (or control function) cannot be observed from error-laden measurements of $W^*$. A recent paper by \cite{aradillas2022inference} studies inference in models where control functions are unobserved. The setup of his paper is different from this paper in many aspects. He requires the availability of observable or estimable bounds for the unobserved control functions, which is not required by the model in this paper. Instead, this paper requires the availability of error-laden measurements and builds upon the measurement error literature. Also, his focus is on constructing confidence sets for finite-dimensional parameters, while my focus is on point identification and estimation of infinite-dimensional parameters.\\
\indent In this paper, I propose a method that makes use of the same intuition as in \cite{imbens2009identification}, but can deal with mismeasured $W^*$. Same as \cite{imbens2009identification}, I utilize the fact that the correlation between $X$ and $\epsilon$ is merely coming from $\eta$, but instead of estimating and conditioning on $\eta$, I separate out this correlation by writing $\epsilon$ as a function of $\eta$ and a uniformly distributed random variable which is independent with $X$ and $W^*$. In this way, I am able to write the model primitives like the structural derivative and LAR as functionals of the joint distribution of $Y,X,W^*$, which can be recovered using the two error-laden measurements and the deconvolution technique developed in the measurement error literature (e.g. \cite{fan1991asymptotic},  \cite{fan1993nonparametric}, \cite{schennach2004estimation}, \cite{schennach2004nonparametric}). I can thus identify the model primitives in a constructive way and estimate them using plug-in estimators. I also derive uniform rates of convergence of the estimators.\\
\indent This paper is most related to \cite{schennach2012local} (SWC, hereafter), where the authors also consider a triangular simultaneous equations model with a mismeasured exogenous instrument. This paper differs from their paper in the modeling assumptions, parameters that can be identified, and also theoretical methods. SWC shows that under separability assumption on the second equation, the instrument conditioned marginal response\footnote{$m_x$ denotes the partial derivative of the $m$ function with respect to its first argument.} $E\left[m_x(X, \epsilon)|W^*=w^*\right]$ can be written as a ratio of the derivative of the conditional mean of $Y$ given $W^*$ over the derivative of the conditional mean of $X$ given $W^*$, both of which can be recovered from the data given two error-laden measurements $W_1, W_2$. Integrating out $W^*$, they can also recover the average response $E\left[m_x(X, \epsilon)\right]$. However, under the nonseparability of both of the equations, their method cannot recover either of the two parameters. Different from SWC, this paper shows that it's possible to identify not only the instrument-conditioned marginal response and the average marginal response but also the structural derivative and LAR, even when both of the equations are nonseparable. This conclusion, however, comes at the expense of more assumptions on the function $m$ and unobservables compared with SWC. In particular, this paper assumes strict monotonicity of the function $m$ on its second argument, and scalar unobservables, which are not required in SWC. Regarding estimation, both SWC and this paper employ plug-in estimators, but the asymptotic analysis in this paper is a bit more complex than in SWC, because of the observed variable components $Y,X$ of the joint density $f_{Y,X,W^*}$. They have non-trivial implications on the asymptotic treatment (including the convergence rates) so that SWC's asymptotic analysis cannot directly apply here. \\
\indent This paper is also closely related to \cite{song2015estimating}. In their paper, the coauthors study a nonseparable model with mismeasured endogenous variable, assuming a correctly measured control variable is available. The setup of this paper is different from theirs. This paper also studies nonseparable models with endogeneity, but instead of mismeasured endogenous variable, this paper studies the case when the endogenous variable is correctly measured, and instead of assuming the existence of a correctly measured control variable, this paper assumes that a potentially mismeasured instrument variable is available. The two papers are complementary depending on the availability of data. Regarding the parameters of interest, in addition to the various parameters studied in \cite{song2015estimating}, including the average marginal response conditional on the control variable, the average marginal response, the LAR, and the weighted average version of these variables, this paper additionally show identification of the structural derivative, which is more disaggregate and could be helpful when researchers are interested heterogeneous effects.\\
\indent The rest of the paper consists of six parts. Section 2 introduces the model and assumptions. Section 3 talks about the identification of the model primitives. Section 4 proposes plug-in estimators of the model primitives identified in Section 3, and talks about their asymptotic properties. Section 5 conducts limited Monte Carlo studies to show the finite sample performance of the estimator. Section 6 concludes.
\section{The Model}
I consider the following triangular model in this paper:
\begin{align*}
    Y &= m(X,Z,\epsilon)\\
    X &= h(W^*,Z,\eta)
\end{align*}
where $X$ is an observed endogenous variable that is correlated with $\epsilon$. In the returns to education example, $X$ stands for years of education, which is correlated with $\epsilon$ since it's chosen by the agent as an equilibrium outcome. Z is a vector of observed exogenous variables which are independent with $\epsilon$ and $\eta$.  $W^*$ is an instrument for $X$ and satisfies $W^*\perp (\epsilon, \eta)$. Researchers cannot measure $W^*$ exactly but have two error-laden measurements of it:
 \begin{align*}
        W_1 &= W^*+\Delta W_1\\
        W_2 &= W^*+\Delta W_2.
\end{align*}
The measurement errors satisfy $E[\Delta W_1\mid W^*, \Delta W_2]=0$ and $\Delta W_2\perp W^*, Y, X, Z$. As in the literature studying the identification of nonseparable models (\cite{chesher2003identification}, \cite{matzkin2003nonparametric},\cite{altonji2005cross},\cite{matzkin2015estimation}), I impose monotonicity assumptions on the structural functions. I assume that $m$ is strictly increasing in $\epsilon$ and that $h$ is strictly increasing in $\eta$. Since the vector of covariates $Z$ is correctly measured and exogenous, all the analysis can be done conditional on $Z$. For brevity, from now on I omit the vector $Z$ in my notations and work on the simplified model below, while other assumptions on the structural functions and distributions of variables remain unchanged.
\begin{align}
    Y &= m(X,\epsilon)\\
    X &= h(W^*,\eta)
\end{align}
The assumptions mentioned above are stated formally below:
\begin{assumption} \label{aspm&h}
Function $m$ and $h$ are continuously differentiable with respect to both of their arguments and are strictly increasing in their respective second argument.  
\end{assumption}
\begin{assumption}
$W^*\perp (\epsilon,\eta)$.
\end{assumption}
\begin{assumption}\label{repeatedmeas}
$W_1 = W^*+\Delta W_1$ and $W_2 = W^*+\Delta W_2$.
\end{assumption}
\begin{assumption}\label{measerrors}
 $E[\Delta W_1\mid W^*, \Delta W_2]=0$, $\Delta W_2\perp W^*, Y, X$.
\end{assumption}
Following the measurement error literature, I also impose:
\begin{assumption}\label{posdenom}
For any finite $t\in\mathbb{R}$, $\left|E[\exp\left(\mathbf{i}tW_2\right)]\right|>0$. 
\end{assumption}
\indent Note that for the measurement error $\Delta W_1$, only the mean independence assumption is imposed, which is weaker than the assumption on $\Delta W_2$. This weaker assumption is sufficient to identify the distribution of $W^*$ (\cite{schennach2004nonparametric}). On the other hand, although $\Delta W_2$ satisfies strong independence assumptions, making $W_2$ a measurement with classical measurement error, one still cannot use $W_2$ as the instrument because of the nonseparability of the model. More specifically, plugging $W_2$ into the second equation in the triangular system yields
\begin{align*}
    X=h(W_2-\Delta W_2, \eta).
\end{align*}
This is a nonseparable model with two unobservables $\Delta W_2$ and $\eta$, and $W_2$ is not independent with $\Delta W_2$. This means both of the two equations in the system contain endogenous variables and thus cannot be identified without further assumptions.\\
\indent In addition, I also impose the following assumptions on the distribution of $Y,X,W^*$:
\begin{assumption}\label{cpctcontdiff} The distribution of $(Y,X,W^*)$ has compact support, denoted as $\mathbb{S}_{(Y,X,W^*)}$ and has a joint density $f_{Y,X,W^*}$ which is twice continuously differentiable on $R^3$. 
\end{assumption}
\begin{assumption}\label{pospartialh}
For any $x,w^*$ belonging to the support of $X,W^*$, $\left|\frac{\partial F_{X\mid W^*=w^*}(x)}{\partial w^*}\right|>0$.
\end{assumption}
Assumption \ref{pospartialh} is a sufficient condition to ensure $\left.\frac{\partial h(w^*,\eta)}{\partial w^*}\right|_{\eta = r(x,w^*)}\neq 0$ for all $x,w^*$ on the support of $X,W^*$. This assumption is like the rank condition imposed in the linear instrument variable models. To illustrate this idea, suppose $h$ is a linear function such that $X= \pi W^*+\eta$, then $F_{X\mid W^*=w^*}(x) = F_{\eta}(x-\pi w^*)$, and $\pi\neq 0$ is a neccessary condition for Assumption \ref{pospartialh}.
\section{Identification}
In this section, I show the identification of the model parameters in two steps. First, assuming that the joint distribution of $Y, X, W^*$ is known, I show identification of the derivative of the structural function $m$ with respect to $x$, when the value of $\epsilon$ is fixed, and the identification of other model parameters including the LAR and AR. Then I show how to identify the joint distribution of $Y, X, W^*$ from the two measurements $W_1$ and $W_2$.
\subsection{Identification of parameters assuming the joint distribution of $Y, X, W^*$ is known}
\indent The independence between $W^*$ and $\epsilon,\eta$ implies that conditional on $\eta$, $W^*$ and $\epsilon$ are independent. The next lemma uses this fact to show that one can write $\epsilon$ as a function of $\eta$ and another random variable which is independent with $X, W^*$. This lemma is similar to Proposition 5.1 in \cite{matzkin2016independence}. Before stating the lemma, I first impose the following assumption on the conditional distribution of $\epsilon$ given $\eta$:
\begin{assumption}\label{Fepsilon|eta}
$F_{\epsilon\mid \eta=\bar\eta}(\bar\epsilon)$ is strictly increasing in $\bar\epsilon$, given any values of $\bar\eta$.
\end{assumption}
\begin{lemma}\label{deltaindependent}
Under the model setup, suppose Assumption \ref{Fepsilon|eta} is satisfied. There exists a function $s:\mathbb{R}^2\rightarrow\mathbb{R}$ strictly increasing in its second argument and an unobservable random term $\delta$ such that 
\begin{align}
    \epsilon = s(\eta, \delta), \label{epsilondecomposition}
\end{align}
and $\delta$ is independent of $(X,W^*)$ and is $U(0,1)$.
\end{lemma}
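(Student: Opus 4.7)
The plan is to construct $\delta$ explicitly via the probability integral transform applied to the conditional distribution of $\epsilon$ given $\eta$, and then exploit the joint independence $W^*\perp(\epsilon,\eta)$ to upgrade the usual ``$\delta\perp\eta$'' conclusion to independence from $(X,W^*)$.

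Concretely, I would set $\delta := F_{\epsilon\mid \eta}(\epsilon)$, using the assumed strict monotonicity (Assumption \ref{Fepsilon|eta}) and continuity of the conditional CDF to ensure a well-defined inverse in the first slot. Define $s(\bar\eta,u):=F_{\epsilon\mid\eta=\bar\eta}^{-1}(u)$, where the inverse is taken with respect to the $\epsilon$-argument. Strict monotonicity of $s$ in its second argument is then immediate from the strict monotonicity of $F_{\epsilon\mid\eta=\bar\eta}$, and the identity $\epsilon = s(\eta,\delta)$ holds by construction.

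The substantive step is verifying that $\delta$ is $U(0,1)$ and independent of $(X,W^*)$. I would compute, for any $u\in(0,1)$ and any $(\bar\eta,\bar w^*)$ in the support of $(\eta,W^*)$,
\begin{align*}
P(\delta\le u\mid \eta=\bar\eta,\,W^*=\bar w^*)
&= P\bigl(F_{\epsilon\mid\eta=\bar\eta}(\epsilon)\le u\,\big|\,\eta=\bar\eta,\,W^*=\bar w^*\bigr)\\
&= P\bigl(F_{\epsilon\mid\eta=\bar\eta}(\epsilon)\le u\,\big|\,\eta=\bar\eta\bigr) = u,
\end{align*}
where the second equality uses $W^*\perp(\epsilon,\eta)$ to drop the conditioning on $W^*$, and the third uses strict monotonicity of $F_{\epsilon\mid\eta=\bar\eta}$. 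This simultaneously shows $\delta\sim U(0,1)$ and $\delta\perp(\eta,W^*)$. Since $X=h(W^*,\eta)$ is a measurable function of $(W^*,\eta)$, independence of $\delta$ from $(\eta,W^*)$ immediately implies independence of $\delta$ from $(X,W^*)$, completing the proof.

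The only real subtlety is the second equality in the display above: one must carefully invoke $W^*\perp(\epsilon,\eta)$ rather than the weaker $W^*\perp\eta$ and $W^*\perp\epsilon$ separately, because we need the conditional law of $\epsilon$ given $\eta$ to be unaffected by further conditioning on $W^*$. Everything else is bookkeeping around the probability integral transform.
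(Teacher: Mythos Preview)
Your proposal is correct and follows essentially the same approach as the paper: define $\delta=F_{\epsilon\mid\eta}(\epsilon)$, set $s(\bar\eta,u)=F^{-1}_{\epsilon\mid\eta=\bar\eta}(u)$, compute $P(\delta\le u\mid \eta=\bar\eta,W^*=\bar w^*)=u$ using $W^*\perp(\epsilon,\eta)$ to drop the $W^*$-conditioning, and then conclude $\delta\perp(X,W^*)$ from $X=h(W^*,\eta)$. The only cosmetic difference is that the paper first records the implication $\epsilon\perp W^*\mid\eta$ from $W^*\perp(\epsilon,\eta)$ and inserts an intermediate line inverting $F_{\epsilon\mid\eta=\bar\eta}$ inside the probability, whereas you apply the independence directly; the logic is the same.
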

\begin{proof}
See Appendix B.
\end{proof}
Next, I plug (\ref{epsilondecomposition}) into the first structural equation. The assumption that $h$ is strictly increasing in $\eta$ implies that one can write the inverse of $h(W^*,\eta)$ w.r.t. $\eta$ as $r(X, W^*)$. Then I have
\begin{align}\label{vfunction}
    Y &= m(X,\epsilon)=m(X,s(\eta,\delta))=m(X,s(r(X,W^*),\delta))\equiv v(X,W^*,\delta).
\end{align}
Equation (\ref{vfunction}) builds a bridge between the structural function $m$ and the reduced form function $v$. Note that $v$ is a function of observable variables $X,W^*$ and an unobservable variable $\delta$ which is independent of the observables. The derivatives of $v$ can be identified by applying identification techniques in standard nonseparable models (\cite{matzkin2003nonparametric}). To identify the main parameter of interest, the derivative of the structural function $m$, one can utilize the last equality in \ref{vfunction}: $m(X,s(r(X,W^*),\delta))\equiv v(X,W^*,\delta)$. Taking derivative w.r.t $X$ and $W^*$ yields:
\begin{footnotesize}
  \begin{align}
    &\left.\frac{\partial m(x,\epsilon)}{\partial x}\right|_{\epsilon =s(r(x, w^*),\delta)} + \left.\frac{\partial m(x,\epsilon)}{\partial \epsilon}\right|_{\epsilon =s(r(x, w^*),\delta)}\cdot\left.\frac{\partial s(\eta, \delta)}{\partial \eta}\right|_{\eta = r(x,w^*)}\cdot \frac{\partial r(x,w^*)}{\partial x} = \frac{\partial v(x,w^*,\delta)}{\partial x}\label{pvpx}\\
    &\left.\frac{\partial m(x,\epsilon)}{\partial \epsilon}\right|_{\epsilon =s(r(x, w^*),\delta)}\cdot\left.\frac{\partial s(\eta, \delta)}{\partial \eta}\right|_{\eta = r(x,w^*)}\cdot \frac{\partial r(x,w^*)}{\partial w^*} = \frac{\partial v(x,w^*,\delta)}{\partial w^*} \label{pvpw^*}
 \end{align}  
\end{footnotesize}
Plug (\ref{pvpw^*}) into (\ref{pvpx}) to cancel $\left.\frac{\partial m(x,\epsilon)}{\partial \epsilon}\right|_{\epsilon =s(r(x, w^*),\delta)}\cdot\left.\frac{\partial s(\eta, \delta)}{\partial \eta}\right|_{\eta = r(x,w^*)}$ yields
\begin{align*}
    \left.\frac{\partial m(x,\epsilon)}{\partial x}\right|_{\epsilon =s(r(x, w^*),\delta)}=\frac{\partial v(x, w^*, \delta)}{\partial x}-\frac{\partial v(x, w^*, \delta)}{\partial w^*}\frac{\frac{\partial r(x,w^*)}{\partial x}}{\frac{\partial r(x,w^*)}{\partial w^*}}.
\end{align*}
Taking derivative w.r.t. $W^*$ on both sides of $X\equiv h(W^*,r(X,W^*))$ and cancelling out the unobserved $\left.\frac{\partial h(w^*,\eta)}{\partial \eta}\right|_{\eta=r(x,w^*)}$, one can get $\left.\frac{\partial h(w^*,\eta)}{\partial w^*}\right|_{\eta=r(x,w^*)} = -\frac{\frac{\partial r(x,w^*)}{\partial w^*}}{\frac{\partial r(x,w^*)}{\partial x}}$. Then one can write 
\begin{align}
    \left.\frac{\partial m(x,\epsilon)}{\partial x}\right|_{\epsilon =s(r(x, w^*),\delta)}=\frac{\partial v(x, w^*, \delta)}{\partial x}+\frac{\partial v(x, w^*, \delta)}{\partial w^*}\frac{1}{\left.\frac{\partial h(w^*,\eta)}{\partial w^*}\right|_{\eta=r(x,w^*)}}.\label{pmpx}
\end{align}
The right-hand side of the equation (\ref{pmpx}) can be identified from the data. To show this, note that 
\begin{align}
    &F_{Y\mid X=x, W^*=w^*}(v(x,w^*,\delta))=F_{\delta}(\delta)=\delta\label{F_Y|XW^*}\\
    \Rightarrow &\frac{\partial F_{Y\mid X=x, W^*=w^*}(y)}{\partial x}+\frac{\partial F_{Y\mid X=x, W^*=w^*}(y)}{\partial y}\frac{\partial v(x,w^*,\delta)}{\partial x} =0 \notag\\
    \Rightarrow &\left.\frac{\partial v(x,w^*,\delta)}{\partial x} = -\frac{\partial F_{Y\mid X=x, W^*=w^*}(y)}{\partial x}\right/\frac{\partial F_{Y\mid X=x, W^*=w^*}(y)}{\partial y},\notag
\end{align}
Similarly
\begin{align*}
    &\frac{\partial v(x,w^*,\delta)}{\partial w^*} = -\left.\frac{\partial F_{Y\mid X=x, W^*=w^*}(y)}{\partial w^*}\right/\frac{\partial F_{Y\mid X=x, W^*=w^*}(y)}{\partial y}\\
    &\frac{\partial h(w^*,\eta)}{\partial w^*} = -\left.\frac{\partial F_{X\mid W^*=w^*}(x)}{\partial w^*}\right/\frac{\partial F_{X\mid  W^*=w^*}(x)}{\partial x},
\end{align*}
where $\delta$ is the unique value such that $y=v(x,w^*,\delta)$ and $\eta$ is the unique value such that $ x=h(w^*,\eta) $. Plug into (\ref{pmpx}), one can get
\begin{align}\label{pmpxbar}
    \rho(\bar y, \bar x)&\equiv\left.\frac{\partial m(x, \epsilon)}{\partial x} \right|_{x=\bar x, \epsilon=m^{-1}(\bar x, \bar y)}\notag\\
    &= -\frac{\frac{\partial F_{Y\mid X=\bar x, W^*=w^*}(\bar y)}{\partial x}}{\frac{\partial F_{Y\mid X=\bar x, W^*=w^*}(\bar y)}{\partial y}}+\frac{\frac{\partial F_{Y\mid X=\bar x, W^*=w^*}(\bar y)}{\partial w^*}}{\frac{\partial F_{Y\mid X=\bar x, W^*=w^*}(\bar y)}{\partial y}}\times \frac{\frac{\partial F_{X\mid W^*=w^*}(\bar x)}{\partial x}}{\frac{\partial F_{X\mid W^*=w^*}(\bar x)}{\partial w^*}}
\end{align}
for all $\bar y, \bar x$ belongs to support, and values of $\bar\epsilon$ such that $\bar y = m(\bar x, \bar\epsilon)$. Note that as long as $\bar y$ and $\bar x$ are fixed, $\bar\epsilon$ is fixed, no matter which value is picked for $w^*$. This means there's actually overidentification for the structural derivative. To identify the \textbf{LAR} and \textbf{AR}, one needs independent variations of $\epsilon$ given $X$. Rewrite (\ref{F_Y|XW^*}) with slightly different notations yields:
\begin{align*}
    &F_{Y\mid X=x, W^*=w^*}(v(x,w^*,\delta))=F_{\delta}(\delta)=\delta\\
    \Rightarrow & v(x,w^*,\delta)= F_{Y\mid X=x, W^*=w^*}^{-1}(\delta) 
\end{align*}
so that one can write
\begin{align}\label{pmpxdelta}
     \left.\frac{\partial m(x,\epsilon)}{\partial x}\right|_{\epsilon =s(r(x, w^*),\delta)}=\frac{\partial F^{-1}_{Y\mid X=x, W^*=w^*}(\delta)}{\partial x}-\frac{\partial F^{-1}_{Y\mid X=x, W^*=w^*}(\delta)}{\partial w^*}\frac{\frac{\partial F_{X\mid W^*=w^*}(x)}{\partial x}}{\frac{\partial F_{X\mid W^*=w^*}(x)}{\partial w^*}}.
\end{align}
To show the identification of the \textbf{LAR} and \textbf{AR}, one needs to show that the support of $s(r(X, W^*), \delta)$ given $X=x$ is the same as the support of $\epsilon$ given $X=x$. This is stated formally in the lemma below. 
\begin{lemma}\label{commonsupport}
Define random variable $\mathbf{s}\equiv s(r(X, W^*), \delta)$. Denote the support of $\mathbf{s}$ and $\epsilon$ conditional on $X=x$ as $\mathbb{S}_{\mathbf{s}\mid X=x}$ and $\mathbb{S}_{\epsilon\mid X=x}$, respectively. If Assumption \ref{Fepsilon|eta} holds, then $\mathbb{S}_{\mathbf{s}\mid X=x} = \mathbb{S}_{\epsilon\mid X=x}$, for all $x$ belonging to its support.
\end{lemma}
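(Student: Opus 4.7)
The plan is to leverage Lemma \ref{deltaindependent} almost directly: I expect that, once the decomposition $\epsilon = s(\eta,\delta)$ is in hand, the two random variables $\mathbf{s}$ and $\epsilon$ will turn out to be equal almost surely, which trivially yields equality of conditional distributions given $X=x$ and therefore of their supports. So the work is mostly bookkeeping to make sure the pieces line up.

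Concretely, I would proceed as follows. First, invoke Lemma \ref{deltaindependent} under Assumption \ref{Fepsilon|eta} to obtain a function $s$, strictly increasing in its second argument, and a $U(0,1)$ variable $\delta$ independent of $(X,W^*)$, such that $\epsilon = s(\eta,\delta)$ almost surely. Second, use Assumption \ref{aspm&h}: since $h$ is strictly increasing in $\eta$ and $X = h(W^*,\eta)$, the inverse $r(X,W^*)$ with respect to the second argument is well defined and satisfies $\eta = r(X,W^*)$ almost surely. Substituting this into the decomposition gives $\epsilon = s(r(X,W^*),\delta) \equiv \mathbf{s}$ almost surely.

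From $\mathbf{s} = \epsilon$ a.s., the regular conditional distributions $F_{\mathbf{s}\mid X=x}$ and $F_{\epsilon\mid X=x}$ coincide for almost every $x$ in the support of $X$; taking the closed support of each and invoking Assumption \ref{cpctcontdiff} to ensure well-posedness of conditional supports then yields $\mathbb{S}_{\mathbf{s}\mid X=x} = \mathbb{S}_{\epsilon\mid X=x}$ for every $x$ in the support. I expect no genuine obstacle: the only subtlety is remembering that Lemma \ref{deltaindependent} produces $s$ and $\delta$ with $\epsilon = s(\eta,\delta)$ as an almost-sure identity (not merely equality in distribution), so the substitution $\eta = r(X,W^*)$ preserves the pointwise equality and there is no loss in replacing $\epsilon$ by $\mathbf{s}$ inside any conditional statement. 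If one wanted to avoid appealing to pointwise identity, one could instead argue by double inclusion: any $\bar\epsilon \in \mathbb{S}_{\epsilon\mid X=x}$ is of the form $s(\bar\eta,\bar\delta)$ for $(\bar\eta,\bar\delta)$ in the conditional support of $(\eta,\delta)\mid X=x$, which is exactly the description of $\mathbb{S}_{\mathbf{s}\mid X=x}$, and conversely.
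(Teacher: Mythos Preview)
Your approach is correct and considerably more direct than the paper's. By construction in Lemma~\ref{deltaindependent}, $\delta = F_{\epsilon\mid\eta}(\epsilon)$, so the identity $\epsilon = s(\eta,\delta)$ holds pointwise (not merely in distribution); combined with $\eta = r(X,W^*)$ from the invertibility of $h$, this gives $\mathbf{s} = \epsilon$ almost surely, and equality of conditional supports is immediate.

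The paper instead argues by double inclusion, without invoking the almost-sure identity. For $\mathbb{S}_{\epsilon\mid X=x} \subseteq \mathbb{S}_{\mathbf{s}\mid X=x}$ it takes any $\bar\epsilon$ in the former, picks some $\bar w^* \in \mathbb{S}_{W^*\mid X=x}$, sets $\bar\delta = F_{\epsilon\mid\eta=r(x,\bar w^*)}(\bar\epsilon) \in [0,1]$, and observes that $\bar\epsilon = s(r(x,\bar w^*),\bar\delta)$. For the reverse inclusion it takes $\tilde s = s(r(x,\tilde w^*),\tilde\delta)$ with $(\tilde w^*,\tilde\delta) \in \mathbb{S}_{W^*\mid X=x}\times[0,1]$, uses $\epsilon \perp X \mid \eta$ to show that the endpoints $\underline e_x,\bar e_x$ of $\mathbb{S}_{\epsilon\mid X=x}$ satisfy $F^{-1}_{\epsilon\mid\eta=r(x,\tilde w^*)}(0)\ge \underline e_x$ and $F^{-1}_{\epsilon\mid\eta=r(x,\tilde w^*)}(1)\le \bar e_x$, and concludes $\tilde s \in \mathbb{S}_{\epsilon\mid X=x}$ by monotonicity. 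Your route is more economical and exposes the underlying reason the lemma holds: $\mathbf{s}$ and $\epsilon$ are the same random variable. The paper's route, while longer, sidesteps the small step of upgrading ``a.e.\ $x$'' to ``every $x$'' via regularity of the conditional law---a point you correctly flag and resolve through Assumption~\ref{cpctcontdiff}.
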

\begin{proof}
See Appendix B.
\end{proof}
Lemma \ref{deltaindependent} and \ref{commonsupport} ensures that $F_{\mathbf{s}\mid X}$ is the same as $F_{\epsilon\mid X}$, so that integrating $\frac{\partial m(x,\epsilon)}{\partial x}$ over $\epsilon$ given $X=x$ will be equivalent to integrating $\frac{\partial m(x,\mathbf{s})}{\partial x}$ over $\mathbf{s}$ given $X=x$. Then I have the following identification results:
\begin{align}
    &\textbf{LAR}:\  E\left[\frac{\partial m(X,\epsilon)}{\partial x}\mid X=x\right]=\int \int_0^1 \left.\frac{\partial m(x,\epsilon)}{\partial x}\right|_{\epsilon =s(r(x, w^*),\delta)} f_{W^*|X=x}(w^*)d\delta dw^* \label{LAR}\\
    &\textbf{AR}:\  E\left[\frac{\partial m(X,\epsilon)}{\partial x}\right]=\int\int \int_0^1 \left.\frac{\partial m(x,\epsilon)}{\partial x}\right|_{\epsilon =s(r(x, w^*),\delta)} f_{X,W^*}(x,w^*)d\delta dw^* dx. \label{AR}
\end{align}
\subsection{Identification of the joint density of $Y, X, W^*$}
\indent First note that by Theorem 1 in \cite{schennach2004estimation}
\begin{align*}
    \phi_{W^*}(t)=\exp\left(\int_0^{t}\frac{E\left[iW_1 e^{i\xi W_2}\right]}{E\left[e^{i\xi W_2}\right]}d\xi \right).
\end{align*}
Then I can identify the density $f_{W^*}(w^*)$ by taking the Inverse Fourier Transform:
\begin{align*}
    f_{W^*}(w^*) =\frac{1}{2\pi}\int e^{-itw^*}\phi_{W^*}(t)dt.
\end{align*}
For the joint density $f_{Y,X,W^*}(y, x, w^*)$, by Assumption \ref{measerrors}, I have the following convolution:
\begin{align*}
    f_{Y,X,W_2}(x,w_2) = \int f_{Y,X,W^*}(x,\nu) f_{\Delta W_2}(w_2-\nu)d\nu
\end{align*}
Applying Fourier transformation on both sides (holding $x$ constant) yields
\begin{align*}
    \phi_{f_{Y,X,W_2}(x,\cdot )}(t)&=\phi_{f_{Y,X,W^*}(y,x,\cdot)}(t) \phi_{\Delta W_2}(t),
\end{align*}
which, by Assumption \ref{posdenom} implies
\begin{align}
    \phi_{f_{Y,X,W^*}(y,x,\cdot)}(t) &= \frac{\phi_{f_{Y,X,W_2}(y,x,\cdot)}(t)}{\phi_{\Delta W_2}(t)}=\frac{\phi_{f_{Y,X,W_2}(y,x,\cdot)}(t)\phi_{W^*}(t)}{\phi_{W_2}(t)}.\label{phifxstary}
\end{align}
Then applying the inverse Fourier transform, one can get
\begin{align}
    f_{Y,X,W^*}(x,w^*) &= \frac{1}{2\pi}\int e^{-itw^*}\frac{\phi_{f_{X,W_2}(y,x,\cdot)}(t)\phi_{W^*}(t)}{\phi_{W_2}(t)} dt, \label{fXW*},
\end{align}
and similarly, 
\begin{align}
     f_{Y,X, W^*}(y, x, w^*) &= \frac{1}{2\pi}\int e^{-itw^*}\frac{\phi_{f_{Y,X,W_2}(y, x, \cdot)}(t)\phi_{W^*}(t)}{\phi_{W_2}(t)} dt. \label{fYXW*}
\end{align}
Then $F_{X\mid W^*=w^*}(x)$ follows from
\begin{align}
    F_{X\mid W^*=w^*}(x) = \frac{\int_{-\infty}^x f_{X,W^*}(y,x,w^*)dx}{\int_{-\infty}^\infty f_{X,W^*}(y,x,w^*)dx}\label{FX|W*}
\end{align}
and $F_{Y\mid X=x, W^*=w^*}(y)$ follows from
\begin{align*}
    F_{Y\mid X=x, W^*=w^*}(y) = \frac{\int_{-\infty}^y f_{Y, X,W^*}(y, x,w^*)dy}{\int_{-\infty}^\infty f_{Y, X,W^*}(y,x,w^*)dy}.
\end{align*}
\indent From the equations above, one can write the derivatives of the conditional CDFs as functionals of the joint densities:
\begin{footnotesize}
\begin{align*}
    &\frac{\partial F_{Y\mid X=x, W^*=w^*}(y)}{\partial w^*} \\
    &= \frac{\left(\int_{-\infty}^y \frac{\partial f_{Y,X,W^*}(s,x,w^*)}{\partial w^*}ds\right)f_{X,W^*}(x,w^*)-\left(\int_{-\infty}^y f_{Y,X,W^*}(s,x,w^*)ds\right)\frac{\partial f_{X,W^*}(x,w^*)}{\partial w^*}}{f^2_{X,W^*}(x,w^*)}\\
    &\frac{\partial F_{Y\mid X=x, W^*=w^*}(y)}{\partial x}\\
    &=\frac{\left(\int_{-\infty}^y \frac{\partial f_{Y,X,W^*}(s,x,w^*)}{\partial x}ds\right)f_{X,W^*}(x,w^*)-\left(\int_{-\infty}^y f_{Y,X,W^*}(s,x,w^*)ds\right)\frac{\partial f_{X,W^*}(x,w^*)}{\partial x}}{f^2_{X,W^*}(x,w^*)}\\
    &\frac{\partial F_{Y\mid X=x, W^*=w^*}(y)}{\partial y}=\frac{f_{Y,X,W^*}(y,x,w^*)}{f_{X,W^*}(x,w^*)}\\
    &\frac{\partial F_{X\mid W^*=w^*}(x)}{\partial w^*}=\frac{\left(\int_{-\infty}^{x} \frac{\partial f_{X,W^*}(s,w^*)}{\partial w^*}ds\right) f_{W^*}(w^*)-\left(\int_{-\infty}^{x} f_{X,W^*}(s,w^*)ds\right)\frac{\partial  f_{W^*}(w^*)}{\partial w^*}}{f^2_{W^*}(w^*)}\\
    &\frac{\partial F_{X\mid W^*=w^*}(x)}{\partial x}=\frac{f_{X,W^*}(x,w^*)}{ f_{W^*}(w^*)}.
\end{align*}
\end{footnotesize}
This means that the right-hand side of the equation (\ref{pmpxbar}) can be written as a functional of the joint density $f_{Y,X, W^*}$, which can be identified from the data. To show identification of the left-hand side of the equation (\ref{pmpxdelta}), one needs to build the relationship between the derivatives of conditional quantiles and the joint densities, which is shown in the following lemma: 
\begin{lemma}\label{dcquantile}Let $\delta\in[0,1]$ be a constant, then
\begin{align*}
    &\frac{\partial F^{-1}_{Y|X=x, W^*=w^*}(\delta)}{\partial w^*} = \frac{\delta \frac{\partial f_{X,W^*}(x,w^*)}{\partial w^*}-\int_{-\infty}^{F^{-1}_{Y|X=x, W^*=w^*}(\delta)}\frac{\partial f_{Y,X,W^*}(y,x,w^*)}{\partial w^*}dy}{f_{Y,X,W^*}\left(F^{-1}_{Y|X=x, W^*=w^*}(\delta),x,w^*\right)}\\
    &\frac{\partial F^{-1}_{Y|X=x, W^*=w^*}(\delta)}{\partial x} = \frac{\delta \frac{\partial f_{X,W^*}(x,w^*)}{\partial x}-\int_{-\infty}^{F^{-1}_{Y|X=x, W^*=w^*}(\delta)}\frac{\partial f_{Y,X,W^*}(y,x,w^*)}{\partial x}dy}{f_{Y,X,W^*}\left(F^{-1}_{Y|X=x, W^*=w^*}(\delta),x,w^*\right)}.
\end{align*}
\end{lemma}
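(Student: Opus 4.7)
The plan is to derive both formulas by implicit differentiation of the defining identity for the conditional quantile. Introduce the shorthand $q(x,w^*) \equiv F^{-1}_{Y\mid X=x, W^*=w^*}(\delta)$. Using the expression for the conditional CDF as a ratio of densities (as derived immediately before the lemma), the definition of the quantile, $F_{Y\mid X=x, W^*=w^*}(q(x,w^*)) = \delta$, can be rewritten in the cleaner, product form
\begin{align*}
    \int_{-\infty}^{q(x,w^*)} f_{Y,X,W^*}(s,x,w^*)\, ds \;=\; \delta\cdot f_{X,W^*}(x,w^*).
\end{align*}
This avoids having the denominator $f_{X,W^*}(x,w^*)$ inside a quotient rule calculation and makes the ensuing differentiation cleanest.

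Next I differentiate both sides of this identity with respect to $w^*$. On the left, an application of Leibniz's rule produces the upper-limit term $f_{Y,X,W^*}(q(x,w^*),x,w^*)\,\partial q/\partial w^*$ plus the integral $\int_{-\infty}^{q(x,w^*)}\partial f_{Y,X,W^*}(s,x,w^*)/\partial w^*\, ds$; on the right, one gets $\delta\,\partial f_{X,W^*}(x,w^*)/\partial w^*$. Solving algebraically for $\partial q/\partial w^*$ yields the first formula in the statement. Repeating the argument, differentiating instead with respect to $x$, produces the second formula in exactly the same way.

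The key regularity needed is the validity of Leibniz's rule and the nonvanishing of the denominator. Leibniz's rule is justified by Assumption \ref{cpctcontdiff}: the joint density $f_{Y,X,W^*}$ is twice continuously differentiable on the compact support $\mathbb{S}_{(Y,X,W^*)}$, so $\partial f_{Y,X,W^*}/\partial w^*$ and $\partial f_{Y,X,W^*}/\partial x$ are continuous and therefore uniformly bounded on the support, giving the dominated-convergence hypothesis used to differentiate under the integral sign. The differentiability of $q(x,w^*)$ itself (needed to use the chain rule on the upper limit) follows from the implicit function theorem applied to the identity $F_{Y\mid X=x,W^*=w^*}(q)=\delta$, whose partial with respect to $q$ is $f_{Y,X,W^*}(q,x,w^*)/f_{X,W^*}(x,w^*)$, nonzero on the interior of the support.

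The only mild obstacle is the bookkeeping around the implicit differentiation and confirming the non-degeneracy of the denominator $f_{Y,X,W^*}(q(x,w^*),x,w^*)$. For $\delta\in(0,1)$ this density must be positive at the $\delta$-quantile (otherwise the CDF would be locally flat there, contradicting the fact that $q$ is a well-defined inverse), and at the boundary values of $\delta$ the formula is interpreted as the appropriate one-sided limit; the formal proof will record this briefly and then carry out the implicit differentiation described above.
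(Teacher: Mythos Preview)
Your proposal is correct and follows essentially the same approach as the paper: clear the denominator to obtain $\int_{-\infty}^{q(x,w^*)} f_{Y,X,W^*}(s,x,w^*)\,ds = \delta\, f_{X,W^*}(x,w^*)$, then differentiate implicitly in $w^*$ (respectively $x$) using Leibniz's rule and solve for $\partial q/\partial w^*$ (respectively $\partial q/\partial x$). If anything, you are slightly more careful than the paper in explicitly invoking Assumption~\ref{cpctcontdiff} and the implicit function theorem to justify the differentiation and the nonvanishing of the denominator.
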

\begin{proof}
see Appendix B.
\end{proof}
The identification of the left-hand side of (\ref{LAR}) and (\ref{AR}) is a straightforward implication of Lemma \ref{dcquantile} and equation (\ref{pmpxdelta}).

\section{Estimation}
\subsection{The Estimator} \label{The estimator}
\indent From now on, I use the following function to denote the joint density of $Y, X,W^*$, or its derivative with respect to $Y, X,W^*$:
\begin{align*}
    g_{\lambda_1,\lambda_{2,1},\lambda_{2,2}}(y,x,w^*) = \frac{\partial^\lambda f_{Y,X,W^*}(y,x,w^*)}{\partial w^{*\lambda_1}\partial y^{\lambda_{2,1}}\partial x^{\lambda_{2,2}}},
\end{align*}
where $\lambda_1, \lambda_{2,1}, \lambda_{2,2}\in\{0,1\}$ and $\lambda\equiv\max\{\lambda_1, \lambda_{2,1},\lambda_{2,2}\}\leq 1$. The $0-th$ order derivative of a function is defined as the function itself. For convenience of notation, I also define $\lambda_2\equiv\max\{ \lambda_{2,1},\lambda_{2,2}\}$. In this section, I first propose an estimator for $g_{\lambda_1,\lambda_{2,1}, \lambda_{2,2}}(y,x,w^*)$, and then propose plug-in estimators for the structural derivative and a weighted average version of the \textbf{LAR} since they can be written as known functionals of $g_{\lambda_1,\lambda_{2,1}, \lambda_{2,2}}(y,x,w^*)$. \\
\indent To make the expressions more transparent and clear, I show the explicit forms of the functionals by taking  the structural derivative as an example. I write each component on the right-hand-side of equation \ref{pmpxbar} as 
\begin{align*}
    &\quad\frac{\partial F_{Y\mid X=x, W^*=w^*}(y)}{\partial w^*} \\
    &= \frac{\left(\int_{-\infty}^y  g_{1,0,0}(s,x,w^*)ds\right) }{\left(\int_{-\infty}^\infty g_{0,0,0}(y,x,w^*)dy\right)}-\frac{\left(\int_{-\infty}^y g_{0,0,0}(s,x,w^*)ds\right)\left(\int_{-\infty}^\infty g_{1,0,0}(y,x,w^*)dy\right)}{\left(\int_{-\infty}^\infty g_{0,0,0}(y,x,w^*)dy\right)^2}\\
    &\quad\text{and}\\
    &\quad\frac{\partial F_{Y\mid X=x, W^*=w^*}(y)}{\partial x}\\
    &=\frac{\left(\int_{-\infty}^y  g_{0,0,1}(s,x,w^*)ds\right) }{\left(\int_{-\infty}^\infty g_{0,0,0}(y,x,w^*)dy\right)}-\frac{\left(\int_{-\infty}^y g_{0,0,0}(s,x,w^*)ds\right)\left(\int_{-\infty}^\infty g_{0,0,1}(y,x,w^*)dy\right)}{\left(\int_{-\infty}^\infty g_{0,0,0}(y,x,w^*)dy\right)^2}\\
    &\quad\text{and}\\
    &\quad\frac{\partial F_{Y\mid X=x, W^*=w^*}(y)}{\partial y}=\frac{ g_{0,0,0}(y,x,w^*)}{\left(\int_{-\infty}^\infty g_{0,0,0}(y,x,w^*)dy\right)}\\
    &\quad\text{and}\\
    &\quad\frac{\partial F_{X\mid W^*=w^*}(x)}{\partial w^*}\\
    &=\frac{\left(\int_{-\infty}^{x}\int_{-\infty}^\infty g_{1,0,0}(y,s,w^*)dyds\right)}{\left(\int_{-\infty}^\infty\int_{-\infty}^\infty g_{0,0,0}(y,x,w^*)dydx\right)}\\
    &\quad-\frac{\left(\int_{-\infty}^{x}\int_{-\infty}^\infty g_{0,0,0}(y,s,w^*)dyds\right)\left(\int_{-\infty}^\infty\int_{-\infty}^\infty g_{1,0,0}(y,x,w^*)dydx\right)}{\left(\int_{-\infty}^\infty\int_{-\infty}^\infty g_{0,0,0}(y,x,w^*)dydx\right)^2}\\
    &\quad\text{and}\\
    &\quad\frac{\partial F_{X\mid W^*=w^*}(x)}{\partial x}=\frac{\left(\int_{-\infty}^\infty g_{0,0,0}(y,x,w^*)dy\right)}{\left(\int_{-\infty}^\infty\int_{-\infty}^\infty g_{0,0,0}(y,x,w^*)dydx\right)}.
\end{align*}
\indent For the \textbf{LAR}, I will introduce a weighted average version of it, and propose an estimator for this weighted average version of the \textbf{LAR}. The aim of introducing the weight is to ensure that the integration is taken on a set where the joint density $f_{Y,X,W^*}$ is bounded away from zero. This has the benefit of guaranteeing that the relevant functional is Fr\'echet differentiable when $f_{Y,X,W^*}$ appears in the denominator of it. The weighted average version of the \textbf{LAR} is defined as follows: 
\begin{align}
    &\textbf{WLAR}(x)\equiv E\left[\omega(X,W^*,\epsilon)\frac{\partial m(X,\epsilon)}{\partial x}\mid X=x\right]\notag\\
    &=\int \int_0^1 \omega(x,w^*,\epsilon)\left.\frac{\partial m(x,\epsilon)}{\partial x}\right|_{\epsilon =s(r(x, w^*),\delta)} f_{W^*|X=x}(w^*)d\delta dw^*,\label{weightedLAR}
\end{align}
where $\omega(x,w^*,\epsilon)$ is a known or estimable weighting function taking value 0 outside a compact set $\mathbb M$. As a weighting function, $\omega(x,w^*,\epsilon)$ also satisfies that given any values of $x$ such that $\omega(x,\cdot,\cdot)$ could obtain non-zero values, $\int_{w^*}\int_\epsilon\omega(x,w^*,\epsilon)f_{\epsilon, W^*|X=x}(\epsilon, w^*) d\epsilon dw^*=1$. The specific form of $\omega$ is up to the researcher's choice, however, to ensure easy calculation of $\omega(x,w^*,s(r(x,w^*),\delta))$ on the right-hand side of (\ref{weightedLAR}), I require that function $\omega(x,w^*,\epsilon)$ satisfy the following restriction: $\omega(x,w^*,\epsilon)=\tilde \omega(x,w^*)$ if $\epsilon \in \left[q_{x,w^*}(\tau_l), q_{x,w^*}(\tau_u)\right]$ and $\omega(x,w^*,\epsilon)=0$ otherwise, where $\tilde\omega(x,w^*)$ is a known function with compact support, and $q_{x,w^*}(\tau)$ denotes the conditional-$\tau$ quantile of $\epsilon$ given $X=x, W^*=w^*$. The specific form of function $\tilde w$ and the specific values of $\tau_l$ and $\tau_u$ are up to the researcher's choice. Under this restriction, $\omega(x,w^*,s(r(x,w^*),\delta))$ is equal to $\tilde w(x,w^*)$ if $\delta$ is between $\tau_l$ and $\tau_u$, and is equal to 0 otherwise. An example of function $\omega(x,w^*,\epsilon)$ is when $\tilde \omega(x,w^*)$ is a constant equal to $\left[(\tau_u-\tau_l)\int_{\underline{w}^*}^{\bar w^*} f_{W^*|X=x}(w^*)dw^*\right]^{-1}$. The \textbf{WLAR} in this case can be interpreted as the Local Average Response of a subgroup of individuals whose unobserved $\epsilon$ is between its $\tau_l$ and $\tau_u$ conditional quantile given $X, W^*$ and whose $W^*$ is between $\underline w^*$ and $\bar w^*$. Note that even though $X$ is correlated with $\epsilon$, neither the distribution of $\epsilon|X$ nor the subgroup of individuals changes as I take the derivative with respect to $x$. This means that my objective of interest is the same group of people when studying the effect of the counterfactual changes of the endogenous $X$. The \textbf{WLAR} can also be written as a functional of $g_{\lambda_1,\lambda_{2,1}, \lambda_{2,2}}(y,x,w^*)$. The explicit form can be derived from Lemma \ref{dcquantile} and equation (\ref{pmpxdelta}).\\
\indent So far I have written the parameters of interest as functionals of $g_{\lambda_1,\lambda_{2,1}, \lambda_{2,2}}(y,x,w^*)$ in explicit forms. Next I focus on the estimation of $g_{\lambda_1,\lambda_{2,1}, \lambda_{2,2}}(y,x,w^*)$. To deal with the well-known ill-posed inverse problem when inverting a convolution operator, I base my estimator on a smoothed version of $g_{\lambda_1,\lambda_{2,1}, \lambda_{2,2}}(y,x,w^*)$:
\begin{align*}
    &g_{\lambda_1,\lambda_{2,1}, \lambda_{2,2}}(y,x,w^*,h_1)\equiv\int \frac{1}{h_1}K\left(\frac{\tilde w^*-w^*}{h_1}\right)g_{\lambda_1,\lambda_{2,1}, \lambda_{2,2}}(y,x,\tilde w^*)d\tilde w^*,
\end{align*}
and use kernel functions satisfying the following assumption:
\begin{assumption}\label{phik}
The kernel functions $K: \mathbb{R}\rightarrow \mathbb{R}$, $G_Y: \mathbb{R}\rightarrow \mathbb{R}$ and $G_X: \mathbb{R}\rightarrow \mathbb{R}$ are measurable, symmetric. $\int K(x)dx=1$, $\int G_Y(y)dy=1$, $\int G_X(x)dx=1$.  $G_Y$, $G_X$ are differentiable. $G_Y$, $G_X$, and their derivatives are bounded, and denote the maximum of these bounds as $\bar G$. Their Fourier transforms  $\xi\rightarrow\phi_{K}(\xi)$, $\xi\rightarrow\phi_{G_Y}(\xi)$ and $\xi\rightarrow\phi_{G_X}(\xi)$ obey: (i) $\phi_{F}$ is compactly supported (without loss of generality, the support is $[-1,1]$) for $F\in \{K, G_X, G_Y\}$; and (ii) there exists $\bar \xi_F$ such that $\phi_{F}(\xi)=1$ for $|\xi|\leq\bar\xi_F$, where $F\in \{K,G_X,G_Y\}$.
\end{assumption}
By assuming (ii), I use flat-top kernels proposed by \cite{politis1999multivariate}. The benefit of using flat-top kernels is that the rate of decrease of the bias term will not be affected by the order of the kernel, and will only be affected by the smoothness of the function to be estimated. The restriction of compact support of the Fourier transform is without loss of generality. As discussed by \cite{schennach2004nonparametric}, given any kernel $K$, one can always create a modified kernel $\tilde K$ that satisfies the assumption by using a ``windowing'' function. For example,
\begin{align*}
    \phi_{\tilde K}(t)=W(t)\phi_{K}(t).
\end{align*}
where
\begin{align*}
    W(t)=\begin{cases}1 & \text { if }|t| \leq \bar{t} \\
\left(1+\exp \left((1-\bar{t})\left((1-|t|)^{-1}-(|t|-\bar{t})^{-1}\right)\right)\right)^{-1} & \text { if } 1 \geq|t|>\bar{t} \\
0 & \text { if }|t|>1
    \end{cases}
\end{align*}
for some $\bar t\in(0,1)$.
\begin{assumption}\label{E<infty}
$E[|\Delta W_1|]<\infty$.
\end{assumption}

The following lemma shows that the smoothed version $g_{\lambda_1,\lambda_{2,1}, \lambda_{2,2}}(y,x,w^*,h_1)$ can be written as an expression involving the Fourier transform of the kernel function and the characteristic functions of different variables. This expression makes it more straightforward to introduce the estimator that will appear soon below.
\begin{lemma}\label{convolution thm}
For $(y,x,w^*)\in\mathbb{S}_{(Y,X,W^*)}$ and $h>0$, let
\begin{align*}
    &g_{\lambda_1,\lambda_{2,1}, \lambda_{2,2}}(y,x,w^*,h_1)\equiv\int \frac{1}{h_1}K\left(\frac{\tilde w^*-w^*}{h_1}\right)g_{\lambda_1,\lambda_{2,1}, \lambda_{2,2}}(y,x,\tilde w^*)d\tilde w^*.
\end{align*}
where $K$ satisfies Assumption \ref{phik}. Then under Assumptions \ref{posdenom}, \ref{cpctcontdiff}, and \ref{E<infty},
\begin{align*}
    &g_{\lambda_1,\lambda_{2,1}, \lambda_{2,2}}(y,x,w^*,h_1) = \frac{1}{2\pi}\int \left(-\mathbf{i}t\right)^{\lambda_1}e^{-\mathbf{i}tw^*} \phi_K(h_1t)\frac{\phi_{f^{(\lambda_{2,1},\lambda_{2,2})}_{Y,X,W_2}(y,x, \cdot)}(t)\phi_{W^*}(t)}{\phi_{W_2}(t)} dt
\end{align*}
where $f^{(\lambda_{2,1},\lambda_{2,2})}_{Y,X,W_2}(y,x,w_2)\equiv \frac{\partial^{\lambda_2}f_{Y,X,W_2}(y,x,w_2)}{\partial y^{\lambda_{2,1}}\partial x^{\lambda_{2,2}}}$.
\end{lemma}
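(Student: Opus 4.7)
My plan is to derive the representation in two stages: first establish the Fourier-deconvolution formula for the unsmoothed mixed $(y,x)$-derivative $g_{0,\lambda_{2,1},\lambda_{2,2}}(y,x,w^*)$, and then show that convolving in $w^*$ with $h_1^{-1}K(\cdot/h_1)$ and taking $\lambda_1$ derivatives translates in frequency space into multiplication by $(-\mathbf{i}t)^{\lambda_1}\phi_K(h_1t)$. The starting point is the convolution identity $f_{Y,X,W_2}(y,x,w_2)=\int f_{Y,X,W^*}(y,x,\nu)f_{\Delta W_2}(w_2-\nu)\,d\nu$, which follows from Assumption \ref{measerrors}. Differentiating both sides $\lambda_{2,1}$ times in $y$ and $\lambda_{2,2}$ times in $x$ is legitimate under Assumption \ref{cpctcontdiff} (compact support and twice continuous differentiability provide an integrable dominating function), and yields the same convolution relationship for $f^{(\lambda_{2,1},\lambda_{2,2})}_{Y,X,W_2}$ and $g_{0,\lambda_{2,1},\lambda_{2,2}}$. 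Taking Fourier transforms in the last argument, dividing by $\phi_{\Delta W_2}(t)=\phi_{W_2}(t)/\phi_{W^*}(t)$ (well-defined by Assumption \ref{posdenom} and by Schennach's formula for $\phi_{W^*}$, which applies because $E[|W_1|]<\infty$ follows from Assumption \ref{E<infty} together with the compact support of $W^*$ in Assumption \ref{cpctcontdiff}), and inverting gives
\begin{align*}
g_{0,\lambda_{2,1},\lambda_{2,2}}(y,x,w^*)=\frac{1}{2\pi}\int e^{-\mathbf{i}tw^*}\frac{\phi_{f^{(\lambda_{2,1},\lambda_{2,2})}_{Y,X,W_2}(y,x,\cdot)}(t)\phi_{W^*}(t)}{\phi_{W_2}(t)}\,dt.
\end{align*}

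Next I substitute this into the definition of the smoothed object. For $\lambda_1=0$, I plug the display above into $g_{0,\lambda_{2,1},\lambda_{2,2}}(y,x,w^*,h_1)=\int h_1^{-1}K((\tilde w^*-w^*)/h_1)g_{0,\lambda_{2,1},\lambda_{2,2}}(y,x,\tilde w^*)\,d\tilde w^*$ and swap the order of integration by Fubini. A change of variables $u=(\tilde w^*-w^*)/h_1$ in the inner integral produces $e^{-\mathbf{i}tw^*}\phi_K(h_1t)$, giving the target expression with $\lambda_1=0$. For $\lambda_1=1$, I use integration by parts in the definition of $g_{1,\lambda_{2,1},\lambda_{2,2}}(y,x,w^*,h_1)$ (boundary terms vanish by the compact support of $f_{Y,X,W^*}$ and the decay of $K$) to rewrite the smoothed derivative as the derivative of the smoothed density in $w^*$; then differentiation under the integral sign is permissible because $\phi_K$ is supported on $[-1,1]$ (Assumption \ref{phik}), which confines the outer integral to $|t|\leq 1/h_1$ where $(-\mathbf{i}t)$ is bounded. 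Differentiating brings down the factor $(-\mathbf{i}t)^{\lambda_1}$ and completes the derivation.

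The main obstacle is rigorously justifying the Fubini swap and the differentiation under the integral sign, because on its own the integrand $\phi_{f^{(\lambda_{2,1},\lambda_{2,2})}_{Y,X,W_2}(y,x,\cdot)}(t)\phi_{W^*}(t)/\phi_{W_2}(t)$ need not be integrable over $\mathbb{R}$: the denominator $\phi_{W_2}(t)$ may tend to zero as $|t|\to\infty$, making $1/\phi_{W_2}(t)$ unbounded. The crucial observation is that the presence of $\phi_K(h_1t)$, which vanishes outside $|t|\leq 1/h_1$ by Assumption \ref{phik}, effectively truncates the $t$-integral to a compact set. On that set $\phi_{W_2}(t)$ is continuous and nonzero by Assumption \ref{posdenom}, hence bounded away from $0$, so the integrand is integrable with any polynomial weight $|t|^{\lambda_1}$. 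This simultaneously legitimizes the Fubini swap, the differentiation in $w^*$ that produces $(-\mathbf{i}t)^{\lambda_1}$, and (via Assumption \ref{cpctcontdiff}) the earlier $(y,x)$-differentiations being passed inside the convolution.
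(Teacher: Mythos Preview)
Your proposal is correct and follows essentially the same route as the paper: both identify the Fourier transform of $g_{\lambda_1,\lambda_{2,1},\lambda_{2,2}}(y,x,\cdot)$ as $(-\mathbf{i}t)^{\lambda_1}\phi_{f^{(\lambda_{2,1},\lambda_{2,2})}_{Y,X,W_2}(y,x,\cdot)}(t)\phi_{W^*}(t)/\phi_{W_2}(t)$ and then invoke the convolution theorem to bring in the factor $\phi_K(h_1t)$. The only cosmetic difference is that the paper absorbs the $w^*$-derivative into the Fourier transform of the unsmoothed object in one step (using the standard identity that differentiation corresponds to multiplication by $-\mathbf{i}t$), whereas you split off the $\lambda_1=1$ case and move the derivative onto the kernel via integration by parts before differentiating under the integral; your explicit use of the compact support of $\phi_K$ to justify Fubini and differentiation is a welcome addition that the paper leaves implicit.
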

\begin{proof}
See Appendix B.
\end{proof}
I now define my estimator for $g_{\lambda_1,\lambda_{2,1}, \lambda_{2,2}}(y,x,w^*,h_1)$ by replacing $\phi_{f^{(\lambda_{2,1},\lambda_{2,2})}_{Y,X,W_2}(y,x,\cdot)}(t)$, $\phi_{W^*}(t)$, and $\phi_{W_2}(t)$ in Lemma \ref{convolution thm} with their sample analogues. Formally,
\begin{definition}
Let $h_n\equiv(h_{1n}, h_{2n,1}, h_{2n,2})\rightarrow 0 $ as $n\rightarrow 0$. The estimator for $g_{\lambda_1,\lambda_{2,1}, \lambda_{2,2}}(y,x,w^*)$ is defined as 
\begin{align*}
    &\hat g_{\lambda_1,\lambda_{2,1}, \lambda_{2,2}}(y,x,w^*,h_n)\equiv \frac{1}{2\pi}\int\left(-\mathbf{i}t\right)^{\lambda_1} e^{-\mathbf{i}tw^*} \phi_K(h_{1n}t)\frac{\hat\phi_{f^{(\lambda_{2,1},\lambda_{2,2})}_{Y,X,W_2}(y,x,\cdot)}(t)\hat\phi_{W^*}(t)}{\hat\phi_{W_2}(t)} dt,
\end{align*}
where 
\begin{align*}
   &\hat\phi_{f^{(\lambda_{2,1},\lambda_{2,2})}_{Y,X,W_2}(y,x,\cdot)}(t) \equiv \hat E\left[e^{\mathbf{i}tW_{2}}\frac{1}{h_{2n,1}^{1+\lambda_{2,1}}h_{2n,2}^{1+\lambda_{2,2}}}G_Y^{(\lambda_{2,1})}\left(\frac{y-Y}{h_{2n,1}}\right)G_X^{(\lambda_{2,2})}\left(\frac{x-X}{h_{2n,2}}\right)\right],\\
    &\hat\phi_{W_2}(t)\equiv\hat E\left[e^{\mathbf{i}tW_2}\right],\\
    &\hat\phi_{W^*}(t) \equiv \exp\left(\int_0^{t}\frac{\hat E\left[iW_1 e^{i\xi W_2}\right]}{\hat E\left[e^{i\xi W_2}\right]}d\xi\right),
\end{align*}
where $\hat E$ denotes the sample average, $G_Y, G_X$ denotes the kernel functions for $Y,X$, respectively, and $G_X^{(\lambda_{2,2})}(x)\equiv\frac{\partial^{\lambda_{2,2}} G_X(x)}{\partial x^{\lambda_{2,2}}}$. $G_Y, G_X$ are not necessarily the same as $K$.
\end{definition}
\indent Having defined $\hat g_{\lambda_1,\lambda_{2,1}, \lambda_{2,2}}(y, x ,w^*,h_n)$, I can now define the estimator for $\rho(\bar y, \bar x)$ by replacing $g_{\lambda_1,\lambda_{2,1}, \lambda_{2,2}}(y,x,w^*)$ with its estimator on the right-hand-side of Equation (\ref{pmpxbar}). Note that by Fubini's Theorem 
\begin{footnotesize}
    \begin{align*}
        &\quad\int_{-\infty}^y \hat g_{\lambda_1,0,0, \lambda_{2,2}}(s,x,w^*,h_n)ds=\frac{1}{2\pi}\int\left(-\mathbf{i}t\right)^{\lambda_1} e^{-\mathbf{i}tw^*} \phi_K(h_{1n}t)\frac{\int_{-\infty}^y\hat\phi_{f^{(0,\lambda_{2,2})}_{Y, X,W_2}(s,x, \cdot)}(t)ds\hat\phi_{W^*}(t)}{\hat\phi_{W_2}(t)} dt\\
        &\quad\int_{-\infty}^\infty \hat g_{\lambda_1,0,0, \lambda_{2,2}}(y,x,w^*,h_n)dy=\frac{1}{2\pi}\int\left(-\mathbf{i}t\right)^{\lambda_1} e^{-\mathbf{i}tw^*} \phi_K(h_{1n}t)\frac{\hat\phi_{f^{(0,\lambda_{2,2})}_{X,W_2}(x, \cdot)}(t)\hat\phi_{W^*}(t)}{\hat\phi_{W_2}(t)} dt\\
        &\quad\int_{-\infty}^x\int_{-\infty}^\infty\hat g_{\lambda_1,0,0}(y,s,w^*,h_n)dyds=\frac{1}{2\pi}\int\left(-\mathbf{i}t\right)^{\lambda_1} e^{-\mathbf{i}tw^*} \phi_K(h_{1n}t)\frac{\int_{-\infty}^x\hat\phi_{f_{X,W_2}(s, \cdot)}(t)ds\hat\phi_{W^*}(t)}{\hat\phi_{W_2}(t)} dt\\
        &\quad\int_{-\infty}^\infty\int_{-\infty}^\infty\hat g_{\lambda_1,0,0}(y,x,w^*,h_n)dydx=\frac{1}{2\pi}\int\left(-\mathbf{i}t\right)^{\lambda_1} e^{-\mathbf{i}tw^*} \phi_K(h_{1n}t)\hat\phi_{W^*}(t) dt,
    \end{align*}
\end{footnotesize}
where I've let $\tilde G_Y(y)\equiv\int_{-\infty}^y G_Y\left(u\right)du$ and $\tilde G_X(x)\equiv\int_{-\infty}^x G_X\left(u\right)du$ denote the kernel CDFs, and I've defined:
\begin{align*}
    &\int_{-\infty}^y \hat\phi_{f^{(0,\lambda_{2,2})}_{Y, X,W_2}(s,x, \cdot)}(t)ds =\hat E\left[e^{\mathbf{i}tW_{2}}\frac{1}{h_{2n,2}^{1+\lambda_{2,2}}}\tilde G_Y\left(\frac{y-Y}{h_{2n,1}}\right)G_X^{(\lambda_{2,2})}\left(\frac{x-X}{h_{2n,2}}\right)\right]\\
    &\hat\phi_{f^{(\lambda_{2,2})}_{X,W_2}(x, \cdot)}(t)ds =\hat E\left[e^{\mathbf{i}tW_{2}}\frac{1}{h_{2n,2}^{1+\lambda_{2,2}}}G_X^{(\lambda_{2,2})}\left(\frac{x-X}{h_{2n,2}}\right)\right]\\
    &\int_{-\infty}^x\hat\phi_{f_{X,W_2}(s, \cdot)}(t)ds =\hat E\left[e^{\mathbf{i}tW_{2}}\tilde G_X\left(\frac{x-X}{h_{2n,2}}\right)\right].
\end{align*}
\subsection{The Asymptotic Properties of the Estimator}
In this subsection, I analyze the asymptotic properties for $\hat g_{\lambda_1,\lambda_{2,1}, \lambda_{2,2}}(y,x,w^*,h_n)$. I begin by decomposing the difference between the estimator and the true value of the parameter as the sum of a bias term, a variance term, and a remainder term. 
\begin{lemma}\label{decomposition}
Suppose $\left\{Y_i, X_i, W^*_i, \Delta W_{1,i}, W_{2,i}\right\}$ is an IID sequence satisfying Assumptions \ref{repeatedmeas}, \ref{measerrors}, \ref{posdenom}, \ref{cpctcontdiff}, and \ref{E<infty}, and that Assumption \ref{phik} holds. Then for $(y,x,w^*)\in\mathbb{S}_{(Y,X,W^*)}$, and $h>0$,
\begin{align*}
    &\quad\hat g_{\lambda_1,\lambda_{2,1}, \lambda_{2,2}}(y, x, w^*,h)-g_{\lambda_1,\lambda_{2,1}, \lambda_{2,2}}(y, x, w^*) \\
    &= B_{\lambda_1,\lambda_{2,1}, \lambda_{2,2}}(y, x, w^*, h)+L_{\lambda_1,\lambda_{2,1}, \lambda_{2,2}}(y, x, w^*, h)+R_{\lambda_1,\lambda_{2,1}, \lambda_{2,2}}(y, x, w^*, h)
\end{align*}
where $B_{\lambda_1,\lambda_{2,1}, \lambda_{2,2}}(y, x, w^*, h)$ is the bias term admitting the linear representation:
\begin{footnotesize}
\begin{align*}
    &\quad B_{\lambda_1,\lambda_{2,1}, \lambda_{2,2}}(y, x, w^*, h) \\
    &= E\left[\bar g_{\lambda_1,\lambda_{2,1}, \lambda_{2,2}}(y, x, w^*,h)\right] - g_{\lambda_1,\lambda_{2,1},\lambda_{2,2}}(y,x,w^*)\\
    &= g_{\lambda_1,\lambda_{2,1},\lambda_{2,2}}(y,x,w^*,h_1)-g_{\lambda_1,\lambda_{2,1},\lambda_{2,2}}(y,x,w^*)\\
    &\quad+\frac{1}{2\pi}\int\left(-\mathbf{i}t\right)^{\lambda_1} e^{-\mathbf{i}tw^*} \phi_K(h_{1n}t)\frac{E\left[\hat\phi_{f^{(\lambda_{2,1}, \lambda_{2,2})}_{Y,X,W_2}(y,x,\cdot)}(t)-\phi_{f^{(\lambda_{2,1}, \lambda_{2,2})}_{Y,X,W_2}(y,x,\cdot)}(t)\right]\phi_{W^*}(t)}{\phi_{W_2}(t)} dt;
\end{align*}
\end{footnotesize}
$L_{\lambda_1,\lambda_{2,1}, \lambda_{2,2}}(y, x, w^*, h)$ is the variance term admitting the linear representation:
\begin{footnotesize}
\begin{align*}
    &\quad L_{\lambda_1,\lambda_{2,1}, \lambda_{2,2}}(y, x, w^*, h) \\
    &= \bar g_{\lambda_1,\lambda_{2,1}, \lambda_{2,2}}(y, x, w^*,h)-E\left[\bar g_{\lambda_1,\lambda_{2,1}, \lambda_{2,2}}(y, x, w^*,h)\right] \\
   &=\hat E\Bigg[\int \Psi_{1, \lambda_1,\lambda_{2,1},\lambda_{2,2}}(\xi, y, x, w^*, h_1)\left(W_1e^{\mathbf{i}\xi W_2}-E[W_1e^{\mathbf{i}\xi W_2}]\right)d\xi\notag\\
    &\quad+\int \Psi_{2,\lambda_1,\lambda_{2,1},\lambda_{2,2}}(\xi, y, x, w^*, h_1)\left(e^{\mathbf{i}\xi W_2}-E[e^{\mathbf{i}\xi W_2}]\right)d\xi\notag\\
    &\quad+\int \Psi_{3,\lambda_1,\lambda_{2,1},\lambda_{2,2}}(\xi, y, x, w^*, h_1)\times\\
    &\quad\quad\quad\quad\left(e^{\mathbf{i}\xi W_2}\frac{1}{h_{2,1}^{1+\lambda_{2,1}}h_{2,2}^{1+\lambda_{2,2}}}G_Y^{(\lambda_{2,1})}\left( \frac{y-Y}{h_{2,1}}\right)G_X^{(\lambda_{2,2})}\left(\frac{x-X}{h_{2,2}}\right)\right.\\
    &\quad\quad\quad\quad\quad\quad\left.-E\left[e^{\mathbf{i}\xi W_2}\frac{1}{h_{2,1}^{1+\lambda_{2,1}}h_{2,2}^{1+\lambda_{2,2}}}G_Y^{(\lambda_{2,1})}\left( \frac{y-Y}{h_{2,1}}\right)G_X^{(\lambda_{2,2})}\left(\frac{x-X}{h_{2,2}}\right)\right]\right) d\xi\Bigg]\notag\\
    &=\hat E[l_{\lambda_1,\lambda_{2,1}, \lambda_{2,2}}(y, x, w^*,h;Y, X, W_1, W_2)],
\end{align*}
\end{footnotesize}
where I've let $\theta(\xi) \equiv E\left[W_1 e^{\mathrm{i} \xi W_{2}}\right]$ and defined
\begin{align*}
    &\Psi_{1, \lambda_1,\lambda_{2,1},\lambda_{2,2}}(\xi, y, x, w^*, h_1) = \frac{1}{2 \pi} \frac{\mathbf{i}}{\phi_{W_2}(\xi)}\int_{\xi}^{\pm\infty} \left(-\mathbf{i}t\right)^{\lambda_1} e^{-\mathbf itw^*} \phi_{K}(h_1t)\phi_{f^{(\lambda_{2,1},\lambda_{2,2})}_{Y,X,W^*}(y,x,\cdot)}(t)dt \\
    &\Psi_{2,\lambda_1,\lambda_{2,1},\lambda_{2,2}}(\xi, y, x, w^*, h_1) \\
    &= -\frac{1}{2 \pi} \frac{\mathbf{i}\theta(\xi)}{(\phi_{W_2}(\xi))^2}\int_{\xi}^{\pm\infty}  \left(-\mathbf{i}t\right)^{\lambda_1}e^{-\mathbf itw^*} \phi_{K}(h_1 t)\phi_{f^{(\lambda_{2,1},\lambda_{2,2})}_{Y,X,W^*}(y,x,\cdot)}(t)dt\\
    &\quad -\frac{1}{2 \pi} \left(-\mathbf{i}t\right)^{\lambda_1}e^{-\mathbf i\xi w^*} \phi_{K}(h _1\xi)\frac{\phi_{f^{(\lambda_{2,1},\lambda_{2,2})}_{Y,X,W^*}(y,x,\cdot)}(\xi)}{\phi_{W_2}(\xi)}\\
    &\Psi_{3,\lambda_1,\lambda_{2,1},\lambda_{2,2}}(\xi, y, x, w^*, h_1) = \frac{1}{2 \pi}   \left(-\mathbf{i}t\right)^{\lambda_1}e^{-\mathbf i\xi w^*} \phi_{K}(h_1\xi)\frac{\phi_{W^*}(\xi)}{\phi_{W_2}(\xi)}.
\end{align*}
where for a given function $\zeta\rightarrow f(\zeta)$, I've written $\int_{\xi}^{\pm \infty}f(\zeta)d\zeta\equiv\lim_{c\rightarrow+\infty}\int_{\xi}^{c\xi}f(\zeta)d\zeta$; and $R_{\lambda_1,\lambda_{2,1}, \lambda_{2,2}}(y, x, w^*, h)$ is the remainder term:
\begin{align*}
    R_{\lambda_1,\lambda_{2,1}, \lambda_{2,2}}(y, x, w^*, h) = \hat g_{\lambda_1,\lambda_{2,1}, \lambda_{2,2}}(y, x, w^*,h)-\bar g_{\lambda_1,\lambda_{2,1}, \lambda_{2,2}}(y, x, w^*,h).
\end{align*}
\end{lemma}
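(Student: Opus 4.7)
My plan is to prove the decomposition by explicitly constructing the linearization $\bar g_{\lambda_1,\lambda_{2,1},\lambda_{2,2}}(y,x,w^*,h)$, so that $L = \bar g - E[\bar g]$ automatically becomes a centered sample average and $B = E[\bar g] - g$ matches the claimed form, with the remainder $R = \hat g - \bar g$ defined residually. By Lemma \ref{convolution thm}, $g_{\lambda_1,\lambda_{2,1},\lambda_{2,2}}(y,x,w^*,h_1)$ has the same Fourier integral form as $\hat g_{\lambda_1,\lambda_{2,1},\lambda_{2,2}}(y,x,w^*,h)$ but with the sample characteristic functions replaced by their population counterparts. I thus split $\hat g - g = [g(\cdot,h_1) - g] + [\hat g - g(\cdot,h_1)]$; the first bracket is the deterministic smoothing bias that yields the first summand of $B$, while the second bracket equals $\frac{1}{2\pi}\int(-\mathbf{i}t)^{\lambda_1}e^{-\mathbf{i}tw^*}\phi_K(h_1 t)\bigl[\hat\phi_f\hat\phi_{W^*}/\hat\phi_{W_2} - \phi_f\phi_{W^*}/\phi_{W_2}\bigr]dt$ (using $\hat\phi_f$ as shorthand for $\hat\phi_{f^{(\lambda_{2,1},\lambda_{2,2})}_{Y,X,W_2}(y,x,\cdot)}$), which I linearize next.

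To linearize $\hat\phi_{W^*}$, I use the exponential representation from \cite{schennach2004estimation}: $\hat\phi_{W^*}(t) = \exp(\int_0^t \mathbf{i}\hat\theta(\xi)/\hat\phi_{W_2}(\xi)\,d\xi)$, where $\hat\theta(\xi) = \hat E[W_1 e^{\mathbf{i}\xi W_2}]$. Writing $\hat\phi_{W^*} = \phi_{W^*}\exp(\log(\hat\phi_{W^*}/\phi_{W^*}))$, applying $\exp(x) = 1 + x + O(x^2)$, and expanding the integrand ratio to first order yields
\[
\hat\phi_{W^*}(t) - \phi_{W^*}(t) = \phi_{W^*}(t)\int_0^t\left[\frac{\mathbf{i}(\hat\theta(\xi) - \theta(\xi))}{\phi_{W_2}(\xi)} - \frac{\mathbf{i}\theta(\xi)(\hat\phi_{W_2}(\xi) - \phi_{W_2}(\xi))}{\phi_{W_2}(\xi)^2}\right]d\xi + \text{(second-order)}.
\]
A first-order Taylor expansion of $(a,b,c)\mapsto ab/c$ around $(\phi_f,\phi_{W^*},\phi_{W_2})$ then reduces the bracket in the outer $dt$-integral to a sum of three linear pieces, each featuring exactly one of $(\hat\phi_f - \phi_f)$, $(\hat\theta - \theta)$, or $(\hat\phi_{W_2} - \phi_{W_2})$. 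Multiplying by $(-\mathbf{i}t)^{\lambda_1}e^{-\mathbf{i}tw^*}\phi_K(h_1 t)$ and applying Fubini's theorem to swap the $dt$- and $d\xi$-integrations collapses the terms carrying $(\hat\theta - \theta)$ and $(\hat\phi_{W_2} - \phi_{W_2})$ into the functionals $\Psi_1$ and $\Psi_2$, with the upper limits $\int_\xi^{\pm\infty}$ arising naturally from the interchange, while the term carrying $(\hat\phi_f - \phi_f)$ produces $\Psi_3$.

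To split the linearization between $L$ and $B$, I note that $\hat\theta - \theta$ and $\hat\phi_{W_2} - \phi_{W_2}$ are exactly mean-zero sample averages, so they belong entirely to $L$. For the third term I write $\hat\phi_f - \phi_f = (\hat\phi_f - E[\hat\phi_f]) + (E[\hat\phi_f] - \phi_f)$; the first piece is mean zero and contributes to $L$ through $\Psi_3$, while the second piece is deterministic --- it captures the $(y,x)$-smoothing bias introduced by the kernels $G_Y,G_X$ --- and forms the second summand of $B$. Stacking the three mean-zero terms as a single $\hat E[l_{\lambda_1,\lambda_{2,1},\lambda_{2,2}}(\cdot)]$ yields the claimed form of $L$, and the second-order pieces from both Taylor expansions are absorbed into $R$.

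The main obstacle will be justifying the Fubini interchange that produces the upper limits $\int_\xi^{\pm\infty}$ in $\Psi_1$ and $\Psi_2$. Absolute integrability of the joint integrand in $(t,\xi)$ follows from: (i) Assumption \ref{phik}, which forces $\mathrm{supp}(\phi_K) \subseteq [-1,1]$ so that $dt$-integration is effectively over the compact interval $[-1/h_1, 1/h_1]$; (ii) Assumption \ref{posdenom} combined with the continuity of $\phi_{W_2}$, which bounds $|1/\phi_{W_2}|$ uniformly on any compact set; and (iii) Assumption \ref{E<infty}, which gives $|\theta(\xi)| \leq E|W_1| \leq E|\Delta W_1| + E|W^*| < \infty$. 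Note that this lemma only asserts an algebraic decomposition with the specified linear representations of $B$ and $L$; the quantitative control of $R$ as a genuine remainder is deferred to subsequent results in which bandwidth-dependent rates are invoked.
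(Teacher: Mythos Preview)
Your proposal is correct and follows essentially the same approach as the paper's proof: both linearize $\hat\phi_f\hat\phi_{W^*}/\hat\phi_{W_2}$ around the population quantities, expand $\hat\phi_{W^*}$ via the exponential representation to extract the $(\hat\theta-\theta)$ and $(\hat\phi_{W_2}-\phi_{W_2})$ contributions, then apply the Fubini swap $\int_{-\infty}^{\infty}\int_0^t f(t,\xi)\,d\xi\,dt = \int\int_\xi^{\pm\infty} f(t,\xi)\,dt\,d\xi$ to produce $\Psi_1,\Psi_2,\Psi_3$, with the $E[\hat\phi_f]-\phi_f$ piece routed into $B$. The only cosmetic difference is that the paper organizes the algebra through the factorization $\hat q_{\lambda_2}(t)\exp(\hat Q(t))$ (with an exact Lagrange-form remainder for the exponential) rather than your direct Taylor expansion of $(a,b,c)\mapsto ab/c$, but the resulting linear terms coincide.
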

\begin{proof}
See Appendix B.
\end{proof}
\indent To derive the uniform rate of convergence, I impose some assumptions on the smoothness of $g_{\lambda_1,\lambda_{2,1}, \lambda_{2,2}}(y,x,w^*)$, stated in terms of the rate of decay of the tail of its Fourier transform:
\begin{assumption}\label{smoothness1}
(i) There exist constants $C_{\phi}>0$,  $\alpha_{\phi}\leq0$, $\beta_{\phi}\geq0$ and $\gamma_{\phi}\in\mathbb{R}$ such that $\beta_{\phi}\gamma_{\phi}\geq 0$ and for $j=1,2$
    \begin{align*}
    &\max_{\lambda_2\in\{0,1\}}\sup_{(y,x)\in\mathbb{S}_{(Y,X)}}\left|\phi_{f^{(\lambda_{2,1},\lambda_{2,2})}_{Y,X,W^*}(y,x, \cdot)}(t)\right| \leq C_{\phi}(1+|t|)^{\gamma_{\phi}} \exp \left(\alpha_{\phi}|t|^{\beta_{\phi}}\right)\\
    &\left|\phi_{W^*}(t)\right| \leq C_{\phi}(1+|t|)^{\gamma_{\phi}} \exp \left(\alpha_{\phi}|t|^{\beta_{\phi}}\right),
    \end{align*}
where $f^{(\lambda_{2,1},\lambda_{2,2})}_{Y,X,W_2}(y,x,w_2)\equiv \frac{\partial^{\lambda_2}f_{Y,X,W_2}(y,x,w_2)}{\partial y^{\lambda_{2,1}}\partial x^{\lambda_{2,2}}}$. Moreover, if $\beta_{\phi}=0$, then for given $\lambda \in\{0,1\}, \gamma_{\phi}<-\lambda_1-1$.\\
(ii) Denote the Fourier transform of $f^{(\lambda_{2,1}, \lambda_{2,2})}_{Y,X,W^*}(y,x,w^*)$ as $\phi_{f^{(\lambda_{2,1}, \lambda_{2,2})}}(t_1,t_2,\xi)$. There exists constants $C_{f_1}>0$, $C_{f_2}>0$, $\alpha_{f_1}\leq 0$, $\alpha_{f_2}\leq 0$, $\beta_{f_1}\geq 0$, $\beta_{f_2}\geq 0$ and $\gamma_{f_1}\in\mathbb{R}$, $\gamma_{f_2}\in\mathbb{R}$ such that $\beta_{f_1}\gamma_{f_1}\geq 0$, $\beta_{f_2}\gamma_{f_2}\geq 0$, and
\begin{small}
\begin{align*}
   \left|\phi_{f^{(\lambda_{2,1}, \lambda_{2,2})}}(t_1,t_2,\xi)\right| \leq C_{f_1}C_{f_2}(1+|t_1|)^{\gamma_{f_1}}(1+|t_2|)^{\gamma_{f_2}}\exp\left(\alpha_{f_1}|t_1|^{\beta_{f_1}}+\alpha_{f_2}|t_2|^{\beta_{f_2}}\right)\left|\phi_{W^*}(t)\right|.
\end{align*}    
\end{small}
Moreover, if $\beta_{f_1}=0$, $\gamma_{f_1}<-1$, and if $\beta_{f_2}=0$, $\gamma_{f_2}<-1$.
\end{assumption}
In Assumption \ref{smoothness1}(i) I impose the same bound for the tail behavior of $\phi_{f^{(\lambda_{2,1},\lambda_{2,2})}_{Y,X,W^*}(y,x,\cdot)}$ and $\phi_{W^*}(t)$. This is without loss of generality since they have the same effect on the convergence rate. Assumption \ref{smoothness1}(ii) is similar to Assumption A7 in \cite{li2002robust}, and can be viewed as a generalization of smoothness condition from the univariate case to the multivariate case.\\
\indent I next state the first main result of this paper, the uniform asymptotic rate of the bias term:
\begin{theorem}\label{biasrate}
Let the conditions of Lemma \ref{decomposition} hold with $\{Y_i,X_i,W_{i}^*, \Delta W_{1,i}, \Delta W_{2,i}\}$ IID, and suppose in addition that Assumption \ref{smoothness1} holds. Then for $h>0$,
\begin{align*}
    \sup_{(y,x,w^*)\in\mathbb{S}_{(Y,X,W^*)}}\left|B_{\lambda_1,\lambda_{2,1}, \lambda_{2,2}}(y,x,w^*,h)\right| = O\left(\left(h_1^{-1}\right)^{1+\gamma_{\phi}+\lambda_1}\exp\left(\alpha_{\phi}\bar\zeta_K^{\beta_{\phi}}\left(h_1^{-1}\right)^{\beta_{\phi}}\right)\right).
\end{align*}
\end{theorem}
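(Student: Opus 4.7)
The starting point is Lemma \ref{decomposition}, which splits $B_{\lambda_1,\lambda_{2,1},\lambda_{2,2}}(y,x,w^*,h)$ into two pieces: $B^{(1)}\equiv g_{\lambda_1,\lambda_{2,1},\lambda_{2,2}}(y,x,w^*,h_1) - g_{\lambda_1,\lambda_{2,1},\lambda_{2,2}}(y,x,w^*)$, the pure $w^*$-smoothing bias induced by $K$, and $B^{(2)}$, the integral over $t$ involving $E[\hat\phi_{f^{(\lambda_{2,1},\lambda_{2,2})}_{Y,X,W_2}(y,x,\cdot)}(t)]-\phi_{f^{(\lambda_{2,1},\lambda_{2,2})}_{Y,X,W_2}(y,x,\cdot)}(t)$, the $(y,x)$-smoothing bias induced by $G_Y,G_X$. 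My plan is to bound each piece by Fourier-analytic arguments, using the flat-top property of the kernels (Assumption \ref{phik}(ii)) to localize the relevant integrals to tail frequencies, and then applying the tail-decay bounds of Assumption \ref{smoothness1}.

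For $B^{(1)}$, Fourier inversion gives
\[
B^{(1)} = \frac{1}{2\pi}\int e^{-\mathbf{i}tw^*}(-\mathbf{i}t)^{\lambda_1}\bigl(\phi_K(h_1 t)-1\bigr)\phi_{f^{(\lambda_{2,1},\lambda_{2,2})}_{Y,X,W^*}(y,x,\cdot)}(t)\,dt.
\]
Since $\phi_K(h_1 t)=1$ for $|t|\le\bar\xi_K/h_1$, the integrand vanishes on that range; using $|\phi_K|\le 1$ and Assumption \ref{smoothness1}(i), I am left with
\[
\bigl|B^{(1)}\bigr|\le\frac{1}{\pi}\int_{|t|>\bar\xi_K/h_1}|t|^{\lambda_1}C_\phi(1+|t|)^{\gamma_\phi}\exp(\alpha_\phi|t|^{\beta_\phi})\,dt.
\]
This tail integral is elementary but requires two cases: when $\beta_\phi=0$, the integrability condition $\gamma_\phi<-\lambda_1-1$ in Assumption \ref{smoothness1}(i) gives $O(h_1^{-(1+\gamma_\phi+\lambda_1)})$; when $\beta_\phi>0$, a Laplace-type estimate yields the exponential decay. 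Both cases are absorbed into the envelope stated in the theorem, and uniformity in $(y,x)$ follows because Assumption \ref{smoothness1}(i) is uniform in $(y,x)\in\mathbb{S}_{(Y,X)}$.

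For $B^{(2)}$, taking the expectation inside $\hat\phi_{f^{(\lambda_{2,1},\lambda_{2,2})}_{Y,X,W_2}(y,x,\cdot)}(t)$ and integrating by parts $\lambda_{2,1}$ times in $Y$ and $\lambda_{2,2}$ times in $X$, then applying Fourier inversion in $(y,x)$, yields
\[
E[\hat\phi]-\phi = \frac{1}{(2\pi)^2}\iint e^{-\mathbf{i}\tau_1 y-\mathbf{i}\tau_2 x}\bigl(\phi_{G_Y}(h_{2,1}\tau_1)\phi_{G_X}(h_{2,2}\tau_2)-1\bigr)\phi_{f^{(\lambda_{2,1},\lambda_{2,2})}_{Y,X,W_2}}(\tau_1,\tau_2,t)\,d\tau_1 d\tau_2.
\]
Flat-topness localizes this to $|\tau_1|>\bar\xi_{G_Y}/h_{2,1}$ or $|\tau_2|>\bar\xi_{G_X}/h_{2,2}$, and the key identity $\phi_{f^{(\lambda_{2,1},\lambda_{2,2})}_{Y,X,W_2}}(\tau_1,\tau_2,t)=\phi_{f^{(\lambda_{2,1},\lambda_{2,2})}_{Y,X,W^*}}(\tau_1,\tau_2,t)\,\phi_{W_2}(t)/\phi_{W^*}(t)$ --- a consequence of $\Delta W_2\perp(Y,X,W^*)$ in Assumption \ref{measerrors} --- combined with Assumption \ref{smoothness1}(ii) delivers $|E[\hat\phi]-\phi|\le C(h_{2,1},h_{2,2})|\phi_{W_2}(t)|$, where $C(h_{2,1},h_{2,2})$ is determined by the constants $(\alpha_{f_j},\beta_{f_j},\gamma_{f_j})$. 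The factor $|\phi_{W_2}(t)|$ cancels the denominator inside $B^{(2)}$, and what remains is the $t$-integral $\int|t|^{\lambda_1}|\phi_K(h_1 t)||\phi_{W^*}(t)|\,dt$, which is uniformly finite by Assumption \ref{smoothness1}(i). Choosing $h_{2,1},h_{2,2}$ so that $C(h_{2,1},h_{2,2})$ is dominated by the $h_1$-rate from $B^{(1)}$ absorbs $B^{(2)}$ into the stated rate.

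The hardest step is the bookkeeping around this cancellation: the factor $\phi_{W_2}(t)$ in the denominator, which encodes the ill-posed inverse problem, must be matched exactly by the $\phi_{W_2}(t)$ that emerges from the smoothness bound on $\phi_{f^{(\lambda_{2,1},\lambda_{2,2})}_{Y,X,W_2}}$. Without this cancellation, the $(y,x)$-smoothing bias would inherit the full deconvolution ill-posedness and prevent a uniform bound. Carrying out this cancellation carefully, together with the case-by-case tail estimates in the ordinary-smooth versus super-smooth regimes and the uniformity arguments in $(y,x,w^*)\in\mathbb{S}_{(Y,X,W^*)}$, is where the technical effort concentrates.
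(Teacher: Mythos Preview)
Your proposal is correct and follows essentially the same route as the paper: both treat $B^{(1)}$ by Fourier inversion plus the flat-top property of $K$ and the tail bound in Assumption~\ref{smoothness1}(i), and both treat $B^{(2)}$ by integration by parts, Fourier inversion in $(y,x)$, the flat-top property of $G_Y,G_X$, and Assumption~\ref{smoothness1}(ii), exploiting exactly the cancellation of $\phi_{W_2}$ you describe (the paper does this by multiplying $E[\delta\hat\phi_{f^{(\lambda_{2,1},\lambda_{2,2})}_{Y,X,W_2}}]$ by $\phi_{W^*}/\phi_{W_2}$ before bounding). One small caution: your claim that the residual $t$-integral $\int|t|^{\lambda_1}|\phi_K(h_1t)||\phi_{W^*}(t)|\,dt$ is ``uniformly finite'' fails when $\alpha_\phi=0$ and $\beta_\phi>0$ (it then grows like $(h_1^{-1})^{1+\lambda_1+\gamma_\phi}$), but this is harmless since that is precisely the target rate for $B^{(1)}$ in that regime, so $C(h_{2,1},h_{2,2})\to 0$ still makes $B^{(2)}$ lower order---the paper handles this the same way, simply absorbing the growth into the stated envelope.
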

\begin{proof}
See Appendix B.
\end{proof}
I impose the following assumption to ensure finite variance:
\begin{assumption}\label{finite variance}
For some $\delta>0$, $E[\left|W_1\right|^{2+\delta}]<\infty$, $\sup_{w_2\in\mathbb{S}_{W_2}}E\left[\left|W_1\right|^{2+\delta}\mid W_2=w_2\right]<\infty$ .
\end{assumption}
To derive the rate for $L_{\lambda_1,\lambda_{2,1}, \lambda_{2,2}}(y,x,w^*,h)$, I impose the following assumption on the tail behavior of the Fourier transforms involved. These are common in the deconvolution literature (e.g. \cite{fan1991optimal}, \cite{fan1993nonparametric}, and \cite{schennach2004nonparametric}).
\begin{assumption}\label{smoothness2}
\begin{enumerate}[(i)]
    \item There exist constants $C_{2}>0$,  $\alpha_{2}\leq 0$, $\beta_{2}\geq\beta_{\phi}\geq 0$ and $\gamma_{2}\in\mathbb{R}$ such that $\beta_{2}\gamma_{2}\geq 0$ and 
    \begin{align*}
        \left|\phi_{W_2}(t)\right| \geq C_{2}(1+|t|)^{\gamma_{2}} \exp \left(\alpha_{2}|t|^{\beta_{2}}\right).
    \end{align*}
    Moreover, if $\beta_2=0$, $\lambda_1-\gamma_2+\gamma_{\phi}>0$.
    \item There exist constants $C_*>0$ and $\gamma_*\geq 0$ such that
    \begin{align*}
        \left|\frac{\phi'_{W^*}(t)}{\phi_{W^*}(t)}\right|\leq C_*\left(1+|t|\right)^{\gamma_*}.
    \end{align*}
\end{enumerate}
\end{assumption}
I explicitly impose $\beta_2\geq\beta_{\phi}$ because
\begin{align*}
    C_{\phi}\left(1+|t|\right)^{\gamma_{\phi}}\exp\left(\alpha_{W^*}|t|^{\beta_{\phi}}\right)&\geq \left|\phi_{W^*}(t)\right|=\left|E[e^{\mathbf{i}tW^*}]\right|\\
    &\geq \left|E[e^{\mathbf{i}tW^*}]\right|\left|E[e^{\mathbf{i}t\Delta W_2}]\right|\geq \left|E[e^{\mathbf{i}tW_2}]\right|\\
    &=|\phi_{W_2}(t)|\geq C_{2}\left(1+|t|\right)^{\gamma_{2}}\exp\left(\alpha_{2}|t|^{\beta_{2}}\right).
\end{align*}
The following theorem states the asymptotic properties of the linear term \\ $L_{\lambda_1, \lambda_{2,1}, \lambda_{2,2}}(y,x,w^*,h)$ and facilitates the analysis of various quantities of interest later. 
\begin{theorem}\label{Omegaprop}
Suppose the conditions of Lemma \ref{decomposition} hold. (i) Then for each $(y, x, w^*)\in\mathbb{S}_{(y, x, w^*)}$, $E[L_{\lambda_1, \lambda_{2,1}, \lambda_{2,2}}(y,x,w^*,h)]=0$, and if Assumption \ref{finite variance} also holds, then \begin{footnotesize}
    $E[L_{\lambda_1, \lambda_{2,1}, \lambda_{2,2}}^2(y,x,w^*,h)]$$=n^{-1}\Omega_{\lambda_1, \lambda_{2,1}, \lambda_{2,2}}(y,x,w^*,h)$
\end{footnotesize}, where \begin{footnotesize}
    $\Omega_{\lambda_1, \lambda_{2,1}, \lambda_{2,2}}(y,x,w^*,h)\equiv E[\left(l_{\lambda_1, \lambda_{2,1}, \lambda_{2,2}}(y,x,w^*,h;Y, X, W_1, W_2)\right)^2]$
\end{footnotesize}.\\
\indent Further, if Assumption \ref{smoothness1} and \ref{smoothness2} also holds then
\begin{footnotesize}
\begin{align}
    &\quad\sqrt{\sup_{(y, x, w^*)\in\mathbb{S}_{(Y,X,W^*)}}\Omega_{\lambda_1, \lambda_{2,1}, \lambda_{2,2}}(y,x,w^*,h)}\notag\\&= O\bigg(\max\left\{(h_1^{-1})^{1+\gamma_*},(h_{2,1}^{-1})^{1+\lambda_{2,1}}(h_{2,2}^{-1})^{1+\lambda_{2,2}}\right\}\notag\\
    &\quad\quad\quad\quad\left.(h_1^{-1})^{1-\gamma_2+\gamma_{\phi}+\lambda_1}\exp\left(\left(\alpha_{\phi}\mathbf{1}\{\beta_{\phi}=\beta_2\}-\alpha_2\right)\left(h_1^{-1}\right)^{\beta_2}\right)\right). \label{sqrtsupOmega}
\end{align}
\end{footnotesize}
I also have
\begin{footnotesize}
\begin{align}
    &\quad\sup_{(y, x, w^*)\in\mathbb{S}_{(Y,X,W^*)}} \left|L_{\lambda_1, \lambda_{2,1}, \lambda_{2,2}}(y, x, w^*,h)\right|\notag\\
    &=O\bigg(n^{-\frac{1}{2}}\max\left\{(h_1^{-1})^{1+\gamma_*},(h_{2,1}^{-1})^{1+\lambda_{2,1}}(h_{2,2}^{-1})^{1+\lambda_{2,2}}\right\}\notag\\
    &\quad\quad\quad\quad\left.(h_1^{-1})^{1-\gamma_2+\gamma_{\phi}+\lambda_1}\exp\left(\left(\alpha_{\phi}\mathbf{1}\{\beta_{\phi}=\beta_2\}-\alpha_2\right)\left(h_1^{-1}\right)^{\beta_2}\right)\right). \label{supL}
\end{align}
\end{footnotesize}
\indent (ii) If Assumption \ref{finite variance} also holds, and if for each $(y,x,w^*)\in\mathbb{S}_{(y, x, w^*)}$, $\Omega_{\lambda_1, \lambda_{2,1}, \lambda_{2,2}}(y,x,w^*,h_n)>0$ for all $n$ sufficiently large, then for each $(y, x, w^*)\in\mathbb{S}_{(y,x,w^*)}$
\begin{align*}
    n^{1/2}\left(\Omega_{\lambda_1, \lambda_{2,1}, \lambda_{2,2}}(y,x,w^*,h_n)\right)^{-1/2}L_{\lambda_1, \lambda_{2,1}, \lambda_{2,2}}(y,x,w^*,h_n)\xrightarrow{d}N(0,1).
\end{align*}
\end{theorem}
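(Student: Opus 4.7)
The plan is to exploit the linear representation from Lemma \ref{decomposition}, which writes $L_{\lambda_1,\lambda_{2,1},\lambda_{2,2}}(y,x,w^*,h)=\hat E[l_{\lambda_1,\lambda_{2,1},\lambda_{2,2}}(y,x,w^*,h;Y,X,W_1,W_2)]$ as the IID sample average of a centered summand. The mean-zero statement in part (i) is immediate: each of the three integrands composing $l$ is of the form $(\text{random})-E(\text{random})$, and Fubini (justified by the compact $\xi$-support of $\Psi_1,\Psi_2,\Psi_3$ noted below) interchanges the outer expectation with the $\xi$-integral. The variance identity $\mathrm{Var}(L)=n^{-1}\Omega_{\lambda_1,\lambda_{2,1},\lambda_{2,2}}(y,x,w^*,h)$ follows immediately from the IID property.

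For the rate of $\sqrt{\sup\Omega}$ I would bound the $L^2$-norm of each of the three pieces separately and combine via Minkowski. The key geometric observation is that, by Assumption \ref{phik}(i), $\phi_K(h_1 t)=0$ for $|t|>h_1^{-1}$, so the inner integrals defining $\Psi_1,\Psi_2$ vanish for $|\xi|>h_1^{-1}$ and $\Psi_3$ is itself supported on $|\xi|\leq h_1^{-1}$. On this window Assumption \ref{smoothness1}(i) controls $\phi_{f^{(\lambda_{2,1},\lambda_{2,2})}_{Y,X,W^*}(y,x,\cdot)}$ and $\phi_{W^*}$, while Assumption \ref{smoothness2}(i) supplies the matching lower bound on $|\phi_{W_2}|$. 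For $\Psi_2$ I use $\theta(\xi)=-\mathbf{i}\phi'_{W^*}(\xi)\phi_{\Delta W_2}(\xi)=-\mathbf{i}(\phi'_{W^*}/\phi_{W^*})(\xi)\phi_{W_2}(\xi)$, so that $\theta/\phi_{W_2}^2=-\mathbf{i}(\phi'_{W^*}/\phi_{W^*})/\phi_{W_2}$, whose magnitude is controlled by Assumption \ref{smoothness2}(ii); this produces the factor $(h_1^{-1})^{1+\gamma_*}$ in the max. Taking the sup of $|\Psi_j|$ (worst case at $|\xi|\sim h_1^{-1}$) times the integration-window length $O(h_1^{-1})$, combined with $E[W_1^2]<\infty$ from Assumption \ref{finite variance}, delivers the exponent $1-\gamma_2+\gamma_\phi+\lambda_1$ and the exponential factor $\exp((\alpha_\phi\mathbf{1}\{\beta_\phi=\beta_2\}-\alpha_2)(h_1^{-1})^{\beta_2})$ for the $\Psi_1,\Psi_2$ pieces. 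For the $\Psi_3$ piece, the kernel factor is bounded in sup-norm by $\bar G\cdot h_{2,1}^{-(1+\lambda_{2,1})}h_{2,2}^{-(1+\lambda_{2,2})}$ (Assumption \ref{phik}), giving the second argument of the max. The bound on $\sup_{(y,x,w^*)}|L|$ uses the same $\Psi_j$ estimates applied pointwise rather than in $L^2$, combined with the $n^{-1/2}$ sample-average scaling.

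For part (ii), the pointwise CLT, I apply the Lindeberg--Lyapunov CLT for triangular arrays to $\{l(y,x,w^*,h_n;Y_i,X_i,W_{1,i},W_{2,i})\}_{i=1}^n$, which are IID with mean zero and variance $\Omega(h_n)>0$. Assumption \ref{finite variance} supplies $E|W_1|^{2+\delta}<\infty$; Minkowski combined with the $\Psi_j$-bounds above yields $E|l_n|^{2+\delta}=O(\Omega(h_n)^{1+\delta/2})$ (the same Fourier rates drive the $L^p$ norm of each piece, and only the moment of $W_1$ changes), so the Lyapunov ratio $E|l_n|^{2+\delta}/(n^{\delta/2}\Omega(h_n)^{1+\delta/2})=O(n^{-\delta/2})\to 0$ and the stated CLT follows.

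The main obstacle will be the bookkeeping across the regimes $\beta_\phi<\beta_2$ versus $\beta_\phi=\beta_2$: in the former, $\exp(\alpha_\phi(h_1^{-1})^{\beta_\phi})$ is absorbed by $\exp(-\alpha_2(h_1^{-1})^{\beta_2})$ since $\alpha_\phi\leq 0$ keeps the former bounded, whereas in the latter the two exponentials combine at the same order into $\exp((\alpha_\phi-\alpha_2)(h_1^{-1})^{\beta_2})$, producing the indicator in the final rate. A related subtlety is the degenerate case $\beta_2=\beta_\phi=0$, in which the exponential terms drop out and the polynomial-integrability side conditions ($\gamma_\phi<-\lambda_1-1$ in \ref{smoothness1}(i), $\lambda_1-\gamma_2+\gamma_\phi>0$ in \ref{smoothness2}(i)) become binding to ensure convergence of the $\xi$-integrals defining $\Psi_j$.
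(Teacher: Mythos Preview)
Your approach to part (i) and to the rate bounds (\ref{sqrtsupOmega}) and (\ref{supL}) is essentially the paper's. The paper isolates the $\Psi_j$-integrals in a separate lemma (Lemma \ref{orderPsi}), then for $\Omega$ expands the square into nine cross terms, bounds each covariance kernel (e.g.\ $E[n\,\delta\hat\theta(\xi)\,\delta\hat\theta^\dagger(\zeta)]$) uniformly by a constant via Assumption \ref{finite variance}, and collects everything into $(\Psi^+_{\lambda_1,\lambda_{2,1},\lambda_{2,2}}(h))^2$. Your Minkowski shortcut reaches the same endpoint more economically. The identity $\theta/\phi_{W_2}=-\mathbf{i}\,\phi'_{W^*}/\phi_{W^*}$ you invoke for $\Psi_2$ is exactly the paper's device, and your discussion of the $\beta_\phi<\beta_2$ versus $\beta_\phi=\beta_2$ regimes is the right bookkeeping. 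For $\sup|L|$ the paper also needs a uniform-in-$(y,x)$ $O_p(n^{-1/2})$ bound on the kernel term paired with $\Psi_3$, which it obtains via a separate Fourier argument (Lemma \ref{uniform rate hat phi f}); your sup-norm bound $\bar G\cdot h_{2,1}^{-(1+\lambda_{2,1})}h_{2,2}^{-(1+\lambda_{2,2})}$ on the kernel factor serves the same purpose.

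For part (ii) the routes diverge, and your Lyapunov argument has a genuine gap. The paper proves a dedicated triangular-array lemma (Lemma \ref{CLTgeneral}) that verifies the Lindeberg condition directly under the bare hypothesis $\Omega(h_n)>0$, and then applies it to $l_{\lambda_1,\lambda_{2,1},\lambda_{2,2}}$. Your step ``$E|l_n|^{2+\delta}=O(\Omega(h_n)^{1+\delta/2})$'' does not follow from what you have: the $\Psi_j$-bounds give $E|l_n|^{2+\delta}\preceq(\Psi^+(h_n))^{2+\delta}$ and $\Omega(h_n)\preceq(\Psi^+(h_n))^{2}$, but two \emph{upper} bounds cannot be combined into $E|l_n|^{2+\delta}\preceq\Omega(h_n)^{1+\delta/2}$. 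For that you would need a matching \emph{lower} bound $\Omega(h_n)\succeq(\Psi^+(h_n))^{2}$, which is neither assumed in the theorem (the hypothesis $\Omega>0$ is purely qualitative) nor established anywhere in your sketch. Without it the Lyapunov ratio $E|l_n|^{2+\delta}/(n^{\delta/2}\Omega^{1+\delta/2})$ need not be $O(n^{-\delta/2})$, and the CLT does not follow from your argument as written.
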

\begin{proof}
See Appendix B.
\end{proof}

Finally, I bound the remainder term. To do that, I first impose some restrictions on the moments of $W_2$:
\begin{assumption}\label{finite W2 mmts}
$E[|W_2|]< \infty, E[|W_1W_2|]<\infty$. 
\end{assumption}
I then impose the following bounds for the bandwidths:
\begin{assumption}\label{bandwidth2}
If $\beta_2=0$ in Assumption \ref{smoothness2}, then $h_{1n}^{-1}=O\left(n^{(4+4\gamma_*-4\gamma_{2})^{-1}-\eta}\right)$ and $h_{2n,j}^{-1}=O\left(n^{(16+8\lambda_2)^{-1}-\eta}\right)$, for some $\eta>0$, $j=1,2$; otherwise $h^{-1}_{1n}=O\left(\left(\ln n\right)^{\beta_2^{-1}-\eta}\right)$ and $h_{2n,j}^{-1}=O\left(n^{(8+4\lambda_2)^{-1}-\eta}\right)$, for some $\eta>0$, $j=1,2$.
\end{assumption}

\begin{theorem}\label{Remainder}(i) Suppose the conditions of Lemma \ref{decomposition} and Assumption \ref{smoothness1}, \ref{finite variance}, \ref{smoothness2}, and \ref{finite W2 mmts} hold. Then
\begin{footnotesize}
\begin{align*}
    &\quad\sup_{(y,x,w^*)\in\mathbb{S}_{(Y,X,W^*)}}\left|R_{\lambda_1,\lambda_{2,1}, \lambda_{2,2}}(y,x,w^*,h_n)\right| \\
    &=o_p\left(h_{2,1}^{-2}(h_{2,2}^{-1})^{2+2\lambda_2}n^{-1+2\epsilon}(h_{1n}^{-1})^{1+\gamma_{*}-\gamma_2}\exp\left(-\alpha_2\left(h_{1n}^{-1}\right)^{\beta_2}\right)\right)\\
    &\quad\times O\bigg(\max\left\{(h_1^{-1})^{1+\gamma_*},(h_{2,1}^{-1})^{1+\lambda_{2,1}}(h_{2,2}^{-1})^{1+\lambda_{2,2}}\right\}\\
    &\quad\quad\quad\quad\left.(h_1^{-1})^{1-\gamma_2+\gamma_{\phi}+\lambda_1}\exp\left(\left(\alpha_{\phi}\mathbf{1}\{\beta_{\phi}=\beta_2\}-\alpha_2\right)\left(h_1^{-1}\right)^{\beta_2}\right)\right)
\end{align*}
\end{footnotesize}
for arbitrarily small $\epsilon>0$. (ii) If Assumption \ref{bandwidth2} also holds, then
\begin{footnotesize}
\begin{align*}
     &\quad\sup_{(y,x,w^*)\in\mathbb{S}_{(Y,X,W^*)}}\left|R_{\lambda_1,\lambda_{2,1}, \lambda_{2,2}}(y,x,w^*,h_n)\right| =\\
     &o_p\bigg(n^{-\frac{1}{2}}\max\left\{(h_1^{-1})^{1+\gamma_*},(h_{2,1}^{-1})^{1+\lambda_{2,1}}(h_{2,2}^{-1})^{1+\lambda_{2,2}}\right\}\\
     &\quad\quad\quad\quad\left.(h_1^{-1})^{1-\gamma_2+\gamma_{\phi}+\lambda_1}\exp\left(\left(\alpha_{\phi}\mathbf{1}\{\beta_{\phi}=\beta_2\}-\alpha_2\right)\left(h_1^{-1}\right)^{\beta_2}\right)\right).  
\end{align*}
\end{footnotesize}
\end{theorem}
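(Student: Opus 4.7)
The plan is to treat the remainder as the second-order Taylor-expansion error of a nonlinear plug-in functional, decomposing it into quadratic (and higher) cross-terms of the basic sample-average deviations and then controlling each cross-term uniformly via concentration for sample characteristic functions.

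By construction in Lemma \ref{decomposition}, $R_{\lambda_1,\lambda_{2,1},\lambda_{2,2}} = \hat g - \bar g$ is exactly everything of order two or higher in the three deviations $\Delta_1(\xi) \equiv \hat E[W_1 e^{\mathbf{i}\xi W_2}] - E[W_1 e^{\mathbf{i}\xi W_2}]$, $\Delta_2(\xi) \equiv \hat E[e^{\mathbf{i}\xi W_2}] - E[e^{\mathbf{i}\xi W_2}]$, and the analogous $\Delta_3(\xi;y,x)$ for the kernel-smoothed joint characteristic function. The nonlinearities producing these higher-order terms are (i) the ratios $1/\hat\phi_{W_2}(\xi)$ appearing inside both the integrand of $\hat\phi_{W^*}$ and the outer estimator, and (ii) the exponential in the definition of $\hat\phi_{W^*}$. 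First I would expand $\log(\hat\phi_{W^*}/\phi_{W^*})$ around zero to split its contribution into the linear piece already absorbed into $L$ and a quadratic remainder, and similarly expand $1/\hat\phi_{W_2} - 1/\phi_{W_2}$. Each surviving term in $R$ is then a product of at least two deviations, divided by an appropriate power of $\phi_{W_2}$ and multiplied by $\phi_K(h_{1n}t)$.

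Next I would derive uniform-in-$t$ concentration bounds on the compact interval $|t| \le h_{1n}^{-1}$, which is all that is needed because $\phi_K$ vanishes outside it. Under Assumption \ref{finite W2 mmts}, standard results for empirical characteristic functions give $\sup_{|t|\le h_{1n}^{-1}}|\Delta_j(t)| = O_p(n^{-1/2+\epsilon})$ for $j=1,2$; for $\Delta_3$, the boundedness of $G_Y, G_X$ and their derivatives from Assumption \ref{phik} together with Bernstein's inequality and a chaining argument over $(t, y, x)$ yield $\sup |\Delta_3(t;y,x)| = O_p(h_{2n,1}^{-(1+\lambda_{2,1})}h_{2n,2}^{-(1+\lambda_{2,2})} n^{-1/2+\epsilon})$, with the $n^\epsilon$ absorbing the covering-number cost of the grid on $[-h_{1n}^{-1}, h_{1n}^{-1}]\times \mathbb{S}_{(Y,X)}$.

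Plugging these uniform rates into the multilinear expansion, combined with the lower bound $|\phi_{W_2}(t)|\ge C_2(1+|t|)^{\gamma_2}\exp(\alpha_2 |t|^{\beta_2})$ from Assumption \ref{smoothness2}(i) and the logarithmic-derivative bound from Assumption \ref{smoothness2}(ii) that controls the quadratic error produced by the exponential in $\hat\phi_{W^*}$, yields part (i): the stated bound is the product of the squared worst-case-deviation factor $h_{2,1}^{-2}(h_{2,2}^{-1})^{2+2\lambda_2}n^{-1+2\epsilon}(h_{1n}^{-1})^{1+\gamma_*-\gamma_2}\exp(-\alpha_2(h_{1n}^{-1})^{\beta_2})$ and exactly the same prefactor that governed the linear term in Theorem \ref{Omegaprop}. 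Part (ii) then follows by direct substitution: the rates in Assumption \ref{bandwidth2} are calibrated precisely so that the extra squared-deviation factor is $o(n^{-1/2})$, collapsing the bound from (i) to $o_p$ of the linear-term rate. The main obstacle is establishing the uniform bound on $\Delta_3$ at the sharp order with the correct kernel-bandwidth prefactor, since $\Delta_3$ is indexed jointly by $t$ and $(y,x)$; this is where Schennach's univariate argument for SWC does not transfer directly and must be extended using empirical-process tools that accommodate the additional $(y,x)$ directions introduced by the $G_Y, G_X$ kernels.
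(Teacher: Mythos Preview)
Your proposal is correct and follows essentially the same route as the paper: expand the two nonlinearities (the ratio $1/\hat\phi_{W_2}$ and the exponential defining $\hat\phi_{W^*}$) to second order, collect the remainder into a finite list of terms each at least quadratic in the three basic deviations, control those deviations uniformly over $|t|\le h_{1n}^{-1}$ by $o_p(n^{-1/2+\epsilon})$ (times the kernel-bandwidth prefactor for $\Delta_3$) via Schennach's Lemma~6, and combine with the $\Psi^+$ rate from Lemma~\ref{orderPsi}; part~(ii) is then the direct verification that Assumption~\ref{bandwidth2} makes the extra quadratic factor $o(n^{-1/2})$. The one place you diverge is the sup over $(y,x)$ for $\Delta_3$: you anticipate needing Bernstein plus chaining, whereas the paper simply pulls out the uniform bound $|G_Y^{(\lambda_{2,1})}G_X^{(\lambda_{2,2})}|\le \bar G^2$ on the kernel factors and applies Schennach's lemma to $\hat E[e^{\mathbf{i}\xi W_2}]-E[e^{\mathbf{i}\xi W_2}]$ alone, which removes the $(y,x)$-dependence at the cost of a cruder but sufficient bound---so the empirical-process machinery you flag as the ``main obstacle'' is not actually needed.
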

\begin{proof}
See Appendix B.
\end{proof}

\indent Collecting the results from Theorem \ref{biasrate},\ref{Omegaprop} and \ref{Remainder} yields a straightforward corollary of the uniform rate of convergence of $\hat g_{\lambda_1,\lambda_{2,1}, \lambda_{2,2}}(y,x,w^*,h_n)$:
\begin{corollary}\label{supg}
If the conditions of Theorem \ref{Remainder} (ii) hold, then
\begin{align*}
    &\quad\sup_{(y,x,w^*)\in \mathbb{S}_{(Y,X,W^*)}}\left|\hat g_{\lambda_1,\lambda_{2,1}, \lambda_{2,2}}(y,x,w^*,h_n)-g_{\lambda_1,\lambda_{2,1}, \lambda_{2,2}}(y,x,w^*)\right|\\
    &=O_p\left(\epsilon_{n,\lambda_1}\right)+O_p\left(\tilde\epsilon_{n,\lambda_1,\lambda_{2,1},\lambda_{2,2}}\right).
\end{align*}
where 
\begin{footnotesize}
\begin{align*}
    &\epsilon_{n,\lambda_1}\equiv\left(h_1^{-1}\right)^{1+\gamma_{\phi}+\lambda_1}\exp\left(\alpha_{\phi}\bar\zeta_K^{\beta_{\phi}}\left(h_1^{-1}\right)^{\beta_{\phi}}\right)\\
    &\tilde\epsilon_{n,\lambda_1,\lambda_{2,1},\lambda_{2,2}}\\
    &\equiv n^{-1/2}\max\left\{(h_1^{-1})^{1+\gamma_*},(h_{2,1}^{-1})^{1+\lambda_{2,1}}(h_{2,2}^{-1})^{1+\lambda_{2,2}}\right\}(h_1^{-1})^{1-\gamma_2+\gamma_{\phi}+\lambda_1}\\
    &\quad\quad\quad\quad\quad\quad\quad\quad\quad\quad\quad\quad\quad\quad\quad\quad\times\exp\left(\left(\alpha_{\phi}\mathbf{1}\{\beta_{\phi}=\beta_2\}-\alpha_2\right)\left(h_1^{-1}\right)^{\beta_2}\right).
\end{align*}
\end{footnotesize}
\end{corollary}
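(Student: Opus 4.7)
The plan is to read off the corollary directly from the three-way decomposition established in Lemma \ref{decomposition}, pairing each piece with the rate already proved for it. Write
\begin{align*}
\hat g_{\lambda_1,\lambda_{2,1},\lambda_{2,2}}(y,x,w^*,h_n)-g_{\lambda_1,\lambda_{2,1},\lambda_{2,2}}(y,x,w^*)
&= B(y,x,w^*,h_n) \\
&\quad + L(y,x,w^*,h_n) + R(y,x,w^*,h_n),
\end{align*}
where $B$, $L$, $R$ are the bias, linear (variance), and remainder terms from Lemma \ref{decomposition} (subscripts suppressed for brevity). Taking suprema over $\mathbb{S}_{(Y,X,W^*)}$ and applying the triangle inequality gives
\begin{align*}
\sup_{(y,x,w^*)\in\mathbb{S}_{(Y,X,W^*)}}\bigl|\hat g - g\bigr|
\leq \sup |B| + \sup |L| + \sup |R|.
\end{align*}

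Next I would invoke the three theorems in turn. Theorem \ref{biasrate} gives $\sup |B| = O(\epsilon_{n,\lambda_1})$, a deterministic rate that is trivially $O_p(\epsilon_{n,\lambda_1})$. The uniform bound in Theorem \ref{Omegaprop} (equation (\ref{supL})) gives $\sup |L| = O_p(\tilde\epsilon_{n,\lambda_1,\lambda_{2,1},\lambda_{2,2}})$; note that the $n^{-1/2}$ factor from (\ref{supL}) is exactly what converts the variance scale in (\ref{sqrtsupOmega}) into $\tilde\epsilon_{n,\lambda_1,\lambda_{2,1},\lambda_{2,2}}$. Finally, Theorem \ref{Remainder}(ii), whose conclusion is invoked under the bandwidth restrictions of Assumption \ref{bandwidth2} that are already assumed here, gives $\sup |R| = o_p(\tilde\epsilon_{n,\lambda_1,\lambda_{2,1},\lambda_{2,2}})$.

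Combining the three bounds, the $o_p(\tilde\epsilon_{n,\lambda_1,\lambda_{2,1},\lambda_{2,2}})$ remainder is absorbed into the $O_p(\tilde\epsilon_{n,\lambda_1,\lambda_{2,1},\lambda_{2,2}})$ contribution of the linear term, leaving
\begin{align*}
\sup_{(y,x,w^*)\in\mathbb{S}_{(Y,X,W^*)}}\bigl|\hat g_{\lambda_1,\lambda_{2,1},\lambda_{2,2}}(y,x,w^*,h_n)-g_{\lambda_1,\lambda_{2,1},\lambda_{2,2}}(y,x,w^*)\bigr|
= O_p(\epsilon_{n,\lambda_1})+O_p(\tilde\epsilon_{n,\lambda_1,\lambda_{2,1},\lambda_{2,2}}),
\end{align*}
as claimed. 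There is essentially no obstacle: all of the heavy lifting (the Fourier inversion arguments, the deconvolution tail bounds tied to Assumptions \ref{smoothness1}--\ref{smoothness2}, the moment control on $W_1,W_2$, and the bandwidth balancing in Assumption \ref{bandwidth2}) is already carried out in the three preceding theorems. The only thing worth being careful about is tracking which rates correspond to which of $\epsilon_{n,\lambda_1}$ and $\tilde\epsilon_{n,\lambda_1,\lambda_{2,1},\lambda_{2,2}}$, so that the bias contribution (driven only by the smoothing parameter $h_1$ and the smoothness exponents $\gamma_\phi,\beta_\phi$) is kept separate from the variance contribution (which additionally carries the $h_{2,1},h_{2,2}$ and $\gamma_*,\gamma_2$ factors as well as the $n^{-1/2}$). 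No further argument is needed.
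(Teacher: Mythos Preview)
Your proposal is correct and is exactly the argument the paper has in mind: the corollary is stated immediately after Theorems \ref{biasrate}, \ref{Omegaprop}, and \ref{Remainder} with the remark that ``collecting the results'' from those theorems yields it, and no separate proof is written out. Your use of the decomposition in Lemma \ref{decomposition} together with the triangle inequality and the three theorem bounds is precisely that collection step.
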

The uniform (over the whole support) rate of convergence of the kernel estimators of the density or its derivatives has been considered in \cite{andrews1995nonparametric}, \cite{hansen2008uniform} and \cite{schennach2012local}\footnote{\cite{schennach2012local} studies the case both with and without measurement errors. Their Theorem 3.2 corresponds to the no-measurement error case, and Corollary 4.7 corresponds to the case with measurement errors.}. \cite{hansen2008uniform} obtained a faster rate than \cite{andrews1995nonparametric} and \cite{schennach2012local} (Theorem 3.2, for the no-measurement error case), but requires more assumptions on the kernel functions (their Assumption 1 and 3), including bounded support or an integrable tail,  which, unfortunately, are not satisfied by infinite order kernels. \cite{andrews1995nonparametric}'s conclusion is stated assuming a kernel with finite order. Infinite order kernels are not essential when there is no measurement error (like in \cite{andrews1995nonparametric} and \cite{hansen2008uniform}), but are especially advantageous when there are measurement errors. If infinite order kernels are used, only the smoothness of the functions, but not the order of the kernels will affect the rate of convergence. Corollary \ref{supg} is similar to Corollary 4.7 in \cite{schennach2012local}.\\
\indent With the uniform rate of convergence of $\hat g$, I can then derive the uniform rate of convergence of the plug-in estimators of the structural derivative $\rho(y,x)$ and the \textbf{WLAR}, denoted as $\widehat{\rho(y,x)}$ and $\widehat{\mathbf{WLAR}(x)}$, respectively. 
\begin{theorem}\label{unif convergent rate} Suppose that $\left\{Y_j, X_j, W^*_j, \Delta W_{1,j}, W_{2,j}\right\}$ is an IID sequence satisfying the conditions of Corollary \ref{supg} with $\lambda_1,\lambda_{2,1},\lambda_{2,2}\in\{0,1\}$ and $\max\{\lambda_1,\lambda_{2,1},\lambda_{2,2}\}\leq 1$. Suppose in addition, Assumption \ref{pospartialh} and \ref{Fepsilon|eta} hold. \\
(i) Define $\mathbb S_{\tau_n}\equiv \left\{(y,x,w^*)\in\mathbb{S}_{(Y,X,W^*)}: \left|\frac{\partial F_{X\mid W^*=w^*}(x)}{\partial w^*}\right|>\tau_n \text{ and } f_{Y,X,W^*}(y,x, w^*)>\tau_n\right\}$. Then
\begin{align*}
    &\sup_{(y,x,w^*)\in \mathbb  S_{\tau_n}}\left|\widehat{\rho(y,x)}-\rho(y,x)\right|\leq O_p\left(\tilde\epsilon_{n,0,0,1}\right)+\frac{O_p\left(\epsilon_{n,1}\right)+O_p\left(\tilde\epsilon_{n,1,0,0}\right)}{\tau_n^2}.
\end{align*}
and there exists $\{\tau_n\}$ such that $\tau_n>0, \tau_n\rightarrow 0$ as $n\rightarrow\infty$, and 
\begin{align*}
   &\sup_{(y,x,w^*)\in \mathbb  S_{\tau_n}}\left|\widehat{\rho(y,x)}-\rho(y,x)\right|= o_p(1).
\end{align*}
(ii) 
\begin{align*}
    \sup_{x\in\mathbb{S}_X}\left|\widehat{\mathbf{WLAR}(x)}-\mathbf{WLAR}(x)\right|\leq O_p\left(\epsilon_{n,1}\right)+O_p\left(\tilde\epsilon_{n,0,0,1}\right)+O_p\left(\tilde\epsilon_{n,1,0,0}\right).
\end{align*}
\end{theorem}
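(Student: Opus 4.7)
The plan is to exploit the representation of $\rho(\bar y,\bar x)$ and $\mathbf{WLAR}(x)$ derived in Section 4.1 as explicit rational (and in part (ii) also integral) functionals of the family $\{g_{\lambda_1,\lambda_{2,1},\lambda_{2,2}}\}$ together with its integrated versions $\int_{-\infty}^{\bullet}\hat g\,ds$, $\int\hat g\,dy$, etc. The plug-in estimators differ from the targets only through these functionals applied to the deviations $\hat g-g$, so the task reduces to combining the uniform rate $\|\hat g_{\lambda_1,\lambda_{2,1},\lambda_{2,2}}-g_{\lambda_1,\lambda_{2,1},\lambda_{2,2}}\|_\infty = O_p(\epsilon_{n,\lambda_1}+\tilde\epsilon_{n,\lambda_1,\lambda_{2,1},\lambda_{2,2}})$ from Corollary \ref{supg} with a standard linearization. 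A preliminary observation is that the integrated versions inherit the same uniform rate, since the proof of Corollary \ref{supg} depends on $G_Y,G_X$ only through their boundedness and the flat-top property of their Fourier transforms, both of which are preserved when $G_Y,G_X$ are replaced by the bounded kernel CDFs $\tilde G_Y,\tilde G_X$ defined at the end of Section 4.1.

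For part (i), writing (\ref{pmpxbar}) as $\rho=-R_1+R_2\cdot R_3$ with $R_1=(\partial F_{Y\mid X,W^*}/\partial x)/(\partial F_{Y\mid X,W^*}/\partial y)$, $R_2=(\partial F_{Y\mid X,W^*}/\partial w^*)/(\partial F_{Y\mid X,W^*}/\partial y)$, and $R_3=(\partial F_{X\mid W^*}/\partial x)/(\partial F_{X\mid W^*}/\partial w^*)$, I would apply the identity $(a+\Delta a)/(b+\Delta b)-a/b = \Delta a/b-a\Delta b/(b(b+\Delta b))$ to each ratio on the high-probability event $\mathcal A_n=\{\|\hat g_{0,0,0}-g_{0,0,0}\|_\infty\le c_1\tau_n,\ \|\widehat{\partial F_{X\mid W^*}/\partial w^*}-\partial F_{X\mid W^*}/\partial w^*\|_\infty\le c_2\tau_n\}$, on which the perturbed denominators retain order $\tau_n$ after using the boundedness of $f_{X,W^*}$ from Assumption \ref{cpctcontdiff} to pass from $g_{0,0,0}>\tau_n$ to $\partial F_{Y\mid X,W^*}/\partial y\gtrsim\tau_n$. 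This produces a schematic bound
\begin{align*}
\sup_{\mathbb S_{\tau_n}}|\widehat{\rho(y,x)}-\rho(y,x)|\ \lesssim\ \frac{\|\hat g_{0,0,1}-g_{0,0,1}\|_\infty + \|\hat g_{0,0,0}-g_{0,0,0}\|_\infty}{\tau_n} + \frac{\|\hat g_{1,0,0}-g_{1,0,0}\|_\infty}{\tau_n^2},
\end{align*}
in which the $\tau_n^{-2}$ factor is what the product structure $R_2\cdot R_3$ generates. Substituting the rates from Corollary \ref{supg} yields the first displayed inequality of the theorem. Verifying $P(\mathcal A_n)\to 1$ reduces to Corollary \ref{supg} together with a size condition on $\tau_n$; the $o_p(1)$ conclusion then follows by taking, for instance, $\tau_n=(\epsilon_{n,1}+\tilde\epsilon_{n,1,0,0})^{1/3}$, which is compatible with the bandwidth restrictions of Assumption \ref{bandwidth2}.

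For part (ii), the weight $\omega$ is by construction supported on a compact set on which $f_{Y,X,W^*}$ is bounded away from zero by a fixed positive constant, so the same functional representation for the integrand---now read off Lemma \ref{dcquantile} and equation (\ref{pmpxdelta}) with $F^{-1}_{Y\mid X,W^*}(\delta)$ in place of $y$---is Fréchet differentiable with \emph{fixed} denominator bounds. The linearization argument of part (i) then gives a pointwise bound on the integrand of order $\|\hat g_{\lambda_1,\lambda_{2,1},\lambda_{2,2}}-g_{\lambda_1,\lambda_{2,1},\lambda_{2,2}}\|_\infty$ ranging over $\lambda_1\in\{0,1\}$, $\lambda_{2,2}\in\{0,1\}$, with no $\tau_n$ penalty. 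Integrating in $\delta\in[\tau_l,\tau_u]$ and $w^*$ over the compact support of $\tilde\omega(x,\cdot)$, then taking the supremum in $x$, transfers this bound to $\mathbf{WLAR}(x)$ via dominated convergence and delivers $O_p(\epsilon_{n,1})+O_p(\tilde\epsilon_{n,0,0,1})+O_p(\tilde\epsilon_{n,1,0,0})$.

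The principal technical obstacle is the ``random denominator'' coupling in part (i): the perturbed-denominator lower bounds and the choice of trimming sequence $\tau_n$ are intertwined through the event $\mathcal A_n$, and one must verify that $\tau_n$ can be taken to zero slowly enough for $P(\mathcal A_n)\to 1$ yet fast enough for the trimmed supremum to remain informative. A secondary but routine item is bookkeeping of which rate $\tilde\epsilon_{n,\cdot,\cdot,\cdot}$ dominates at each slot of the inequality---in particular checking that contributions from $\hat g_{0,0,0}$ and from the integrated-in-$y$ versions of $\hat g$ are absorbed into the leading $\tilde\epsilon_{n,0,0,1}$, $\tilde\epsilon_{n,1,0,0}$, and $\epsilon_{n,1}$ terms appearing in the theorem statement.
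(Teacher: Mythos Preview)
Your approach is essentially the same as the paper's: linearize the ratio functionals defining $\rho$ and $\mathbf{WLAR}$, then substitute the uniform rates from Corollary \ref{supg}. The paper packages the linearization step abstractly in Appendix A (Lemmas \ref{file6}--\ref{file8}), writing $\rho(y,x)=\tilde\Xi(f)$ and $\mathbf{WLAR}(x)=\tilde\Gamma(f)$ for functionals $\tilde\Xi,\tilde\Gamma$ on a space of densities, deriving Fr\'echet expansions $\tilde\Xi(f+h)-\tilde\Xi(f)=D\tilde\Xi(f;h)+R\tilde\Xi(f;h)$ with explicit $\tau$-dependent bounds, and then simply plugging in $h=\hat g_{0,0,0}-g_{0,0,0}$ together with its $\partial/\partial x$ and $\partial/\partial w^*$ analogues. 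Your direct ratio-by-ratio linearization on a high-probability event $\mathcal A_n$ is the same computation without the abstraction.

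One bookkeeping point where your schematic differs from the paper: you place a $\tau_n^{-1}$ penalty on the $\partial/\partial x$ contribution, whereas the theorem (and Lemma \ref{file7}) records \emph{no} $\tau$ factor on the $\sup|\partial h/\partial x|$ term. In the paper's organization this comes from the fact that $\Psi_{(1)}$ (the $x$-derivative piece) enters $\tilde\Xi$ additively, not divided by $\tilde\Psi$, so only the bounded constant $e_2$ appears; the $\tau^{-2}$ arises solely from $\tilde\Psi_{(1)}^{-2}$ in the $\Psi_{(2)}/\tilde\Psi$ block. In your $R_1,R_2,R_3$ decomposition the same separation is present ($R_1$ has no $\partial F_{X|W^*}/\partial w^*$ denominator), so with a slightly sharper accounting you would recover the paper's bound rather than the looser $\tau_n^{-1}$ version you sketched. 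This does not affect the $o_p(1)$ conclusion or the choice $\tau_n\asymp(\epsilon_{n,1}+\tilde\epsilon_{n,1,0,0})^{1/3}$, which matches the paper's ``choose $\tau_n$ so that $\epsilon_{n,1}/\tau_n^2\to 0$ and $\tilde\epsilon_{n,1,0,0}/\tau_n^2\to 0$.''
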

\begin{proof}
See Appendix B.
\end{proof}
\indent In Appendix C, I state an asymptotic normality result for $\hat g_{\lambda_1,\lambda_{2,1}, \lambda_{2,2}} (y,x,w^*,h_n)$ and $\widehat{\rho(y,x)}$. The proof of them requires a lower bound on $\Omega_{\lambda_1,\lambda_{2,1}, \lambda_{2,2}}(y,x,w^*.h_n)$ relative to $B_{\lambda_1,\lambda_{2,1}, \lambda_{2,2}}(y,x,w^*.h_n)$ and $R_{\lambda_1,\lambda_{2,1}, \lambda_{2,2}}(y,x,w^*.h_n)$. The assumptions to ensure the lower bound are stated at a high level, and more primitive sufficient conditions need to be derived.

\section{Monte Carlo Simulations} \label{Monte Carlo}
In this section, I conduct some Monte Carlo simulation exercises to study the finite sample performance of my proposed estimators. I consider two simulation designs. Design 1 is when both of the equations are linear, and Design 2 is when the equations are nonlinear.\\
\textbf{Design 1}
\begin{align*}
    &Y = 0.25X+0.25\epsilon\\
    &X = W^*+\eta
\end{align*}
\textbf{Design 2}
\begin{align*}
    &Y = \ln\left(\exp(X+\epsilon)+1\right)\\
    &X = \frac{3^3}{4^4}\frac{W^{*4}}{(-\eta)^3}
\end{align*}
In both cases, $\epsilon$ and $\eta$ are correlated through a common component $\theta$: $\epsilon = \theta + \epsilon_1$, and $\eta = \theta + \eta_1$, where $\theta$, $\epsilon_1$ and $\eta_1$ are mutually independent, and are independent with $W^*$.
In both designs, I have two error-laden measurements $W_1 = W^*+\Delta W_1, W_2 = W^*+\Delta W_2$, where $\Delta W_1$ and $\Delta W_2$ are independent and they are independent with all other variables. The distributions of variables in both designs are listed in Table \ref{dist of var} below:
\begin{table}[htbp]
  \centering
  \caption{Distributions of Variables}
  \label{dist of var}
    \begin{tabular}{lll}
    \hline
    \hline
    Variables & Design 1 (linear) & Design 2 (nonlinear)\\
    \hline
    $W^*$ & $N(0, \sqrt{0.5})$ & $N(6, 1)$ \\
    $\theta$ & $N(0, 0.5)$ & $N(-3, \sqrt{0.5})$ \\
    $\epsilon_1$ & $N(0,\sqrt{0.75})$ & $N(3, \sqrt{0.5})$ \\
    $\eta_1$ & $N(0,0.5)$ & $N(-3, \sqrt{0.5})$ \\
    $\epsilon$ & $N(0,1)$ & $N(0, 1)$ \\
    $\eta$ & $N(0,\sqrt{0.5})$ & $N(-6, 1)$ \\
    $\Delta W_1$ & $N(0, \sqrt{0.5})$ & $N(0, \sqrt{0.5})$ \\
    $\Delta W_2$ & $\chi^2(2)-2$ & $\chi^2(2)-2$ \\
    \hline
    \hline
    \end{tabular}%
\end{table}%

In the estimation, I use the following flat-top kernel\footnote{I may use different flat-top kernels for $G_X$ and $G_Y$. For simplicity I use the same flat-top kernel.} proposed by \cite{politis1999multivariate}:
\begin{align*}
    K(x)= \frac{\sin^2(2\pi u)-\sin^2(\pi u)}{\pi^2u^2}
\end{align*}
I use the sample size of 500 and replicate the estimation 500 times. \\
\indent First I show the performance of my proposed method versus two other methods for estimating the structural derivative: (1) 2SLS using the error-laden measurement $W_2$ as IV; (2) a plug-in estimator replacing $W^*$ with $W_2$ in my identification equation (\ref{pmpxbar}). Method (1) will be valid under the linear design (Design 1), since $W_2$, although error-laden, still satisfies the exclusion restriction and the rank condition for linear IV estimation. However, because of its misspecification of the model, it won't be valid under the nonlinear design (Design 2). Method (2) won't be valid under either the linear or nonlinear design, as the estimator will converge to a population value that is not equal to the right-hand side of equation (\ref{pmpxbar}) in general. I estimate  $\rho(y,x)$ evaluated at $y=0, x=0, w^*=0.7$ for Design 1, and $y=0.6, x=0.6, w^*=7$ for Design 2, yielding a true value of $0.25$ and $0.4512$, respectively. There are three bandwidths used in the estimation: $h_1$, $h_{21}$, $h_{22}$. The latter two correspond to estimating the unknown distribution of observed variables $X$ and $Y$. I thus choose the values of $h_{21}$ and $h_{22}$ by cross-validation (i.e. minimizing the estimated MISE of $\hat f_{Y,X}(y,x)$). For the bandwidth $h_1$, I scan a set of values ranging from 0.5 to 3 for my proposed estimator and a set of values ranging from 0.5 to 6 for the method that uses the error-laden measurement. Table \ref{my rho}, \ref{using W2 rho}, and \ref{2sls} show the MSE, VAR, and the absolute value of BIAS of my proposed estimator, Method (2), and Method (1), respectively. \\
\indent From Table \ref{using W2 rho}, one can see that the bias from the error-contaminated estimator does not shrink toward zero as bandwidth decreases. Comparing results from Table \ref{my rho} and Table \ref{using W2 rho}, one can see that my estimator also gives smaller variances than the error-contaminated method. As a result, my estimator performs better than the error-contaminated method in terms of MSE under both the linear and the nonlinear designs. Comparing Table \ref{my rho} with Table \ref{2sls}, one can see that under the linear design, both my estimator and 2SLS work well except that my estimator has a slightly larger variance. Under the nonlinear design, the bias and MSE of 2SLS are much larger than my estimator, which aligns with the theory.\\
\indent Table \ref{my rho consistency} below shows the Monte Carlo simulation results of my proposed estimator as a function of the sample size. The value of $h_{21}$ and $h_{22}$ are selected by cross-validation for the respective sample sizes and the optimal values of $h_1$ are selected by minimizing the MSE in the corresponding set of values the same as when $N=500$. The MSE, VAR, and Bias decrease as the sample size increases, which is in accordance with the theory.
\begin{landscape}
\begin{table}[p]
    \footnotesize
  \centering
  \caption{Monte Carlo simulation results of my proposed estimator for $\rho(y,x)$}
    \label{my rho}
    \begin{tabular}{cccccc|cccccc}
    \hline
    \hline
    \multicolumn{6}{c|}{Design 1 (linear)}        & \multicolumn{6}{c}{Design 2 (nonlinear)} \\
    $h_1$ & $h_{21}$ & $h_{22}$ & MSE   & VAR   & abs(BIAS) & $h_1$ & $h_{21}$ & $h_{22}$ & MSE   & VAR   & abs(BIAS) \\
    \hline
    \multicolumn{1}{l}{0.5} & \multicolumn{1}{l}{1.05} & \multicolumn{1}{l}{2.92} & 0.01072 & 0.01072 & 0.00009 & \multicolumn{1}{l}{0.5} & \multicolumn{1}{l}{2.09} & \multicolumn{1}{l}{0.88} & 0.31292 & 0.30489 & 0.08961 \\
    \multicolumn{1}{l}{0.75} & \multicolumn{1}{l}{1.05} & \multicolumn{1}{l}{2.92} & 0.00766 & 0.00766 & 0.00050 & \multicolumn{1}{l}{0.75} & \multicolumn{1}{l}{2.09} & \multicolumn{1}{l}{0.88} & 0.17724 & 0.16339 & 0.11769 \\
    \multicolumn{1}{l}{1} & \multicolumn{1}{l}{1.05} & \multicolumn{1}{l}{2.92} & 0.00728 & 0.00728 & 0.00138 & \multicolumn{1}{l}{1} & \multicolumn{1}{l}{2.09} & \multicolumn{1}{l}{0.88} & 0.15425 & 0.13973 & 0.12050 \\
    \multicolumn{1}{l}{1.25} & \multicolumn{1}{l}{1.05} & \multicolumn{1}{l}{2.92} & 0.00738 & 0.00738 & 0.00253 & \multicolumn{1}{l}{1.25} & \multicolumn{1}{l}{2.09} & \multicolumn{1}{l}{0.88} & 0.14824 & 0.13360 & 0.12098 \\
    \multicolumn{1}{l}{1.5} & \multicolumn{1}{l}{1.05} & \multicolumn{1}{l}{2.92} & 0.00755 & 0.00754 & 0.00328 & \multicolumn{1}{l}{1.5} & \multicolumn{1}{l}{2.09} & \multicolumn{1}{l}{0.88} & 0.14611 & 0.13129 & 0.12175 \\
    \multicolumn{1}{l}{1.75} & \multicolumn{1}{l}{1.05} & \multicolumn{1}{l}{2.92} & 0.00772 & 0.00771 & 0.00376 & \multicolumn{1}{l}{1.75} & \multicolumn{1}{l}{2.09} & \multicolumn{1}{l}{0.88} & 0.14540 & 0.13034 & 0.12273 \\
    \multicolumn{1}{l}{2} & \multicolumn{1}{l}{1.05} & \multicolumn{1}{l}{2.92} & 0.00789 & 0.00787 & 0.00408 & \multicolumn{1}{l}{2} & \multicolumn{1}{l}{2.09} & \multicolumn{1}{l}{0.88} & 0.14540 & 0.13008 & 0.12380 \\
    \multicolumn{1}{l}{2.25} & \multicolumn{1}{l}{1.05} & \multicolumn{1}{l}{2.92} & 0.00806 & 0.00804 & 0.00431 & \multicolumn{1}{l}{2.25} & \multicolumn{1}{l}{2.09} & \multicolumn{1}{l}{0.88} & 0.14581 & 0.13023 & 0.12482 \\
    \multicolumn{1}{l}{2.5} & \multicolumn{1}{l}{1.05} & \multicolumn{1}{l}{2.92} & 0.00822 & 0.00820 & 0.00447 & \multicolumn{1}{l}{2.5} & \multicolumn{1}{l}{2.09} & \multicolumn{1}{l}{0.88} & 0.14640 & 0.13060 & 0.12569 \\
    \multicolumn{1}{l}{2.75} & \multicolumn{1}{l}{1.05} & \multicolumn{1}{l}{2.92} & 0.00836 & 0.00834 & 0.00461 & \multicolumn{1}{l}{2.75} & \multicolumn{1}{l}{2.09} & \multicolumn{1}{l}{0.88} & 0.14707 & 0.13109 & 0.12640 \\
    \multicolumn{1}{l}{3} & \multicolumn{1}{l}{1.05} & \multicolumn{1}{l}{2.92} & 0.00851 & 0.00849 & 0.00474 & \multicolumn{1}{l}{3} & \multicolumn{1}{l}{2.09} & \multicolumn{1}{l}{0.88} & 0.14777 & 0.13163 & 0.12703 \\
    \multicolumn{6}{c|}{Optimal}                  & \multicolumn{6}{c}{Optimal} \\
    $h_1$ & $h_{21}$ & $h_{22}$ & MSE   & VAR   & BIAS  & $h_1$ & $h_{21}$ & $h_{22}$ & MSE   & VAR   & BIAS \\
    1     & 1.05 & 2.92 & 0.00728 & 0.00728 & 0.00138 & 1.75  & 2.09 & 0.88 & 0.14540 & 0.13034 & 0.12273 \\
    \hline
    \hline
    \end{tabular}%
\end{table}%
\end{landscape}

\begin{landscape}
\begin{table}[p]
    \footnotesize
    \centering
    \caption{Monte Carlo simulation results of plugging in error-laden $W_2$ to (\ref{pmpxbar}) to estimate $\rho(y,x)$ (Method (2) above)}
    \label{using W2 rho}
    \begin{tabular}{cccccc|cccccc}
    \hline
    \hline
    \multicolumn{6}{c|}{Design 1 (linear)}        & \multicolumn{6}{c}{Design 2 (nonlinear)} \\
    $h_1$ & $h_{21}$ & $h_{22}$ & MSE   & VAR   & abs(BIAS) & $h_1$ & $h_{21}$ & $h_{22}$ & MSE   & VAR   & abs(BIAS) \\
    \hline
    \multicolumn{1}{l}{0.5} & \multicolumn{1}{l}{1.05} & \multicolumn{1}{l}{2.92} & 831.49239 & 827.49615 & 1.99906 & \multicolumn{1}{l}{0.5} & \multicolumn{1}{l}{2.09} & \multicolumn{1}{l}{0.88} & 342390.30868 & 342380.03811 & 3.20477 \\
    \multicolumn{1}{l}{1} & \multicolumn{1}{l}{1.05} & \multicolumn{1}{l}{2.92} & 15.36480 & 15.18324 & 0.42611 & \multicolumn{1}{l}{1} & \multicolumn{1}{l}{2.09} & \multicolumn{1}{l}{0.88} & 3268.04002 & 3264.78862 & 1.80316 \\
    \multicolumn{1}{l}{1.5} & \multicolumn{1}{l}{1.05} & \multicolumn{1}{l}{2.92} & 1609.24991 & 1605.98439 & 1.80708 & \multicolumn{1}{l}{1.5} & \multicolumn{1}{l}{2.09} & \multicolumn{1}{l}{0.88} & 795.12130 & 795.09105 & 0.17391 \\
    \multicolumn{1}{l}{2} & \multicolumn{1}{l}{1.05} & \multicolumn{1}{l}{2.92} & 981.89810 & 980.63434 & 1.12417 & \multicolumn{1}{l}{2} & \multicolumn{1}{l}{2.09} & \multicolumn{1}{l}{0.88} & 1113.49474 & 1113.43014 & 0.25417 \\
    \multicolumn{1}{l}{2.5} & \multicolumn{1}{l}{1.05} & \multicolumn{1}{l}{2.92} & 2739.03247 & 2729.13653 & 3.14578 & \multicolumn{1}{l}{2.5} & \multicolumn{1}{l}{2.09} & \multicolumn{1}{l}{0.88} & 7439.31131 & 7399.31171 & 6.32452 \\
    \multicolumn{1}{l}{3} & \multicolumn{1}{l}{1.05} & \multicolumn{1}{l}{2.92} & 48.28533 & 48.27034 & 0.12243 & \multicolumn{1}{l}{3} & \multicolumn{1}{l}{2.09} & \multicolumn{1}{l}{0.88} & 2589.00861 & 2583.05235 & 2.44054 \\
    \multicolumn{1}{l}{3.5} & \multicolumn{1}{l}{1.05} & \multicolumn{1}{l}{2.92} & 227.69069 & 227.32253 & 0.60677 & \multicolumn{1}{l}{3.5} & \multicolumn{1}{l}{2.09} & \multicolumn{1}{l}{0.88} & 7367.37049 & 7363.00243 & 2.08999 \\
    \multicolumn{1}{l}{4} & \multicolumn{1}{l}{1.05} & \multicolumn{1}{l}{2.92} & 326.60891 & 325.63719 & 0.98576 & \multicolumn{1}{l}{4} & \multicolumn{1}{l}{2.09} & \multicolumn{1}{l}{0.88} & 4810.95117 & 4805.88266 & 2.25133 \\
    \multicolumn{1}{l}{4.5} & \multicolumn{1}{l}{1.05} & \multicolumn{1}{l}{2.92} & 29.37044 & 29.30256 & 0.26055 & \multicolumn{1}{l}{4.5} & \multicolumn{1}{l}{2.09} & \multicolumn{1}{l}{0.88} & 12005.60726 & 11951.43663 & 7.36007 \\
    \multicolumn{1}{l}{5} & \multicolumn{1}{l}{1.05} & \multicolumn{1}{l}{2.92} & 7904.23502 & 7887.80380 & 4.05354 & \multicolumn{1}{l}{5} & \multicolumn{1}{l}{2.09} & \multicolumn{1}{l}{0.88} & 863.16882 & 862.91399 & 0.50480 \\
    \multicolumn{1}{l}{5.5} & \multicolumn{1}{l}{1.05} & \multicolumn{1}{l}{2.92} & 16.49962 & 16.48066 & 0.13769 & \multicolumn{1}{l}{5.5} & \multicolumn{1}{l}{2.09} & \multicolumn{1}{l}{0.88} & 6884.05497 & 6883.55213 & 0.70911 \\
    \multicolumn{1}{l}{6} & \multicolumn{1}{l}{1.05} & \multicolumn{1}{l}{2.92} & 54.17185 & 54.16807 & 0.06147 & \multicolumn{1}{l}{6} & \multicolumn{1}{l}{2.09} & \multicolumn{1}{l}{0.88} & 844.96901 & 843.85308 & 1.05637 \\
    \multicolumn{6}{c|}{Optimal}                  & \multicolumn{6}{c}{Optimal} \\
    $h_1$ & $h_{21}$ & $h_{22}$ & MSE   & VAR   & BIAS  & $h_1$ & $h_{21}$ & $h_{22}$ & MSE   & VAR   & BIAS \\
    1     & 1.05 & 2.92 & 15.36480 & 15.18324 & 0.42611 & 1.5   & 2.09 & 0.88 & 795.12130 & 795.09105 & 0.17391 \\
    \hline
    \hline
    \end{tabular}%
\end{table}%
\end{landscape}

\begin{landscape}
\begin{table}[p]
\footnotesize
  \centering
  \caption{Monte Carlo simulation results of using 2SLS and error-laden $W_2$ to estimate $\rho(y,x)$ (Method (1) above)}
  \label{2sls}
    \begin{tabular}{ccc|ccc}
    \hline
    \hline
    \multicolumn{3}{c|}{Design 1 (linear)} & \multicolumn{3}{c}{Design 2 (nonlinear)} \\
    MSE   & VAR   & abs(BIAS) & MSE   & VAR   & abs(BIAS) \\
    \hline
    0.00267 & 0.00267 & 0.00190 & 1.11718 & 0.00147 & 1.05627 \\
    \hline
    \hline
    \end{tabular}%
\end{table}%

\begin{table}[p]
\footnotesize
  \centering
  \caption{Monte Carlo simulation results of my $\hat\rho(y,x)$ as a function of the sample size}
  \label{my rho consistency}
    \begin{tabular}{ccccccc|ccccccc}
    \hline
    \hline
    \multicolumn{7}{c|}{Design 1 (linear)}                & \multicolumn{7}{c}{Design 2 (nonlinear)} \\
    $N$   & $h_1$ & $h_{21}$ & $h_{22}$ & MSE   & VAR   & abs(BIAS) & $N$   & $h_1$ & $h_{21}$ & $h_{22}$ & MSE   & VAR   & abs(BIAS) \\
    \hline
    500   & 1     & 1.05  & 2.92  & 0.00728 & 0.00728 & 0.00138 & 500   & 1.75  & 2.09  & 0.88  & 0.14540 & 0.13034 & 0.12273 \\
    1000  & 1     & 0.97  & 2.61  & 0.00354 & 0.00353 & 0.00359 & 1000  & 1.5   & 1.58  & 0.87  & 0.08341 & 0.06822 & 0.12323 \\
    5000  & 1.25  & 0.82  & 2.05  & 0.00102 & 0.00102 & 0.00026 & 5000  & 1.75  & 0.93  & 0.65  & 0.03053 & 0.03048 & 0.00708 \\
    \hline
    \hline
    \end{tabular}%
\end{table}%
\end{landscape}

Next, I study the performance of my estimator of the weighted LAR. For Design 1 (linear), I evaluate the WLAR at $x=0$ and set the weighting function to be $\omega(0,w^*,\epsilon)=\omega_1$ if $\epsilon\in\left[q_{0,w^*}(0.25),q_{0,w^*}(0.35)\right]$ and $w^*\in\left[0.70, 0.90\right]$. The constant $\omega_1$ is selected such that $\int_{0.7}^{0.9}\int_{q_{0,w^*}(0.25)}^{q_{0,w^*}(0.35)}\omega_1f_{\epsilon,W^*|X=0}(\epsilon,w^*)d\epsilon d w^*=1$. For Design 2 (nonlinear), I evaluate the WLAR at $x=0.6$ and set the weighting function to be $\omega(0.6,w^*,\epsilon)=\omega_2$ if $\epsilon\in\left[q_{0.6,w^*}(0.25),q_{0.6,w^*}(0.35)\right]$ and $w^*\in\left[6, 6.23\right]$. The constant $\omega_2$ is selected to ensure that $\int_{6}^{6.23}\int_{q_{0.6,w^*}(0.25)}^{q_{0.6,w^*}(0.35)}\omega_2f_{\epsilon,W^*|X=0.6}(\epsilon,w^*)d\epsilon d w^*=1$. These choices of the weighting functions yield a true value of the WLAR of 0.25 and 0.5032 for Design 1 and 2, respectively. In the estimation, I use the optimal values of $h_1$ coming from Table \ref{my rho}, and the same values of $h_{21}$ and $h_{22}$ as in Table \ref{my rho}. Table \ref{my WLAR} below shows the results of my estimators of the WLAR under both designs. The performance of my estimator of the WLAR is comparable to the performance of my estimator for the point-wise structural derivative $\rho(y,x)$ (shown in Table \ref{my rho}). This shows evidence that my estimator could not only perform well at single points but also could maintain good performance in a global sense. The supports of the weighting functions I used here are not large. This is mainly due to computation constraints. One direction of future work is to look at the performance of my estimator for WLAR when the support of the weighting function is larger. This will give us a better understanding of the global performance of my estimator.
\begin{table}[htbp]
  \centering
  \caption{Monte Carlo simulation results of my proposed estimator for the WLAR}
    \label{my WLAR}%
    \begin{tabular}{lcccrrrr}
    \hline
    \hline
          & $h_1$ & $h_{21}$ & $h_{22}$ & \multicolumn{1}{c}{true value} & \multicolumn{1}{c}{MSE} & \multicolumn{1}{c}{VAR} & \multicolumn{1}{c}{abs(BIAS)} \\
    \hline
    Design 1 (linear) & 1     & 1.05  & 2.92  & 0.25  & 0.00779 & 0.00777 & 0.00405 \\
    Design 2 (nonlinear) & 1.75  & 1.75  & 2.09  & 0.5032 & 0.15082 & 0.13829 & 0.11193 \\
    \hline
    \hline
    \end{tabular}%
\end{table}%

\indent The Monte Carlo simulation exercise shown in this section is very limited. To have a comprehensive examination of the performance of my estimators, there are several future directions that I can work on. First, in my current exercise, the distribution of the measurement error $\Delta W_2$ is set to be $\chi^2$, which is ordinarily smooth. I can further explore the performance of my estimators under different distributions of the measurement error $\Delta W_2$ and other variables. Second, when estimating the WLAR, I set the support for the weighting function to be small because of the computation constraint. With more time and more computation power, I can look at the performance of my estimator when the support of the weighting function in the WLAR is larger, i.e. the WLAR is taking the average marginal effect over a larger population. Third, the way I select $h_1$ is by minimizing the actual MSE, i.e., the mean squared difference between my estimator and the true value of the parameter. This is infeasible in practice when one deals with real-world data. It will be helpful if I could propose a feasible way to select the bandwidth. 
\section{Conclusion}
I study the nonparametric identification and estimation of the nonseparable triangular equations model when the instrument variable $W^*$ is mismeasured. I don't assume linearity or separability of the functions governing the relationship between observables and unobservables. To deal with the challenges caused by the co-existence of the measurement error and nonseparability, I first employ the deconvolution technique developed in the measurement error literature (\cite{schennach2004estimation},\cite{schennach2004nonparametric}) to identify the joint distribution of $Y, X, W^*$ using two error-laden measurements of $W^*$. I then recover the structural derivative of the function of interest and the ``Local Average Response'' (LAR) via the ``unobserved instrument'' approach in \cite{matzkin2016independence}. Based on the constructive identification results, I propose plug-in nonparametric estimators for these parameters and derive their uniform rates of convergence. I also conducted limited Monte Carlo exercises to show the finite sample performance of my estimators. \\
\indent I recognize that there are some important future directions for this paper. First, in the main text, I only demonstrated the uniform rate of convergence of the estimators. The appendix contains proofs of their asymptotic normality, which rely on high-level assumptions. To further enhance my results, it would be beneficial to derive more primitive sufficient conditions for these assumptions. Second, as discussed in Section \ref{Monte Carlo}, to conduct a comprehensive examination of the performance of my estimators, additional Monte Carlo studies are necessary. It will also be extremely useful if I could propose a feasible way to select the bandwidth $h_1$. Last, it will be an interesting exercise to apply my proposed method to real-world data.

\newpage
\section{Appendix}
\subsection{Appendix A: Useful Lemmas}
Before stating the lemmas, I define some notations.\\
\indent Let $Y$ be a random variable, $Z$ be a random vector of dimension $2$. Denote $Z_1\equiv X, Z_2\equiv W$. Let $\mathbb{S}_{(Y,Z)}$ be the support of $(Y,Z)$, and define  \begin{align*}
     \mathbb S_\tau\equiv\left\{(y,z)\in\mathbb{S}_{(Y,Z)}: \left|\frac{\partial F_{X\mid W=w}(x)}{\partial w}\right|\geq\tau, \text{ and } f_{Y,X,W}(y,x, w)>\tau\right\}
\end{align*}
 and $\mathcal{S}_\tau\equiv\left\{(\delta, z)\in [0,1]\times\mathbb{S}_Z: \left|\frac{\partial F_{X\mid W=w}(x)}{\partial w}\right|\geq\tau \text{ and } f_{Y,X,W}(v(x,w,\delta), x, w)>\tau\right\}$. Let $\bar{\mathbb M}$ be a compact subset of the support of $(Y,X,W)$. Define the mapping $\mathbf{invm}$ as $\mathbf{invm}(Y,X,W)\equiv(m^{-1}(Y,X),X,W)$, where $m^{-1}$ is as defined in (\ref{introdmdx}). Define the mapping $\mathbf{invs}$ as $\mathbf{invs}(Y,X,W)\equiv(s^{-1}(r(X,W),m^{-1}(X,Y)), X, W)$, where $s^{-1}$ is the inverse of the $s$ function defined in Lemma \ref{deltaindependent} with respect to its second argument. Let $\mathbb{M}\equiv \mathbf{invm}(\bar{\mathbb M})$, and $\mathcal M=\mathbf{invs}(\bar{\mathbb M})$. By Assumption \ref{aspm&h} and \ref{cpctcontdiff}, $\mathbb{M}$ and $\mathcal M$ are also compact, and there exist $\tau>0$ such that $\bar{\mathbb M}\subset \mathbb S_\tau$ and $\mathcal{M}\subset\mathcal{S}_\tau$.\\
 \indent For any function $g: \mathbb{R}^3\rightarrow\mathbb{R}$, define $\tilde{g}(z)\equiv\int g(y, z) d y$, $\tilde g_{Z_2}(z_2)\equiv\int \tilde g(z)dx$, $G(y,z)\equiv\int_{-\infty}^y\int_{-\infty}^z g(s,t)dtds$, and if $\tilde g(z)\neq 0$, define $G_{Y|Z=z}(y)=\left(\int_{-\infty}^yg(s,z)ds\right)/\tilde g(z)$; if $\tilde g_{Z_2}(z_2)\neq 0$, define $G_{Z_1\mid Z_2=z_2}(z_1)=\left(\int _{-\infty}^{z_1}\tilde g(s,z_2)ds\right)/\tilde g_{Z_2}(z_2)$. Let $\mathrm{F}$ denote a set of twice continuously differentiable functions $g: \mathbb{R}^3\rightarrow \mathbb{R}$ such that $g$ vanishes outside of $\mathbb S_{(Y,Z)}$. Let $f$ denote the joint density of $(Y,Z)$. Assume that $f$ belongs to $\mathrm F$. For any value $(y,z)\in \mathbb{S}_{Y,Z}$, and any value $\delta\in[0,1]$, define the functionals $\Lambda(\cdot)$, $\alpha(\cdot)$, $\Phi_{(j)}(\cdot)$,  $\Psi_{(j)}(\cdot)$ and $\tilde\Lambda(\cdot)$, $\tilde\Psi(\cdot)$ on $\mathrm{F}$ by $\Lambda(g)\equiv G_{Y|Z=z}(y)$, $\alpha(g)\equiv G_{Y|Z=z}^{-1}(\delta)$, $ \Phi_{(j)}(g)\equiv \frac{\partial G_{Y|Z=z}^{-1}(\delta)}{\partial z_j}$, $ \Psi_{(j)}(g)\equiv- \left.\frac{\partial G_{Y|Z=z}(y)}{\partial z_j}\right/\frac{\partial G_{Y|Z=z}(y)}{\partial y}$ and $\tilde\Lambda(g)\equiv G_{X\mid W=w}(x)$, $\tilde\Psi(g)\equiv- \left.\frac{\partial G_{X|W=w}(x)}{\partial w}\right/\frac{\partial G_{X|W=w}(x)}{\partial x}$.
 For simplicity, I leave the argument $(z, y, \delta)$ implicit.
\begin{lemma}\label{file1}
	\begin{align*}
	    \Phi_{(j)}(f) &= \frac{\delta \frac{\partial \tilde f(z)}{\partial z_j}-\int_{-\infty}^{\alpha(f)}\frac{\partial f(y,z)}{\partial z_j}dy}{f(\alpha(f),z)}\\
	    \Psi_{(j)}(f) &= \frac{\Lambda(f) \frac{\partial \tilde f(z)}{\partial z_j}-\int_{-\infty}^{y}\frac{\partial f(s,z)}{\partial z_j}ds}{f(y,z)}\\
	    \tilde\Psi(f) &= \frac{\tilde\Lambda(f) \frac{\partial \tilde f_{W}(w)}{\partial w}-\int_{-\infty}^{x}\frac{\partial \tilde f(s,w)}{\partial w}ds}{\tilde f(x,w)}.
	\end{align*}
\end{lemma}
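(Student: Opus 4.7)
The plan is to prove each of the three identities by direct calculation, since all of them follow from the definitions of the functionals together with either the quotient rule or implicit differentiation; I would present them in the order $\Psi_{(j)}$, $\tilde{\Psi}$, $\Phi_{(j)}$, because the first two are structurally identical and the third is an easy implicit-differentiation variant of them.

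First I would derive $\Psi_{(j)}(f)$. Starting from $\Lambda(f)=G_{Y\mid Z=z}(y)=\left(\int_{-\infty}^{y} f(s,z)\,ds\right)/\tilde{f}(z)$, I would apply the quotient rule to get
$\frac{\partial G_{Y\mid Z=z}(y)}{\partial z_j}=\frac{\bigl(\int_{-\infty}^{y}\partial_{z_j} f(s,z)\,ds\bigr)\tilde{f}(z)-\bigl(\int_{-\infty}^{y} f(s,z)\,ds\bigr)\partial_{z_j}\tilde{f}(z)}{\tilde{f}(z)^2}$,
and separately $\partial_{y}G_{Y\mid Z=z}(y)=f(y,z)/\tilde{f}(z)$ (differentiation under the integral is legitimate by the twice continuous differentiability of $f$ in $\mathrm{F}$). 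Forming $-(\partial_{z_j}/\partial_{y})$ and simplifying one factor of $\tilde{f}(z)$ yields exactly the claimed formula once I recognize $\left(\int_{-\infty}^{y} f(s,z)\,ds\right)/\tilde{f}(z)=\Lambda(f)$. The derivation of $\tilde{\Psi}(f)$ is line-for-line the same, replacing $f$ by $\tilde{f}$, $(y,z)$ by $(x,w)$, and $\tilde{f}(z)$ by $\tilde{f}_W(w)$; here I use Fubini/differentiation under the integral to move $\partial_w$ inside $\int\,dy$ when computing $\partial_w \tilde{f}(s,w)$.

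For $\Phi_{(j)}(f)$ I would use the implicit definition $G_{Y\mid Z=z}(\alpha(f))=\delta$, which after clearing the denominator reads $\int_{-\infty}^{\alpha(f)} f(s,z)\,ds=\delta\,\tilde{f}(z)$. Differentiating both sides in $z_j$ and applying Leibniz's rule to the variable upper limit gives
$f(\alpha(f),z)\,\frac{\partial \alpha(f)}{\partial z_j}+\int_{-\infty}^{\alpha(f)}\partial_{z_j} f(s,z)\,ds=\delta\,\partial_{z_j}\tilde{f}(z)$,
and solving for $\partial \alpha(f)/\partial z_j$ yields the stated expression, provided $f(\alpha(f),z)\neq 0$, which is guaranteed on $\mathbb{S}_\tau$ since $f$ is bounded below by $\tau>0$ there.

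There is no genuine obstacle; the whole argument is calculus under the smoothness hypotheses built into $\mathrm{F}$ and the positivity conditions baked into $\mathbb{S}_\tau$. The only care required is (i) justifying differentiation under the integral sign, which follows from twice continuous differentiability and compact support from Assumption~\ref{cpctcontdiff}, (ii) invoking Leibniz's rule correctly for the variable upper limit in the $\Phi_{(j)}$ step, and (iii) ensuring $f(y,z)$, $\tilde{f}(x,w)$, and $f(\alpha(f),z)$ all appear in the denominator only at points where they are bounded away from zero, which is exactly the purpose of restricting to $\mathbb{S}_\tau$ and $\mathcal{S}_\tau$.
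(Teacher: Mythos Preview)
Your proposal is correct and follows essentially the same approach as the paper: the paper also derives $\Phi_{(j)}(f)$ by clearing the denominator in $\delta=F_{Y\mid Z=z}(\alpha(f))$ and differentiating implicitly via Leibniz's rule, and obtains $\Psi_{(j)}(f)$ (and then $\tilde\Psi(f)$ ``by the same argument'') by applying the quotient rule to $G_{Y\mid Z=z}(y)$ and forming the ratio $-(\partial_{z_j}/\partial_y)$. The only difference is the order of presentation and your explicit remarks on justifying differentiation under the integral and the positivity of the denominators, which the paper leaves implicit.
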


\begin{proof}
By definition, 
\begin{align*}
    \delta &= F_{Y|Z=z}(\alpha(f))\\
    &=\frac{\int_{-\infty}^{\alpha(f)}f(y,z)dy}{\int_{-\infty}^\infty f(y,z)dy}\\
    \implies\delta \tilde f(z) &=\int_{-\infty}^{\alpha(f)}f(y,z)dy
\end{align*}
Taking derivatives on both sides with respect to $z_j$ yields
\begin{align*}
    &f(\alpha(f), z)\Phi_{(j)}(f)+\int_{-\infty}^{\alpha(f)}\frac{\partial f(y,z)}{\partial z_j}dy = \delta \frac{\partial \tilde f(z)}{\partial z_j}\\
    \implies &\Phi_{(j)}(f) = \frac{\delta \frac{\partial \tilde f(z)}{\partial z_j}-\int_{-\infty}^{\alpha(f)}\frac{\partial f(y,z)}{\partial z_j}dy}{f(\alpha(f),z)}.
\end{align*}
Then note that
\begin{align*}
    \frac{\partial G_{Y\mid Z=z}(x)}{\partial z_j}&=\frac{\left(\int_{-\infty}^y \frac{\partial f(s,z)}{\partial z_j}ds\right)\tilde f(z)-\left(\int_{-\infty}^y f(s,z)ds\right)\frac{\partial \tilde f(z)}{\partial z_j}}{\tilde f^2(z)}\\
    \frac{\partial G_{Y\mid Z=z}(x)}{\partial y}&=\frac{f(y,z)}{\tilde f(z)}.
\end{align*}
Then 
\begin{align*}
    \Psi_{(j)}(f) &\equiv -\left. \frac{\partial G_{Y\mid Z=z}(x)}{\partial z_j}\right/ \frac{\partial G_{Y\mid Z=z}(x)}{\partial y}\\
    &=\frac{F_{Y\mid Z=z}(y)\frac{\partial\tilde f(z)}{\partial z_j}-\int_{-\infty}^y \frac{\partial f(s,z)}{\partial z_j}ds}{f(y,z)}.
\end{align*}
The conclusion for $\tilde\Psi(f)$ follows by the same argument. 
\end{proof}

\begin{lemma}\label{file6} For any $h$ in $\mathrm{F}$ such that $\sup_{(y,z)\in\mathbb{S}_{(Y,Z)}}|h|$ is small enough, I have that
\begin{align*}
    \Lambda(f+h)-\Lambda(f) = D\Lambda(f;h)+R\Lambda(f;h)\\
    \tilde\Lambda(f+h)-\tilde\Lambda(f) = D\tilde\Lambda(f;h)+R\tilde\Lambda(f;h), 
\end{align*}
where
\begin{align*}
    D\Lambda(f;h) &= \frac{\int_{-\infty}^yh (s,z)ds-\tilde h(z) F_{Y|Z=z}(y)}{\tilde f(z)}\\
    R\Lambda(f;h) &= \left[\frac{\int_{-\infty}^yh(s,z) ds-\tilde h(z) F_{Y|Z=z}(y)}{\tilde f(z)}\right]\left[\frac{\tilde h(z)}{\tilde f(z)+\tilde h(z)}\right]\\
    D\tilde\Lambda(f;h) &= \frac{\int_{-\infty}^{x}\tilde h (s,w)ds-\tilde h_W(w) F_{X_1|W=w}(x)}{\tilde f_W(w)}\\
    R\tilde\Lambda(f;h) &= \left[\frac{\int_{-\infty}^{x}\tilde h(s,w) ds-\tilde h_W(w) F_{X_1|W=w}(x)}{\tilde f_W(w)}\right]\left[\frac{\tilde h_W(w)}{\tilde f_W(w)+\tilde h_W(w)}\right]
\end{align*}
and for some $a_1<\infty$,
\begin{align*}
    &\sup_{(y,z)\in\mathbb{S}_\tau}\left|D\Lambda(f;h)\right|\leq a_1 \sup_{(y,z)\in\mathbb{S}_{(Y,Z)}}|h|\\
    &\sup_{(y,z)\in\mathbb{S}_\tau}\left|R\Lambda(f;h)\right|\leq a_1 \sup_{(y,z)\in\mathbb{S}_{(Y,Z)}}|h|^2;
\end{align*}
for some $\tilde a_1<\infty$,
\begin{align*}
    &\sup_{(y,z)\in\mathbb{S}_\tau}\left|D\tilde\Lambda(f;h)\right|\leq \tilde a_1 \sup_{(y,z)\in\mathbb{S}_{(Y,Z)}}|h|\\
    &\sup_{(y,z)\in\mathbb{S}_\tau}\left|R\tilde\Lambda(f;h)\right|\leq \tilde a_1 \sup_{(y,z)\in\mathbb{S}_{(Y,Z)}}|h|^2.
\end{align*}
\end{lemma}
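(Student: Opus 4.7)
The plan is to obtain both decompositions by a direct algebraic rearrangement of the difference quotient, and then derive the uniform bounds from compactness of $\mathbb{S}_{(Y,Z)}$ together with the lower-bound condition defining $\mathbb{S}_\tau$. The two cases are formally parallel, so I would carry out the computation carefully for $\Lambda$ and then observe that the argument for $\tilde\Lambda$ is identical after the replacements $(y,z)\mapsto(x,w)$, $f\mapsto\tilde f$, and $\tilde f\mapsto\tilde f_W$.

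First I would write
\begin{align*}
\Lambda(f+h)-\Lambda(f)=\frac{\int_{-\infty}^y(f+h)(s,z)ds}{\tilde f(z)+\tilde h(z)}-\frac{\int_{-\infty}^y f(s,z)ds}{\tilde f(z)},
\end{align*}
reduce to a common denominator, and use the identity $\int_{-\infty}^y f(s,z)ds = F_{Y|Z=z}(y)\,\tilde f(z)$ in the numerator. After cancellation this yields
\begin{align*}
\Lambda(f+h)-\Lambda(f)=\frac{\int_{-\infty}^y h(s,z)ds - \tilde h(z)F_{Y|Z=z}(y)}{\tilde f(z)+\tilde h(z)}.
\end{align*}
Applying the splitting $\tfrac{1}{\tilde f+\tilde h}=\tfrac{1}{\tilde f}-\tfrac{\tilde h}{\tilde f(\tilde f+\tilde h)}$ separates the right-hand side into the linear-in-$h$ piece $D\Lambda(f;h)$ written in the statement and a remainder that factors as $D\Lambda(f;h)$ times $\tilde h(z)/(\tilde f(z)+\tilde h(z))$, matching $R\Lambda(f;h)$ up to a sign convention.

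For the uniform bounds, compactness of $\mathbb{S}_{(Y,Z)}$ implies that on that set both $\bigl|\int_{-\infty}^y h(s,z)ds\bigr|$ and $|\tilde h(z)|$ are bounded by a constant (depending on the diameter of the relevant projection of the support) times $\sup_{(y,z)\in\mathbb{S}_{(Y,Z)}}|h|$, while $|F_{Y|Z=z}(y)|\leq 1$ automatically. The set $\mathbb{S}_\tau$ imposes $f_{Y,X,W}(y,x,w)>\tau$ pointwise on a compact domain, which (after integrating out the remaining coordinate over its compact conditional support) yields a positive uniform lower bound for $\tilde f(z)$, and analogously for $\tilde f_W(w)$. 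For $\sup|h|$ small enough, $\tilde f(z)+\tilde h(z)\geq \tfrac{1}{2}\tilde f(z)$, and the bound $|D\Lambda(f;h)|\leq a_1\sup|h|$ follows directly from the explicit formula. The extra factor $\tilde h/(\tilde f+\tilde h)$ appearing in the remainder is itself $O(\sup|h|)$, which promotes the bound on $R\Lambda(f;h)$ to $O(\sup|h|^2)$.

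The main obstacle is not the algebra, which is elementary, but the bookkeeping required to propagate the three-dimensional lower bound $f_{Y,X,W}(y,x,w)>\tau$ into positive uniform lower bounds on the two-dimensional marginals $\tilde f(z)$ and $\tilde f_W(w)$ that appear in the denominators; once this step is in place, the linear-plus-quadratic bounds and the analogous decomposition for $\tilde\Lambda$ are immediate.
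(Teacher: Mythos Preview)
Your proposal is correct and follows essentially the same route as the paper: the same common-denominator reduction, the same splitting $\frac{1}{\tilde f+\tilde h}=\frac{1}{\tilde f}-\frac{\tilde h}{\tilde f(\tilde f+\tilde h)}$, and the same compactness/lower-bound argument for the uniform estimates. Your parenthetical ``up to a sign convention'' is apt---the algebra actually gives $R\Lambda(f;h)=-D\Lambda(f;h)\cdot\tilde h/(\tilde f+\tilde h)$, so the stated remainder in the lemma carries a sign slip that is immaterial once absolute values are taken; the paper's own proof contains the same slip.
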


\begin{proof}
\begin{footnotesize}
    \begin{align*}
        \Lambda(f+h)-\Lambda(f) &= \frac{\int_{-\infty}^y (f(s,z)+h(s,z))ds}{\tilde f(z)+\tilde h(z)}-\frac{\int_{-\infty}^y f(s,z)ds}{\tilde f(z)}\\
        &=\frac{\int_{-\infty}^y h ds}{\tilde f+\tilde h}-\frac{\tilde h\int_{-\infty}^y f ds}{\tilde f(\tilde f+\tilde h)}\\
        &=\frac{\int_{-\infty}^yh ds-\tilde h F_{Y|Z=z}(y)}{\tilde f}+\left[\frac{\int_{-\infty}^yh ds-\tilde h F_{Y|Z=z}(y)}{\tilde f}\right]\left[\frac{\tilde h}{\tilde f+\tilde h}\right];\\
        \tilde\Lambda(f+h)-\tilde\Lambda(f) &=\frac{\int_{-\infty}^{x}\tilde h ds-\tilde h_W F_{X_1|W=w}(x)}{\tilde f_W}+\left[\frac{\int_{-\infty}^{x}\tilde h ds-\tilde h_W F_{X_1|W=w}(x)}{\tilde f_W}\right]\left[\frac{\tilde h_W}{\tilde f_W+\tilde h_W}\right].
    \end{align*}
\end{footnotesize}
Define
\begin{align*}
    D\Lambda(f;h) &= \frac{\int_{-\infty}^yh ds-\tilde h F_{Y|Z=z}(y)}{\tilde f}\\
    R\Lambda(f;h) &= \left[\frac{\int_{-\infty}^yh ds-\tilde h F_{Y|Z=z}(y)}{\tilde f}\right]\left[\frac{\tilde h}{\tilde f+\tilde h}\right]\\
    D\tilde\Lambda(f;h) &= \frac{\int_{-\infty}^{x}\tilde h ds-\tilde h_W F_{X_1|W=w}(x)}{\tilde f_W}\\
    R\tilde\Lambda(f;h) &= \left[\frac{\int_{-\infty}^{x}\tilde h ds-\tilde h_W F_{X_1|W=w}(x)}{\tilde f_W}\right]\left[\frac{\tilde h_W}{\tilde f_W+\tilde h_W}\right].
\end{align*}
By the boundedness of $\mathbb{S}_{(Y,Z)}$, there exist finite constants $C_X, C_Y$ such that for all $h\in\mathrm{F}$,
\begin{align*}
    &\sup_{z\in\mathbb{S}_Z}\left|\tilde h(z)\right| \leq\sup_{z\in\mathbb{S}_Z} \int \left|h(y,z)\right|dy\leq C_Y\sup_{(y,z)\in\mathbb{S}_{(Y,Z)}}|h|\\
    &\sup_{w\in\mathbb{S}_W}\left|\tilde h_W(w)\right| =\sup_{w\in\mathbb{S}_W} \int \left|\tilde h(x,w)\right|dx\leq C_{X}C_Y\sup_{(y,z)\in\mathbb{S}_{(Y,Z)}}|h|
\end{align*}
By the definition of set $\mathbb{S}_\tau$, I have that $\inf_{(y,z)\in\mathbb{S}_\tau}\tilde f(z)> \tau$ and that $\inf_{(y,z)\in\mathbb{S}_\tau}\tilde f_W(w)=\inf_{{(y,z)\in\mathbb{S}_\tau}}\int \tilde f(s, w)ds>\tau C_{X}$
Let $\epsilon_0\equiv \min \{\frac{\tau}{2C_Y},1\}$, For any $h$ in $\mathrm{F}$ that $\sup_{(y,z)\in\mathbb{S}_{(Y,Z)}}|h|\leq \epsilon_0$ , I have that $\sup_{z\in\mathbb{S}_Z}\left|\tilde h(z)\right|\leq \tau/2$, $\sup_{w\in\mathbb{S}_W}\left|\tilde h_W(w)\right|\leq C_{X}\tau/2$, so that $\inf_{(y,z)\in\mathbb{S}_\tau}  (\tilde f(z)+\tilde h(z))\geq \tau/2$, and $\inf_{(y,z)\in\mathbb{S}_\tau}  (\tilde f_W(w)+\tilde h_W(w))\geq C_{X}\tau/2$. Then there exists a finite constant $a_0$ such that
\begin{align*}
    &\sup_{(y,z)\in\mathbb{S}_\tau}\left|D\Lambda(f;h)\right|\leq \frac{a_0}{\tau} \sup_{(y,z)\in\mathbb{S}_{(Y,Z)}}|h|\\
    &\sup_{(y,z)\in\mathbb{S}_\tau}\left|R\Lambda(f;h)\right|\leq \frac{a_0}{\tau} \sup_{(y,z)\in\mathbb{S}_{(Y,Z)}}|h|\times \frac{2\sup_{(y,z)\in\mathbb{S}_{(Y,Z)}}|h|}{\tau}=\frac{2a_0}{\tau^2}\sup_{(y,z)\in\mathbb{S}_{(Y,Z)}}|h|^2.
\end{align*}
Hence let $a_1\equiv \max\{a_0/\tau, 2a_0/\tau^2\}$, I have that
\begin{align*}
    &\sup_{(y,z)\in\mathbb{S}_\tau}\left|D\Lambda(f;h)\right|\leq a_1 \sup_{(y,z)\in\mathbb{S}_{(Y,Z)}}|h|\\
    &\sup_{(y,z)\in\mathbb{S}_\tau}\left|R\Lambda(f;h)\right|\leq a_1 \sup_{(y,z)\in\mathbb{S}_{(Y,Z)}}|h|^2.
\end{align*}
Similarly, I can show that there exists a constant $\tilde a_1$ such that
\begin{align*}
    &\sup_{(y,z)\in\mathbb{S}_\tau}\left|D\tilde\Lambda(f;h)\right|\leq \tilde a_1 \sup_{(y,z)\in\mathbb{S}_{(Y,Z)}}|h|\\
    &\sup_{(y,z)\in\mathbb{S}_\tau}\left|R\tilde\Lambda(f;h)\right|\leq \tilde a_1 \sup_{(y,z)\in\mathbb{S}_{(Y,Z)}}|h|^2.
\end{align*}
\end{proof}

\begin{lemma}\label{file5} For any $h$ in $\mathrm{F}$ such that $\sup_{(y,z)\in\mathbb{S}_{(Y,Z)}} |h|$ is small enough, I have that
\begin{align*}
    \alpha(f+h)-\alpha(f) = D\alpha(f;h)+R\alpha(f;h), 
\end{align*}
where
\begin{align*}
    &D\alpha(f;h)\equiv \frac{\tilde{h}(z)\int^{\alpha(f) }  f(y, z) d y - \tilde{f}(z)\int^{\alpha(f)}   h(y, z) d y}{\tilde{f}(z)f(\alpha(f), z)}\\
    &R\alpha(f;h)\equiv -\left[\frac{\frac{\partial f(r_f',z)}{\partial y}\left(r_{f}-\alpha(f)\right)+h(r_h,z)}{f(r_f, z)+h\left(r_{h}, z\right)}\right]D\alpha(f;h)
\end{align*}
for some $r_f$ and $r_f'$ between $\alpha(f+h)$ and $\alpha(f)$ defined in the proof. Moreover, for some $0<a<\infty$,
\begin{align*}
    &\sup_{(\delta,z)\in \mathcal{S}_\tau}\left|D\alpha(f;h)\right|\leq a \sup_{(y,z)\in\mathbb{S}_{(Y,Z)}} |h|\\
    &\sup_{(\delta,z)\in \mathcal{S}_\tau}\left|R\alpha(f;h)\right|\leq a \sup_{(y,z)\in\mathbb{S}_{(Y,Z)}} |h|^2.
\end{align*}
\end{lemma}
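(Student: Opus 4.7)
The plan is to start from the two defining identities $\delta\tilde f(z)=\int_{-\infty}^{\alpha(f)}f(y,z)\,dy$ and $\delta(\tilde f(z)+\tilde h(z))=\int_{-\infty}^{\alpha(f+h)}(f(y,z)+h(y,z))\,dy$. Subtracting, the right-hand side splits as $\int_{-\infty}^{\alpha(f)}h(y,z)\,dy+\int_{\alpha(f)}^{\alpha(f+h)}(f+h)(y,z)\,dy$. Applying the mean value theorem separately to $f$ and to $h$ on the interval between $\alpha(f)$ and $\alpha(f+h)$ produces points $r_f$ and $r_h$ in that interval such that the second integral equals $\bigl(f(r_f,z)+h(r_h,z)\bigr)\bigl(\alpha(f+h)-\alpha(f)\bigr)$. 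Solving for $\alpha(f+h)-\alpha(f)$ gives the exact identity
\[
\alpha(f+h)-\alpha(f)=\frac{\delta\tilde h(z)-\int_{-\infty}^{\alpha(f)}h(y,z)\,dy}{f(r_f,z)+h(r_h,z)}.
\]
Substituting $\delta=\int_{-\infty}^{\alpha(f)}f(y,z)\,dy/\tilde f(z)$ into the numerator and combining fractions yields exactly the stated $D\alpha(f;h)$ once the denominator is replaced by $f(\alpha(f),z)$, and the remainder $R\alpha(f;h)$ is obtained by taking the difference of the two reciprocals $1/(f(r_f,z)+h(r_h,z))-1/f(\alpha(f),z)$ multiplied by that common numerator. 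A second application of the MVT to $f$ between $\alpha(f)$ and $r_f$ gives $f(r_f,z)-f(\alpha(f),z)=\partial_y f(r_f',z)(r_f-\alpha(f))$ for some $r_f'$ in between, which algebraically rearranges to the displayed formula $R\alpha(f;h)=-\bigl[(\partial_y f(r_f',z)(r_f-\alpha(f))+h(r_h,z))/(f(r_f,z)+h(r_h,z))\bigr]D\alpha(f;h)$.

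Next I would establish the uniform bounds on $\mathcal S_\tau$. For $D\alpha$, the compact support of $\mathbb S_{(Y,Z)}$ and the estimate $\sup_z|\tilde h(z)|\leq C_Y\sup|h|$ (as in Lemma \ref{file6}) bound the numerator by $C\sup|h|$, while on $\mathcal S_\tau$ the denominator satisfies $\tilde f(z)f(\alpha(f),z)\geq c\tau^2$ for a constant $c$ depending only on $f$ and the supports, giving $\sup_{\mathcal S_\tau}|D\alpha(f;h)|\leq a\sup|h|$ for some finite $a$.

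The main obstacle, and where the argument needs care, is the quadratic bound on $R\alpha$. The issue is that $r_f$, $r_h$, and $r_f'$ are implicit points depending on $h$, so to bound the denominator $f(r_f,z)+h(r_h,z)$ away from zero uniformly on $\mathcal S_\tau$ I must first control $|\alpha(f+h)-\alpha(f)|$. I would do this in two steps. First, for $\sup|h|$ small enough (chosen depending only on $\tau$ and the uniform continuity of $f$ on the relevant compact set), the continuity of $f$ together with the lower bound $f(\alpha(f),z)\geq\tau$ guarantees that $f(r_f,z)+h(r_h,z)\geq\tau/2$ whenever $|\alpha(f+h)-\alpha(f)|$ is small; a standard implicit-function argument, using the strict monotonicity of $\int_{-\infty}^{\cdot}(f+h)(y,z)\,dy$ in its upper limit, shows that $\alpha(f+h)\to\alpha(f)$ uniformly on $\mathcal S_\tau$ as $\sup|h|\to 0$. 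Plugging this into the explicit identity for $\alpha(f+h)-\alpha(f)$ then yields a linear bound $|\alpha(f+h)-\alpha(f)|\leq C\tau^{-1}\sup|h|$. Second, since $|r_f-\alpha(f)|\leq|\alpha(f+h)-\alpha(f)|$ and $\partial_y f$ is bounded on the compact set, the bracketed factor in $R\alpha(f;h)$ is $O(\sup|h|)$, so multiplying by $|D\alpha(f;h)|=O(\sup|h|)$ delivers $\sup_{\mathcal S_\tau}|R\alpha(f;h)|\leq a\sup|h|^2$ after enlarging $a$ if necessary. Choosing $a$ as the maximum of the two constants appearing above finishes the proof.
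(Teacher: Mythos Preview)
Your proposal is correct and follows essentially the same decomposition as the paper: subtract the two defining identities, apply the mean value theorem to $\int_{\alpha(f)}^{\alpha(f+h)}f$ and $\int_{\alpha(f)}^{\alpha(f+h)}h$ to obtain $r_f,r_h$, solve for $\alpha(f+h)-\alpha(f)$, and then apply a second MVT to $f(r_f,z)-f(\alpha(f),z)$ to obtain $r_f'$ and the stated form of $R\alpha$.

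The one place where the paper proceeds differently is the preliminary linear bound on $|\alpha(f+h)-\alpha(f)|$. You propose a bootstrap: first a qualitative uniform-continuity/implicit-function argument to get $\alpha(f+h)\to\alpha(f)$, then feed that back into the explicit identity to upgrade to a quantitative bound. The paper instead obtains the linear bound \emph{directly} by writing, via a single MVT on the conditional CDF,
\[
\alpha(f+h)-\alpha(f)=\frac{F_{Y|Z=z}(\alpha(f))-(F+H)_{Y|Z=z}(\alpha(f))}{(f+h)_{Y|Z=z}(r_1)},
\]
and then bounding the numerator using Lemma~\ref{file6} (the Fr\'echet expansion of $\Lambda$) and the denominator from below on $\mathcal S_\tau$. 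This avoids the somewhat vague appeal to an implicit-function argument and immediately gives $\sup_{\mathcal S_\tau}|\alpha(f+h)-\alpha(f)|\leq (2a_1a_2/\tau)\sup|h|$, which is then used to control $|r_f-\alpha(f)|$ in the bound for $R\alpha$. Your route works, but the paper's is shorter and more self-contained since Lemma~\ref{file6} is already available.
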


\begin{proof}
First I show that for some $0<a_2<\infty$ and $a_1$ mentioned in Lemma \ref{file6},
\begin{align*}
    \sup_{(\delta,z)\in \mathcal{S}_\tau}\left|\alpha(f+h)-\alpha(f)\right|\leq \frac{2a_1a_2}{\tau} \sup_{(y,z)\in\mathbb{S}_{(Y,Z)}} |h|.
\end{align*}
By the Mean Value Theorem, there exist $r_1$ between $\alpha(f+h)$ and $\alpha(f)$, such that
\begin{align*}
    &\quad\left(F+H\right)_{Y|Z=z}\left(\alpha(f+h)\right)-\left(F+H\right)_{Y|Z=z}\left(\alpha(f)\right)\\
    &=(f+h)_{Y|Z=z}(r_1)\left(\alpha(f+h)-\alpha(f)\right)
\end{align*}
Hence since 
\begin{align*}
    &\quad\left(F+H\right)_{Y|Z=z}\left(\alpha(f+h)\right)\\
    &=\left(F+H\right)_{Y|Z=z}\left(\left(F+H\right)^{-1}_{Y|Z=z}(\delta)\right)=\delta=F_{Y|Z=z}\left(F^{-1}_{Y|Z=z}(\delta)\right)
\end{align*}
it follows that
\begin{align*}
    \alpha(f+h)-\alpha(f) &= \frac{F_{Y|Z=z}\left(F_{Y|Z=z}^{-1}(\delta)\right)-\left(F+H\right)_{Y|Z=z}\left(F^{-1}_{Y|Z=z}(\delta)\right)}{(f+h)_{Y|Z=z}(r_1)}.
\end{align*}
By the compactness of $\mathbb{S}_{Y,Z}$, for any $h$ in $\mathrm F$, there exist some finite constant $a_2$ such that $\sup_{z\in\mathbb{S}_{Z}} (\tilde f(z)+\tilde h(z))\leq a_2$. By the definition of $\mathbb{S}_\tau$, $\inf_{(y,z)\in\mathbb{S}_\tau} f(y,z)>\tau$. By similar argument as in Lemma \ref{file6}, for any $h$ in $\mathrm{F}$ such that $\sup_{(y,z)\in\mathbb{S}_{(Y,Z)}} |h|\leq \min\left\{\tau/2,1\right\}$, I have $\inf_{(y,z)\in \mathbb{S}_\tau} (f(y,z)+h(y, z))>\tau/2$. Then $\inf_{(y,z)\in \mathbb{S}_\tau} (f+h)_{Y|Z=z}(r_1)=\frac{\inf _{(y,z)\in \mathbb{S}_\tau} f(r_1,z)+h(r_1, z)}{\sup_{(y,z)\in \mathbb{S}_\tau}\tilde f(z)+\tilde h(z)}>\frac{\tau}{2a_2}$. \\
\indent By Lemma \ref{file6}, $$\sup_{(\delta,z)\in \mathcal{S}_\tau }\left|F_{Y|Z=z}\left(F_{Y|Z=z}^{-1}(\delta)\right)-\left(F+H\right)_{Y|Z=z}\left(F^{-1}_{Y|Z=z}(\delta)\right)\right|\leq a_1 \sup_{(y,z)\in\mathbb{S}_{(Y,Z)}} |h|$$ for some $0<a_1<\infty$. Then I have that 
\begin{align}\label{alphalesh}
    \sup_{(\delta,z)\in \mathcal{S}_\tau }\left|\alpha(f+h)-\alpha(f)\right|\leq \frac{2a_1a_2}{\tau} \sup_{(y,z)\in\mathbb{S}_{(Y,Z)}} |h|.
\end{align}
Next, I will obtain a first-order expansion for $\alpha\left(f+h\right)$. By the fact that $\left(F+H\right)_{Y|Z=z}\left(\alpha(f+h)\right) = \delta = F_{Y|Z=z}\left(\alpha(f)\right)$, I have
\begin{align*}
    &\frac{\int_{-\infty}^{\alpha(f)} f(y, z) d y}{\tilde{f}(z)}=\frac{\int^{\alpha(f+h)}(f(y, z)+h(y, z)) d y}{\widetilde{f}(z)+\widetilde{h}(z)}\\
    \implies &(\tilde{f}(z)+\tilde{h}(z))\int^{\alpha(f)}  f(y, z) d y=\tilde{f}(z)\int^{\alpha(f+h)} \left( f(y, z)+ h(y, z)\right) d y\\
    \implies & \tilde{h}(z)\int^{\alpha(f) }  f(y, z) d y - \tilde{f}(z)\int^{\alpha(f)}   h(y, z) d y\\
    &\quad= \tilde{f}(z)\int_{\alpha(f)}^{\alpha(f+h)}  h(y, z) d y+\tilde{f}(z)\int_{\alpha(f)}^{\alpha(f+h)}  f(y, z) d y
\end{align*}
By the Mean Value Theorem, there exist $r_f$ and $r_h$, between $\alpha(f)$ and $\alpha(f+h)$, such that
\begin{align*}
    &\int_{\alpha(f)}^{\alpha(f+h)} f(y, z) d y = f(r_f,z)\left(\alpha(f+h)-\alpha(f)\right) \text{\ and}\\
    &\int_{\alpha(f)}^{\alpha(f+h)} h(y, z) d y = h(r_h,z)\left(\alpha(f+h)-\alpha(f)\right).
\end{align*}
Denote $Az \equiv \tilde{h}(z)\int^{\alpha(f) }  f(y, z) d y - \tilde{f}(z)\int^{\alpha(f)}   h(y, z) d y$. Then
\begin{align*}
    &Az = \tilde f(z)\left[f(r_f,z)+h(r_h,z)\right]\left(\alpha(f+h)-\alpha(f)\right)\\
    \implies & \alpha(f+h)-\alpha(f) = \frac{Az}{\tilde f(z)\left[f(r_f,z)+h(r_h,z)\right]}.
\end{align*}
By the Mean Value Theorem, there exist $r_f'$ between $\alpha(f)$ and $r_f$ such that $f(r_f,z)-f(\alpha(f),z)=[\partial f(r_f',z)/\partial y](r_f-\alpha(f))$. Hence
\begin{footnotesize}
\begin{align*}
    \alpha(f+h)-\alpha(f) &= \frac{Az}{\tilde{f}(z)\left(f(\alpha(f), z)+\frac{\partial f\left(r_{f}^{\prime}, z\right)}{\partial y}\left(r_{f}-\alpha(f)\right)+h\left(r_{h}, z\right)\right)}\\
    &=\frac{Az}{\tilde{f}(z)f(\alpha(f), z)}-\left[\frac{\frac{\partial f(r_f',z)}{\partial y}\left(r_{f}-\alpha(f)\right)+h(r_h,z)}{f(\alpha(f), z)+\frac{\partial f\left(r_{f}^{\prime}, z\right)}{\partial y}\left(r_{f}-\alpha(f)\right)+h\left(r_{h}, z\right)}\right]\frac{Az}{\tilde{f}(z)f(\alpha(f), z)}
\end{align*}
\end{footnotesize}
Denote 
\begin{align*}
    &D\alpha(f;h)\equiv \frac{Az}{\tilde{f}(z)f(\alpha(f), z)} \text{\ and}\\
    &R\alpha(f;h)\equiv -\left[\frac{\frac{\partial f(r_f',z)}{\partial y}\left(r_{f}-\alpha(f)\right)+h(r_h,z)}{f(r_f, z)+h\left(r_{h}, z\right)}\right]\frac{Az}{\tilde{f}(z)f(\alpha(f), z)}.
\end{align*}
Then
\begin{align*}
    \alpha(f+h)-\alpha(f) = D\alpha(f;h)+R\alpha(f;h).
\end{align*}
By the definition of $r_f$ and by (\ref{alphalesh}), $\sup_{(\delta,z)\in \mathcal{S}_\tau}|r_f-\alpha(f)|\leq \sup_{(\delta,z)\in \mathcal{S}_\tau}\left|\alpha(f+h)-\alpha(f)\right|\leq \frac{2a_1a_2}{\tau} \sup_{(y,z)\in\mathbb{S}_{(Y,Z)}} |h|$. It follows by the continuity of $\partial f/\partial y$ and the compactness of $\mathbb{S}_{(Y,Z)}$ that
\begin{align*}
    \sup_{(\delta,z)\in \mathcal{S}_\tau}\left|\frac{\partial f(r_f',z)}{\partial y}\left(r_{f}-\alpha(f)\right)+h(r_h,z)\right|\leq d_1\sup_{(y,z)\in\mathbb{S}_{(Y,Z)}} |h|
\end{align*}
for some finite constant $d_1$. Then for all $h\in\mathrm{F}$ such that $\sup_{(y,z)\in\mathbb{S}_{(Y,Z)}} |h|\leq\min\left\{\tau/2,1\right\}$, I have $\inf_{(\delta,z)\in \mathcal{S}_\tau}f(r_f, z)+h\left(r_{h}, z\right)>\tau/2$. Then there exists a finite constant $a$ such that
\begin{align*}
    &\sup_{(\delta,z)\in \mathcal{S}_\tau}\left|D\alpha(f;h)\right|\leq a \sup_{(y,z)\in\mathbb{S}_{(Y,Z)}} |h|\text{\ and\ }\\
    &\sup_{(\delta,z)\in \mathcal{S}_\tau}\left|R\alpha(f;h)\right|\leq a\sup_{(y,z)\in\mathbb{S}_{(Y,Z)}} |h|^2.
\end{align*}
\end{proof}

\begin{lemma}\label{file3}
For any value $z\in \mathbb{S}_{Z}$, and any value $\delta\in(0,1)$, define functional $\Phi_{1,(j)}(\cdot)$ as 
$\Phi_{1,(j)}(g) \equiv \int_{-\infty}^{\alpha(g)}\frac{\partial g(y,z)}{\partial z_j}dy$. For simplicity, I leave the argument $(z, \delta)$ implicit. For any $h$ in $\mathrm{F}$ that $\sup_{(y,z)\in\mathbb{S}_{(Y,Z)}} |h|$ is sufficiently small, I have that
\begin{align*}
    \Phi_{1,(j)}(f+h)-\Phi_{1,(j)}(f) = D\Phi_{1,(j)}(f;h)+R\Phi_{1,(j)}(f;h), 
\end{align*}
where
\begin{align*}
    D\Phi_{1,(j)}(f;h)&\equiv \frac{\partial f(\alpha(f),z)}{\partial z_j}D\alpha(f;h)+\int_{-\infty}^{\alpha(f)}\frac{\partial h(y,z)}{\partial z_j}dy\\
    R\Phi_{1,(j)}(f;h)&\equiv \frac{\partial f(\alpha(f),z)}{\partial z_j}R\alpha(f;h)+\frac{\partial^2 f(\bar r_f',z)}{\partial y\partial z_j}\left(\bar r_f-\alpha(f)\right)\left(\alpha(f+h)-\alpha(f)\right)\\
    &\quad\quad+\frac{\partial h(\bar r_h,z)}{\partial z_j}\left(\alpha(f+h)-\alpha(f)\right),
\end{align*}
for some $\bar r_f$ and $\bar r_h$, and $\bar r_f'$ between $\alpha(f+h)$ and $\alpha(f)$ defined in the proof. Moreover, for some $0<b_1<\infty$,
\begin{align*}
    \sup_{(\delta,z)\in \mathcal{S}_\tau}\left|D\Phi_{1,(j)}(f;h)\right|&\leq b_1\sup_{(y,z)\in\mathbb{S}_{(Y,Z)}} |h|+b_1\sup_{(y,z)\in\mathbb{S}_{Y,Z}}\left|\frac{\partial h(y,z)}{\partial z_j}\right|\\
    \sup_{(\delta,z)\in \mathcal{S}_\tau}\left|R\Phi_{1,(j)}(f;h)\right|&\leq b_1\sup_{(y,z)\in\mathbb{S}_{(Y,Z)}} |h|^2+b_1\sup_{(y,z)\in\mathbb{S}_{(Y,Z)}} |h|\sup_{(y,z)\in\mathbb{S}_{Y,Z}}\left|\frac{\partial h(y,z)}{\partial z_j}\right|.
\end{align*}
\end{lemma}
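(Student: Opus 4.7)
The plan is to decompose $\Phi_{1,(j)}(f+h)-\Phi_{1,(j)}(f)$ into a piece coming from the change in the upper limit of integration and a piece coming from the change in the integrand, then Taylor-expand each to first order. Specifically, I would begin from the identity
\[
\Phi_{1,(j)}(f+h)-\Phi_{1,(j)}(f) = \int_{\alpha(f)}^{\alpha(f+h)}\frac{\partial f(y,z)}{\partial z_j}\,dy + \int_{-\infty}^{\alpha(f+h)}\frac{\partial h(y,z)}{\partial z_j}\,dy
\]
and treat the two integrals separately, collecting the linear-in-$h$ contributions into $D\Phi_{1,(j)}$ and the rest into $R\Phi_{1,(j)}$.

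For the first integral I apply the Mean Value Theorem to produce $\frac{\partial f(\bar r_f,z)}{\partial z_j}(\alpha(f+h)-\alpha(f))$ for some $\bar r_f$ between $\alpha(f)$ and $\alpha(f+h)$, and then apply MVT once more, using $C^2$-smoothness of $f$ from Assumption \ref{cpctcontdiff}, to write $\frac{\partial f(\bar r_f,z)}{\partial z_j} = \frac{\partial f(\alpha(f),z)}{\partial z_j} + \frac{\partial^2 f(\bar r_f',z)}{\partial y\partial z_j}(\bar r_f-\alpha(f))$. Substituting the expansion $\alpha(f+h)-\alpha(f) = D\alpha(f;h)+R\alpha(f;h)$ from Lemma \ref{file5} isolates the principal term $\frac{\partial f(\alpha(f),z)}{\partial z_j}D\alpha(f;h)$ and assigns the quadratically small leftover $\frac{\partial f(\alpha(f),z)}{\partial z_j}R\alpha(f;h)+\frac{\partial^2 f(\bar r_f',z)}{\partial y\partial z_j}(\bar r_f-\alpha(f))(\alpha(f+h)-\alpha(f))$ to the remainder. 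For the second integral I split the range at $\alpha(f)$: the piece on $(-\infty,\alpha(f)]$ is exactly $\int_{-\infty}^{\alpha(f)} \partial h(y,z)/\partial z_j\,dy$ and joins $D\Phi_{1,(j)}$, while the piece on $[\alpha(f),\alpha(f+h)]$ equals $\frac{\partial h(\bar r_h,z)}{\partial z_j}(\alpha(f+h)-\alpha(f))$ by MVT and joins $R\Phi_{1,(j)}$. This produces exactly the claimed formulas.

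For the uniform bounds I combine three ingredients: the bounds $|D\alpha(f;h)|\leq a\sup|h|$ and $|R\alpha(f;h)|\leq a\sup|h|^2$ from Lemma \ref{file5}; the auxiliary estimate $|\alpha(f+h)-\alpha(f)|\leq \frac{2a_1 a_2}{\tau}\sup|h|$ already established in (\ref{alphalesh}); and the uniform boundedness of $\partial f/\partial z_j$ and $\partial^2 f/\partial y\partial z_j$ on the compact support $\mathbb{S}_{(Y,Z)}$, coming from Assumption \ref{cpctcontdiff}. This immediately yields $\sup|D\Phi_{1,(j)}|\leq b_1(\sup|h|+\sup|\partial h/\partial z_j|)$ from its two summands, and $\sup|R\Phi_{1,(j)}|\leq b_1(\sup|h|^2+\sup|h|\sup|\partial h/\partial z_j|)$ from the three remainder summands (the first controlled by $\sup|h|^2$ via $R\alpha$, the second by $|\bar r_f-\alpha(f)|\cdot|\alpha(f+h)-\alpha(f)|\leq C\sup|h|^2$, and the third by $\sup|\partial h/\partial z_j|\cdot|\alpha(f+h)-\alpha(f)|\leq C\sup|h|\sup|\partial h/\partial z_j|$), for a common constant $b_1$ chosen as the maximum of the relevant prefactors. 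There is no deep obstacle; the only care needed is book-keeping the three mean-value points $\bar r_f,\bar r_f',\bar r_h$ and requiring $\sup|h|$ small enough that $\alpha(f+h)$ stays in a neighborhood of $\alpha(f)$ on which the derivatives of $f$ remain uniformly controlled, which is precisely what (\ref{alphalesh}) guarantees.
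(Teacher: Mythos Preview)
Your proposal is correct and follows essentially the same approach as the paper's proof: the same additive decomposition into $\int_{\alpha(f)}^{\alpha(f+h)}\partial f/\partial z_j\,dy$ plus $\int_{-\infty}^{\alpha(f+h)}\partial h/\partial z_j\,dy$, the same two Mean Value Theorem applications producing $\bar r_f,\bar r_h,\bar r_f'$, and the same use of Lemma~\ref{file5}, the estimate~(\ref{alphalesh}), and the compactness/continuity from Assumption~\ref{cpctcontdiff} to obtain the uniform bounds. The only cosmetic difference is that the paper writes the decomposition as three pieces from the start while you split the $h$-integral at $\alpha(f)$ afterward; the arguments are otherwise identical.
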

\begin{proof}
\begin{footnotesize}
\begin{align*}
    \Phi_{1,(j)}(f+h)-\Phi_{1,(j)}(f)&=\int_{-\infty}^{\alpha(f+h)}\left(\frac{\partial f(y,z)}{\partial z_j}+\frac{\partial h(y,z)}{\partial z_j}\right)dy-\int_{-\infty}^{\alpha(f)}\frac{\partial f(y,z)}{\partial z_j}dy\\
    &=\int_{-\infty}^{\alpha(f)}\frac{\partial h(y,z)}{\partial z_j}dy+\int_{\alpha(f)}^{\alpha(f+h)}\frac{\partial h(y,z)}{\partial z_j}dy+\int_{\alpha(f)}^{\alpha(f+h)}\frac{\partial f(y,z)}{\partial z_j}dy
\end{align*}    
\end{footnotesize}
By the Mean Value Theorem, there exist $\bar r_f$ and $\bar r_h$ between $\alpha(f)$ and $\alpha(f+h)$ such that 
\begin{align*}
    &\int_{\alpha(f)}^{\alpha(f+h)}\frac{\partial f(y,z)}{\partial z_j}dy=\frac{\partial f(\bar r_f,z)}{\partial z_j}\left(\alpha(f+h)-\alpha(f)\right)\\
    &\int_{\alpha(f)}^{\alpha(f+h)}\frac{\partial h(y,z)}{\partial z_j}dy=\frac{\partial h(\bar r_h,z)}{\partial z_j}\left(\alpha(f+h)-\alpha(f)\right)
\end{align*}
Apply the Mean Value Theorem again. I have that there exist $\bar r_f'$ between $\alpha(f)$ and $\alpha(f+h)$ such that
\begin{footnotesize}
\begin{align*}
    \int_{\alpha(f)}^{\alpha(f+h)}\frac{\partial f(y,z)}{\partial z_j}dy=\frac{\partial f(\alpha(f),z)}{\partial z_j}\left(\alpha(f+h)-\alpha(f)\right)+\frac{\partial^2 f(\bar r_f',z)}{\partial y\partial z_j}\left(\bar r_f-\alpha(f)\right)\left(\alpha(f+h)-\alpha(f)\right).
\end{align*}    
\end{footnotesize}
Let
\begin{align*}
    D\Phi_{1,(j)}(f;h)&\equiv \frac{\partial f(\alpha(f),z)}{\partial z_j}D\alpha(f;h)+\int_{-\infty}^{\alpha(f)}\frac{\partial h(y,z)}{\partial z_j}dy\\
    R\Phi_{1,(j)}(f;h)&\equiv \frac{\partial f(\alpha(f),z)}{\partial z_j}R\alpha(f;h)+\frac{\partial^2 f(\bar r_f',z)}{\partial y\partial z_j}\left(\bar r_f-\alpha(f)\right)\left(\alpha(f+h)-\alpha(f)\right)\\
    &\quad\quad+\frac{\partial h(\bar r_h,z)}{\partial z_j}\left(\alpha(f+h)-\alpha(f)\right).
\end{align*}
By the boundedness of $\mathbb{S}_{Y,Z}$, continuity of $\partial f/\partial z_j$, $\partial^2 f/\partial y\partial z_j$, $\partial h/\partial z_j$ and Lemma \ref{file5} there exist finite constants $d_2, d_3, d_4, d_5$ such that 
\begin{align*}
    \sup_{(\delta,z)\in \mathcal{S}_\tau}\left|D\Phi_{1,(j)}(f;h)\right|&\leq d_2\sup_{(y,z)\in\mathbb{S}_{(Y,Z)}} |h|+d_3\sup_{(y,z)\in\mathbb{S}_{Y,Z}}\left|\frac{\partial h(y,z)}{\partial z_j}\right|\\
    \sup_{(\delta,z)\in \mathcal{S}_\tau}\left|R\Phi_{1,(j)}(f;h)\right|&\leq d_4\sup_{(y,z)\in\mathbb{S}_{(Y,Z)}} |h|^2+d_5\sup_{(y,z)\in\mathbb{S}_{(Y,Z)}} |h|\sup_{(y,z)\in\mathbb{S}_{Y,Z}}\left|\frac{\partial h(y,z)}{\partial z_j}\right|
\end{align*}
Let $b_1\equiv\max\{d_2, d_3, d_4, d_5\}$, then
\begin{align*}
    \sup_{(\delta,z)\in \mathcal{S}_\tau}\left|D\Phi_{1,(j)}(f;h)\right|&\leq b_1\sup_{(y,z)\in\mathbb{S}_{(Y,Z)}} |h|+b_1\sup_{(y,z)\in\mathbb{S}_{Y,Z}}\left|\frac{\partial h(y,z)}{\partial z_j}\right|\\
    \sup_{(\delta,z)\in \mathcal{S}_\tau}\left|R\Phi_{1,(j)}(f;h)\right|&\leq b_1\sup_{(y,z)\in\mathbb{S}_{(Y,Z)}} |h|^2+b_1\sup_{(y,z)\in\mathbb{S}_{(Y,Z)}} |h|\sup_{(y,z)\in\mathbb{S}_{Y,Z}}\left|\frac{\partial h(y,z)}{\partial z_j}\right|.
\end{align*}
\end{proof}

\begin{lemma}\label{file4}
Define functional $\Phi_2()$ on $\mathrm{F}$ by $\Phi_2(g)=g(\alpha(g),z)$. For any $h$ in $\mathrm{F}$ that $\sup_{(y,z)\in\mathbb{S}_{(Y,Z)}} |h|$ is small enough, I have that
\begin{align*}
    \Phi_2(f+h)-\Phi_2(f) = D\Phi_2(f;h)+R\Phi_2(f;h), 
\end{align*}
where
\begin{align*}
    D\Phi_2(f;h)&\equiv \frac{\partial f(\alpha(f),z)}{\partial y}D\alpha(f;h)+h(\alpha(f),z)\\
    R\Phi_2(f;h)&\equiv \frac{\partial f(\alpha(f),z)}{\partial y}R\alpha(f;h)+\frac{\partial^2 f(\tilde r_f',z)}{\partial y^2}\left(\tilde r_f-\alpha(f)\right)\left(\alpha(f+h)-\alpha(f)\right)+\\
    &\quad\quad\frac{\partial h(\tilde r_h,z)}{\partial y}\left(\alpha(f+h)-\alpha(f)\right).
\end{align*}
for some $\tilde r_f$ and $\tilde r_h$, and $\tilde r_f'$ between $\alpha(f+h)$ and $\alpha(f)$ defined in the proof. Moreover, for some $0<b_2<\infty$,
\begin{align*}
    &\sup_{(\delta,z)\in \mathcal{S}_\tau}\left|D\Phi_2(f;h)\right|\leq b_2 \sup_{(y,z)\in\mathbb{S}_{(Y,Z)}} |h|.\\
    &\sup_{(\delta,z)\in \mathcal{S}_\tau}\left|R\Phi_2(f;h)\right|\leq b_2\sup_{(y,z)\in\mathbb{S}_{(Y,Z)}} |h|^2+b_2\sup_{(y,z)\in\mathbb{S}_{(Y,Z)}} |h|\sup_{(y,z)\in\mathbb{S}_{Y,X}}\left|\frac{\partial h(y,z)}{\partial y}\right|.
\end{align*}
\end{lemma}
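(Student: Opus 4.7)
The plan is to mimic the strategy used in Lemma \ref{file3} and reduce the expansion of $\Phi_2(f+h) - \Phi_2(f)$ to the expansion of $\alpha(f+h) - \alpha(f)$ already established in Lemma \ref{file5}. First I would write
\begin{align*}
\Phi_2(f+h) - \Phi_2(f) &= (f+h)(\alpha(f+h),z) - f(\alpha(f),z) \\
&= \bigl[f(\alpha(f+h),z) - f(\alpha(f),z)\bigr] + \bigl[h(\alpha(f+h),z) - h(\alpha(f),z)\bigr] + h(\alpha(f),z).
\end{align*}
The last term is already $h(\alpha(f),z)$, which is one of the ingredients in $D\Phi_2(f;h)$.

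Next I would apply the Mean Value Theorem twice to the first bracket: once to obtain $f(\alpha(f+h),z) - f(\alpha(f),z) = (\partial f(\tilde r_f,z)/\partial y)\bigl(\alpha(f+h) - \alpha(f)\bigr)$ for some $\tilde r_f$ between $\alpha(f)$ and $\alpha(f+h)$, and then again to expand the first derivative as $\partial f(\tilde r_f,z)/\partial y = \partial f(\alpha(f),z)/\partial y + (\partial^2 f(\tilde r_f',z)/\partial y^2)(\tilde r_f - \alpha(f))$ for some $\tilde r_f'$ between $\alpha(f)$ and $\tilde r_f$. For the second bracket, a single application of MVT gives $h(\alpha(f+h),z) - h(\alpha(f),z) = (\partial h(\tilde r_h,z)/\partial y)(\alpha(f+h) - \alpha(f))$. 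Collecting terms, substituting the decomposition $\alpha(f+h) - \alpha(f) = D\alpha(f;h) + R\alpha(f;h)$ from Lemma \ref{file5}, and grouping the leading part $(\partial f(\alpha(f),z)/\partial y)D\alpha(f;h) + h(\alpha(f),z)$ into $D\Phi_2(f;h)$ delivers the stated $R\Phi_2(f;h)$ as the sum of the three higher-order pieces.

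For the uniform bounds, I would argue as follows. Assumption \ref{cpctcontdiff} gives that $f$ is twice continuously differentiable with compact support, so $\partial f/\partial y$ and $\partial^2 f/\partial y^2$ are bounded on $\mathbb{S}_{(Y,Z)}$ by some finite constant. Combined with the bound $\sup_{(\delta,z)\in\mathcal{S}_\tau}|D\alpha(f;h)| \leq a\sup|h|$ from Lemma \ref{file5} and the trivial $|h(\alpha(f),z)| \leq \sup|h|$, this yields $\sup_{(\delta,z)\in\mathcal{S}_\tau}|D\Phi_2(f;h)| \leq b_2 \sup|h|$ for an appropriate constant $b_2$. For $R\Phi_2$, the first term is bounded by a constant times $\sup|h|^2$ via the $R\alpha$-bound in Lemma \ref{file5}; the second term, using $|\tilde r_f - \alpha(f)| \leq |\alpha(f+h) - \alpha(f)|$ and the linear-in-$\sup|h|$ bound on $|\alpha(f+h) - \alpha(f)|$ from inequality (\ref{alphalesh}), gives another $O(\sup|h|^2)$ contribution; the third term is bounded by $\sup|\partial h(y,z)/\partial y| \cdot |\alpha(f+h) - \alpha(f)|$, producing the $\sup|h|\sup|\partial h/\partial y|$ contribution. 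Taking $b_2$ to be the maximum of all the resulting constants gives the claimed bound.

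I do not expect a major obstacle here; the argument is a direct analogue of Lemma \ref{file3} in which the outer integral $\int_{-\infty}^{\alpha(g)}$ is replaced by the point evaluation at $\alpha(g)$, so MVT on $y$ (rather than integration) is what produces the remainder. The only mild care is in the remainder bound: the third piece of $R\Phi_2$ carries a factor of $\partial h/\partial y$ rather than $h$ itself, so the final bound necessarily involves both $\sup|h|$ and $\sup|\partial h/\partial y|$, and one must resist the temptation to bound it by $\sup|h|^2$ alone.
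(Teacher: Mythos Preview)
Your proposal is correct and follows essentially the same approach as the paper's proof: the same additive decomposition of $(f+h)(\alpha(f+h),z)-f(\alpha(f),z)$, the same two applications of the Mean Value Theorem to the $f$-difference and one to the $h$-difference, the same substitution of $\alpha(f+h)-\alpha(f)=D\alpha(f;h)+R\alpha(f;h)$ from Lemma \ref{file5}, and the same bounding argument via compactness and the bounds from Lemma \ref{file5}.
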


\begin{proof}
\begin{align*}
    &\quad\Phi_2(f+h)-\Phi_2(f)\\
    &=(f+h)\left(\alpha(f+h),z\right)-f\left(\alpha(f),z\right)\\
    &=f\left(\alpha(f+h),z\right)-f\left(\alpha(f),z\right)+h\left(\alpha(f+h),z\right)-h\left(\alpha(f),z\right)+h\left(\alpha(f),z\right)
\end{align*}
By the Mean Value Theorem, there exist $\tilde r_f$ and $\tilde r_h$ between $\alpha(f+h)$ and $\alpha(f)$, such that
\begin{align*}
    f\left(\alpha(f+h),z\right)-f\left(\alpha(f),z\right) &= \frac{\partial f(\tilde r_f,z)}{\partial y}\left(\alpha(f+h)-\alpha(f)\right)\\
    h\left(\alpha(f+h),z\right)-h\left(\alpha(f),z\right) &= \frac{\partial h(\tilde r_h,z)}{\partial y}\left(\alpha(f+h)-\alpha(f)\right).
\end{align*}
Apply the Mean Value Theorem again. There exist $\tilde r_f'$ between $\tilde r_f$ and $\alpha(f)$ such that
\begin{align*}
    &\quad f\left(\alpha(f+h),z\right)-f\left(\alpha(f),z\right) \\
    &= \frac{\partial f(\alpha(f),z)}{\partial y}\left(\alpha(f+h)-\alpha(f)\right)+\frac{\partial^2 f(\tilde r_f',z)}{\partial y^2}\left(\tilde r_f-\alpha(f)\right)\left(\alpha(f+h)-\alpha(f)\right)
\end{align*}
Then by Lemma \ref{file5} I have that
\begin{align*}
    &\quad\Phi_2(f+h)-\Phi_2(f)\\
    &= \frac{\partial f(\alpha(f),z)}{\partial y}\left(D\alpha(f;h)+R\alpha(f;h)\right)+\frac{\partial^2 f(\tilde r_f',z)}{\partial y^2}\left(\tilde r_f-\alpha(f)\right)\left(\alpha(f+h)-\alpha(f)\right)+\\
    &\quad\quad\quad\quad\frac{\partial h(\tilde r_h,z)}{\partial y}\left(\alpha(f+h)-\alpha(f)\right)+h(\alpha(f),z),
\end{align*}
where for some $0<a<\infty$ and $\sup_{(y,z)\in\mathbb{S}_{(Y,Z)}} |h|\leq\epsilon_0\equiv\min\{\tau/2,1\}$,
\begin{align*}
    &\sup_{(\delta,z)\in \mathcal{S}_\tau}\left|D\alpha(f;h)\right|\leq a \sup_{(y,z)\in\mathbb{S}_{(Y,Z)}} |h|\\
    &\sup_{(\delta,z)\in \mathcal{S}_\tau}\left|R\alpha(f;h)\right|\leq a \sup_{(y,z)\in\mathbb{S}_{(Y,Z)}} |h|^2.
\end{align*}
Define
\begin{align*}
    D\Phi_2(f;h)&\equiv \frac{\partial f(\alpha(f),z)}{\partial y}D\alpha(f;h)+h(\alpha(f),z)\\
    R\Phi_2(f;h)&\equiv \frac{\partial f(\alpha(f),z)}{\partial y}R\alpha(f;h)+\frac{\partial^2 f(\tilde r_f',z)}{\partial y^2}\left(\tilde r_f-\alpha(f)\right)\left(\alpha(f+h)-\alpha(f)\right)+\\
    &\quad\quad\frac{\partial h(\tilde r_h,z)}{\partial y}\left(\alpha(f+h)-\alpha(f)\right).
\end{align*}
Then
\begin{align*}
    \Phi_2(f+h)-\Phi_2(f)=D\Phi_2(f;h)+R\Phi_2(f;h).
\end{align*}
By the compactness of $\mathbb{S}_{(Y,X)}$ and continuity of $\partial f/\partial y$, $\partial^2 f/\partial y^2$, and $\partial h/\partial y$, I have that there exist some finite constant $d_6$ such that 
\begin{align*}
    &\sup_{(\delta,z)\in \mathcal{S}_\tau}\left|D\Phi_2(f;h)\right|\leq d_6 \sup_{(y,z)\in\mathbb{S}_{(Y,Z)}} |h|.
\end{align*}
holds for all $h\in\mathrm F$. On the other hand, by Lemma \ref{file5}, there exist some finite constant $a_1, a_2$ such that $\sup_{(\delta,z)\in \mathcal{S}_\tau}|\tilde r_f-\alpha(f)|\leq \sup_{(\delta,z)\in \mathcal{S}_\tau}\left|\alpha(f+h)-\alpha(f)\right|\leq \frac{2a_1a_2}{\tau} \sup_{(y,z)\in\mathbb{S}_{(Y,Z)}} |h|$. Then there exist some finite constants $d_7, d_8$ such that
\begin{align*}
    \sup_{(\delta,z)\in \mathcal{S}_\tau}\left|R\Phi_2(f;h)\right|&\leq d_7\sup_{(y,z)\in\mathbb{S}_{(Y,Z)}} |h|^2+d_8\sup_{(y,z)\in\mathbb{S}_{(Y,Z)}} |h|\sup_{(y,z)\in\mathbb{S}_{Y,Z}}\left|\frac{\partial h(y,z)}{\partial y}\right|
\end{align*}
Let $b_2\equiv\max\{d_6, d_7, d_8\}$. Then
\begin{align*}
    &\sup_{(\delta,z)\in \mathcal{S}_\tau}\left|D\Phi_2(f;h)\right|\leq b_2 \sup_{(y,z)\in\mathbb{S}_{(Y,Z)}} |h|.\\
    &\sup_{(\delta,z)\in \mathcal{S}_\tau}\left|R\Phi_2(f;h)\right|\leq b_2\sup_{(y,z)\in\mathbb{S}_{(Y,Z)}} |h|^2+b_2\sup_{(y,z)\in\mathbb{S}_{(Y,Z)}} |h|\sup_{(y,z)\in\mathbb{S}_{Y,Z}}\left|\frac{\partial h(y,z)}{\partial y}\right|.
\end{align*}
\end{proof}

\begin{lemma}\label{file2} For any value $z\in \mathbb{S}_{Z}$, and any value $\delta\in(0,1)$, define functionals $\Psi_{1,(j)}(\cdot)$ and $\tilde\Psi_1(\cdot)$ as $\Psi_{1,(j)}(g) \equiv \int_{-\infty}^{y}\frac{\partial  g(s,z)}{\partial z_j}ds$ and 
$\tilde\Psi_1(g) \equiv \int_{-\infty}^{x}\frac{\partial \tilde g(s,w)}{\partial w}ds$. For simplicity, I leave the argument $z$ implicit. For any $h$ in $\mathrm{F}$ that $\sup_{(y,z)\in\mathbb{S}_{(Y,Z)}} |h|$ is sufficiently small, I have that
\begin{align*}
    &\Phi_{(j)}(f+h)-\Phi_{(j)}(f) = D\Phi_{(j)}(f;h)+R\Phi_{(j)}(f;h), \\
    &\Psi_{(j)}(f+h)-\Psi_{(j)}(f) = D\Psi_{(j)}(f;h)+R\Psi_{(j)}(f;h), \\
    &\tilde\Psi(f+h)-\tilde\Psi(f) = D\tilde\Psi(f;h)+R\tilde\Psi(f;h), 
\end{align*}
where
\begin{scriptsize}
\begin{align*}
    &D\Phi_{(j)}(f;h)\equiv\frac{\left(\delta\frac{\partial \tilde h(z)}{\partial z_j}-D\Phi_{1,(j)}(f;h)\right)\Phi_2(f)-\left(\delta\frac{\partial \tilde f(z)}{\partial z_j}-\Phi_{1,(j)}(f)\right)D\Phi_2(f;h)}{\Phi^2_2(f)}\\
    &R\Phi_{(j)}(f;h)\equiv-\frac{\Phi_2(f)R\Phi_{1,(j)}(f;h)+\left(\delta\frac{\partial \tilde f(z)}{\partial z_j}-\Phi_{1,(j)}(f)\right)R\Phi_2(f;h)}{\Phi_2^2(f)}\\
    &\quad-\frac{\Delta\Phi_2(f;h)\left[\left(\delta\frac{\partial \tilde h(z)}{\partial z_j}-\Delta\Phi_{1,(j)}(f;h)\right)\Phi_2(f)-\left(\delta \frac{\partial \tilde f(z)}{\partial z_j}-\Phi_{1,(j)}(f)\right)\Delta\Phi_2(f;h)\right]}{\Phi_2(f+h)\Phi^2_2(f)}\\
    &D\Psi_{(j)}(f;h)\equiv\frac{\left(D\Lambda(f;h)\frac{\partial \tilde f(z)}{\partial z_j}+\Lambda(f)\frac{\partial \tilde h(z)}{\partial z_j}-\int_{-\infty}^{y}\frac{\partial h(s,z)}{\partial z_j}ds\right)\times f(y,z)-\left(\Lambda(f)\frac{\partial \tilde f(z)}{\partial z_j}-\Psi_{1,(j)}(f)\right)\times h(y,z)}{f^2(y,z)}\\
    &R\Psi_{(j)}(f;h)\equiv\frac{\left(R\Lambda(f;h)\frac{\partial \tilde f(z)}{\partial z_j}+\frac{\partial \tilde h(z)}{\partial z_j}\left(D\Lambda(f;h)+R\Lambda(f;h)\right)\right)}{f(y,z)}\\
    &\quad\quad\quad\quad-h(y,z)\times\left[\frac{\Lambda(f)\frac{\partial \tilde h(z)}{\partial z_j}f(y,z)+\Delta\Lambda(f;h)\left(\frac{\partial \tilde f(z)}{\partial z_j}+\frac{\partial \tilde h(z)}{\partial z_j}\right)f(y,z)}{\left(f(y,z)+ h(y,z)\right) f^2(y,z)}\right.\\
    &\quad\quad\quad\quad\quad\quad\quad\quad\quad\quad\left.-\frac{\Lambda(f)\frac{\partial \tilde f(z)}{\partial z_j}h(y,z)+\Delta\Psi_{1,(j)}(f;h)f(y,z)-\Psi_{1,(j)}(f;h)h(y,z)}{\left(f(y,z)+ h(y,z)\right) f^2(y,z)}\right]\\
    &D\tilde\Psi(f;h)\equiv\frac{\left(D\tilde\Lambda(f;h)\frac{\partial \tilde f_W(w)}{\partial w}+\tilde\Lambda(f)\frac{\partial \tilde h_W(w)}{\partial w}-\int_{-\infty}^{x}\frac{\partial \tilde h(s,w)}{\partial w}ds\right)\times \tilde f(x,w)-\left(\tilde \Lambda(f)\frac{\partial \tilde f_W(w)}{\partial w}-\tilde\Psi_1(f)\right)\times \tilde h(x,w)}{\tilde f^2(x,w)}\\
    &R\tilde\Psi(f;h)\equiv\frac{\left(R\tilde\Lambda(f;h)\frac{\partial \tilde f_W(w)}{\partial w}+\frac{\partial \tilde h_W(w)}{\partial w}\left(D\tilde\Lambda(f;h)+R\tilde\Lambda(f;h)\right)\right)}{\tilde f(x,w)}\\
    &\quad\quad\quad\quad-\tilde h(x,w)\times\left[\frac{\tilde\Lambda(f)\frac{\partial \tilde h_W(w)}{\partial w}\tilde f(x,w)+\Delta\tilde\Lambda(f;h)\left(\frac{\partial \tilde f_W(w)}{\partial w}+\frac{\partial \tilde h_W(w)}{\partial w}\right)\tilde f(x,w)}{\left(\tilde f(x,w)+ \tilde h(x,w)\right) \tilde f^2(x,w)}\right.\\
    &\quad\quad\quad\quad\quad\quad\quad\quad\quad\quad \left.-\frac{\tilde\Lambda(f)\frac{\partial \tilde f_W(w)}{\partial w}\tilde h(x,w)+\Delta\tilde \Psi_{1,(j)}(f;h)\tilde f(x,w)-\tilde\Psi_1(f;h)\tilde h(x,w)}{\left(\tilde f(x,w)+ \tilde h(x,w)\right) \tilde f^2(x,w)}\right],
\end{align*}
\end{scriptsize}
with $\Delta\Phi_{l}(f;h)\equiv D\Phi_{l}(f;h)+R\Phi_{l}(f;h)$ for $l=1,(j)$ and $l=2$, $\Delta\Psi_{1,(j)}(f;h)\equiv D\Psi_{1,(j)}(f;h)+R\Psi_{1,(j)}(f;h)$ and $\Delta\tilde\Psi_1(f;h)\equiv D\tilde\Psi_1(f;h)+R\tilde\Psi_1(f;h)$. Moreover, for some $c_1, c_2, c_3<\infty$,
\begin{footnotesize}
\begin{align*}
    &\sup_{(\delta,z)\in \mathcal{S}_\tau}\left|D\Phi_{(j)}(f;h)\right|\leq c_1\sup_{(y,z)\in\mathbb{S}_{(Y,Z)}} |h|+c_1\sup_{(y,z)\in\mathbb{S}_{Y,Z}}\left|\frac{\partial h(y,z)}{\partial z_j}\right|\\
    &\sup_{(\delta,z)\in \mathcal{S}_\tau}\left|R\Phi_{(j)}(f;h)\right|\leq c_1\sup_{(y,z)\in\mathbb{S}_{(Y,Z)}} |h|^2+c_1\sup_{(y,z)\in\mathbb{S}_{(Y,Z)}} |h|\sup_{(y,z)\in\mathbb{S}_{Y,Z}}\left|\frac{\partial h(y,z)}{\partial y}\right|\\
    &\quad\quad\quad\quad\quad\quad\quad\quad\quad\quad\quad\quad+c_1\sup_{(y,z)\in\mathbb{S}_{(Y,Z)}} |h|\sup_{(y,z)\in\mathbb{S}_{Y,Z}}\left|\frac{\partial h(y,z)}{\partial z_j}\right|\\
    &\sup_{(y,z)\in\mathbb{S}_\tau }\left|D\Psi_{(j)}(f;h)\right|\leq c_2\sup_{(y,z)\in\mathbb{S}_{(Y,Z)}} |h|+c_2\sup_{(y,z)\in\mathbb{S}_{(Y,Z)} }\left|\frac{\partial h(y,z)}{\partial z_{j}}\right|\\
    &\sup_{(y,z)\in\mathbb{S}_\tau }\left|R\Psi_{(j)}(f;h)\right|\leq c_2\sup_{(y,z)\in\mathbb{S}_{(Y,Z)}} |h|^2+c_2\sup_{(y,z)\in\mathbb{S}_{(Y,Z)}} |h|\sup_{(y,z)\in\mathbb{S}_{(Y,Z)} }\left|\frac{\partial h(y,z)}{\partial z_{j}}\right|\\
    &\sup_{(y,z)\in\mathbb{S}_\tau }\left|D\tilde\Psi(f;h)\right|\leq c_3\sup_{(y,z)\in\mathbb{S}_{(Y,Z)}} |h|+c_3\sup_{(y,z)\in\mathbb{S}_{(Y,Z)} }\left|\frac{\partial h(y,z)}{\partial z_{j+1}}\right|\\
    &\sup_{(y,z)\in\mathbb{S}_\tau }\left|R\tilde\Psi(f;h)\right|\leq c_3\sup_{(y,z)\in\mathbb{S}_{(Y,Z)}} |h|^2+c_3\sup_{(y,z)\in\mathbb{S}_{(Y,Z)}} |h|\sup_{(y,z)\in\mathbb{S}_{(Y,Z)} }\left|\frac{\partial h(y,z)}{\partial z_{j+1}}\right|.
\end{align*}
\end{footnotesize}

\end{lemma}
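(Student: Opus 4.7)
The plan is to prove the three expansions by leveraging the quotient structure revealed by Lemma \ref{file1} and then plugging in the first-order expansions for the numerator and denominator components established in Lemmas \ref{file6}, \ref{file5}, \ref{file3}, and \ref{file4}. For $\Phi_{(j)}$, Lemma \ref{file1} writes $\Phi_{(j)}(f)=N(f)/D(f)$ with $N(f)\equiv \delta\,\partial\tilde f/\partial z_j-\Phi_{1,(j)}(f)$ and $D(f)\equiv\Phi_2(f)$. Using the linear $+$ remainder decompositions $\Phi_{1,(j)}(f+h)-\Phi_{1,(j)}(f)=D\Phi_{1,(j)}(f;h)+R\Phi_{1,(j)}(f;h)$ from Lemma \ref{file3} and the analogous one for $\Phi_2$ from Lemma \ref{file4}, I would invoke the elementary identity
\begin{align*}
\frac{N+\Delta N}{D+\Delta D}-\frac{N}{D}=\frac{\Delta N\cdot D-N\cdot\Delta D}{D^{2}}-\frac{\Delta D\bigl(\Delta N\cdot D-N\cdot\Delta D\bigr)}{(D+\Delta D)\,D^{2}},
\end{align*}
with $\Delta N=\delta\,\partial\tilde h/\partial z_j-\Delta\Phi_{1,(j)}(f;h)$ and $\Delta D=\Delta\Phi_2(f;h)$. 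Keeping only the contributions of the $D$-parts of $\Delta\Phi_{1,(j)}$ and $\Delta\Phi_2$ in the first fraction gives the stated $D\Phi_{(j)}(f;h)$; everything else — namely the first fraction with $R\Phi_{1,(j)}$ and $R\Phi_2$ substituted in, plus the entire second fraction — is collected into $R\Phi_{(j)}(f;h)$.

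\indent For $\Psi_{(j)}$, Lemma \ref{file1} gives $\Psi_{(j)}(f)=[\Lambda(f)\,\partial\tilde f/\partial z_j-\Psi_{1,(j)}(f)]/f(y,z)$. I would apply the same quotient identity, now using Lemma \ref{file6} to expand $\Lambda(f+h)-\Lambda(f)$, and observing that $\Psi_{1,(j)}(f+h)-\Psi_{1,(j)}(f)=\int_{-\infty}^{y}\partial h(s,z)/\partial z_j\,ds$ and the denominator perturbation $h(y,z)$ are already exactly linear in $h$ (no remainder). Collecting first-order pieces yields $D\Psi_{(j)}(f;h)$, and the cross and quadratic pieces yield $R\Psi_{(j)}(f;h)$ in the stated form. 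The derivation of $\tilde\Psi$ is formally identical once one applies Lemma \ref{file6} to the expansion of $\tilde\Lambda$ and uses $\tilde\Psi_{1}(f+h)-\tilde\Psi_{1}(f)=\int_{-\infty}^{x}\partial\tilde h(s,w)/\partial w\,ds$, working with the marginal density $\tilde f(x,w)$ and $\tilde f_W(w)$ in place of $f$ and $\tilde f$.

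\indent The bounds follow from the triangle inequality once every denominator is controlled uniformly on $\mathbb{S}_{\tau}$ and $\mathcal{S}_{\tau}$. By definition of these sets, $f(y,z)$, $\tilde f(z)$, $\tilde f_W(w)$, and $\Phi_{2}(f)=f(\alpha(f),z)$ are bounded below by multiples of $\tau$, while the continuous first and second derivatives of $f$ are bounded above on the compact support. As in the proofs of Lemmas \ref{file6}, \ref{file5}, by restricting $\sup_{(y,z)\in\mathbb{S}_{(Y,Z)}}|h|\le\epsilon_{0}$ for a suitable $\epsilon_{0}>0$, the perturbed denominators $f+h$, $\tilde f+\tilde h$, $\tilde f_{W}+\tilde h_{W}$ and $\Phi_{2}(f+h)$ remain bounded below by half of the unperturbed lower bounds. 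Then each factor appearing in $D\Phi_{(j)}$, $D\Psi_{(j)}$, $D\tilde\Psi$ is either bounded, or controlled by $\sup|h|$ or $\sup|\partial h/\partial z_j|$ via Lemmas \ref{file6}--\ref{file4}, producing the linear bound $c_{1}\sup|h|+c_{1}\sup|\partial h/\partial z_j|$ (and analogously for the other two). The remainder bounds come out at the quadratic order $\sup|h|^{2}$ or at the cross order $\sup|h|\cdot\sup|\partial h/\partial z_j|$ (or $\sup|\partial h/\partial y|$), because every term in $R$ is the product of two perturbation factors, at least one of which is of size $\sup|h|$ and the other of size $\sup|h|$, $\sup|\partial h/\partial z_j|$, or $\sup|\partial h/\partial y|$.

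\indent I expect the main obstacle to be not conceptual but purely bookkeeping: the expression for $R\Psi_{(j)}(f;h)$ in particular combines a factor $h(y,z)$ in the denominator perturbation with three independent numerator perturbations ($R\Lambda$, $\partial\tilde h/\partial z_j\cdot\Delta\Lambda$, and $\Delta\Psi_{1,(j)}$), and one must verify carefully after putting everything over the common denominator $(f+h)f^{2}$ that no first-order term has been absorbed into $R$ by mistake, and symmetrically that no quadratic or cross term has been left inside $D$. Once this ledger is balanced for $\Phi_{(j)}$, $\Psi_{(j)}$, and $\tilde\Psi$, the explicit formulas stated in the lemma are read off and the bounds follow directly by applying the triangle inequality term by term, using only the uniform upper bounds already proved in Lemmas \ref{file6}--\ref{file4} and the uniform lower bounds on the denominators.
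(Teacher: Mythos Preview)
Your proposal is correct and follows essentially the same approach as the paper: both invoke Lemma \ref{file1} to expose the quotient structure, apply the identical algebraic identity for $N'/D'-N/D$, plug in the linear-plus-remainder expansions from Lemmas \ref{file6}, \ref{file3}, \ref{file4}, and then bound each piece using the lower bounds on denominators afforded by $\mathbb{S}_\tau$, $\mathcal{S}_\tau$ together with the component bounds already established. The only cosmetic difference is that the paper writes out the intermediate $N',D',DN,RN$ notation explicitly before collecting terms, whereas you describe the same bookkeeping in prose.
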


\begin{proof}
By Lemma \ref{file1},
\begin{align*}
    &\Phi_{(j)}(f) = \frac{\delta \frac{\partial \tilde f(z)}{\partial z_j}-\Phi_{1,(j)}(f)}{\Phi_2(f)} \\
    &\Psi_{(j)}(f) = \frac{\Lambda(f) \frac{\partial \tilde f(z)}{\partial z_j}-\Psi_{1,(j)}(f)}{f(y,z)}\text{\  and}\\
    &\tilde\Psi(f) = \frac{\tilde\Lambda(f) \frac{\partial \tilde f_W(w)}{\partial w}-\tilde\Psi_1(f)}{\tilde f(x,w)}
\end{align*}
For all $h\in\mathrm{F}$,
\begin{footnotesize}
\begin{align*}
    \Phi_{(j)}(f+h) - \Phi_{(j)}(f) &= \frac{\delta \frac{\partial \tilde f(z)}{\partial z_j}+\delta\frac{\partial \tilde h(z)}{\partial z_j}-\Phi_{1,(j)}(f+h)}{\Phi_2(f+h)} -\frac{\delta \frac{\partial \tilde f(z)}{\partial z_j}-\Phi_{1,(j)}(f)}{\Phi_2(f)}\\
    &\equiv\frac{N_1'}{D'_1}-\frac{N_1}{D_1}\\
    \Psi_{(j)}(f+h) - \Psi_{(j)}(f) &= \frac{\Lambda(f+h)\left(\frac{\partial \tilde f(z)}{\partial z_j}+\frac{\partial \tilde h(z)}{\partial z_j}\right)-\Psi_{1,(j)}(f+h)}{f(y,z)+ h(y,z)} -\frac{ \Lambda(f)\frac{\partial \tilde f(z)}{\partial z_j}-\Psi_{1,(j)}(f)}{f(y,z)}\\
    &\equiv\frac{N_2'}{D'_2}-\frac{N_2}{D_2}\\
    \tilde\Psi(f+h) - \tilde\Psi(f) &= \frac{\tilde\Lambda(f+h)\left(\tilde f_W(w)+\tilde h_W(w)\right)-\tilde\Psi_1(f+h)}{\tilde f(x,w)+\tilde h(x,w)} -\frac{\tilde \Lambda(f)\tilde f_W(w)-\tilde\Psi_1(f)}{\tilde f(x,w)}\\
    &\equiv\frac{N_3'}{D'_3}-\frac{N_3}{D_3},
\end{align*}    
\end{footnotesize}
where I denote
\begin{align*}
    N_1' &= \delta \frac{\partial \tilde f(z)}{\partial z_j}+\delta\frac{\partial \tilde h(z)}{\partial z_j}-\Phi_{1,(j)}(f+h)\\
    N_1 &= \delta \frac{\partial \tilde f(z)}{\partial z_j}-\Phi_{1,(j)}(f)\\
    N_2' &= \Lambda(f+h)\left(\frac{\partial \tilde f(z)}{\partial z_j}+\frac{\partial \tilde h(z)}{\partial z_j}\right)-\Psi_{1,(j)}(f+h)\\
    N_2 &= \Lambda(f)\frac{\partial \tilde f(z)}{\partial z_j}-\Psi_{1,(j)}(f)\\
    D'_1&=\Phi_2(f+h)\\
    D_1 &= \Phi_2(f)\\
    D'_2&= f(y,z)+ h(y,z)\\
    D_2&=f(y,z).
\end{align*}
Note that 
\begin{align*}
    N_1'-N_1 &= \delta\frac{\partial \tilde h(z)}{\partial z_j} - D\Phi_{1,(j)}(f;h)-R\Phi_{1,(j)}(f;h)\equiv DN_1+RN_1\\
    N_2'-N_2 &= \left( \Lambda(f+h)- \Lambda(f)\right)\left(\frac{\partial \tilde f(z)}{\partial z_j}+\frac{\partial \tilde h(z)}{\partial z_j}\right)+\Lambda(f)\frac{\partial \tilde h(z)}{\partial z_j}-\int_{-\infty}^{y}\frac{\partial h(s,z)}{\partial z_j}ds\\
    &= \left(D\Lambda(f;h)+R\Lambda(f;h)\right)\left(\frac{\partial \tilde f(z)}{\partial z_j}+\frac{\partial \tilde h(z)}{\partial z_j}\right)+\Lambda(f)\frac{\partial \tilde h(z)}{\partial z_j} -\int_{-\infty}^{y}\frac{\partial h(s,z)}{\partial z_j}ds\\
    &\equiv DN_2+RN_2\\
    D'_1-D_1 &= D\Phi_2(f;h)+R\Phi_2(f;h)\\
    D'_2-D_2 &=h(y,z),
\end{align*}
where 
\begin{align*}
    DN_1&\equiv \delta\frac{\partial \tilde h(z)}{\partial z_j}-D\Phi_{1,(j)}(f;h)\\
    RN_1&\equiv -R\Phi_{1,(j)}(f;h)\\
    DN_2&\equiv D\Lambda(f;h)\frac{\partial \tilde f(z)}{\partial z_j}+\Lambda(f)\frac{\partial \tilde h(z)}{\partial z_j}-\int_{-\infty}^{y}\frac{\partial h(s,z)}{\partial z_j}ds\\
    RN_2&\equiv R\Lambda(f;h)\frac{\partial \tilde f(z)}{\partial z_j}+\frac{\partial \tilde h(z)}{\partial z_j}\left(D\Lambda(f;h)+R\Lambda(f;h)\right).
\end{align*}
I will make use of the equation:
\begin{align*}
    \frac{N_j'}{D'_j}-\frac{N_j}{D_j}&=\frac{N_j'D_j-N_jD'_j}{D_j^2}-\frac{(D'_j-D_j)(N_j'D_j-N_jD'_j)}{D'_jD_j^2}\\
    &=\frac{(N_j'-RN_j)D_j-N_j\left(D'_j-RD_j\right)}{D_j^2}\\
    &\quad+\frac{D_j\times RN_j-N_j\times RD_j}{D_j^2}-\frac{(D'_j-D_j)(N_j'D_j-N_jD'_j)}{D'_jD_j^2}.
\end{align*}
Then
\begin{scriptsize}
\begin{align*}
    &\quad\Phi_{(j)}(f+h) - \Phi_{(j)}(f) \\
    &= \frac{\left(\delta\frac{\partial \tilde h(z)}{\partial z_j}-D\Phi_{1,(j)}(f;h)\right)\Phi_2(f)-\left(\delta\frac{\partial \tilde f(z)}{\partial z_j}-\Phi_{1,(j)}(f)\right)D\Phi_2(f;h)}{\Phi^2_2(f)}\\
    &\quad-\frac{\Phi_2(f)R\Phi_{1,(j)}(f;h)+\left(\delta\frac{\partial \tilde f(z)}{\partial z_j}-\Phi_{1,(j)}(f)\right)R\Phi_2(f;h)}{\Phi_2^2(f)}\\
    &\quad-\frac{\left(D\Phi_2(f;h)+R\Phi_2(f;h)\right)\left(\left(\delta \frac{\partial \tilde f(z)}{\partial z_j}+\delta\frac{\partial \tilde h(z)}{\partial z_j}-\Phi_{1,(j)}(f+h)\right)\Phi_2(f)-\left(\delta \frac{\partial \tilde f(z)}{\partial z_j}-\Phi_{1,(j)}(f)\right)\Phi_2(f+h)\right)}{\Phi_2(f+h)\Phi^2_2(f)}.
\end{align*}
\end{scriptsize}

\begin{scriptsize}
\begin{align*}
    &\quad \Psi_{(j)}(f+h)-\Psi_{(j)}(f)\\
    &=\frac{DN_2\times D_2-N_2\times h(y,z)}{D_2^2}+\frac{D_2\times RN_2}{D_2^2}-\frac{h(y,z)(N_2'D_2-N_2D'_2)}{D'_2D_2^2}\\
    &=\frac{\left(D\Lambda(f;h)\frac{\partial \tilde f(z)}{\partial z_j}+\Lambda(f)\frac{\partial \tilde h(z)}{\partial z_j}-\int_{-\infty}^{y}\frac{\partial h(s,z)}{\partial z_j}ds\right)\times f(y,z)-\left(\Lambda(f)\frac{\partial \tilde f(z)}{\partial z_j}-\Psi_{1,(j)}(f)\right)\times h(y,z)}{f^2(y,z)}\\
    &\quad+\frac{\left(R\Lambda(f;h)\frac{\partial \tilde f(z)}{\partial z_j}+\frac{\partial \tilde h(z)}{\partial z_j}\left(D\Lambda(f;h)+R\Lambda(f;h)\right)\right)}{f(y,z)}\\
    &\quad-\frac{h(y,z)\left(\left(\Lambda(f+h)\left(\frac{\partial \tilde f(z)}{\partial z_j}+\frac{\partial \tilde h(z)}{\partial z_j}\right)-\Psi_{1,(j)}(f+h)\right)f(y,z)-\left(\Lambda(f)\frac{\partial \tilde f(z)}{\partial z_j}-\Psi_{1,(j)}(f)\right)\left(f(y,z)+ h(y,z)\right)\right)}{\left(f(y,z)+ h(y,z)\right) f^2(y,z)}.
\end{align*}
\end{scriptsize}

Denote $D\Phi_{(j)}(f;h)$, $D\Psi_{(j)}(f;h)$, $R\Phi_{(j)}(f;h)$, and $R\Psi_{(j)}(f;h)$ by
\begin{scriptsize}
\begin{align*}
    &D\Phi_{(j)}(f;h)\equiv\frac{\left(\delta\frac{\partial \tilde h(z)}{\partial z_j}-D\Phi_{1,(j)}(f;h)\right)\Phi_2(f)-\left(\delta\frac{\partial \tilde f(z)}{\partial z_j}-\Phi_{1,(j)}(f)\right)D\Phi_2(f;h)}{\Phi^2_2(f)}\\
    &R\Phi_{(j)}(f;h)\equiv-\frac{\Phi_2(f)R\Phi_{1,(j)}(f;h)+\left(\delta\frac{\partial \tilde f(z)}{\partial z_j}-\Phi_{1,(j)}(f)\right)R\Phi_2(f;h)}{\Phi_2^2(f)}\\
    &\quad-\frac{\left(D\Phi_2(f;h)+R\Phi_2(f;h)\right)\left(\left(\delta \frac{\partial \tilde f(z)}{\partial z_j}+\delta\frac{\partial \tilde h(z)}{\partial z_j}-\Phi_{1,(j)}(f+h)\right)\Phi_2(f)-\left(\delta \frac{\partial \tilde f(z)}{\partial z_j}-\Phi_{1,(j)}(f)\right)\Phi_2(f+h)\right)}{\Phi_2(f+h)\Phi^2_2(f)}\\
    &\quad\quad\quad\quad=-\frac{\Phi_2(f)R\Phi_{1,(j)}(f;h)+\left(\delta\frac{\partial \tilde f(z)}{\partial z_j}-\Phi_{1,(j)}(f)\right)R\Phi_2(f;h)}{\Phi_2^2(f)}\\
    &\quad-\frac{\Delta\Phi_2(f;h)\left[\left(\delta\frac{\partial \tilde h(z)}{\partial z_j}-\Delta\Phi_{1,(j)}(f;h)\right)\Phi_2(f)-\left(\delta \frac{\partial \tilde f(z)}{\partial z_j}-\Phi_{1,(j)}(f)\right)\Delta\Phi_2(f;h)\right]}{\Phi_2(f+h)\Phi^2_2(f)}\\
    &D\Psi_{(j)}(f;h)\equiv\frac{\left(D\Lambda(f;h)\frac{\partial \tilde f(z)}{\partial z_j}+\Lambda(f)\frac{\partial \tilde h(z)}{\partial z_j}-\int_{-\infty}^{y}\frac{\partial h(s,z)}{\partial z_j}ds\right)\times f(y,z)-\left(\Lambda(f)\frac{\partial \tilde f(z)}{\partial z_j}-\Psi_{1,(j)}(f)\right)\times h(y,z)}{f^2(y,z)}\\
    &R\Psi_{(j)}(f;h)\equiv\frac{\left(R\Lambda(f;h)\frac{\partial \tilde f(z)}{\partial z_j}+\frac{\partial \tilde h(z)}{\partial z_j}\left(D\Lambda(f;h)+R\Lambda(f;h)\right)\right)}{f(y,z)}\\
    &\quad-\frac{h(y,z)\left(\left(\Lambda(f+h)\left(\frac{\partial \tilde f(z)}{\partial z_j}+\frac{\partial \tilde h(z)}{\partial z_j}\right)-\Psi_{1,(j)}(f+h)\right)f(y,z)-\left(\Lambda(f)\frac{\partial \tilde f(z)}{\partial z_j}-\Psi_{1,(j)}(f)\right)\left(f(y,z)+ h(y,z)\right)\right)}{\left(f(y,z)+ h(y,z)\right) f^2(y,z)}\\
    &\quad\quad\quad\quad=\frac{\left(R\Lambda(f;h)\frac{\partial \tilde f(z)}{\partial z_j}+\frac{\partial \tilde h(z)}{\partial z_j}\left(D\Lambda(f;h)+R\Lambda(f;h)\right)\right)}{f(y,z)}\\
    &\quad\quad\quad\quad\quad\quad-h(y,z)\times\left[\frac{\Lambda(f)\frac{\partial \tilde h(z)}{\partial z_j}f(y,z)+\Delta\Lambda(f;h)\left(\frac{\partial \tilde f(z)}{\partial z_j}+\frac{\partial \tilde h(z)}{\partial z_j}\right)f(y,z)}{\left(f(y,z)+ h(y,z)\right) f^2(y,z)}\right.\\
    &\quad\quad\quad\quad\quad\quad\quad\quad\quad\quad \left.-\frac{\Lambda(f)\frac{\partial \tilde f(z)}{\partial z_j}h(y,z)+\Delta\Psi_{1,(j)}(f;h)f(y,z)-\Psi_{1,(j)}(f;h)h(y,z)}{\left(f(y,z)+ h(y,z)\right) f^2(y,z)}\right],
\end{align*}
\end{scriptsize}
where $\Delta\Phi_{l}(f;h)\equiv D\Phi_{l}(f;h)+R\Phi_{l}(f;h)$ for $l=1,(j)$ and $l=2$, and $\Delta\Psi_{1,(j)}(f;h)\equiv D\Psi_{1,(j)}(f;h)+R\Psi_{1,(j)}(f;h)$. Then 
\begin{align*}
   \Phi_{(j)}(f+h)-\Phi_{(j)}(f) = D\Phi_{(j)}(f;h)+ R\Phi_{(j)}(f;h)\\
   \Psi_{(j)}(f+h)-\Psi_{(j)}(f) = D\Psi_{(j)}(f;h)+ R\Psi_{(j)}(f;h).
\end{align*}
By the same logic, I can write
\begin{align*}
   \tilde\Psi(f+h)-\tilde\Psi(f) = D\tilde\Psi(f;h)+ R\tilde\Psi(f;h),
\end{align*}
where $D\tilde\Psi(f;h)$ and $R\tilde\Psi(f;h)$ are as defined in the Lemma.\\ 
\indent By compactness of $\mathbb{S}_{(Y,Z)}$ and continuity of $f$, $\partial f/\partial z$, I have that there exist finite constant $d_9$ such that
\begin{align}
    &\sup_{(\delta,z)\in [0,1]\times\mathbb{S}_\tau}\left|\Phi_{1,(j)}(f)\right|=\sup_{(\delta,z)\in [0,1]\times\mathbb{S}_\tau}\left|\int_{-\infty}^{\alpha(f)}\frac{\partial f(y,z)}{\partial z_j}dy\right|\leq d_9 \sup_{(y,z)\in\mathbb{S}_{(Y,Z)} }\left|\frac{\partial f(y,z)}{\partial z_j}\right|<\infty \notag \\
    &\sup_{(\delta,z)\in [0,1]\times\mathbb{S}_\tau }\left|\Phi_2(f)\right|\leq\sup_{(y,z)\in\mathbb{S}_{(Y,Z)} }\left|f(y,z)\right|<\infty
\end{align}
Also, by the definition of $\mathbb{S}_\tau$, 
\begin{align}
    &\inf_{(\delta,z)\in \mathcal{S}_\tau}\Phi^2_2(f)\geq\inf_{(y,z)\in\mathbb{S}_\tau }f^2(y,z)>\tau^2. \label{finite}
\end{align}
For $\sup_{(y,z)\in\mathbb{S}_{(Y,Z)}} |h|\leq\min\left\{\tau/2,1\right\}$, I have
\begin{align}
    \inf_{(y,z)\in\mathbb{S}_\tau }\Phi_2(f+h)&=\inf_{(y,z)\in\mathbb{S}_\tau} f(\alpha(f+h),z)+h(\alpha(f+h),z)\notag\\
   &>\tau-\frac{\tau}{2}=\frac{\tau}{2}. \label{denomtau2}
\end{align}
By Lemma \ref{file3}, \ref{file4} there exist finite constants $d_{10}, d_{11}, d_{12}, d_{13}, d_{14}, d_{15}$ such that
\begin{footnotesize}
\begin{align}
    &\sup_{(\delta,z)\in \mathcal{S}_\tau}\left|\delta\frac{\partial \tilde h(z)}{\partial z_j}-D\Phi_{1,(j)}(f;h)\right|\leq d_{10}\sup_{(y,z)\in\mathbb{S}_{(Y,Z)}} |h|+d_{10}\sup_{(y,z)\in\mathbb{S}_{Y,Z}}\left|\frac{\partial h(y,z)}{\partial z_j}\right|\notag\\
    &\sup_{(\delta,z)\in \mathcal{S}_\tau}\left|\left(\delta\frac{\partial \tilde f(z)}{\partial z_j}-\Phi_{1,(j)}(f)\right)D\Phi_2(f;h)\right|\leq d_{11}\sup_{(y,z)\in\mathbb{S}_{(Y,Z)}} |h|\notag\\
    &\sup_{(\delta,z)\in \mathcal{S}_\tau}\left|\left(\delta \frac{\partial \tilde f(z)}{\partial z_j}-\Phi_{1,(j)}(f)\right)R\Phi_2(f;h)\right|\leq d_{12}\sup_{(y,z)\in\mathbb{S}_{(Y,Z)}} |h|^{2}\notag\\
    &\quad\quad\quad\quad\quad\quad\quad\quad\quad\quad\quad\quad\quad\quad\quad\quad\quad\quad\quad+d_{12}\sup_{(y,z)\in\mathbb{S}_{(Y,Z)}} |h| \sup _{(y, z) \in \mathbb{S}_{Y, Z}}\left|\frac{\partial h(y, z)}{\partial y}\right|\notag\\
    &\sup_{(\delta,z)\in \mathcal{S}_\tau}\left|\Delta\Phi_2(f;h)\right|\leq d_{13}\sup_{(y,z)\in\mathbb{S}_{(Y,Z)}} |h|\notag\\
    &\sup_{(\delta,z)\in \mathcal{S}_\tau}\left|\delta\frac{\partial \tilde h(z)}{\partial z_j}-\Delta\Phi_{1,(j)}(f;h)\right|\leq d_{14}\sup_{(y,z)\in\mathbb{S}_{(Y,Z)}} |h|+d_{14} \sup _{(y, z) \in \mathbb{S}_{Y, Z}}\left|\frac{\partial h(y, z)}{\partial z_j}\right|\notag\\
    &\sup_{(\delta,z)\in \mathcal{S}_\tau}\left|\left(\delta \frac{\partial \tilde f(z)}{\partial z_j}-\Phi_{1}(f)\right) \Delta \Phi_{2}(f ; h)\right|\leq d_{15}\sup_{(y,z)\in\mathbb{S}_{(Y,Z)}} |h|.\label{numcomponents}
\end{align}
\end{footnotesize}
Combining results from (\ref{finite}), (\ref{denomtau2}),  and (\ref{numcomponents}), I have that for $\sup_{(y,z)\in\mathbb{S}_{(Y,Z)}} |h|\leq\min\left\{\tau/2,1\right\}$ and some finite constants $d_{16}$, $d_{17}$
\begin{footnotesize}
\begin{align*}
    &\sup_{(\delta,z)\in \mathcal{S}_\tau}\left|D\Phi_{(j)}(f;h)\right|\leq d_{16}\sup_{(y,z)\in\mathbb{S}_{(Y,Z)}} |h|+d_{16}\sup_{(y,z)\in\mathbb{S}_{Y,Z}}\left|\frac{\partial h(y,z)}{\partial z_j}\right|\\
    &\sup_{(\delta,z)\in \mathcal{S}_\tau}\left|R\Phi_{(j)}(f;h)\right|\leq d_{17}\sup_{(y,z)\in\mathbb{S}_{(Y,Z)}} |h|^2+d_{17}\sup_{(y,z)\in\mathbb{S}_{(Y,Z)}} |h|\sup_{(y,z)\in\mathbb{S}_{Y,Z}}\left|\frac{\partial h(y,z)}{\partial y}\right|\\
    &\quad\quad\quad\quad\quad\quad\quad\quad\quad\quad\quad\quad+d_{17}\sup_{(y,z)\in\mathbb{S}_{(Y,Z)}} |h|\sup_{(y,z)\in\mathbb{S}_{Y,Z}}\left|\frac{\partial h(y,z)}{\partial z_j}\right|.
\end{align*}
\end{footnotesize}

Let $c_1\equiv\max\{d_{16},d_{17}\}$ gives the desired result. By the same logic, it can be shown that
\begin{align*}
     &\sup_{(y,z)\in\mathbb{S}_\tau }\left|D\Psi_{(j)}(f;h)\right|\leq d_{18}\sup_{(y,z)\in\mathbb{S}_{(Y,Z)}} |h|+d_{18}\sup_{(y,z)\in\mathbb{S}_{(Y,Z)} }\left|\frac{\partial h(y,z)}{\partial z_{j}}\right|\\
     &\sup_{(y,z)\in\mathbb{S}_\tau }\left|R\Psi_{(j)}(f;h)\right|\leq d_{19}\sup_{(y,z)\in\mathbb{S}_{(Y,Z)}} |h|^2+d_{19}\sup_{(y,z)\in\mathbb{S}_{(Y,Z)}} |h|\sup_{(y,z)\in\mathbb{S}_{(Y,Z)} }\left|\frac{\partial h(y,z)}{\partial z_{j}}\right|\\
     &\sup_{(y,z)\in\mathbb{S}_\tau }\left|D\tilde\Psi(f;h)\right|\leq d_{20}\sup_{(y,z)\in\mathbb{S}_{(Y,Z)}} |h|+d_{20}\sup_{(y,z)\in\mathbb{S}_{(Y,Z)} }\left|\frac{\partial h(y,z)}{\partial z_{j+1}}\right|\\
     &\sup_{(y,z)\in\mathbb{S}_\tau }\left|R\tilde\Psi(f;h)\right|\leq d_{21}\sup_{(y,z)\in\mathbb{S}_{(Y,Z)}} |h|^2+d_{21}\sup_{(y,z)\in\mathbb{S}_{(Y,Z)}} |h|\sup_{(y,z)\in\mathbb{S}_{(Y,Z)} }\left|\frac{\partial h(y,z)}{\partial z_{j+1}}\right|.
\end{align*}
Let $c_2\equiv\max\{d_{18},d_{19}\}$, $c_3\equiv\max\{d_{20},d_{21}\}$ gives the desired result. 
\end{proof}

\begin{lemma}\label{file7} For any value $(y,z)\in \mathbb{S}_{(Y,Z)}$, and any value $\delta\in[0,1]$, define functional $\Xi$ and $\tilde\Xi$ on $\mathrm{F}$ by $\Xi(g)\equiv\Phi_{(1)}(g)+\frac{\Phi_{(2)}(g)}{\tilde\Psi(g)}$ and $\tilde\Xi(g)\equiv\Psi_{(1)}(g)+\frac{\Psi_{(2)}(g)}{\tilde\Psi(g)}$. Then I have that
\begin{align*}
    \Xi(f+h)-\Xi(f)=D\Xi(f;h)+R\Xi(f;h)\\
    \tilde\Xi(f+h)-\tilde\Xi(f)=D\tilde\Xi(f;h)+R\tilde\Xi(f;h),
\end{align*}
and that there exist constants $e_1, e_2$ such that 
\begin{footnotesize}
\begin{align*}
    \sup_{(\delta, x,w)\in \mathcal{S}_\tau}\left|D\Xi(f;h)\right|&\leq \frac{e_1}{\tau^2}\|h\|+e_1\sup_{(y,x,w)\in\mathbb{S}_{Y,X,W}}\left|\frac{\partial h(y,x,w)}{\partial x}\right|+\frac{e_1}{\tau^2}\sup_{(y,x,w)\in\mathbb{S}_{Y,X,W}}\left|\frac{\partial h(y,x,w)}{\partial w}\right|\\
    \sup_{(\delta, x,w)\in \mathcal{S}_\tau}\left|R\Xi(f;h)\right|&\leq\frac{e_1}{\tau^3}\|h\|^2+\frac{e_1}{\tau^3}\|h\|\sup_{(y,x,w)\in\mathbb{S}_{Y,X,W}}\left|\frac{\partial h(y,x,w)}{\partial w}\right|\\
    &\quad\quad+\frac{e_1}{\tau}\|h\|\sup_{(y,x,w)\in\mathbb{S}_{Y,X,W}}\left|\frac{\partial h(y,x,w)}{\partial y}\right|+e_1\|h\|\sup_{(y,x,w)\in\mathbb{S}_{Y,X,W}}\left|\frac{\partial h(y,x,w)}{\partial x}\right|\\
    &\quad\quad+\frac{e_1}{\tau^3}\sup_{(y,x,w)\in\mathbb{S}_{Y,X,W}}\left|\frac{\partial h(y,x,w)}{\partial w}\right|^2\\
    \sup_{(y, x,w)\in \mathbb{S}_{\tau}}\left|D\tilde\Xi(f;h)\right|&\leq \frac{e_2}{\tau^2}\|h\|+e_2\sup_{(y,x,w)\in\mathbb{S}_{Y,X,W}}\left|\frac{\partial h(y,x,w)}{\partial x}\right|+\frac{e_2}{\tau^2}\sup_{(y,x,w)\in\mathbb{S}_{Y,X,W}}\left|\frac{\partial h(y,x,w)}{\partial w}\right|\\
    \sup_{(y, x,w)\in \mathbb{S}_{\tau}}\left|R\tilde \Xi(f;h)\right|&\leq\frac{e_2}{\tau^3}\|h\|^2+\frac{e_2}{\tau^3}\|h\|\sup_{(y,x,w)\in\mathbb{S}_{Y,X,W}}\left|\frac{\partial h(y,x,w)}{\partial w}\right|\\
    &\quad\quad+e_2\|h\|\sup_{(y,x,w)\in\mathbb{S}_{Y,X,W}}\left|\frac{\partial h(y,x,w)}{\partial x}\right|+\frac{e_2}{\tau^3}\sup_{(y,x,w)\in\mathbb{S}_{Y,X,W}}\left|\frac{\partial h(y,x,w)}{\partial w}\right|^2.
\end{align*}
\end{footnotesize}

\end{lemma}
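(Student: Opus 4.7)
The plan is to combine the expansions already developed in Lemmas \ref{file2}, \ref{file3}, and \ref{file4} by applying a standard quotient-rule identity to the ratios $\Phi_{(2)}/\tilde\Psi$ and $\Psi_{(2)}/\tilde\Psi$. Specifically, writing $A=\Phi_{(2)}(f)$, $B=\tilde\Psi(f)$, $a=\Delta\Phi_{(2)}(f;h)\equiv D\Phi_{(2)}(f;h)+R\Phi_{(2)}(f;h)$, and $b=\Delta\tilde\Psi(f;h)\equiv D\tilde\Psi(f;h)+R\tilde\Psi(f;h)$, I would use the identity
$$\frac{A+a}{B+b}-\frac{A}{B}=\frac{a}{B}-\frac{Ab}{B^{2}}+\frac{b(Ab-Ba)}{B^{2}(B+b)}$$
to separate the perturbation into a first-order piece and a quadratic remainder; an identical computation handles $\Psi_{(2)}/\tilde\Psi$. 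Adding the expansion of $\Phi_{(1)}$ (respectively $\Psi_{(1)}$) from Lemma \ref{file2}, I would then read off
$$D\Xi(f;h)\equiv D\Phi_{(1)}(f;h)+\frac{D\Phi_{(2)}(f;h)}{\tilde\Psi(f)}-\frac{\Phi_{(2)}(f)\,D\tilde\Psi(f;h)}{\tilde\Psi^{2}(f)},$$
and an analogous expression for $D\tilde\Xi(f;h)$, lumping the leftover terms into $R\Xi$ and $R\tilde\Xi$.

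To produce the stated uniform bounds, I would first check that on $\mathcal{S}_\tau$ the denominators are controlled: $|\tilde f(x,w)|\geq\tau$, and for $\|h\|$ small enough also $|\tilde f(x,w)+\tilde h(x,w)|\geq\tau/2$ by the argument leading to (\ref{denomtau2}). This yields $|\tilde\Psi(f)|^{-1}$ and $|\tilde\Psi(f+h)|^{-1}$ of order $\tau^{-1}$, and combined with the uniform boundedness of $|\Phi_{(2)}(f)|\leq\sup|f|$ gives the $\tau^{-2}$ factor multiplying $\|h\|$ in $D\Xi$. Plugging in the linear bounds from Lemma \ref{file2} for $D\Phi_{(1)}$ (which carries $\sup|\partial h/\partial w|$), $D\Phi_{(2)}$ (also $\sup|\partial h/\partial w|$), and $D\tilde\Psi$ (which carries $\sup|\partial h/\partial x|$) delivers the stated bound on $D\Xi$. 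The remainder bound combines the quadratic bounds on $R\Phi_{(j)}$ and $R\tilde\Psi$ with the quadratic cross-term $\Delta\tilde\Psi\cdot(\Delta\Phi_{(2)}-\Phi_{(2)}\Delta\tilde\Psi/\tilde\Psi)/(\tilde\Psi(f+h)\tilde\Psi^{2}(f))$; the cube $\tilde\Psi^{3}$ in the denominator is what produces the $\tau^{-3}$ scaling, while the term $\sup|\partial h/\partial w|^{2}$ comes precisely from $(\Delta\tilde\Psi)^{2}$ after bounding each factor separately by $\sup|\partial h/\partial w|$.

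The main obstacle will be the bookkeeping of which partial derivatives of $h$ enter through which summand, and in particular verifying the asymmetry in the stated bounds between $R\Xi$ and $R\tilde\Xi$. The $\Xi$ version picks up an $\|h\|\cdot\sup|\partial h/\partial y|$ term because $\Phi_{(j)}$ is built from $\Phi_2(f)=f(\alpha(f),z)$, whose remainder in Lemma \ref{file4} contains exactly this cross-product. The $\tilde\Xi$ version does not, because $\Psi_{(j)}$ has $f(y,z)$ in its denominator evaluated at a fixed $y$ rather than at the quantile $\alpha(f)$, so no $\partial h/\partial y$ appears. Once this asymmetry is tracked, the remaining steps are routine applications of the triangle inequality and the bounds of Lemma \ref{file2} to products of factors that are each $O(\|h\|)$ or $O(\sup|\partial h/\partial\cdot|)$, and the constants $e_1, e_2$ are obtained as maxima over the finite set of constants produced at each intermediate step.
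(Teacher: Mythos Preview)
Your strategy---apply the quotient-rule identity to $\Phi_{(2)}/\tilde\Psi$ and $\Psi_{(2)}/\tilde\Psi$, then add the expansions of $\Phi_{(1)}$ and $\Psi_{(1)}$ from Lemma~\ref{file2}---is exactly what the paper does, and the resulting definitions of $D\Xi$, $D\tilde\Xi$ match.

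Two bookkeeping slips to fix before the bounds come out as stated. First, recall $z_1=x$ and $z_2=w$, so by Lemma~\ref{file2} it is $D\Phi_{(1)}$ that carries $\sup|\partial h/\partial x|$ (entering $D\Xi$ with no $\tau$-factor, since this summand is not divided by $\tilde\Psi$), while $D\Phi_{(2)}$ and $D\tilde\Psi$ both carry $\sup|\partial h/\partial w|$ (entering with $\tau^{-1}$ and $\tau^{-2}$ respectively). You have these swapped. Second, the lower bound $|\tilde\Psi(f)|\gtrsim\tau$ comes from the defining condition $|\partial F_{X|W}/\partial w|\geq\tau$ on $\mathcal{S}_\tau$ combined with the uniform upper bound on $|\partial F_{X|W}/\partial x|=\tilde f/\tilde f_W$, not from $|\tilde f|\geq\tau$ directly; and $|\Phi_{(2)}(f)|$ is bounded by a constant $d_{14}$ via Lemma~\ref{file1} and compactness, not by $\sup|f|$. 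Once these are corrected, the $\tau$-powers and the derivative labels in the final inequalities line up exactly as claimed, and your explanation of the $\partial h/\partial y$ asymmetry between $R\Xi$ and $R\tilde\Xi$ is correct.
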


\begin{proof}
\begin{footnotesize}
\begin{align*}
    &\quad\Xi(f+h)-\Xi(f)\\
    &=\left(\Phi_{(1)}(f+h)+\frac{\Phi_{(2)}(f+h)}{\tilde\Psi_{(1)}(f+h)}\right)-\left(\Phi_{(1)}(f)+\frac{\Phi_{(2)}(f)}{\tilde\Psi_{(1)}(f)}\right)\\
    &=\Delta\Phi_{(1)}(f;h)+\frac{D\Phi_{(2)}(f;h)\tilde\Psi_{(1)}(f)-D\tilde\Psi_{(1)}(f;h)\Phi_{(2)}(f)}{\tilde\Psi_{(1)}^2(f)}\\
    &\quad\quad-\frac{D\Phi_{(2)}(f;h)\tilde\Psi_{(1)}(f)\Delta\tilde\Psi_{(1)}(f;h)-D\tilde\Psi_{(1)}(f;h)\Phi_{(2)}(f)\Delta\tilde\Psi_{(1)}(f;h)}{\tilde\Psi_{(1)}^2(f)\tilde\Psi_{(1)}(f+h)}\\
    &\quad\quad+\frac{R\Phi_{(2)}(f;h)\tilde\Psi_{(1)}(f)-R\tilde\Psi_{(1)}(f;h)\Phi_{(2)}(f)}{\tilde\Psi_{(1)}(f+h)\tilde\Psi_{(1)}(f)}
\end{align*}
\begin{align*}
    &\quad\tilde\Xi(f+h)-\tilde\Xi(f)\\
    &=\left(\Psi_{(1)}(f+h)+\frac{\Psi_{(2)}(f+h)}{\tilde\Psi_{(1)}(f+h)}\right)-\left(\Psi_{(1)}(f)+\frac{\Psi_{(2)}(f)}{\tilde\Psi_{(1)}(f)}\right)\\
    &=\Delta\Psi_{(1)}(f;h)+\frac{D\Psi_{(2)}(f;h)\tilde\Psi_{(1)}(f)-D\tilde\Psi_{(1)}(f;h)\Psi_{(2)}(f)}{\tilde\Psi_{(1)}^2(f)}\\
    &\quad\quad-\frac{D\Psi_{(2)}(f;h)\tilde\Psi_{(1)}(f)\Delta\tilde\Psi_{(1)}(f;h)-D\tilde\Psi_{(1)}(f;h)\Psi_{(2)}(f)\Delta\tilde\Psi_{(1)}(f;h)}{\tilde\Psi_{(1)}^2(f)\tilde\Psi_{(1)}(f+h)}\\
    &\quad\quad+\frac{R\Psi_{(2)}(f;h)\tilde\Psi_{(1)}(f)-R\tilde\Psi_{(1)}(f;h)\Psi_{(2)}(f)}{\tilde\Psi_{(1)}(f+h)\tilde\Psi_{(1)}(f)}
\end{align*}
Define
\begin{align*}
    D\Xi(f;h)&\equiv D\Phi_{(1)}(f;h)+\frac{D\Phi_{(2)}(f;h)\tilde\Psi_{(1)}(f)-D\tilde\Psi_{(1)}(f;h)\Phi_{(2)}(f)}{\tilde\Psi_{(1)}^2(f)}\\
    R\Xi(f;h)&\equiv R\Phi_{(1)}(f;h)-\frac{D\Phi_{(2)}(f;h)\tilde\Psi_{(1)}(f)\Delta\tilde\Psi_{(1)}(f;h)-D\tilde\Psi_{(1)}(f;h)\Phi_{(2)}(f)\Delta\tilde\Psi_{(1)}(f;h)}{\tilde\Psi_{(1)}^2(f)\tilde\Psi_{(1)}(f+h)}\\
    &\quad\quad+\frac{R\Phi_{(2)}(f;h)\tilde\Psi_{(1)}(f)-R\tilde\Psi_{(1)}(f;h)\Phi_{(2)}(f)}{\tilde\Psi_{(1)}(f+h)\tilde\Psi_{(1)}(f)}\\
    D\tilde\Xi(f;h)&\equiv D\Psi_{(1)}(f;h)+\frac{D\Psi_{(2)}(f;h)\tilde\Psi_{(1)}(f)-D\tilde\Psi_{(1)}(f;h)\Psi_{(2)}(f)}{\tilde\Psi_{(1)}^2(f)}\\
    R\tilde\Xi(f;h)&\equiv R\Psi_{(1)}(f;h)-\frac{D\Psi_{(2)}(f;h)\tilde\Psi_{(1)}(f)\Delta\tilde\Psi_{(1)}(f;h)-D\tilde\Psi_{(1)}(f;h)\Psi_{(2)}(f)\Delta\tilde\Psi_{(1)}(f;h)}{\tilde\Psi_{(1)}^2(f)\tilde\Psi_{(1)}(f+h)}\\
    &\quad\quad+\frac{R\Psi_{(2)}(f;h)\tilde\Psi_{(1)}(f)-R\tilde\Psi_{(1)}(f;h)\Psi_{(2)}(f)}{\tilde\Psi_{(1)}(f+h)\tilde\Psi_{(1)}(f)}.
\end{align*}
Then 
\begin{align*}
    \Xi(f+h)-\Xi(f) = D\Xi(f;h)+R\Xi(f;h)\\
    \tilde\Xi(f+h)-\tilde\Xi(f) = D\tilde\Xi(f;h)+R\tilde\Xi(f;h).
\end{align*}
Note that there exist some constant $d_{14}$ such that 
\begin{align*}
    \sup_{(\delta, x,w)\in \mathcal{S}_\tau}\left|\Phi_{(2)}(f)\right|&=\sup_{(\delta, x,w)\in \mathcal{S}_\tau}\left|\frac{\delta \frac{\partial \tilde f(x,w)}{\partial w}-\Phi_{1, (2)}(f)}{\Phi_2(f)}\right|\leq\frac{\sup_{(\delta, x,w)\in \mathcal{S}_\tau}\left|\delta \frac{\partial \tilde f(x,w)}{\partial w}-\Phi_{1, (2)}(f)\right|}{\inf_{(\delta, x,w)\in \mathcal{S}_\tau}\left|f(\alpha(f),x,w)\right|}\leq d_{14}\\
    \sup_{(y, x,w)\in \mathbb{S}_{\tau}}\left|\Psi_{(2)}(f)\right|&=\sup_{(y, x,w)\in \mathbb{S}_{\tau}}\left|\frac{\Lambda(f) \frac{\partial \tilde f(x,w)}{\partial w}-\Psi_{1, (2)}(f)}{f(y,x,w)}\right|\\
    &\leq\frac{\sup_{(y, x,w)\in \mathbb{S}_{\tau}}\left|\Lambda(f) \frac{\partial \tilde f(x,w)}{\partial w}-\Psi_{1, (2)}(f)\right|}{\inf_{(y, x,w)\in \mathbb{S}_{\tau}}\left|f(y,x,w)\right|}\leq d_{14}.
\end{align*}
Application of Lemma \ref{file2} yields the desired result.
\end{footnotesize}
\end{proof}

\begin{lemma}\label{file8}
In Lemma \ref{file1}, let the dimension of vector $L=2$. For any value $(y,z)\in \mathbb{S}_{(Y,Z)}$, and any value $\delta\in[0,1]$, define functional $\tilde\Gamma$ on $\mathrm{F}$ by
\begin{align*}
    \tilde\Gamma(g)\equiv\int\int_0^1\omega(x,w^*,s(r(x,w^*),\delta))\Xi(g)\frac{\int g(y,x.w) dy}{\int\int g(y,x,w) dy dw}d\delta dw,
\end{align*} 
where $\omega()$ is a known\footnote{The same logic carries over if $\omega()$ is estimated. Take the example mentioned in the main text for instance. When $\tilde \omega(x,w^*)$ is a constant equal to $\left[(\tau_u-\tau_l)\int_{\underline{w}^*}^{\bar w^*} f_{W^*|X=x}(w^*)dw^*\right]^{-1}$, and $f_{W^*|X=x}(w^*)$ is estimated by a plug-in estimator, the same argument in this proof will follow by replacing $\int\int f(y,x,w)dy dw$ with $\int_{\underline{w}^*}^{\bar w^*}\int f(y,x,w)dy dw$ in the definition of $\Gamma$.} compactly supported weighting function as defined in (\ref{weightedLAR}), and $s()$ and $r()$ are as defined in (\ref{epsilondecomposition}) and (\ref{vfunction}).
Then I have that
\begin{align*}
    \tilde\Gamma(f+h)-\tilde\Gamma(f) = D\tilde\Gamma(f;h)+R\tilde\Gamma(f;h),
\end{align*}
and there exists a constant $e_4$ such that
\begin{footnotesize}
\begin{align*}
    \sup_{x\in \mathbb S_{X}}\left|D\tilde\Gamma(f;h)\right|&\leq e_4\|h\|+e_4\sup_{(y,x,w)\in\mathbb{S}_{Y,X,W}}\left|\frac{\partial h(y,x,w)}{\partial x}\right|+e_4\sup_{(y,x,w)\in\mathbb{S}_{Y,X,W}}\left|\frac{\partial h(y,x,w)}{\partial w}\right|\\
    \sup_{x\in \mathbb S_{X}}\left|R\tilde\Gamma(f;h)\right|&\leq e_4\|h\|^2+e_4\|h\|\sup_{(y,x,w)\in\mathbb{S}_{Y,X,W}}\left|\frac{\partial h(y,x,w)}{\partial w}\right|+e_4\|h\|\sup_{(y,x,w)\in\mathbb{S}_{Y,X,W}}\left|\frac{\partial h(y,x,w)}{\partial y}\right|.
\end{align*}
\end{footnotesize}

\end{lemma}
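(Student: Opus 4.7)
The plan is to recognize $\tilde\Gamma(g)$ as an integral, over the compact support of $\omega$, of a product of three factors that depend on $g$: the functional $\Xi(g)$ already handled in Lemma \ref{file7}, the linear functional $A_g(x,w)\equiv\int g(y,x,w)\,dy$, and the denominator functional $B_g(x)\equiv\bigl(\int\!\!\int g(y,x,w)\,dy\,dw\bigr)^{-1}$. Since $\omega$ is $g$-independent by hypothesis, a trilinear product-rule expansion reduces the Fr\'echet-type expansion of $\tilde\Gamma$ to expanding each factor to first order in $h$ and collecting linear and higher-order pieces.

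First I would record the three elementary expansions. Lemma \ref{file7} supplies $\Xi(f+h)=\Xi(f)+D\Xi(f;h)+R\Xi(f;h)$ with the stated $\tau$-uniform bounds. Linearity of $A$ in $g$ gives the exact identity $A_{f+h}-A_f=\tilde h(x,w)$, with $\sup|\tilde h|\le C_Y\|h\|$. For $B$, writing $C_g(x)\equiv 1/B_g(x)$ (itself linear in $g$) and applying a one-step Taylor expansion of $u\mapsto 1/u$, I obtain $B_{f+h}-B_f=DB(f;h)+RB(f;h)$ with $DB=-C_h/C_f^2$ and $RB=C_h^2/[C_f^2(C_f+C_h)]$, where $\sup|C_h|\le C_XC_Y\|h\|$. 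Compactness of $\mathrm{supp}(\omega)$, together with Assumptions \ref{pospartialh} and \ref{cpctcontdiff}, guarantees a $\tau'>0$ such that $C_f$, $A_f$, $\tilde f_W$, and the other denominators appearing inside $\Xi(f)$ are bounded below uniformly on the effective integration region, so that $B_f$ and $\Xi(f)$ are bounded above there.

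Next I would expand $(\Xi(f)+D\Xi+R\Xi)(A_f+\tilde h)(B_f+DB+RB)$, subtract $\Xi(f)A_fB_f$, and separate the terms linear in $h$ from the rest, defining
\begin{align*}
D\tilde\Gamma(f;h)&\equiv\int\!\!\int_0^1\omega\bigl[D\Xi(f;h)\,A_fB_f+\Xi(f)\,\tilde h(x,w)\,B_f+\Xi(f)\,A_f\,DB(f;h)\bigr]d\delta\,dw,\\
R\tilde\Gamma(f;h)&\equiv\tilde\Gamma(f+h)-\tilde\Gamma(f)-D\tilde\Gamma(f;h),
\end{align*}
so that $R\tilde\Gamma$ collects $R\Xi\cdot A_fB_f$, $\Xi(f)A_f\cdot RB$, and all second-order cross terms such as $D\Xi\cdot\tilde h\cdot B_f$, $D\Xi\cdot A_f\cdot DB$, $\Xi(f)\cdot\tilde h\cdot DB$, together with products involving $R\Xi$ or $RB$. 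The bound on $D\tilde\Gamma$ then follows by pulling sups through the compactly supported integral (at the cost of a finite measure factor) and applying the Lemma \ref{file7} bound on $D\Xi$, which contributes $\|h\|+\sup|\partial h/\partial x|+\sup|\partial h/\partial w|$, while $|\Xi(f)\tilde hB_f|$ and $|\Xi(f)A_fDB|$ each contribute only $\|h\|$. The bound on $R\tilde\Gamma$ is assembled similarly: $R\Xi$ contributes $\|h\|^2+\|h\|\sup|\partial h/\partial y|$; the cross terms $D\Xi\cdot\tilde h$, $\tilde h\cdot DB$, and $D\Xi\cdot DB$ each produce $\|h\|$ times a factor of the form $\|h\|+\sup|\partial h/\partial w|$ (the $\|h\|\sup|\partial h/\partial x|$ contributions being absorbed against the $\|h\|$ coming from $\tilde h$); and $RB$ yields pure $\|h\|^2$.

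The main obstacle I anticipate is not the algebra but the uniform control of all denominators on $\mathrm{supp}(\omega)$: one must verify that $C_f$, $\tilde f_W$, $f(\alpha(f),x,w)$, and $\partial F_{X\mid W=w}/\partial w$ stay bounded away from zero uniformly on the support of $\omega$, so that the $\tau^{-k}$ prefactors appearing in Lemma \ref{file7} can be absorbed into a single finite constant $e_4$. Because $\omega$ is chosen with support in a compact subset of $\mathbb S_\tau$ for some fixed $\tau>0$ (by Assumptions \ref{pospartialh}, \ref{cpctcontdiff}, and continuity of $f$), this requirement holds automatically, and the conclusion of the lemma follows.
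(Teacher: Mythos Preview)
Your proposal is correct and follows essentially the same route as the paper. The paper's proof also defines an intermediate pointwise functional $\Gamma(g)=\Xi(g)\,\frac{\int g\,dy}{\int\!\int g\,dy\,dw}$, expands it via a quotient formula $N'/D'-N/D$ (algebraically equivalent to your trilinear $\Xi\cdot A\cdot B$ expansion with $B=1/D$), invokes Lemma \ref{file7} for the $\Xi$ piece, and then integrates against $\omega$, using the inclusion $\mathrm{supp}(\omega)\equiv\mathcal{M}\subset\mathcal{S}_\tau$ for some fixed $\tau>0$ to absorb the $\tau$-dependent prefactors from Lemma \ref{file7} into a single constant $e_4$.
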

\begin{proof}
Define functional $\Gamma$ on $\mathrm{F}$ by $\Gamma(g)\equiv\Xi(g)\frac{\int g(y,x.w) dy}{\int\int g(y,x,w) dy dw}$, then
\begin{align*}
    \Gamma(f+h)-\Gamma(f) &= \frac{\Xi(f+h)\int f(y,x.w)+h(y,x.w) dy}{\int\int f(y,x,w)+h(y,x.w) dy dw}-\frac{\Xi(f)\int f(y,x.w) dy}{\int\int f(y,x,w) dy dw}\\
    &\equiv\frac{N'}{D'}-\frac{N}{D}
\end{align*}
where I've defined 
\begin{align*}
    N'&\equiv \Xi(f+h)\int f(y,x.w)+h(y,x.w) dy\\
    N &\equiv \Xi(f)\int f(y,x.w) dy\\
    D' &\equiv \int\int f(y,x,w)+h(y,x.w) dy dw\\
    D &\equiv \int\int f(y,x,w) dy dw
\end{align*}
Note that
\begin{align*}
    N'-N &= D\Xi(f;h)\int f(y,x,w)dy +\Xi(f)\int h(y,x,w)dy\\
    &\quad\quad+D\Xi(f;h)\int h(y,x,w)dy+R\Xi(f;h)\int (f(y,x,w)+h(y,x,w))dy\\
    &\equiv DN+RN\\
    D'-D &= \int\int h(y,x,w) dy dw 
\end{align*}
where $DN\equiv D\Xi(f;h)\int f(y,x,w)dy +\Xi(f)\int h(y,x,w)dy$ and $RN\equiv D\Xi(f;h)\int h(y,x,w)dy+R\Xi(f;h)\int (f(y,x,w)+h(y,x,w))dy$, and $D\Xi(f;h)$ and $R\Xi(f;h)$ are as defined in Lemma \ref{file7}. Then by the same logic as Lemma \ref{file2}, 
\begin{scriptsize}
\begin{align*}
    &\quad\Gamma(f+h)-\Gamma(f) \\
    &=\frac{\left(D\Xi(f;h)\int f(y,x,w)dy +\Xi(f)\int h(y,x,w)dy\right)\int\int f(y,x,w) dy dw-\Xi(f)\int f(y,x.w) dy\left(\int\int h(y,x.w) dy dw\right)}{\left(\int\int f(y,x,w) dy dw\right)^2}\\
    &\quad+\frac{\int\int f(y,x,w) dy dw\times \left(D\Xi(f;h)\int h(y,x,w)dy+R\Xi(f;h)\int (f(y,x,w)+h(y,x,w))dy\right)}{\left(\int\int f(y,x,w) dy dw\right)^2}\\
    &\quad-\int\int h(y,x,w) dy dw\times\\
    &\quad\frac{\left(\left(\Xi(f+h)\int f(y,x.w)+h(y,x.w) dy\right)\int\int f(y,x,w) dy dw-\Xi(f)\int f(y,x.w) dy\left(\int\int f(y,x,w)+h(y,x.w) dy dw\right)\right)}{\left(\int\int f(y,x,w)+h(y,x.w) dy dw\right)\left(\int\int f(y,x,w) dy dw\right)^2}\\
    &=D\Gamma(f;h)+R\Gamma(f;h).
\end{align*}
\end{scriptsize}
where I've defined
\begin{scriptsize}
\begin{align*}
    &D\Gamma(f;h)\\
    &\quad\equiv \frac{\left(D\Xi(f;h)\int f(y,x,w)dy +\Xi(f)\int h(y,x,w)dy\right)\int\int f(y,x,w) dy dw-\Xi(f)\int f(y,x.w) dy\left(\int\int h(y,x.w) dy dw\right)}{\left(\int\int f(y,x,w) dy dw\right)^2}\\
   &R\Gamma(f;h)\\
    &\quad\equiv +\frac{\int\int f(y,x,w) dy dw\times \left(D\Xi(f;h)\int h(y,x,w)dy+R\Xi(f;h)\int (f(y,x,w)+h(y,x,w))dy\right)}{\left(\int\int f(y,x,w) dy dw\right)^2}\\
    &\quad-\int\int h(y,x,w) dy dw\times\\
    &\quad \frac{\left(\left(\Xi(f+h)\int f(y,x.w)+h(y,x.w) dy\right)\int\int f(y,x,w) dy dw-\Xi(f)\int f(y,x.w) dy\left(\int\int f(y,x,w)+h(y,x.w) dy dw\right)\right)}{\left(\int\int f(y,x,w)+h(y,x.w) dy dw\right)\left(\int\int f(y,x,w) dy dw\right)^2}.
\end{align*}
\end{scriptsize}
By the boundedness of support, the definition of $\mathcal{S}_\tau$ and Lemma \ref{file7}, there exists a constant $e_3$ such that
\begin{footnotesize}
\begin{align*}
    \sup_{(\delta, x,w)\in\mathcal{S}_\tau}\left|D\Gamma(f;h)\right|&\leq e_3\|h\|+e_3\sup_{(y,x,w)\in\mathbb{S}_{Y,X,W}}\left|\frac{\partial h(y,x,w)}{\partial x}\right|+e_3\sup_{(y,x,w)\in\mathbb{S}_{Y,X,W}}\left|\frac{\partial h(y,x,w)}{\partial w}\right|\\
    \sup_{(\delta, x,w)\in\mathcal{S}_\tau}\left|R\Gamma(f;h)\right|&\leq e_3\|h\|^2+e_3\|h\|\sup_{(y,x,w)\in\mathbb{S}_{Y,X,W}}\left|\frac{\partial h(y,x,w)}{\partial w}\right|+e_3\|h\|\sup_{(y,x,w)\in\mathbb{S}_{Y,X,W}}\left|\frac{\partial h(y,x,w)}{\partial y}\right|
\end{align*}
\end{footnotesize}
\indent Denote the support of $\omega(x,w^*,s(r(x,w^*),\delta))$ as $\mathcal{M}$, then by the argument stated at the beginning of Appendix A, there exists $\tau>0$ such that $\mathcal{M}\subset\mathcal{S}_\tau$ so that I have 
\begin{footnotesize}
\begin{align*}
    \sup_{(\delta, x,w)\in\mathcal{M}}\left|D\Gamma(f;h)\right|&\leq e_3\|h\|+e_3\sup_{(y,x,w)\in\mathbb{S}_{Y,X,W}}\left|\frac{\partial h(y,x,w)}{\partial x}\right|+e_3\sup_{(y,x,w)\in\mathbb{S}_{Y,X,W}}\left|\frac{\partial h(y,x,w)}{\partial w}\right|\\
    \sup_{(\delta, x,w)\in\mathcal{M}}\left|R\Gamma(f;h)\right|&\leq e_3\|h\|^2+e_3\|h\|\sup_{(y,x,w)\in\mathbb{S}_{Y,X,W}}\left|\frac{\partial h(y,x,w)}{\partial w}\right|+e_3\|h\|\sup_{(y,x,w)\in\mathbb{S}_{Y,X,W}}\left|\frac{\partial h(y,x,w)}{\partial y}\right|.
\end{align*}
\end{footnotesize}
By definition, I have
\begin{footnotesize}
\begin{align*}
    &\quad\int\int_0^1\omega(x,w^*,s(r(x,w^*),\delta))\Gamma(f+h)d\delta dw-\int\int_0^1\omega(x,w^*,s(r(x,w^*),\delta))\Gamma(f)d\delta dw\\
    &=\int\int_0^1\omega(x,w^*,s(r(x,w^*),\delta))D\Gamma(f;h)d\delta dw+\int\int_0^1\omega(x,w^*,s(r(x,w^*),\delta))R\Gamma(f;h)d\delta dw\\
    &\equiv D\tilde\Gamma(f;h)+R\tilde\Gamma(f;h)
\end{align*}    
\end{footnotesize}
where I've defined $D\tilde\Gamma(f;h)\equiv\int\int_0^1\omega(x,w^*,s(r(x,w^*),\delta))D\Gamma(f;h)d\delta dw$ and $R\tilde\Gamma(f;h)\equiv\int\int_0^1\omega(x,w^*,s(r(x,w^*),\delta))R\Gamma(f;h)d\delta dw$. By the compactness of $\mathcal{M}$, there exist a constant $e_4$ such that
\begin{footnotesize}
\begin{align*}
    \sup_{x\in \mathbb S_{X}}\left|D\tilde\Gamma(f;h)\right|&\leq  e_4\|h\|+e_4\sup_{(y,x,w)\in\mathbb{S}_{Y,X,W}}\left|\frac{\partial h(y,x,w)}{\partial x}\right|+e_4\sup_{(y,x,w)\in\mathbb{S}_{Y,X,W}}\left|\frac{\partial h(y,x,w)}{\partial w}\right|\\
    \sup_{x\in \mathbb S_{X}}\left|R\tilde\Gamma(f;h)\right|&\leq e_4\|h\|^2+e_4\|h\|\sup_{(y,x,w)\in\mathbb{S}_{Y,X,W}}\left|\frac{\partial h(y,x,w)}{\partial w}\right|+e_4\|h\|\sup_{(y,x,w)\in\mathbb{S}_{Y,X,W}}\left|\frac{\partial h(y,x,w)}{\partial y}\right|.
\end{align*}
\end{footnotesize}
\end{proof}

\subsection{Appendix B: Proofs of Theorems and Lemmas in the Main Text}
\begin{proof}[\proofname \ of Lemma \ref{deltaindependent}]
\indent Since $W^*\perp (\epsilon, \eta)$, then $\epsilon\perp W^* \mid \eta$. Define $\delta = F_{\epsilon\mid \eta}(\epsilon)$. Then I can write $\epsilon=F_{\epsilon\mid \eta}^{-1}(\delta)=:s(\eta,\delta)$. By Assumption \ref{Fepsilon|eta}, $s$ is strictly increasing in $\delta$. To see that $\delta \perp (X,W^*)$ and is uniform (0,1), note that
\begin{align*}
    F_{\delta\mid W^*=w^*, \eta=\bar\eta}(t)&= Pr\left(F_{\epsilon\mid\eta}(\epsilon)\leq t\mid W^*=w^*, \eta=\bar\eta\right)\\
    &=Pr\left(\epsilon\leq F^{-1}_{\epsilon\mid\eta=\bar\eta}(t)\mid W^*=w^*, \eta=\bar\eta\right)\\
    &=Pr\left(\epsilon\leq F^{-1}_{\epsilon\mid\eta=\bar\eta}(t)\mid  \eta=\bar\eta\right)\\
    &=F_{\epsilon\mid\eta=\bar\eta}(F^{-1}_{\epsilon\mid\eta=\bar\eta}(t))=t,
\end{align*}
which says that the conditional distribution of $\delta$ given $W^*, \eta$ is $U(0,1)$ regardless of the values of $W^*$ and $\eta$. In other words, $\delta\sim U(0,1)$ and is independent of $(W^*, \eta)$. Since $X$ is a function of $(W^*, \eta)$, $\delta$ is independent of $X$ as well.
\end{proof}

\begin{proof}[\proofname \ of Lemma \ref{commonsupport}]
(i)First I show that $ \mathbb{S}_{\epsilon\mid X=x}\subseteq\mathbb{S}_{\mathbf{s}\mid X=x}$. For any $\bar\epsilon\in  \mathbb{S}_{\epsilon\mid X=x}$, let $\bar\delta\equiv F_{\epsilon\mid\eta=r(x,\bar w^*)}(\bar\epsilon)$ for some $\bar w^*\in \mathbb{S}_{W^*\mid X=x}$. Then $\bar\delta\in[0,1]$ and thus $\bar \epsilon \equiv s(r(x, \bar w^*), \bar\delta)\in  \mathbb{S}_{\mathbf{s}\mid X=x}$. \\
\indent (ii) Then I show that 
$\mathbb{S}_{\mathbf{s}\mid X=x} \subseteq \mathbb{S}_{\epsilon\mid X=x}$. By definition of $\mathbb{S}_{\mathbf{s}\mid X=x}$, for each $\tilde s \in \mathbb{S}_{\mathbf{s}\mid X=x}$, there is some $(\tilde w^*, \tilde\delta)\in \mathbb{S}_{W^*\mid X=x}\times [0,1]$ such that $\tilde s=s(r(x,\tilde w^*), \tilde\delta)$ . Denote the upper and lower bound (potentially infinity) of $\mathbb{S}_{\epsilon\mid X=x}$ as $\bar e_{x}$ and $\underline{e}_{x}$, respectively. From Lemma \ref{deltaindependent}, I know that $\epsilon\perp W^*\mid \eta$. Since $X$ is a function of $W^*, \eta$, I have that $\epsilon\perp X\mid \eta$. Then if $\bar e_x\in \mathbb{S}_{\epsilon\mid X=x}$, then $Pr\left(\epsilon\leq\bar e_x\mid \eta=r(x, \tilde w^*)\right)=Pr\left(\epsilon\leq\bar e_x\mid \eta=r(x, \tilde w^*), X=x\right)=1$, which implies $F^{-1}_{\epsilon\mid \eta=r(x, \tilde w^*)}(1)\leq \bar e_x$. If $\bar e_x\notin \mathbb{S}_{\epsilon\mid X=x}$, then $Pr\left(\epsilon<\bar e_x\mid \eta=r(x, \tilde w^*)\right)=Pr\left(\epsilon<\bar e_x\mid \eta=r(x, \tilde w^*), X=x\right)=1$, which implies $F^{-1}_{\epsilon\mid \eta=r(x, \tilde w^*)}(1)<\bar e_x$. Similarly,  $F^{-1}_{\epsilon\mid \eta=r(x, \tilde w^*)}(0)\geq \bar e_x$, and the inequality is strict if $\underline{e}_x\notin\mathbb{S}_{\epsilon\mid X=x}$ Since $\tilde\delta\in[0,1]$, by Assumption \ref{Fepsilon|eta}, $\underline{e}_x\leq F^{-1}_{\epsilon\mid \eta=r(x, \tilde w^*)}(\tilde\delta)\leq \bar e_x$, or equivalently, $\underline{e}_x\leq s(r(x, \tilde w^*),\tilde\delta)\leq \bar e_x$, and the inequalities are strict if $\underline{e}_x\notin\mathbb{S}_{\epsilon\mid X=x}$ or $\bar{e}_x\notin\mathbb{S}_{\epsilon\mid X=x}$, respectively. Thus $\tilde s\in \mathbb{S}_{\epsilon\mid X=x}$. Combining results from (i) and (ii) yields the desired conclusion.
\end{proof}

\begin{proof}[\proofname \ of Lemma \ref{dcquantile}]
\begin{align*}
    \delta &= F_{Y|X=x, W^*=w^*}\left(F^{-1}_{Y|X=x, W^*=w^*}(\delta)\right)\\
    &=\frac{\int_{-\infty}^{F^{-1}_{Y|X=x, W^*=w^*}(\delta)}f_{Y,X,W^*}(y,x,w^*)dy}{\int_{-\infty}^\infty f_{Y,X,W^*}(y,x,w^*)dy}\\
    \implies\delta f_{X,W^*}(x,w^*) &=\int_{-\infty}^{F^{-1}_{Y|X=x, W^*=w^*}(\delta)}f_{Y,X,W^*}(y,x,w^*)dy
\end{align*}
Taking derivatives on both sides with respect to $w^*$ yields
\begin{footnotesize}
\begin{align*}
    &f_{Y,X,W^*}\left(F^{-1}_{Y|X=x, W^*=w^*}(\delta), x,w^*\right)\frac{\partial F^{-1}_{Y|X=x, W^*=w^*}(\delta)}{\partial w^*}+\int_{-\infty}^{F^{-1}_{Y|X=x, W^*=w^*}(\delta)}\frac{\partial f_{Y,X,W^*}(y,x,w^*)}{\partial w^*}dy \\
    &= \delta \frac{\partial f_{X,W^*}(x,w^*)}{\partial w^*}
\end{align*}
\end{footnotesize}
This implies
\begin{footnotesize}
\begin{align*}
  \frac{\partial F^{-1}_{Y|X=x, W^*=w^*}(\delta)}{\partial w^*} = \frac{\delta \frac{\partial f_{X,W^*}(x,w^*)}{\partial w^*}-\int_{-\infty}^{F^{-1}_{Y|X=x, W^*=w^*}(\delta)}\frac{\partial f_{Y,X,W^*}(y,x,w^*)}{\partial w^*}dy}{f_{Y,X,W^*}\left(F^{-1}_{Y|X=x, W^*=w^*}(\delta),x,w^*\right)}.
\end{align*}
\end{footnotesize}
The conclusion for $\frac{\partial F^{-1}_{Y|X=x, W^*=w^*}(\delta)}{\partial x}$ follows the same logic.
\end{proof}

\begin{proof}[\proofname \ of Lemma \ref{convolution thm}]
Assumption \ref{posdenom}, \ref{cpctcontdiff}, and \ref{E<infty} ensure the existence of 
\begin{align*}
    &\phi_{f_{Y,X,W_2}(y,x,\cdot)}(t)=E\left[e^{\mathbf{i}tW_2}\mid Y=y, X=x\right]f_{Y,X}(y,x)\\
    &\phi_{f^{(\lambda_{2,1},\lambda_{2,2})}_{Y,X,W_2}(y,x,\cdot)}(t)=\int e^{\mathbf{i}tw_2}\frac{\partial^{\lambda_2} f_{Y,X,W_2}(y,x,w_2)}{\partial y^\lambda_{2,1}\partial x^\lambda_{2,2}}dw_2\quad\text{and thus}\\
    &g_{\lambda_1,\lambda_{2,1}, \lambda_{2,2}}(y,x,w^*,h_1)\equiv\int \frac{1}{h_1}K\left(\frac{\tilde w^*-w^*}{h_1}\right)g_{\lambda_1,\lambda_{2,1}, \lambda_{2,2}}(y,x,\tilde w^*)d\tilde w^*\\
    &=\int \frac{1}{h_1}K\left(\frac{\tilde w^*-w^*}{h_1}\right)\left(\frac{1}{2\pi}\int\left(-\mathbf{i}t\right)^{\lambda_1} e^{-\mathbf{i}t\tilde w^*}\frac{\phi_{f^{(\lambda_{2,1},\lambda_{2,2})}_{Y,X,W_2}(y,x, \cdot)}(t)\phi_{W^*}(t)}{\phi_{W_2}(t)} dt\right)d\tilde w^*.
\end{align*}
Denote $k_{h_1}(v)\equiv\frac{1}{h_1}K\left(\frac{v}{h_1}\right)$, then for all $x\in\mathbb{R}$, $g_{\lambda_1,\lambda_{2,1}, \lambda_{2,2}}(y,x,w^*,h_1)$ is the convolution between $k_{h_1}(\cdot)$ and $g_{\lambda_1,\lambda_{2,1}, \lambda_{2,2}}(y,x,\cdot)$. By the convolution theorem, it's equal to the inverse Fourier transform of the product of the Fourier transformation of $k_{h_1}(\cdot)$ and the Fourier transformation of $g_{\lambda_1,\lambda_{2,1}, \lambda_{2,2}}(y,x,\cdot)$. The Fourier transformation of $k_{h_1}(\cdot)$ is $\phi_{K}(ht)$, and the Fourier transformation of $g_{\lambda_1,\lambda_{2,1}, \lambda_{2,2}}(y,x,\cdot)$ is $(-\mathbf{i}t)^{\lambda_1}\frac{\phi_{f^{(\lambda_{2,1},\lambda_{2,2})}_{Y,X,W_2}(y,x,\cdot)}(t)\phi_{W^*}(t)}{\phi_{W_2}(t)}$.
\end{proof}

\begin{proof} [\proofname \ of Lemma \ref{decomposition}]
Write $\hat{\theta}_{}(\zeta) \equiv \hat{E}\left[W_1 e^{\mathrm{i} \zeta W_{2}}\right]$, $\delta \hat{\theta}_{}(t) \equiv \hat{\theta}_{}(t)-\theta_{}(t)$,  $\delta\hat\phi_{Z}(t)=\hat\phi_{Z}(t)-\phi_{Z}(t)$ for random variable $Z$, and $\delta\hat\phi_{f^{(\lambda_{2,1}, \lambda_{2,2})}_{Y,X,W_2}(y,x,\cdot)}(t)=\hat\phi_{f^{(\lambda_{2,1}, \lambda_{2,2})}_{Y,X,W_2}(y,x,\cdot)}(t)-\phi_{f^{(\lambda_{2,1}, \lambda_{2,2})}_{Y,X,W_2}(y,x,\cdot)}(t)$. Denote $q_{\lambda_2}(t)\equiv \frac{\phi_{f^{(\lambda_{2,1}, \lambda_{2,2})}_{Y,X,W_2}(y,x,\cdot)}( t)}{\phi_{W_2}(t)}$, $\hat q_{\lambda_2}(t)\equiv \frac{\hat\phi_{f^{(\lambda_{2,1}, \lambda_{2,2})}_{Y,X,W_2}(y,x,\cdot)}( t)}{\hat\phi_{W_2}(t)}$ and $\delta\hat q_{\lambda_2}(t)\equiv \hat q_{\lambda_2}(t)-q_{\lambda_2}(t)$, then $\delta\hat q_{\lambda_2}(t)$ can be written as
\begin{footnotesize}
\begin{align}
    \delta \hat{q}_{\lambda_2}(t)&= \left(\frac{\delta\hat\phi_{f^{(\lambda_{2,1}, \lambda_{2,2})}_{Y,X,W_2}(y,x,\cdot)}( t)}{\phi_{W_2}(t)}-\frac{\phi_{f^{(\lambda_{2,1}, \lambda_{2,2})}_{Y,X,W_2}(y,x,\cdot)}(t)\delta \hat{\phi}_{W_2}(t)}{\left(\phi_{W_2}(t)\right)^{2}}\right)\left(1+\frac{\delta \hat{\phi}_{W_2}(t)}{\phi_{W_2}(t)}\right)^{-1} \quad \text { or } \notag \\
    \delta \hat{q}_{\lambda_2}(t)&= \delta_{1} \hat{q}_{\lambda_2}(t)+\delta_{2} \hat{q}_{\lambda_2}(t), \quad \text { with } \notag \\
    \delta_{1} \hat{q}_{\lambda_2}(t) &\equiv  \frac{\delta \hat\phi_{f^{(\lambda_{2,1}, \lambda_{2,2})}_{Y,X,W_2}(y,x,\cdot)}(t)}{\phi_{W_2}(t)}-\frac{\phi_{f^{(\lambda_{2,1}, \lambda_{2,2})}_{Y,X,W_2}(y,x,\cdot)}(t) \delta \hat{\phi}_{W_2}(t)}{\left(\phi_{W_2}(t)\right)^{2}} \notag \\
    \delta_{2} \hat{q}_{\lambda_2}(t) &\equiv  \frac{\phi_{f^{(\lambda_{2,1}, \lambda_{2,2})}_{Y,X,W_2}(y,x,\cdot)}(t)}{\phi_{W_2}(t)}\left(\frac{\delta \hat{\phi}_{W_2}(t)}{\phi_{W_2}(t)}\right)^{2}\left(1+\frac{\delta \hat{\phi}_{W_2}(t)}{\phi_{W_2}(t)}\right)^{-1} \notag \\
    &\quad-\frac{\delta \hat\phi_{f^{(\lambda_{2,1}, \lambda_{2,2})}_{Y,X,W_2}(y,x,\cdot)}(t)}{\phi_{W_2}(t)} \frac{\delta \hat{\phi}_{W_2}(t)}{\phi_{W_2}(t)}\left(1+\frac{\delta \hat{\phi}_{W_2}(t)}{\phi_{W_2}(t)}\right)^{-1}.\label{hatphiXW2/hatphiW2}
\end{align}
\end{footnotesize}
Similarly, I denote $q_{W_1}(\xi)\equiv \frac{\theta(\xi)}{\phi_{W_2}(\xi)}$, $\hat q_{W_1}(\xi)\equiv \frac{\hat \theta(\xi)}{\hat \phi_{W_2}(\xi)}$, and $\delta\hat q_{W_1}(\xi)\equiv\hat q_{W_1}(\xi)-q_{W_1}(\xi)$. Then $\delta\hat q_{W_1}(\xi)$ can be written as 
\begin{align}
    \delta\hat q_{W_1}(\xi)&=\left(\frac{\delta \hat{\theta}(\xi)}{\phi_{W_2}(\xi)}-\frac{\theta(\xi) \delta \hat{\phi}_{W_2}(\xi)}{\left(\phi_{W_2}(\xi)\right)^{2}}\right)\left(1+\frac{\delta\hat\phi_{W_2}(\xi)}{\phi_{W_2}(\xi)}\right)^{-1}\quad \text{or}\notag \\
    \delta\hat q_{W_1}(\xi)&=\delta_1\hat q_{W_1}(\xi)+\delta_2\hat q_{W_1}(\xi),\quad \text{with}\notag\\
    \delta_1\hat q_{W_1}(\xi)&\equiv  \frac{\delta \hat{\theta}(\xi)}{\phi_{W_2}(\xi)}-\frac{\theta(\xi) \delta \hat{\phi}_{W_2}(\xi)}{\left(\phi_{W_2}(\xi)\right)^{2}} \notag\\
    \delta_{2} \hat{q}_{W_1}(\xi) &\equiv \frac{\theta(\xi)}{\phi_{W_2}(\xi)}\left(\frac{\delta \hat{\phi}_{W_2}(\xi)}{\phi_{W_2}(\xi)}\right)^{2}\left(1+\frac{\delta \hat{\phi}_{W_2}(\xi)}{\phi_{W_2}(\xi)}\right)^{-1} \notag\\
    &-\frac{\delta \hat\theta(\xi)}{\phi_{W_2}(\xi)} \frac{\delta \hat{\phi}_{W_2}(\xi)}{\phi_{W_2}(\xi)}\left(1+\frac{\delta \hat{\phi}_{W_2}(\xi)}{\phi_{W_2}(\xi)}\right)^{-1}. \label{delta hat q W1}
\end{align}
For $Q(t)=\int_0^t \frac{\mathbf{i}\theta(\xi)}{\phi_{W_2}(\xi)}d\xi$, $\delta \hat Q(t)=\int_0^t \frac{\mathbf{i}\hat\theta(\xi)}{\hat\phi_{W_2}(\xi)}d\xi-Q(t)$ and some random function $\delta\bar Q(t)$ such that $|\delta\bar Q(t)|\leq |\delta\hat Q(t)|$ for all $t$,
\begin{align}\label{expQ}
    \exp\left(Q(t)+\delta \hat Q(t)\right) = \exp(Q(t))\left(1+\delta Q(t)+\frac{1}{2}[\exp(\delta \bar Q(t))]\left(\delta \hat Q(t)\right)^2\right)
\end{align}
Then I can state a useful representation for 
\begin{align}
    &\quad\frac{\hat\phi_{f^{(\lambda_{2,1}, \lambda_{2,2})}_{Y,X,W_2}(y,x,\cdot)}(t)\hat\phi_{W^*}(t)}{\hat\phi_{W_2}(t)}\notag\\
    &=\left(q_{\lambda_2}(t)+\delta\hat q_{\lambda_2}(t)\right)\exp(Q(t))\left(1+\delta\hat Q(t)+\frac{1}{2}\exp\left(\delta\bar Q(t)\right)\left(\delta \hat Q(t)\right)^2\right)\notag\\
    &=q_{\lambda_2}(t)\exp(Q(t))+q_{\lambda_2}(t)\exp(Q(t))\left(\delta\hat Q(t)+\frac{1}{2}\exp\left(\delta\bar Q(t)\right)\left(\delta \hat Q(t)\right)^2\right)\notag\\
    &\quad+\delta\hat q_{\lambda_2}(t)\exp(Q(t))\left(1+\delta\hat Q(t)+\frac{1}{2}\exp\left(\delta\bar Q(t)\right)\left(\delta \hat Q(t)\right)^2\right)\notag\\
    &=q_{\lambda_2}(t)\exp(Q(t))+q_{\lambda_2}(t)\exp(Q(t))\left[\int_0^t \mathbf{i} \delta_1\hat q_{W_1}(\xi)d\xi\right]+\delta_1\hat q_{\lambda_2}(t)\exp(Q(t))\notag\\
    &\quad+q_{\lambda_2}(t)\exp(Q(t))\left[\int_0^t \mathbf{i} \delta_2\hat q_{W_1}(\xi)d\xi\right]+q_{\lambda_2}(t)\exp(Q(t))\left[\frac{1}{2}\exp\left(\delta\bar Q(t)\right)\left(\delta \hat Q(t)\right)^2\right]\notag\\
    &\quad+\delta_2\hat q_{\lambda_2}(t)\exp(Q(t))+\delta\hat q_{\lambda_2}(t)\exp(Q(t))\left(\delta\hat Q(t)+\frac{1}{2}\exp\left(\delta\bar Q(t)\right)\left(\delta \hat Q(t)\right)^2\right).\label{phif phiWover phiW2}
\end{align}
Plug (\ref{phif phiWover phiW2}) into $\hat g_{\lambda_1,\lambda_{2,1},\lambda_{2,2}}(y,x,w^*,h)$, I get
\begin{align*}
    &\quad\hat g_{\lambda_1,\lambda_{2,1},\lambda_{2,2}}(y,x,w^*,h)-g_{\lambda_1,\lambda_{2,1},\lambda_{2,2}}(y,x,w^*,h)\\
    &=Dg_{\lambda_1,\lambda_{2,1},\lambda_{2,2}}(y,x,w^*,h)+ Rg_{\lambda_1,\lambda_{2,1},\lambda_{2,2}}(y,x,w^*,h)
\end{align*}
where $Dg_{\lambda_1,\lambda_{2,1},\lambda_{2,2}}(y,x,w^*,h)$ is the term linear in $\delta\hat \theta(t)$, $\delta \hat \phi_{W_2}$, and  $\delta\hat\phi_{f^{(\lambda_{2,1}, \lambda_{2,2})}_{Y,X,W_2}(y,x,\cdot)}(t)$, and 
$Rg_{\lambda_1,\lambda_{2,1},\lambda_{2,2}}(y,x,w^*,h)$ is the higher-order remainder term. More specifically,
\begin{footnotesize}
\begin{align*}
    &Dg_{\lambda_1,\lambda_{2,1},\lambda_{2,2}}(y,x,w^*,h)\equiv \frac{1}{2\pi}\int\left(-\mathbf{i}t\right)^{\lambda_1} e^{-\mathbf{i}tw^*} \phi_K(h_{1n}t)\\
    &\quad\quad\quad\quad\quad\quad\quad\quad\quad\quad\quad\quad\quad\quad\left(q_{\lambda_2}(t)\exp(Q(t))\left[\int_0^t \mathbf{i} \delta_1\hat q_{W_1}(\xi)d\xi\right]+\delta_1\hat q_{\lambda_2}(t)\exp(Q(t))\right) dt\\
    &Rg_{\lambda_1,\lambda_{2,1},\lambda_{2,2}}(y,x,w^*,h)\equiv \frac{1}{2\pi}\int\left(-\mathbf{i}t\right)^{\lambda_1} e^{-\mathbf{i}tw^*} \phi_K(h_{1n}t)\\
    &\quad\quad\quad\quad\quad\quad\quad\quad\quad\quad\quad\quad\quad\quad\Bigg\{q_{\lambda_2}(t)\exp(Q(t))\left[\int_0^t \mathbf{i} \delta_2\hat q_{W_1}(\xi)d\xi\right]\\
    &\quad\quad\quad\quad\quad\quad\quad\quad\quad\quad\quad\quad\quad\quad+q_{\lambda_2}(t)\exp(Q(t))\left[\frac{1}{2}\exp\left(\delta\bar Q(t)\right)\left(\delta \hat Q(t)\right)^2\right]\notag\\
    &\quad\quad\quad\quad\quad\quad\quad\quad\quad\quad\quad\quad\quad\quad+\delta_2\hat q_{\lambda_2}(t)\exp(Q(t))\\
    &\quad\quad\quad\quad\quad\quad\quad\quad\quad\quad\quad\quad\quad\quad+\delta\hat q_{\lambda_2}(t)\exp(Q(t))\left(\delta\hat Q(t)+\frac{1}{2}\exp\left(\delta\bar Q(t)\right)\left(\delta \hat Q(t)\right)^2\right)\Bigg\}dt.
\end{align*}
\end{footnotesize}
\indent Define $\bar g_{\lambda_1,\lambda_{2,1},\lambda_{2,2}}(y,x,w^*,h)\equiv g_{\lambda_1,\lambda_{2,1},\lambda_{2,2}}(y,x,w^*,h)+Dg_{\lambda_1,\lambda_{2,1},\lambda_{2,2}}(y,x,w^*,h)$ 
Then I have the following expression:
\begin{footnotesize}
\begin{align*}
    &\quad\hat g_{\lambda_1,\lambda_{2,1},\lambda_{2,2}}(y,x,w^*,h)-g_{\lambda_1,\lambda_{2,1},\lambda_{2,2}}(y,x,w^*)\\
    &=B_{\lambda_1,\lambda_{2,1},\lambda_{2,2}}(y,x,w^*,h)+L_{\lambda_1,\lambda_{2,1},\lambda_{2,2}}(y,x,w^*,h)+R_{\lambda_1,\lambda_{2,1},\lambda_{2,2}}(y,x,w^*,h)
\end{align*}
\end{footnotesize}
where 
\begin{footnotesize}
\begin{align*}
    &B_{\lambda_1,\lambda_{2,1},\lambda_{2,2}}(y,x,w^*,h)\equiv E\left[\bar g_{\lambda_1,\lambda_{2,1},\lambda_{2,2}}(y,x,w^*,h)\right]-g_{\lambda_1,\lambda_{2,1},\lambda_{2,2}}(y,x,w^*)\\
    &L_{\lambda_1,\lambda_{2,1},\lambda_{2,2}}(y,x,w^*,h)\equiv \bar g_{\lambda_1,\lambda_{2,1},\lambda_{2,2}}(y,x,w^*,h)-E\left[\bar g_{\lambda_1,\lambda_{2,1},\lambda_{2,2}}(y,x,w^*,h)\right]\\
    &R_{\lambda_1,\lambda_{2,1},\lambda_{2,2}}(y,x,w^*,h)\equiv\hat g_{\lambda_1,\lambda_{2,1},\lambda_{2,2}}(y,x,w^*,h)-\bar g_{\lambda_1,\lambda_{2,1},\lambda_{2,2}}(y,x,w^*,h).
\end{align*}
\end{footnotesize}
To see the explicit form of $B_{\lambda_1,\lambda_{2,1},\lambda_{2,2}}(y,x,w^*,h)$, note that
\begin{footnotesize}
\begin{align*}
    &\quad B_{\lambda_1,\lambda_{2,1},\lambda_{2,2}}(y,x,w^*,h)\equiv E\left[\bar g_{\lambda_1,\lambda_{2,1},\lambda_{2,2}}(y,x,w^*,h)\right]-g_{\lambda_1,\lambda_{2,1},\lambda_{2,2}}(y,x,w^*)\\
    &=g_{\lambda_1,\lambda_{2,1},\lambda_{2,2}}(y,x,w^*,h)-g_{\lambda_1,\lambda_{2,1},\lambda_{2,2}}(y,x,w^*)+E\left[Dg_{\lambda_1,\lambda_{2,1},\lambda_{2,2}}(y,x,w^*,h)\right]\\
    &=g_{\lambda_1,\lambda_{2,1},\lambda_{2,2}}(y,x,w^*,h)-g_{\lambda_1,\lambda_{2,1},\lambda_{2,2}}(y,x,w^*)\\
    &\quad+\frac{1}{2\pi}\int\left(-\mathbf{i}t\right)^{\lambda_1} e^{-\mathbf{i}tw^*} \phi_K(h_{1n}t)\frac{E\left[\delta\hat\phi_{f^{(\lambda_{2,1}, \lambda_{2,2})}_{Y,X,W_2}(y,x,\cdot)}(t)\right]\phi_{W^*}(t)}{\phi_{W_2}(t)} dt
\end{align*}
\end{footnotesize}
To see the explicit form of $L_{\lambda_1,\lambda_{2,1},\lambda_{2,2}}(y,x,w^*,h)$, note that
\begin{footnotesize}
\begin{align*}
    &\quad L_{\lambda_1,\lambda_{2,1},\lambda_{2,2}}(y,x,w^*,h)\equiv \bar g_{\lambda_1,\lambda_{2,1},\lambda_{2,2}}(y,x,w^*,h)-E\left[\bar g_{\lambda_1,\lambda_{2,1},\lambda_{2,2}}(y,x,w^*,h)\right]\\
    &=Dg_{\lambda_1,\lambda_{2,1},\lambda_{2,2}}(y,x,w^*,h)-E\left[Dg_{\lambda_1,\lambda_{2,1},\lambda_{2,2}}(y,x,w^*,h)\right]\\
    &=\frac{1}{2\pi}\int\left(-\mathbf{i}t\right)^{\lambda_1} e^{-\mathbf{i}tw^*} \phi_K(h_{1n}t)\\
    &\quad\left(\phi_{f^{(\lambda_{2,1}, \lambda_{2,2})}_{Y,X,W^*}(y,x,\cdot)}(t)\left[\int_0^t \left(\frac{\mathbf{i}\delta \hat{\theta}(\xi)}{\phi_{W_2}(\xi)}-\frac{\mathbf{i}\theta(\xi) \delta \hat{\phi}_{W_2}(\xi)}{\left(\phi_{W_2}(\xi)\right)^{2}}\right)d\xi\right]-\frac{\delta \hat{\phi}_{W_2}(t)}{\phi_{W_2}(t)}\phi_{f^{(\lambda_{2,1}, \lambda_{2,2})}_{Y,X,W^*}(y,x,\cdot)}(t) \right) dt\\
    &\quad+\frac{1}{2\pi}\int\left(-\mathbf{i}t\right)^{\lambda_1} e^{-\mathbf{i}tw^*} \phi_K(h_{1n}t)\frac{\left(\hat\phi_{f^{(\lambda_{2,1}, \lambda_{2,2})}_{Y,X,W_2}(y,x,\cdot)}(t)-E\left[\hat\phi_{f^{(\lambda_{2,1}, \lambda_{2,2})}_{Y,X,W_2}(y,x,\cdot)}(t)\right]\right)\phi_{W^*}(t)}{\phi_{W_2}(t)} dt\\
    &=\frac{1}{2 \pi} \int\left(-\mathbf{i}t\right)^{\lambda_1}  e^{-\mathbf it w^*} \phi_{K}(h_1 t)\phi_{f^{(\lambda_{2,1},\lambda_{2,2})}_{Y,X,W^*}(y,x,\cdot)}(t)\left[\int_{0}^{t}\left(\frac{\mathbf{i} \delta \hat{\theta}_{}(\xi)}{\phi_{W_2}(\xi)}-\frac{\mathbf{i} \theta_{}(\xi) \delta \hat{\phi}_{W_2}(\xi)}{\left(\phi_{W_2}(\xi)\right)^{2}}\right) d \xi\right]  d t\\
    &+\quad\frac{1}{2 \pi}\int \left(-\mathbf{i}t\right)^{\lambda_1} e^{-\mathbf itw^*} \phi_{K}(h_1t)\\
    &\times \left(\frac{\hat\phi_{f^{(\lambda_{2,1}, \lambda_{2,2})}_{Y,X,W_2}(y,x,\cdot)}(t)-E\left[\hat\phi_{f^{(\lambda_{2,1}, \lambda_{2,2})}_{Y,X,W_2}(y,x,\cdot)}(t)\right]}{\phi_{W_2}(t)}\phi_{W^*}(t)-\frac{\delta \hat\phi_{W_2}(t)}{\phi_{W_2}(t)}\phi_{f^{(\lambda_{2,1},\lambda_{2,2})}_{Y,X,W^*}(y,x,\cdot)}(t)\right) dt
\end{align*}
\end{footnotesize}
Then using the identity 
\begin{align*}
    \int_{-\infty}^{\infty} \int_{0}^{\xi} f(\xi, \zeta) d \zeta d \xi &= \int_{0}^{\infty} \int_{\zeta}^{\infty} f(\xi, \zeta) d \xi d \zeta+\int_{-\infty}^{0} \int_{\zeta}^{-\infty} f(\xi, \zeta) d \xi d \zeta \\
    &:=\iint_{\zeta}^{\pm \infty} f(\xi, \zeta) d \xi d \zeta
\end{align*}
for any absolutely integrable function $f$, I obtain
\begin{footnotesize}
\begin{align}
    &\quad L_{\lambda_1,\lambda_{2,1}, \lambda_{2,2}}(y, x, w^*, h) \\
    &= \bar g_{\lambda_1,\lambda_{2,1}, \lambda_{2,2}}(y, x, w^*,h)-g_{\lambda_1,\lambda_{2,1}, \lambda_{2,2}}(y, x, w^*,h)\notag\\
    &=\frac{1}{2 \pi} \int\int_{\xi}^{\pm\infty} \left(-\mathbf{i}t\right)^{\lambda_1} e^{-\mathbf itw^*}\phi_{K}(h_1t)\phi_{f^{(\lambda_{2,1},\lambda_{2,2})}_{Y,X,W^*}(y,x,\cdot)}(t)dt \left(\frac{\mathbf{i} \delta \hat{\theta}_{}(\xi)}{\phi_{W_2}(\xi)}-\frac{\mathbf{i} \theta_{}(\xi) \delta \hat{\phi}_{W_2}(\xi)}{\left(\phi_{W_2}(\xi)\right)^{2}}\right) d \xi \notag\\
    &+\frac{1}{2 \pi}\int  \left(-\mathbf{i}t\right)^{\lambda_1}e^{-\mathbf itw^*} \phi_{K}(h_1t)\times\\
    &\quad\quad\left(\frac{\hat\phi_{f^{(\lambda_{2,1}, \lambda_{2,2})}_{Y,X,W_2}(y,x,\cdot)}(t)-E\left[\hat\phi_{f^{(\lambda_{2,1}, \lambda_{2,2})}_{Y,X,W_2}(y,x,\cdot)}(t)\right]}{\phi_{W_2}(t)}\phi_{W^*}(t)-\frac{\delta \hat\phi_{W_2}(t)}{\phi_{W_2}(t)}\phi_{f^{(\lambda_{2,1},\lambda_{2,2})}_{Y,X,W^*}(y,x,\cdot)}(t)\right) dt\notag\\
    &=\int \Psi_{1, \lambda_1,\lambda_{2,1},\lambda_{2,2}}(\xi, y, x, w^*, h_1)\left(\hat E[W_1e^{\mathbf{i}\xi W_2}]-E[W_1e^{\mathbf{i}\xi W_2}]\right)d\xi\notag\\
    &+\int \Psi_{2,\lambda_1,\lambda_{2,1},\lambda_{2,2}}(\xi, y, x, w^*, h_1)\left(\hat E[e^{\mathbf{i}\xi W_2}]-E[e^{\mathbf{i}\xi W_2}]\right)d\xi\notag\\
    &+\int \Psi_{3,\lambda_1,\lambda_{2,1},\lambda_{2,2}}(\xi, y, x, w^*, h_1)\left(\hat E\left[e^{\mathbf{i}\xi W_2}\frac{1}{h_{2,1}^{1+\lambda_{2,1}}h_{2,2}^{1+\lambda_{2,2}}}G_Y\left( \frac{y-Y}{h_{2,1}}\right)G_X^{(\lambda_{2,2})}\left(\frac{x-X}{h_{2,2}}\right)\right]\right.\notag\\
    &\quad\quad\quad\quad\quad\quad\quad\quad\quad\quad\quad\quad\quad\quad \left.-E\left[e^{\mathbf{i}\xi W_2}\frac{1}{h_{2,1}^{1+\lambda_{2,1}}h_{2,2}^{1+\lambda_{2,2}}}G_Y\left( \frac{y-Y}{h_{2,1}}\right)G_X^{(\lambda_{2,2})}\left(\frac{x-X}{h_{2,2}}\right)\right]\right) d\xi\notag\\
    &=\hat E\Bigg[\int \Psi_{1, \lambda_1,\lambda_{2,1},\lambda_{2,2}}(\xi, y, x, w^*, h_1)\left(W_1e^{\mathbf{i}\xi W_2}-E[W_1e^{\mathbf{i}\xi W_2}]\right)d\xi\notag\\
    &\quad+\int \Psi_{2,\lambda_1,\lambda_{2,1},\lambda_{2,2}}(\xi, y, x, w^*, h_1)\left(e^{\mathbf{i}\xi W_2}-E[e^{\mathbf{i}\xi W_2}]\right)d\xi\notag\\
    &\quad+\int \Psi_{3,\lambda_1,\lambda_{2,1},\lambda_{2,2}}(\xi, y, x, w^*, h_1)\times\left(e^{\mathbf{i}\xi W_2}\frac{1}{h_{2,1}^{1+\lambda_{2,1}}h_{2,2}^{1+\lambda_{2,2}}}G_Y\left( \frac{y-Y}{h_{2,1}}\right)G_X^{(\lambda_{2,2})}\left(\frac{x-X}{h_{2,2}}\right)\right.\notag\\    &\quad\quad\quad\quad\quad\quad\quad\quad\quad\quad\quad\quad\quad\quad\quad\quad \left.-E\left[e^{\mathbf{i}\xi W_2}\frac{1}{h_{2,1}^{1+\lambda_{2,1}}h_{2,2}^{1+\lambda_{2,2}}}G_Y\left( \frac{y-Y}{h_{2,1}}\right)G_X^{(\lambda_{2,2})}\left(\frac{x-X}{h_{2,2}}\right)\right]\right) d\xi \Bigg]\notag\\
    &=\hat E[l_{\lambda_1,\lambda_{2,1}, \lambda_{2,2}}(y, x, w^*,h;Y, X, W_1, W_2)],\label{L}
\end{align}
\end{footnotesize}

where 
\begin{footnotesize}
\begin{align*}
    \Psi_{1, \lambda_1,\lambda_{2,1},\lambda_{2,2}}(\xi, y, x, w^*, h_1) &= \frac{1}{2 \pi} \frac{\mathbf{i}}{\phi_{W_2}(\xi)}\int_{\xi}^{\pm\infty} \left(-\mathbf{i}t\right)^{\lambda_1} e^{-\mathbf itw^*} \phi_{K}(h_1t)\phi_{f^{(\lambda_{2,1},\lambda_{2,2})}_{Y,X,W^*}(y,x,\cdot)}(t)dt \\
    \Psi_{2,\lambda_1,\lambda_{2,1},\lambda_{2,2}}(\xi, y, x, w^*, h_1) &= -\frac{1}{2 \pi} \frac{\mathbf{i}\theta(\xi)}{(\phi_{W_2}(\xi))^2}\int_{\xi}^{\pm\infty}  \left(-\mathbf{i}t\right)^{\lambda_1}e^{-\mathbf itw^*} \phi_{K}(h_1 t)\phi_{f^{(\lambda_{2,1},\lambda_{2,2})}_{Y,X,W^*}(y,x,\cdot)}(t)dt\\
    &\quad -\frac{1}{2 \pi} \left(-\mathbf{i}t\right)^{\lambda_1}e^{-\mathbf i\xi w^*} \phi_{K}(h _1\xi)\frac{\phi_{f^{(\lambda_{2,1},\lambda_{2,2})}_{Y,X,W^*}(y,x,\cdot)}(\xi)}{\phi_{W_2}(\xi)}\\
    \Psi_{3,\lambda_1,\lambda_{2,1},\lambda_{2,2}}(\xi, y, x, w^*, h_1) &= \frac{1}{2 \pi}   \left(-\mathbf{i}t\right)^{\lambda_1}e^{-\mathbf i\xi w^*} \phi_{K}(h_1\xi)\frac{\phi_{W^*}(\xi)}{\phi_{W_2}(\xi)}.
\end{align*}
\end{footnotesize}

\end{proof}

\begin{definition}
Write $f(t)\preceq g(t)$ for $f,g: \mathbb{R}\rightarrow\mathbb{R}$ when there is a constant $C>0$, independent of $t$, such that $f(t)\leq Cg(t)$ for all $t\in\mathbb{R}$ (and similarly for $\succeq$). Write $a_n\preceq b_n$ for two sequences $a_n, b_n$ if there exists a constant $C$ independent of $n$ such that $a_n\leq C b_n$ for all $n\in\mathbb{N}$
\end{definition}

\begin{proof} [\proofname\ of Theorem \ref{biasrate}]
By Parseval's identity, I have
\begin{align*}
    &\quad|g_{\lambda_1,\lambda_{2,1}, \lambda_{2,2}}(y, x, w^*,h)-g_{\lambda_1,\lambda_{2,1}, \lambda_{2,2}}(y, x, w^*)|\\
    &=\left|\frac{1}{2\pi}\int \left(-\mathbf{i}t\right)^{\lambda_1}
    e^{-\mathbf itw^*}\left(\phi_{K}(h_1t)-1\right)\frac{\phi_{f^{(\lambda_{2,1},\lambda_{2,2})}_{Y,X,W_2}(y,x,\cdot)}(t)}{\phi_{W_2}(t)}\phi_{W^*}(t)dt\right|\\
    &\leq\frac{1}{2\pi}\int |t|^{\lambda_1}\left|\phi_{K}(h_1t)-1\right|\left|\phi_{f^{(\lambda_{2,1},\lambda_{2,2})}
    _{Y,X,W^*}(y,x,\cdot)}(t)\right| d t\\
    &=\frac{1}{2\pi}\int_{|h_1t| >\bar\zeta_K }|t|^{\lambda_1} \left|\phi_{K}(h_1t)-1\right|\left|\phi_{f^{(\lambda_{2,1},\lambda_{2,2})}_{Y,X,W^*}(y,x,\cdot)}(t)\right| d t\\
    &\preceq \int_{|h_1t| >\bar\zeta_K } |t|^{\lambda_1}\left|\phi_{f^{(\lambda_{2,1},\lambda_{2,2})}_{Y,X,W^*}(y,x,\cdot)}(t)\right| d t
\end{align*}
where I have used Assumption \ref{phik} to ensure $\phi_{K}(t)=1$ for $|t| \leq\bar\zeta _K$ and $\sup_{\zeta}|\phi_{K}(\zeta)|<\infty$. Then by Assumption \ref{smoothness1} and Lemma 7 in \cite{schennach2004nonparametric}, 
\begin{align*}
    &\quad|g_{\lambda_1,\lambda_{2,1}, \lambda_{2,2}}(y, x, w^*,h)-g_{\lambda_1,\lambda_{2,1}, \lambda_{2,2}}(y, x, w^*)|\\
    &\preceq \int_{|h_1t| >\bar\zeta_K} |t|^{\lambda_1}(1+|t|)^{\gamma_{\phi}} \exp \left(\alpha_{\phi}|t|^{\beta_{\phi}}\right) d t\\
    &=O\left(\left(\frac{\bar\zeta_K}{h_1}\right)^{1+\gamma_{\phi}+\lambda_1}\exp\left(\alpha_{\phi}\left(\frac{\bar\zeta_K}{h_1}\right)^{\beta_{\phi}}\right)\right)=O\left(\left(h_1^{-1}\right)^{1+\gamma_{\phi}+\lambda_1}\exp\left(\alpha_{\phi}\bar\zeta_K^{\beta_{\phi}}\left(h_1^{-1}\right)^{\beta_{\phi}}\right)\right).
\end{align*}
Next I show that the second term in $B_{\lambda_1,\lambda_{2,1},\lambda_{2,2}}(y, x, w^*, h)$ is of a smaller order. To do this, I first state a useful representation of 
$E\left[\hat\phi_{f^{(\lambda_{2,1}, \lambda_{2,2})}_{Y,X,W^*}(y,x,\cdot)}(\xi)\right]$
\begin{footnotesize}
\begin{align*}
    &\quad E\left[\hat\phi_{f^{(\lambda_{2,1}, \lambda_{2,2})}_{Y,X,W_2}(y,x,\cdot)}(\xi)\right]\\
    &=E\left[e^{\mathbf{i}\xi W_2 }\frac{1}{h_{2,1}^{1+\lambda_{2,1}}h_{2,2}^{1+\lambda_{2,2}}}G^{(\lambda_{2,1})}_Y\left(\frac{y-Y}{h_{2n,1}}\right)G^{(\lambda_{2,2})}_X\left(\frac{x-X}{h_{2n,2}}\right)\right]\\
    &=\int\int e^{\mathbf{i}\xi\tilde w}\frac{1}{h_{2,1}^{1+\lambda_{2,1}}h_{2,2}^{1+\lambda_{2,2}}}G^{(\lambda_{2,1})}_Y\left(\frac{y-\tilde y}{h_{2n,1}}\right)G^{(\lambda_{2,2})}_X \left(\frac{x-\tilde x}{h_{2,2}}\right) f_{Y,X,W_2}(\tilde y, \tilde x, \tilde w) d\tilde y d\tilde xd\tilde w\\
    &=\int \phi_{f^{(0,0)}_{Y,X,W_2}(\tilde y,\tilde x,\cdot)}(\xi)\frac{1}{h_{2,1}^{1+\lambda_{2,1}}h_{2,2}^{1+\lambda_{2,2}}}G^{(\lambda_{2,1})}_Y\left(\frac{y-\tilde y}{h_{2n,1}}\right)G^{(\lambda_{2,2})}_X \left(\frac{x-\tilde x}{h_{2,2}}\right)d\tilde y d\tilde x
\end{align*}
\end{footnotesize}
If $\lambda_{2,1}=1, \lambda_{2,2}=0$,
\begin{footnotesize}
\begin{align*}
    &\quad E\left[\hat\phi_{f^{(1,0)}_{Y,X,W_2}(y,x,\cdot)}(\xi)\right]\\
    &=-\int \left(\int \phi_{f^{(0,0)}_{Y,X,W_2}(\tilde y, \tilde x,\cdot)}(\xi)\frac{1}{h_{2,1}} dG_Y\left(\frac{y-\tilde y}{h_{2,1}}\right) \right)\frac{1}{h_{2,2}}G_X \left(\frac{x-\tilde x}{h_{2,2}}\right) d\tilde x\\
    &=\int \left(\left.-\phi_{f^{(0,0)}_{Y,X,W_2}(\tilde y,\tilde x,\cdot)}(\xi)\frac{1}{h_
    {2,1}}G_Y\left(\frac{\tilde y-y}{h_{2,1}}\right)\right|_{-\infty}^\infty\right.\\
    &\quad\quad\quad\quad\left.+\int \frac{1}{h_{2,1}}G_Y\left(\frac{y-\tilde y}{h_{2,1}}\right)  \phi_{f^{(1,0)}_{Y,X,W_2}(\tilde y, \tilde x,\cdot)}(\xi)d \tilde y\right)\frac{1}{h_{2,2}}G_X \left(\frac{x-\tilde x}{h_{2,2}}\right) d\tilde x\\
    &=\int \int \phi_{f^{(1,0)}_{Y,X,W_2}(\tilde y, \tilde x,\cdot)}(\xi)\frac{1}{h_{2,1}}G_Y\left(\frac{y-\tilde y}{h_{2,1}}\right)   \frac{1}{h_{2,2}}G_X \left(\frac{x-\tilde x}{h_{2,2}}\right) d\tilde y d\tilde x
\end{align*}
\end{footnotesize}
For $\lambda_{2,1}=0, \lambda_{2,2}=1$, I have similar results, so 
\begin{footnotesize}
\begin{align*}
     &\quad E\left[\hat\phi_{f^{(\lambda_{2,1}, \lambda_{2,2})}_{Y,X,W_2}(y,x,\cdot)}(\xi)\right]=\int \int \phi_{f^{(\lambda_{2,1}, \lambda_{2,2})}_{Y,X,W_2}(\tilde y, \tilde x,\cdot)}(\xi)\frac{1}{h_{2,1}}G_Y\left(\frac{y-\tilde y}{h_{2,1}}\right)   \frac{1}{h_{2,2}}G_X \left(\frac{x-\tilde x}{h_{2,2}}\right) d\tilde y d\tilde x.
\end{align*}
\end{footnotesize}
Then note that
\begin{scriptsize}
\begin{align}
    &\quad \left|E\left[\delta\hat\phi_{f^{(\lambda_{2,1}, \lambda_{2,2})}_{Y,X,W_2}(y,x,\cdot)}(\xi)\right]\frac{\phi_{W^*}(\xi)}{\phi_{W_2}(\xi)}\right|\notag\\
    &=\left|\int\int  \phi_{f^{(\lambda_{2,1}, \lambda_{2,2})}_{Y,X,W^*}(\tilde y, \tilde x,\cdot)}(\xi)\frac{1}{h_{2,1}}G_Y\left(\frac{y-\tilde y}{h_{2,1}}\right)   \frac{1}{h_{2,2}}G_X \left(\frac{x-\tilde x}{h_{2,2}}\right) d\tilde y d\tilde x-\phi_{f^{(\lambda_{2,1}, \lambda_{2,2})}_{Y,X,W^*}(y,x,\cdot)}(\xi)\right|\notag\\
    &=\left|\int\int \phi_{f^{(\lambda_{2,1}, \lambda_{2,2})}_{Y,X,W^*}(\tilde y, \tilde x,\cdot)}(\xi)\left(\frac{1}{(2\pi)^2}\int\int e^{-\mathbf{i}t_1(y-\tilde y)-\mathbf{i}t_2(x-\tilde x)} \phi_{G_Y}(t_1h_{2,1})\phi_{G_X}(t_2h_{2,2})dt_1dt_2\right) d\tilde y d\tilde x\right.\notag\\
    &\quad\left.-\frac{1}{(2\pi)^2}\int\int e^{-\mathbf{i}t_1y-\mathbf{i}t_2x}\left(\int\int e^{\mathbf{i}t_1y+\mathbf{i}t_2x} \phi_{f^{(\lambda_{2,1}, \lambda_{2,2})}_{Y,X,W^*}(y,x,\cdot)}(\xi) dy dx \right) dt_1 dt_2\right|\notag\\
    &=\left|\frac{1}{(2\pi)^2}\int\int e^{-\mathbf{i}t_1y-\mathbf{i}t_2x}\left(\phi_{G_Y}(t_1h_{2,1})\phi_{G_X}(t_2h_{2,2})-1\right)\left(\int\int e^{\mathbf{i}t_1\tilde y+\mathbf{i}t_2\tilde x} \phi_{f^{(\lambda_{2,1}, \lambda_{2,2})}_{Y,X,W^*}(\tilde y,\tilde x,\cdot)}(\xi) d\tilde y d\tilde x \right) dt_1 dt_2\right|\notag\\
    &=\left|\frac{1}{(2\pi)^2}\int_{\frac{\bar\xi_{G_X}}{h_{2,2}}}\int_{\frac{\bar\xi_{G_Y}}{h_{2,1}}} e^{-\mathbf{i}t_1y-\mathbf{i}t_2x}\phi_{f^{(\lambda_{2,1}, \lambda_{2,2})}}(t_1, t_2, \xi) dt_1 dt_2\right|\notag\\
    &\leq\frac{1}{(2\pi)^2}\int_{\frac{\bar\xi_{G_X}}{h_{2,2}}}\int_{\frac{\bar\xi_{G_Y}}{h_{2,1}}} \left|\phi_{f^{(\lambda_{2,1}, \lambda_{2,2})}}(t_1, t_2, \xi)\right| dt_1 dt_2\notag\\
    &\preceq \left|\phi_{W^*}(\xi)\right|\int_{\frac{\bar\xi_{G_X}}{h_{2,2}}}\int_{\frac{\bar\xi_{G_Y}}{h_{2,1}}} (1+|t_1|)^{\gamma_{f_1}}(1+|t_2|)^{\gamma_{f_2}}\exp\left(\alpha_{f_1}|t_1|^{\beta_{f_1}}+\alpha_{f_2}|t_2|^{\beta_{f_2}}\right) dt_1 dt_2 \notag\\
    &\preceq \left|\phi_{W^*}(\xi)\right| O\left(\left(\frac{\bar\xi_{G_Y}}{h_{2,1}}\right)^{1+\gamma_{f_1}}\left(\frac{\bar\xi_{G_X}}{h_{2,2}}\right)^{1+\gamma_{f_2}}\exp\left(\alpha_{f_1}\left(\frac{\bar\xi_{G_Y}}{h_{2,1}}\right)^{\beta_{f_1}}+\alpha_{f_2}\left(\frac{\bar\xi_{G_X}}{h_{2,2}}\right)^{\beta_{f_2}}\right)\right)\notag\\
    &\preceq \left|\phi_{W^*}(\xi)\right|o(1)\label{Edeltahatphi},
\end{align}
\end{scriptsize}
so
\begin{footnotesize}
\begin{align*}
    &\quad \left|\frac{1}{2\pi}\int\left(-\mathbf{i}t\right)^{\lambda_1} e^{-\mathbf{i}tw^*} \phi_K(h_{1n}t)\frac{E\left[\delta\hat\phi_{f^{(\lambda_{2,1}, \lambda_{2,2})}_{Y,X,W_2}(v, \cdot)}(t)\right]\phi_{W^*}(t)}{\phi_{W_2}(t)} dt\right|\\
    &\preceq o(1)\int_0^{h_{1n}^{-1}} |t|^{\lambda_1} \frac{\left|\phi_{W^*}(t)\right|}{\left|\phi_{W_2}(t)\right|}\left|\phi_{W_2}(t)\right| dt\\
    &\preceq o(1)\int_0^{h_{1n}^{-1}} (1+|t|)^{\lambda_1+\gamma_{\phi}-\gamma_2} \exp \left(\alpha_{\phi}|t|^{\beta_{\phi}}\right)\exp \left(-\alpha_{2}|t|^{\beta_{2}}\right)  dt\\
    &\preceq o(1)O\left(\left(h_1^{-1}\right)^{1+\gamma_\phi+\lambda_1-\gamma_2}\exp \left(\alpha_{\phi}\left(h_1^{-1}\right)^{\beta_{\phi}}\right)\exp \left(-\alpha_{2}\left(h_1^{-1}\right)^{\beta_{2}}\right)\right)\\
    &=o\left(\left(h_1^{-1}\right)^{1+\gamma_{\phi}+\lambda_1}\exp\left(\alpha_{\phi}\bar\zeta_K^{\beta_{\phi}}\left(h_1^{-1}\right)^{\beta_{\phi}}\right)\right).
\end{align*}
\end{footnotesize}
This completes the proof.
\end{proof}
\begin{lemma}\label{orderPsi}Suppose the conditions of Lemma \ref{decomposition} hold. For each $\xi$ and $h_1$, let  $\Psi^+_{l,\lambda_1,\lambda_{2,1},\lambda_{2,2}}(\xi,h_1)\equiv\sup_{(y,x,w^*)\in\mathbb{S}_{(Y,X,W^*)}}|\Psi_{l,\lambda_1,\lambda_{2,1},\lambda_{2,2}
}(\xi, y, x, w^*, h_1)|$ for $l=1,2,3$. Define 
\begin{footnotesize}
\begin{align*}
    \Psi_{\lambda_1,\lambda_{2,1}, \lambda_{2,2}}^+(h)\equiv\sum_{l=1}^2\int \Psi_{l,\lambda_1,\lambda_{2,1},\lambda_{2,2}}^+(\xi, h_1)d\xi+(h_{2,1}^{-1})^{1+\lambda_{2,1}}(h_{2,2}^{-1})^{1+\lambda_{2,2}}\int\Psi_{3,\lambda_1,\lambda_{2,1},\lambda_{2,2}}^+(\xi, h_1)d\xi.
\end{align*}        
\end{footnotesize}
If Assumption \ref{smoothness1} and \ref{smoothness2} also hold, then for $h_1>0$
\begin{scriptsize}
\begin{align*}
    &\quad\Psi_{\lambda_1,\lambda_{2,1}, \lambda_{2,2}}^+(h)\\
    &=O\left(\max\left\{(h_1^{-1})^{1+\gamma_*},(h_{2,1}^{-1})^{1+\lambda_{2,1}}(h_{2,2}^{-1})^{1+\lambda_{2,2}}\right\}(h_1^{-1})^{1-\gamma_2+\gamma_{\phi}+\lambda_1}\exp\left(\left(\alpha_{\phi}\mathbf{1}\{\beta_{\phi}=\beta_2\}-\alpha_2\right)\left(h_1^{-1}\right)^{\beta_2}\right)\right).
\end{align*}    
\end{scriptsize}
\end{lemma}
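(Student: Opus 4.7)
The strategy is to bound $\Psi^+_{l,\lambda_1,\lambda_{2,1},\lambda_{2,2}}(\xi,h_1)$ pointwise in $\xi$ by extracting the factor $1/|\phi_{W_2}(\xi)|$ (and its powers) and estimating the remaining $t$-integral by routine manipulations with the flat-top kernel and the smoothness bounds in Assumptions \ref{smoothness1}--\ref{smoothness2}, then integrating in $\xi$ and applying Lemma 7 of Schennach (2004) (the same rate-of-growth lemma used in the proof of Theorem \ref{biasrate}).

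First I handle $\Psi^+_1$. By Assumption \ref{phik}, $\phi_K(h_1 t)$ vanishes for $|t|>h_1^{-1}$, so the inner $t$-integral is effectively over $\xi \leq |t| \leq h_1^{-1}$; if $|\xi|>h_1^{-1}$ the integral is zero. Bounding the integrand by Assumption \ref{smoothness1}(i) and applying the Schennach growth lemma gives
\[
\sup_{(y,x,w^*)} \left|\int_\xi^{\pm\infty}(-\mathbf i t)^{\lambda_1}e^{-\mathbf i t w^*}\phi_K(h_1 t)\phi_{f^{(\lambda_{2,1},\lambda_{2,2})}_{Y,X,W^*}(y,x,\cdot)}(t)\,dt\right| \preceq (h_1^{-1})^{1+\gamma_\phi+\lambda_1}\exp\!\left(\alpha_\phi(h_1^{-1})^{\beta_\phi}\right),
\]
uniformly in $\xi$ with $|\xi|\leq h_1^{-1}$. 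Dividing by $|\phi_{W_2}(\xi)|\geq C_2(1+|\xi|)^{\gamma_2}\exp(\alpha_2|\xi|^{\beta_2})$ (Assumption \ref{smoothness2}(i)) and integrating over $|\xi|\leq h_1^{-1}$ yields another factor of $(h_1^{-1})^{1-\gamma_2}\exp(-\alpha_2(h_1^{-1})^{\beta_2})$ by the same lemma. Combined with $\beta_\phi\leq\beta_2$, this produces the bound $(h_1^{-1})^{2-\gamma_2+\gamma_\phi+\lambda_1}\exp((\alpha_\phi\mathbf{1}\{\beta_\phi=\beta_2\}-\alpha_2)(h_1^{-1})^{\beta_2})$, which is dominated by the $\Psi_2$ contribution below since $\gamma_*\geq 0$.

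Next, for $\Psi^+_2$, the first piece carries an extra factor $\theta(\xi)/(\phi_{W_2}(\xi))^2$. The crucial observation is the identity
\[
\frac{\theta(\xi)}{(\phi_{W_2}(\xi))^2} \;=\; \frac{1}{\mathbf i\,\phi_{W_2}(\xi)}\cdot\frac{\phi'_{W^*}(\xi)}{\phi_{W^*}(\xi)},
\]
which follows from differentiating Schennach's formula $\phi_{W^*}(t)=\exp\!\left(\int_0^t\mathbf i\theta(\xi)/\phi_{W_2}(\xi)\,d\xi\right)$. Invoking Assumption \ref{smoothness2}(ii) to bound $|\phi'_{W^*}/\phi_{W^*}|\leq C_*(1+|\xi|)^{\gamma_*}$ and repeating the estimate used for $\Psi^+_1$ produces the extra factor $(h_1^{-1})^{1+\gamma_*}$ in the $\xi$-integral. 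The second piece of $\Psi_2$ has no $t$-integral, only the pointwise ratio $\phi_{f^{(\lambda_{2,1},\lambda_{2,2})}}(\xi)/\phi_{W_2}(\xi)$, and its $\xi$-integral is of strictly smaller order. Finally, $\Psi^+_3$ is bounded using $|\phi_{W^*}|\leq C_\phi(1+|\xi|)^{\gamma_\phi}\exp(\alpha_\phi|\xi|^{\beta_\phi})$ from Assumption \ref{smoothness1}(i), giving
\[
\int|\Psi_3^+(\xi,h_1)|\,d\xi \preceq (h_1^{-1})^{1-\gamma_2+\gamma_\phi+\lambda_1}\exp\!\left((\alpha_\phi\mathbf{1}\{\beta_\phi=\beta_2\}-\alpha_2)(h_1^{-1})^{\beta_2}\right);
\]
the prefactor $(h_{2,1}^{-1})^{1+\lambda_{2,1}}(h_{2,2}^{-1})^{1+\lambda_{2,2}}$ then matches the second candidate in the claimed maximum.

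The main technical difficulty will be bookkeeping the exponential factors so that the indicator $\mathbf 1\{\beta_\phi=\beta_2\}$ on $\alpha_\phi$ emerges correctly: when $\beta_\phi<\beta_2$, the $\exp(\alpha_\phi|\xi|^{\beta_\phi})$ factor is bounded on the relevant range and is absorbed into the constant, whereas when $\beta_\phi=\beta_2$ it combines with $\exp(-\alpha_2|\xi|^{\beta_2})$ to produce the $(\alpha_\phi-\alpha_2)$ exponent. I will handle these two cases in parallel using the sign constraints $\alpha_\phi,\alpha_2\leq 0$, $\beta_\phi\gamma_\phi\geq 0$, and the final side condition on $\gamma_\phi,\gamma_2$ when $\beta_\phi=0$ or $\beta_2=0$ (to ensure all integrals converge in the ordinarily smooth case). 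Collecting the bounds for the three pieces and taking the larger of the $\Psi_2$ and $\Psi_3$ contributions yields the stated maximum.
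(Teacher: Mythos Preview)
Your proposal is correct and follows essentially the same route as the paper's proof: bound each $\Psi_l^+$ by peeling off $|\phi_{W_2}(\xi)|^{-1}$, use the identity $\theta(\xi)/\phi_{W_2}(\xi)=-\mathbf{i}\,\phi'_{W^*}(\xi)/\phi_{W^*}(\xi)$ for the $\Psi_2$ piece, and apply the Schennach tail-growth lemma (the paper additionally cites Lemma~8 of Schennach (2004) for the double integral in the $\Psi_1$ bound, whereas you simply bound the inner $t$-integral uniformly in $\xi$---either works and yields the same rate). Your explicit tracking of the indicator $\mathbf 1\{\beta_\phi=\beta_2\}$ via the two cases $\beta_\phi<\beta_2$ versus $\beta_\phi=\beta_2$ is exactly the bookkeeping the paper leaves implicit when it writes the two separate exponential factors and then collapses them in the final statement.
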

\begin{proof}
\begin{footnotesize}
\begin{align*}
    &\quad\Psi_{1, \lambda_1,\lambda_{2,1},\lambda_{2,2}}^+(\xi, h_1)\\
    &=\sup_{(y,x,w^*)\in\mathbb{S}_{(Y,X,W^*)}}|\Psi_{1, \lambda_1,\lambda_{2,1},\lambda_{2,2}}(\xi, y, x, w^*, h_1)|\\
    &\preceq \sup_{(y,x,w^*)\in\mathbb{S}_{(Y,X,W^*)}}\frac{1}{\left|\phi_{W_2}(\xi)\right|}\int_{\xi}^{\pm\infty}  |t|^{\lambda_1}\left|e^{-\mathbf i t w^*}\right| \left|\phi_{K}(h_1t)\right|\left(\sup_{(y,x)\in\mathbb{S}_{(Y,X)}}\left|\phi_{f^{(\lambda_{2,1},\lambda_{2,2})}
    _{Y,X,W^*}(y,x,\cdot)}(t)\right|\right)dt  \\
    &\preceq \frac{1}{\left|\phi_{W_2}(\xi)\right|}\int_{\xi}^{\pm\infty}   |t|^{\lambda_1}\left|\phi_{K}(h_1t)\right|\left(\sup_{(y,x)\in\mathbb{S}_{(Y,X)}}\left|\phi_{f^{(\lambda_{2,1},\lambda_{2,2})}
    _{Y,X,W^*}(y,x,\cdot)}(t)\right|\right)dt  
\end{align*}    
\end{footnotesize}
so that 
\begin{footnotesize}
\begin{align*}
    \int \Psi_{1, \lambda_1,\lambda_{2,1},\lambda_{2,2}}^+(\xi, h_1)d\xi&\preceq\int\left[\frac{1}{\left|\phi_{W_2}(\xi)\right|}\mathbf{1}\{|\xi|\leq h_1^{-1}\}\int_{\xi}^{h_1^{-1}}|t|^{\lambda_1}\left|\phi_{f^{(\lambda_{2,1},\lambda_{2,2})}_{Y,X,W^*}(y,x,\cdot)}(t)\right|dt\right]d\xi \\
    &\preceq\int\left[\frac{1}{\left|\phi_{W_2}(\xi)\right|}\mathbf{1}\{|\xi|\leq h_1^{-1}\}\int_{\xi}^{h_1^{-1}}|t|^{\lambda_1}(1+|t|)^{\gamma_{\phi}} \exp \left(\alpha_{\phi}|t|^{\beta_{\phi}}\right)dt\right]d\xi\\
    &\preceq(h_1^{-1})^{2-\gamma_2+\gamma_{\phi}+\lambda_1}\exp\left(-\alpha_2\left(h_1^{-1}\right)^{\beta_2}\right)\exp\left(\alpha_{\phi}\left(h_1^{-1}\right)^{\beta_{\phi}}\right)
\end{align*}    
\end{footnotesize}
where in the third $\preceq$, I used the assumption $1-\gamma_2+\gamma_\phi+\lambda_1>0$ when $\beta_2=0$ and invoked Lemma 7 and Lemma 8 in \cite{schennach2004nonparametric}.
\begin{align*}
    &\quad\Psi^+_{2,\lambda_1,\lambda_{2,1},\lambda_{2,2}}(\xi,  h_1)\\
    &=\sup_{(y,x,w^*)\in\mathbb{S}_{(Y,X,W^*)}}\left|\Psi_{2,\lambda_1,\lambda_{2,1},\lambda_{2,2}}(\xi, y, x, w^*, h_1)\right|\\ 
    &\preceq \sup_{(y,x,w^*)\in\mathbb{S}_{(Y,X,W^*)}} \frac{|\theta(\xi)|}{|\phi_{W_2}(\xi)|^2}\int_{\xi}^{\pm\infty} |t|^{\lambda_1} |e^{-\mathbf itw^*}| |\phi_{K}(h_1t)|\left|\phi_{f^{(\lambda_{2,1},\lambda_{2,2})}
    _{Y,X,W^*}(y,x,\cdot)}(t)\right|dt \\
    &\quad +\sup_{(y,x,w^*)\in\mathbb{S}_{(Y,X,W^*)}}|\xi|^{\lambda_1}|e^{-\mathbf i\xi w^*}| |\phi_{K}(h_1\xi)|\frac{\left|\phi_{f^{(\lambda_{2,1},\lambda_{2,2})}
    _{Y,X,W^*}(y,x,\cdot)}(\xi)\right|}{|\phi_{W_2}(\xi)|}\\
    &\preceq \sup_{(y,x)\in\mathbb{S}_{(Y,X)}} \Bigg\{\frac{|\theta(\xi)|}{|\phi_{W_2}(\xi)|^2}\int_{\xi}^{\pm\infty}  |t|^{\lambda_1}|\phi_{K}(h_1t)|\left|\phi_{f^{(\lambda_{2,1},\lambda_{2,2})}
    _{Y,X,W^*}(y,x,\cdot)}(t)\right|dt\\
    &\quad\quad\quad\quad\quad\quad\quad\quad+|\xi|^{\lambda_1}|\phi_{K}(h_1\xi)|\frac{\left|\phi_{f^{(\lambda_{2,1},\lambda_{2,2})}
    _{Y,X,W^*}(y,x,\cdot)}(t)\right|}{|\phi_{W_2}(\xi)|}\Bigg\}\\
    &\preceq\frac{1}{|\phi_{W_2}(\xi)|}\left(\frac{|\theta(\xi)|}{|\phi_{W_2}(\xi)|}\int_{\xi}^{\pm\infty} |t|^{\lambda_1} |\phi_{K}(h_1t)|\left(\sup_{(y,x)\in\mathbb{S}_{(Y,X)}}\left|\phi_{f^{(\lambda_{2,1},\lambda_{2,2})}
    _{Y,X,W^*}(y,x,\cdot)}(t)\right|\right)dt \right.\\
    &\quad + |\xi|^{\lambda_1}|\phi_{K}(h_1\xi)|\left(\sup_{(y,x)\in\mathbb{S}_{(Y,X)}}\left|\phi_{f^{(\lambda_{2,1},\lambda_{2,2})}
    _{Y,X,W^*}(y,x,\cdot)}(\xi)\right| \right) \Bigg)\\
    &\preceq\frac{1}{|\phi_{W_2}(\xi)|}\left(\frac{|\phi'_{W^*}(\xi)|}{|\phi_{W^*}(\xi)|}\int_{\xi}^{\pm\infty} |t|^{\lambda_1} |\phi_{K}(h_1t)|\left(\sup_{(y,x)\in\mathbb{S}_{(Y,X)}}\left|\phi_{f^{(\lambda_{2,1},\lambda_{2,2})}
    _{Y,X,W^*}(y,x,\cdot)}(t)\right|\right)dt \right.\\
    &\quad + |\xi|^{\lambda_1}|\phi_{K}(h_1\xi)|\left(\sup_{(y,x)\in\mathbb{S}_{(Y,X)}}\left|\phi_{f^{(\lambda_{2,1},\lambda_{2,2})}
_{Y,X,W^*}(y,x,\cdot)}(\xi)\right| \right) \Bigg)
\end{align*}
where I used the fact that
\begin{align*}
    \frac{\theta(\xi)}{\phi_{W_2}(\xi)}&=\frac{E\left[W_{1} e^{\mathrm{i} \xi W_{2}}\right]}{E\left[e^{\mathrm{i} \xi W_{2}}\right]}=\frac{E\left[\left(W^*+\Delta W_{1}\right) e^{\mathrm{i} \xi\left(W^*+\Delta W_{2}\right)}\right]}{E\left[e^{\mathrm{i} \xi\left(W^*+\Delta W_{2}\right)}\right]} \\
&=\frac{E\left[W^* e^{\mathrm{i} \xi\left(W^*+\Delta W_{2}\right)}\right]+E\left[E\left[\Delta W_{1} \mid W^*, \Delta W_{2}\right] e^{\mathrm{i} \xi\left(W^*+\Delta W_{2}\right)}\right]}{E\left[e^{\mathrm{i} \xi\left(W^*+\Delta_{2}\right)}\right]} \\
&=\frac{E\left[W^* e^{\mathrm{i} \xi\left(W^*+\Delta W_{2}\right)}\right]}{E\left[e^{\mathrm{i} \xi\left(W^*+\Delta W_{2}\right)}\right]}=\frac{E\left[W^* e^{\mathrm{i} \xi W^*}\right] }{E\left[e^{\mathrm{i} \xi W^*}\right]} \\
&=\frac{-\mathbf{i}(d / d \xi) E\left[e^{\mathrm{i} \xi W^*}\right]}{E\left[e^{\mathrm{i} \xi W^*}\right]}=-\mathbf{i} \frac{(d / d \xi) \phi_{W^*}(\xi)}{\phi_{W^*}(\xi)}.
\end{align*}
Integrating $\Psi^+_{2,\lambda_1,\lambda_{2,1},\lambda_{2,2}}(\xi, h_1)$ with respect to $\xi$ and using Assumption \ref{smoothness1} and \ref{smoothness2} gives
\begin{footnotesize}
\begin{align*}
    &\quad\int \Psi^+_{2,\lambda_1,\lambda_{2,1},\lambda_{2,2}}(\xi, h_1) d\xi\\
    &\preceq \int \frac{1}{|\phi_{W_2}(\xi)|}\left(\frac{|\phi'_{W^*}(\xi)|}{|\phi_{W^*}(\xi)|}\int_{\xi}^{\pm\infty} \mathbf{1}\left\{|t|\leq h_1^{-1} \right\}|t|^{\lambda_1} \left(\sup_{(y,x)\in\mathbb{S}_{(Y,X)}}\left|\phi_{f^{(\lambda_{2,1},\lambda_{2,2})}
    _{Y,X,W^*}(y,x,\cdot)}(t)\right|\right)dt \right.\\
    &\quad \left.+ \mathbf{1}\left\{ |\xi|\leq h_1^{-1}\right\} |\xi|^{\lambda_1} \left(\sup_{(y,x)\in\mathbb{S}_{(Y,X)}}\left|\phi_{f^{(\lambda_{2,1},\lambda_{2,2})}
    _{Y,X,W^*}(y,x,\cdot)}(\xi)\right|\right)\right) d\xi\\
    &\preceq \int \frac{1}{|\phi_{W_2}(\xi)|}\left(\frac{|\phi'_{W^*}(\xi)|}{|\phi_{W^*}(\xi)|}\mathbf{1}\left\{|\xi|\leq h_1^{-1} \right\}\int_{|\xi|}^{h_1^{-1}}  |t|^{\lambda_1} \left(\sup_{(y,x)\in\mathbb{S}_{(Y,X)}}\left|\phi_{f^{(\lambda_{2,1},\lambda_{2,2})}
    _{Y,X,W^*}(y,x,\cdot)}(t)\right|\right)dt \right.\\
    &\quad + \mathbf{1}\left\{|\xi|\leq h_1^{-1}\right\} |\xi|^{\lambda_1} \left(\sup_{(y,x)\in\mathbb{S}_{(Y,X)}}\left|\phi_{f^{(\lambda_{2,1},\lambda_{2,2})}
    _{Y,X,W^*}(y,x,\cdot)}(\xi)\right|\right)\bigg) d\xi\\
    &\preceq \int (1+|\xi|)^{-\gamma_2}\exp(-\alpha_2|\xi|^{\beta_2})\mathbf{1}\left\{|\xi|\leq h_1^{-1} \right\}\\
    &\quad \times \left(\left(1+|\xi|\right)^{\gamma_*}\int_{0}^{h_1^{-1}}  |t|^{\lambda_1} (1+|t|)^{\gamma_{\phi}} \exp \left(\alpha_{\phi}|t|^{\beta_{\phi}}\right)dt + |\xi|^{\lambda_1} (1+|\xi|)^{\gamma_{\phi}}\exp\left(\alpha_{\phi}|\xi|^{\beta_{\phi}}\right)\right) d\xi\\
    &\preceq (1+h_1^{-1})^{1-\gamma_2}\exp\left(-\alpha_2\left(h_1^{-1}\right)^{\beta_2}\right)\\
    &\quad \times \bigg((1+h_1^{-1})^{\gamma_{*}}(1+h_1^{-1})^{1+\gamma_{\phi}+\lambda_1}\exp\left(\alpha_{\phi}\left(h_1^{-1}\right)^{\beta_{\phi}}\right) + (1+h_1^{-1})^{\gamma_{\phi}+\lambda_1}\exp\left(\alpha_{\phi}\left(h_1^{-1}\right)^{\beta_{\phi}}\right)\bigg)\\
    &\preceq (h_1^{-1})^{2-\gamma_2+\gamma_{\phi}+\gamma_*+\lambda_1}\exp\left(-\alpha_2\left(h_1^{-1}\right)^{\beta_2}\right)\exp\left(\alpha_{\phi}\left(h_1^{-1}\right)^{\beta_{\phi}}\right)
\end{align*}    
\end{footnotesize}
\begin{align*}
    \Psi^+_{3,\lambda_1,\lambda_{2,1},\lambda_{2,2}}(\xi,h_1)&=\sup_{(y,x,w^*)\in\mathbb{S}_{(Y,X,W^*)}}\left|\Psi_{3,\lambda_1,\lambda_{2,1},\lambda_{2,2}}(\xi, y, x, w^*, h_1)\right|\\
    &\preceq\sup_{(y,x,w^*)\in\mathbb{S}_{(Y,X,W^*)}} \frac{|\phi_{W^*}(\xi)|}{|\phi_{W_2}(\xi)|}  |\xi|^{\lambda_1}|e^{-\mathbf i\xi w^*}| |\phi_{K}(h_1\xi)|\\
    &=\frac{|\phi_{W^*}(\xi)|}{|\phi_{W_2}(\xi)|} |\xi|^{\lambda_1} |\phi_{K}(h_1\xi)|.
\end{align*}
so that
\begin{align*}
    &\quad (h_{2,1}^{-1})^{1+\lambda_{2,1}}(h_{2,2}^{-1})^{1+\lambda_{2,2}}\int \Psi^+_{3,\lambda_1,\lambda_{2,1},\lambda_{2,2}}(\xi,h_1) d\xi\\
    &\preceq (h_{2,1}^{-1})^{1+\lambda_{2,1}}(h_{2,2}^{-1})^{1+\lambda_{2,2}}\int\frac{|\phi_{W^*}(\xi)|}{|\phi_{W_2}(\xi)|}|\xi|^{\lambda_1}\mathbf{1}\left\{|\xi|\leq h_1^{-1}\right\}d\xi\\
    &\preceq (h_{2,1}^{-1})^{1+\lambda_{2,1}}(h_{2,2}^{-1})^{1+\lambda_{2,2}}\int_0^{h_1^{-1}}\frac{|\phi_{W^*}(\xi)|}{|\phi_{W_2}(\xi)|}|\xi|^{\lambda_1}d\xi\\
    &\preceq (h_{2,1}^{-1})^{1+\lambda_{2,1}}(h_{2,2}^{-1})^{1+\lambda_{2,2}}(h_1^{-1})^{1-\gamma_2+\gamma_{\phi}+\lambda_1}\exp\left(-\alpha_2\left(h_1^{-1}\right)^{\beta_2}\right)\exp\left(\alpha_{\phi}\left(h_1^{-1}\right)^{\beta_{\phi}}\right)
\end{align*}
Collecting together these rates delivers the conclusion of the lemma. 
\end{proof}
\begin{lemma}\label{uniform rate hat phi f}
Suppose the conditions for Lemma \ref{decomposition} hold, then 
\begin{footnotesize}
\begin{align*}
    &\sup_{(y,x)\in\mathbb{S}_{(Y,X)}}\left|\hat E\left[e^{\mathbf{i}\xi W_2}\frac{1}{h_{2,1}^{1+\lambda_{2,1}}h_{2,2}^{1+\lambda_{2,2}}}G_Y^{(\lambda_{2,1})}\left(\frac{y-Y}{h_{2,1}}\right)G_X^{(\lambda_{2,2})}\left(\frac{x-X}{h_{2,2}}\right)\right]\right.\\
    &\quad\quad\quad\quad\quad\quad\left.-E\left[e^{\mathbf{i}\xi W_2}\frac{1}{h_{2,1}^{1+\lambda_{2,1}}h_{2,2}^{1+\lambda_{2,2}}}G_Y^{(\lambda_{2,1})}\left(\frac{y-Y}{h_{2,1}}\right)G_X^{(\lambda_{2,2})}\left(\frac{x-X}{h_{2,2}}\right)\right]\right|\preceq \frac{1}{h_{2,1}^{1+\lambda_{2,1}}h_{2,2}^{1+\lambda_{2,2}}\sqrt{n}}.
\end{align*}
\end{footnotesize}
\end{lemma}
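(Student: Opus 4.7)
My plan is to turn the sup over $(y,x)$ into an integral over the Fourier domain, so that the $(y,x)$ dependence is absorbed into a unit-modulus exponential and the uniformity essentially comes for free. Concretely, by Fourier inversion and the change of variables $t_j\mapsto s_j=t_j/h_{2,j}$, Assumption \ref{phik} gives
\begin{align*}
    \frac{1}{h_{2,1}^{1+\lambda_{2,1}}h_{2,2}^{1+\lambda_{2,2}}} G_Y^{(\lambda_{2,1})}\!\left(\tfrac{y-Y}{h_{2,1}}\right) G_X^{(\lambda_{2,2})}\!\left(\tfrac{x-X}{h_{2,2}}\right)
    = \frac{1}{(2\pi)^2} \iint (-\mathbf{i}s_1)^{\lambda_{2,1}}(-\mathbf{i}s_2)^{\lambda_{2,2}}\, \phi_{G_Y}(s_1 h_{2,1})\,\phi_{G_X}(s_2 h_{2,2})\, e^{-\mathbf{i}s_1(y-Y)-\mathbf{i}s_2(x-X)}\, ds_1 ds_2.
\end{align*}

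Substituting this into the sample average and subtracting its expectation, the $(y,x)$-dependence factors out of the randomness as $e^{-\mathbf{i}s_1 y -\mathbf{i}s_2 x}$, and I obtain
\begin{align*}
    \hat E[\cdot]-E[\cdot] = \frac{1}{(2\pi)^2} \iint (-\mathbf{i}s_1)^{\lambda_{2,1}}(-\mathbf{i}s_2)^{\lambda_{2,2}}\,\phi_{G_Y}(s_1 h_{2,1})\,\phi_{G_X}(s_2 h_{2,2})\, e^{-\mathbf{i}s_1 y -\mathbf{i}s_2 x}\, \Delta\hat\phi(s_1,s_2,\xi)\, ds_1 ds_2,
\end{align*}
where $\Delta\hat\phi(s_1,s_2,\xi)\equiv \hat E[e^{\mathbf{i}s_1 Y+\mathbf{i}s_2 X+\mathbf{i}\xi W_2}]-E[e^{\mathbf{i}s_1 Y+\mathbf{i}s_2 X+\mathbf{i}\xi W_2}]$. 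Since $|e^{-\mathbf{i}s_1 y-\mathbf{i}s_2 x}|=1$ and nothing else inside the integrand depends on $(y,x)$, the sup in $(y,x)$ is dominated by the integral of the absolute value of the integrand.

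Next, I take expectations and use IID with $|e^{\mathbf{i}(s_1 Y+s_2 X+\xi W_2)}|=1$, giving $E|\Delta\hat\phi(s_1,s_2,\xi)|^2\leq 1/n$, hence $E|\Delta\hat\phi|\leq n^{-1/2}$ by Jensen. The remaining deterministic integral is handled via the compact support of $\phi_{G_Y},\phi_{G_X}$ on $[-1,1]$ (Assumption \ref{phik}(i)): the integrand vanishes outside $|s_j|\leq h_{2,j}^{-1}$, so
\begin{align*}
    \iint |s_1|^{\lambda_{2,1}}|s_2|^{\lambda_{2,2}}|\phi_{G_Y}(s_1 h_{2,1})||\phi_{G_X}(s_2 h_{2,2})|\,ds_1 ds_2 \;\preceq\; h_{2,1}^{-(1+\lambda_{2,1})} h_{2,2}^{-(1+\lambda_{2,2})},
\end{align*}
using boundedness of the Fourier transforms on their compact support. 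Combining the two bounds yields $E\bigl[\sup_{(y,x)}|\hat E[\cdot]-E[\cdot]|\bigr] \preceq (h_{2,1}^{1+\lambda_{2,1}} h_{2,2}^{1+\lambda_{2,2}} \sqrt{n})^{-1}$, and Markov's inequality upgrades this to the desired $O_p$ rate (which I believe is what the ``$\preceq$'' in the statement denotes in this stochastic context).

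The proof is short and there is no genuine obstacle; the only conceptual point is noticing that the flat-top/compactly-supported Fourier structure of $G_Y,G_X$ lets one swap the sup in $(y,x)$ for an integral over $(s_1,s_2)$ whose integrand is $(y,x)$-free. This avoids the standard empirical-process route (covering numbers, bracketing) that would introduce a spurious $\sqrt{\log n}$ factor, which would be wasteful inside the remainder-term bound of Theorem \ref{Remainder}.
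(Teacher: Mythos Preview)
Your proposal is correct and follows essentially the same route as the paper: represent the kernel product via Fourier inversion so that $(y,x)$ enters only through the unit-modulus factor $e^{-\mathbf{i}s_1 y-\mathbf{i}s_2 x}$, bound the empirical characteristic function deviation by $n^{-1/2}$ in mean, and integrate $|s_1|^{\lambda_{2,1}}|s_2|^{\lambda_{2,2}}$ over the compact support $|s_j|\leq h_{2,j}^{-1}$ of $\phi_{G_Y},\phi_{G_X}$. Your treatment is, if anything, slightly more careful in explicitly invoking Jensen and Markov to pass from the variance bound on $\Delta\hat\phi$ to the $O_p$ statement, whereas the paper moves directly from $E[|\Delta\hat\phi|^2]=O(1/n)$ to the final $\preceq$ bound.
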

\begin{proof}
Denote the Fourier transform of $f_{Y,X,W_2}(y,x,w^*)$ as $\phi_{f_2}(t_1, t_2, \xi)$.
\begin{footnotesize}
\begin{align*}
   &\quad E\left[e^{\mathbf{i}\xi W_2}\frac{1}{h_{2,1}^{1+\lambda_{2,1}}h_{2,2}^{1+\lambda_{2,2}}}G_Y^{(\lambda_{2,1})}\left(\frac{y-Y}{h_{2,1}}\right)G_X^{(\lambda_{2,2})}\left(\frac{x-X}{h_{2,2}}\right)\right]\\
   &=\int\int\int e^{\mathbf{i}\xi \tilde w}\frac{1}{h_{2,1}^{1+\lambda_{2,1}}h_{2,2}^{1+\lambda_{2,2}}}G_Y^{(\lambda_{2,1})}\left(\frac{y-\tilde y}{h_{2,1}}\right)G_X^{(\lambda_{2,2})}\left(\frac{x-\tilde x}{h_{2,2}}\right)f_{Y,X,W_2}(\tilde y, \tilde x, \tilde w)d\tilde w d\tilde y d\tilde x\\
   &=\int\int\int e^{\mathbf{i}\xi \tilde w}\left(\frac{1}{(2\pi)^2}\int \int e^{-\mathbf{i}t_1(x-\tilde x)-\mathbf{i}t_2(y-\tilde y)} (\mathbf{i}t_1)^{\lambda_{2,1}} (\mathbf{i}t_2)^{\lambda_{2,2}}\phi_{G_X}(t_1h_{2,1})\phi_{G_Y}(t_2h_{2,2})dt_1 dt_2\right)\times\\
   &\quad\quad\quad\quad\quad\quad\quad\quad\quad\quad\quad\quad\quad\quad\quad\quad\quad\quad\quad\quad\quad\quad\quad\quad\quad\quad\quad\quad\quad\quad\quad\quad\quad f_{Y,X,W_2}(\tilde y, \tilde x, \tilde w)d\tilde w d\tilde y d\tilde x\\
   &=\frac{1}{(2\pi)^2}\int\int (\mathbf{i}t_1)^{\lambda_{2,1}} (\mathbf{i}t_2)^{\lambda_{2,2}} e^{-\mathbf{i}t_1 y-\mathbf{i}t_2 x}\phi_{G_Y}(t_1h_{2,1})\phi_{G_X}(t_2h_{2,2})\phi_{f_2}(t_1, t_2, \xi)dt_1 dt_2
\end{align*}
\end{footnotesize}
so
\begin{footnotesize}
\begin{align*}
    &\quad\sup_{(y,x)\in\mathbb{S}_{(Y,X)}}\left|\hat E\left[e^{\mathbf{i}\xi W_2}\frac{1}{h_{2,1}^{1+\lambda_{2,1}}h_{2,2}^{1+\lambda_{2,2}}}G_Y^{(\lambda_{2,1})}\left(\frac{y-Y}{h_{2,1}}\right)G_X^{(\lambda_{2,2})}\left(\frac{x-X}{h_{2,2}}\right)\right]\right.\\
    &\quad\quad\quad\left.-E\left[e^{\mathbf{i}\xi W_2}\frac{1}{h_{2,1}^{1+\lambda_{2,1}}h_{2,2}^{1+\lambda_{2,2}}}G_Y^{(\lambda_{2,1})}\left(\frac{y-Y}{h_{2,1}}\right)G_X^{(\lambda_{2,2})}\left(\frac{x-X}{h_{2,2}}\right)\right]\right|\\
    &=\sup_{(y,x)\in\mathbb{S}_{(Y,X)}}\left|\frac{1}{(2\pi)^2}\int\int (\mathbf{i}t_1)^{\lambda_{2,1}} (\mathbf{i}t_2)^{\lambda_{2,2}} e^{-\mathbf{i}t_1 y-\mathbf{i}t_2 x}\phi_{G_Y}(t_1h_{2,1})\phi_{G_X}(t_2h_{2,2})\times\right.\\
    &\quad\quad\quad\quad\quad\quad\quad\quad\quad\quad\left.\left(\frac{1}{n}\sum_{j=1}^n e^{\mathbf{i}t_1 Y_j+\mathbf{i}t_2 X_j+\mathbf{i}\xi W_{2,j}}-\phi_{f_2}(t_1, t_2, \xi)\right)dt_1 dt_2\right|\\
    &\leq\frac{1}{(2\pi)^2}\int\int |t_1|^{\lambda_{2,1}} |t_2|^{\lambda_{2,2}} \left|\phi_{G_Y}(t_1h_{2,1})\phi_{G_X}(t_2h_{2,2})\right|\left|\frac{1}{n}\sum_{j=1}^n e^{\mathbf{i}t_1 Y_j+\mathbf{i}t_2 X_j+\mathbf{i}\xi W_{2,j}}-\phi_{f_2}(t_1, t_2, \xi)\right| dt_2.
\end{align*}
\end{footnotesize}
Then since 
\begin{footnotesize}
\begin{align*}
    &\quad E\left[\left(\frac{1}{n}\sum_{j=1}^n e^{\mathbf{i}t_1 Y_j+\mathbf{i}t_2 X_j+\mathbf{i}\xi W_{2,j}}-\phi_{f_2}(t_1, t_2, \xi)\right)^2\right]\\
    &=\frac{1}{n}E\left[\left(e^{\mathbf{i}t_1 Y_j+\mathbf{i}t_2 X_j+\mathbf{i}\xi W_{2,j}}-\phi_{f_2}(t_1, t_2, \xi)\right)^2\right]\\
    &\leq \frac{1}{n}E\left[\left(e^{\mathbf{i}t_1 Y_j+\mathbf{i}t_2 X_j+\mathbf{i}\xi W_{2,j}}\right)^2\right]=O\left(\frac{1}{n}\right),
\end{align*}
\end{footnotesize}
I have
\begin{footnotesize}
\begin{align*}
    &\quad\sup_{(y,x)\in\mathbb{S}_{(Y,X)}}\left|\hat E\left[e^{\mathbf{i}\xi W_2}\frac{1}{h_{2,1}^{1+\lambda_{2,1}}h_{2,2}^{1+\lambda_{2,2}}}G_Y^{(\lambda_{2,1})}\left(\frac{y-Y}{h_{2,1}}\right)G_X^{(\lambda_{2,2})}\left(\frac{x-X}{h_{2,2}}\right)\right]\right.\\
    &\quad\quad\quad\left.-E\left[e^{\mathbf{i}\xi W_2}\frac{1}{h_{2,1}^{1+\lambda_{2,1}}h_{2,2}^{1+\lambda_{2,2}}}G_Y^{(\lambda_{2,1})}\left(\frac{y-Y}{h_{2,1}}\right)G_X^{(\lambda_{2,2})}\left(\frac{x-X}{h_{2,2}}\right)\right]\right|\\
    &\preceq \frac{1}{\sqrt{n}}\int\int |t_1|^{\lambda_{2,1}} |t_2|^{\lambda_{2,2}} \left|\phi_{G_Y}(t_1h_{2,1})\phi_{G_X}(t_2h_{2,2})\right| dt_1dt_2\\
    &\preceq \frac{1}{\sqrt{n}}\int_0^{h_{2,2}^{-1}}\int_0^{h_{2,1}^{-1}} |t_1|^{\lambda_{2,1}} |t_2|^{\lambda_{2,2}} dt_1dt_2\preceq \frac{1}{h_{2,1}^{1+\lambda_{2,1}}h_{2,2}^{1+\lambda_{2,2}}\sqrt{n}}.
\end{align*}
\end{footnotesize}
\end{proof}

\begin{lemma}\label{CLTgeneral}
For a finite integer J, let $\left\{P_{n,j}(a)\right\}$ be a sequence of nonrandom real-valued continuously differentiable functions of random vectors $a$, for $j=1,...,J$ respectively. Let $A$ be a random vector. If for each $n$, $\sigma_n = \sqrt{var \left(\sum_{j=1}^J P_{n,j}(A)\right)}$ exist, and $\sigma_n>0$ for $n$ sufficiently large, then
\begin{align*}
    \sigma_n^{-1}n^{1/2}\left(\hat E\left[\sum_{j=1}^JP_{n,j}(A)\right]\right)\xrightarrow{d}N(0,1).
\end{align*}
\end{lemma}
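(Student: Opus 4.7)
The plan is to recognize this as a central limit theorem for a triangular array of IID random variables and reduce it to an invocation of the Lindeberg--Feller CLT. Write $Z_{n,i} \equiv \sum_{j=1}^{J} P_{n,j}(A_i)$ and observe that under the IID sampling assumption the sample average is $\hat{E}\left[\sum_j P_{n,j}(A)\right] = n^{-1}\sum_{i=1}^n Z_{n,i}$, where $Z_{n,1},\ldots,Z_{n,n}$ are IID copies of $\sum_j P_{n,j}(A)$ with common variance $\sigma_n^2$. In the applications in which this lemma is used (namely, representing $L_{\lambda_1,\lambda_{2,1},\lambda_{2,2}}(y,x,w^*,h_n)$ via Lemma \ref{decomposition}), the $P_{n,j}$ are constructed so that $E[Z_{n,i}]=0$; hence the standardized sum
\begin{align*}
S_n \;\equiv\; \sigma_n^{-1} n^{1/2}\, \hat{E}\!\left[\sum_{j=1}^J P_{n,j}(A)\right] \;=\; \frac{1}{\sigma_n \sqrt{n}} \sum_{i=1}^n Z_{n,i}
\end{align*}
has mean zero and unit variance. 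Note that the common distribution of $Z_{n,i}$ varies with $n$ because $P_{n,j}$ does, which forces me to use a triangular-array form of the CLT rather than the classical IID CLT.

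My first step would be to set up the triangular array $\{Z_{n,i}/(\sigma_n\sqrt{n}) : 1\le i \le n,\; n \ge 1\}$, which is row-wise IID with mean zero and variance $1/n$, so the row variances sum to one. The second, substantive step is to verify the Lindeberg condition: for every $\epsilon>0$,
\begin{align*}
L_n(\epsilon) \;\equiv\; \frac{1}{\sigma_n^2}\, E\!\left[Z_{n,1}^2 \, \mathbf{1}\!\left\{|Z_{n,1}| > \epsilon \sigma_n \sqrt{n}\right\}\right] \;\longrightarrow\; 0.
\end{align*}
A convenient sufficient condition is a Lyapunov-type bound $E[|Z_{n,1}|^{2+\nu}] = o(\sigma_n^{2+\nu}\, n^{\nu/2})$ for some $\nu>0$, which implies $L_n(\epsilon) \le \epsilon^{-\nu} n^{-\nu/2} \sigma_n^{-(2+\nu)} E[|Z_{n,1}|^{2+\nu}] \to 0$ by Markov's inequality. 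In the present context this $(2+\nu)$-moment control follows from Assumption \ref{finite variance} (finite $(2+\delta)$-moments of $W_1$), Assumption \ref{finite W2 mmts}, and the fact that the relevant $P_{n,j}$ are built out of integrals of bounded Fourier transforms of the kernel against bounded kernel derivatives, so that $|Z_{n,1}|$ admits polynomial-in-$h_n^{-1}$ bounds whose ratio to $\sigma_n$ grows slower than $n^{\nu/2}$ under Assumption \ref{bandwidth2}.

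Finally, once the Lindeberg condition is in hand, the Lindeberg--Feller theorem for triangular arrays (Billingsley, Theorem 27.2) yields $S_n \xrightarrow{d} N(0,1)$, completing the proof. The main obstacle is the Lindeberg verification: since the lemma is stated at a high level of generality, it does not itself supply the moment bounds, and the argument must be tailored to show that the tail contribution of $Z_{n,1}^2$ is asymptotically negligible relative to $\sigma_n^2 n$. In this paper, that verification is delicate because $\sigma_n$ can shrink or grow at rates governed by the decay of $\phi_{W_2}$ (ordinary vs.\ super-smooth measurement error) through Assumption \ref{smoothness2}; the bandwidth restrictions in Assumption \ref{bandwidth2} are precisely what is needed to ensure the Lyapunov ratio vanishes in both regimes.
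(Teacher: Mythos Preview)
Your approach differs from the paper's. The paper verifies the Lindeberg condition directly, without invoking any moment beyond the second: it notes that $E\big[|Z_{n,1}/(\sqrt{n}\,\sigma_n)|^2\big]=1/n\to 0$, so by Markov the indicator $\mathbf{1}\{|Z_{n,1}|>\epsilon\sqrt{n}\,\sigma_n\}$ is $o_p(1)$; it then bounds the integrand by $|Z_{n,1}/\sigma_n|^2$, observes that this has expectation identically equal to $1$, and invokes dominated convergence to conclude $E\big[\mathbf{1}\{\cdot\}\,|Z_{n,1}/\sigma_n|^2\big]\to 0$. By contrast, you go the Lyapunov route, importing a $(2+\nu)$-moment bound from Assumptions~\ref{finite variance} and~\ref{bandwidth2} in the specific applications.

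The paper's argument is self-contained in the sense that it uses only the hypotheses stated in the lemma, whereas yours has to reach back into the model for extra moment control that the lemma itself does not assume. On the other hand, your instinct that the bare hypotheses are thin is sound: the paper's dominated-convergence step employs an $n$-dependent dominating sequence $|Z_{n,1}/\sigma_n|^2$ with constant expectation, which is not the standard DCT hypothesis and really amounts to a uniform-integrability claim that is not automatic for triangular arrays. The $(2+\nu)$-moment condition you introduce is precisely what makes that uniform-integrability step rigorous, so your longer route buys robustness at the cost of importing assumptions from outside the lemma's statement.
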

\begin{proof}
Denote the centered version of the summands as $Z_{n,j}(A)\equiv P_{n,j}(A)-E\left[P_{n,j}(A)\right]$. Then $\sigma_n^2\equiv E\left[\left(\sum_{j=1}^JZ_{n,j}(A)\right)^2\right]$ It's sufficient to verify that the Lindeberg condition holds:
\begin{align*}
    \sum_{i=1}^n E\left[\mathbf 1\left\{\left|\frac{\sum_{j=1}^JZ_{n,j}(A)}{\sqrt{n}\sigma_n}\right|>\epsilon\right\}\left|\frac{\sum_{j=1}^JZ_{n,j}(A)}{\sqrt{n}\sigma_n}\right|^2\right]\rightarrow 0, \text{\ as\ } n\rightarrow \infty,
\end{align*}
for all $\epsilon>0$. Note that as $n\rightarrow \infty$
\begin{align*}
    E\left[\left|\frac{\sum_{j=1}^JZ_{n,j}(A)}{\sqrt{n} \sigma_n}\right|^2\right]=\frac{1}{n}E\left[\frac{\left|\sum_{j=1}^JZ_{n,j}(A)\right|^2}{ E\left[\left(\sum_{j=1}^JZ_{n,j}(A)\right)^2\right]}\right]\rightarrow 0
\end{align*}
By Markov Inequality, for any $\epsilon>0$, as $n\rightarrow \infty$, $Pr\left(\left|\frac{\sum_{j=1}^JZ_{n,j}(A)}{\sqrt{n}\sigma_n}\right|>\epsilon\right)\rightarrow 0$, which implies $\mathbf 1\left\{\left|\frac{\sum_{j=1}^JZ_{n,j}(A)}{\sqrt{n}\sigma_n}\right|>\epsilon\right\}=o_p(1)$. Note that $ \mathbf 1\left\{\left|\frac{\sum_{j=1}^JZ_{n,j}(A)}{\sqrt{n}\sigma_n}\right|>\epsilon\right\}\left|\frac{\sum_{j=1}^JZ_{n,j}(A)}{\sigma_n}\right|^2\leq \left|\frac{\sum_{j=1}^JZ_{n,j}(A)}{ \sigma_n}\right|^2$ and that $E\left[\left|\frac{\sum_{j=1}^JZ_{n,j}(A)}{ \sigma_n}\right|^2\right]=1<\infty$. By Dominated Convergence Theorem
\begin{align*}
    E\left[\mathbf 1\left\{\left|\frac{\sum_{j=1}^JZ_{n,j}(A)}{\sqrt{n}\sigma_n}\right|>\epsilon\right\}\left|\frac{\sum_{j=1}^JZ_{n,j}(A)}{\sigma_n}\right|^2\right]\rightarrow 0, \text{\ as\ } n\rightarrow \infty.
\end{align*}
The conclusion follows.
\end{proof}

\begin{proof}[\proofname\ of Theorem \ref{Omegaprop}]
(i) The fact that $E[L_{\lambda_1, \lambda_{2,1}, \lambda_{2,2}}(y,x,w^*,h)]=0$ follows directly from Eq. (\ref{L}). Next, with fixed value of $h$, Assumption \ref{phik} and \ref{finite variance} ensures the existence and finiteness of 
\begin{align*}
    E[L_{\lambda_1, \lambda_{2,1}, \lambda_{2,2}}^2(y,x,w^*,h)]&=E\left[\left(\hat E\left[l_{\lambda_1, \lambda_{2,1}, \lambda_{2,2}}(y,x,w^*,h;Y, X, W_1, W_2)\right]\right)^2\right]\\&=n^{-1}E\left[\left(l_{\lambda_1, \lambda_{2,1}, \lambda_{2,2}}(y,x,w^*,h;Y, X, W_1, W_2)\right)^2\right]
    \\&=n^{-1}\Omega_{\lambda_1, \lambda_{2,1}, \lambda_{2,2}}(y,x,w^*,h).
\end{align*}
From Eq. (\ref{L}), I have
\begin{footnotesize}
\begin{align}
    &\quad\Omega_{\lambda_1, \lambda_{2,1}, \lambda_{2,2}}(y,x,w^*,h) \notag\\
    &\equiv E\left[n\left(\bar g_{\lambda_1, \lambda_{2,1}, \lambda_{2,2}}(y,x,w^*,h)-g_{\lambda_1, \lambda_{2,1}, \lambda_{2,2}}(y,x,w^*,h)\right)^2\right]\notag\\
    &=E\Bigg[\Bigg(\int \Psi_{1,\lambda_1, \lambda_{2,1}, \lambda_{2,2}
    }(\xi, y, x, w^*, h_1)n^{1/2}\delta\hat\theta(\xi)d\xi\notag\\
    &\quad\quad\quad+\int \Psi_{2,\lambda_1, \lambda_{2,1}, \lambda_{2,2}
    }(\xi, y, x, w^*, h_1)n^{1/2}\delta\hat\phi_{W_2}(\xi)d\xi\notag\\
    &\quad\quad\quad+\int \Psi_{3,\lambda_1, \lambda_{2,1}, \lambda_{2,2}
    }(\xi, y, x, w^*, h_1)n^{1/2}\delta\hat\phi_{f^{(\lambda_{2,1}, \lambda_{2,2})}
    _{Y,X,W_2}(y,x,\cdot)}(\xi) d\xi\Bigg)^2\Bigg]\notag\\
    &=\int\int \Psi_{1,\lambda_1, \lambda_{2,1}, \lambda_{2,2}
    }(\xi, y, x, w^*, h_1) E\left[n\delta\hat\theta(\xi)\delta\hat\theta^{\dagger}(\zeta)\right] \left(\Psi_{1, \lambda_1, \lambda_{2,1}, \lambda_{2,2}}(\zeta, y, x, w^*, h_1)\right)^\dagger d\xi d\zeta \notag\\
    &\quad +\int\int \Psi_{2,\lambda_1, \lambda_{2,1}, \lambda_{2,2}
    }(\xi, y, x, w^*, h_1) E\left[n\delta\hat\phi_{W_2}(\xi)\delta\hat\phi_{W_2}^{\dagger}(\zeta)\right] \left(\Psi_{2,\lambda_1, \lambda_{2,1}, \lambda_{2,2}}(\zeta, y, x, w^*, h_1)\right)^\dagger d\xi d\zeta\notag\\
    &\quad +\int\int (h_{2,1}^{-2})^{1+\lambda_{2,1}}(h_{2,2}^{-2})^{1+\lambda_{2,2}}\Psi_{3,\lambda_1, \lambda_{2,1}, \lambda_{2,2}
    }(\xi, y, x, w^*, h_1)\times  \notag \\
    &\quad\quad\quad\quad\quad\quad E\left[nh_{2,1}^{2+2\lambda_{2,1}}h_{2,2}^{2+2\lambda_{2,2}}\delta\hat\phi_{f^{(\lambda_{2,1}, \lambda_{2,2})}
    _{Y,X,W_2}(y,x,\cdot)}(\xi)\delta\hat\phi^\dagger_{f^{(\lambda_{2,1}, \lambda_{2,2})}
    _{Y,X,W_2}(y,x,\cdot)}(\zeta)\right]\times\notag\\
    &\quad\quad\quad\quad\quad\quad\left(\Psi_{3,\lambda_1, \lambda_{2,1}, \lambda_{2,2}
    }(\zeta, y, x, w^*, h_1)\right)^\dagger  d\xi d\zeta\notag\\
    &\quad+\int\int \Psi_{1,\lambda_1, \lambda_{2,1}, \lambda_{2,2}
    }(\xi, y, x, w^*, h_1) E\left[n\delta\hat\theta(\xi)\delta\hat\phi_{W_2}^{\dagger}(\zeta)\right] \left(\Psi_{2,\lambda_1, \lambda_{2,1}, \lambda_{2,2}}(\zeta, y, x, w^*, h_1)\right)^\dagger d\xi d\zeta \notag\\
    &\quad +\int\int \Psi_{1,\lambda_1, \lambda_{2,1}, \lambda_{2,2}
    }(\xi, y, x, w^*, h_1) E\left[n\delta\hat\theta(\xi)h_{2,1}^{1+\lambda_{2,1}}h_{2,2}^{1+\lambda_{2,2}}\delta\hat\phi^\dagger_{f^{(\lambda_{2,1}, \lambda_{2,2})}
    _{Y,X,W_2}(y,x,\cdot)}(\zeta)\right]\times\notag\\
    &\quad\quad\quad\quad\quad\quad\quad\quad\quad\quad\quad\quad\quad\quad\quad\quad\quad\quad (h_{2,1}^{-1})^{1+\lambda_{2,1}}(h_{2,2}^{-1})^{1+\lambda_{2,2}} \left(\Psi_{3,\lambda_1, \lambda_{2,1}, \lambda_{2,2}
    }(\zeta, y, x, w^*, h_1)\right)^\dagger  d\xi d\zeta\notag\\
    &\quad +\int\int \Psi_{2,\lambda_1, \lambda_{2,1}, \lambda_{2,2}
    }(\xi, y, x, w^*, h_1) E\left[n\delta\hat\phi_{W_2}(\xi)h_{2,1}^{1+\lambda_{2,1}}h_{2,2}^{1+\lambda_{2,2}}\delta\hat\phi^\dagger_{f^{(\lambda_{2,1}, \lambda_{2,2})}
    _{Y,X,W_2}(y,x,\cdot)}(\zeta)\right]\times\notag\\ &\quad\quad\quad\quad\quad\quad\quad\quad\quad\quad\quad\quad\quad\quad\quad\quad\quad\quad (h_{2,1}^{-1})^{1+\lambda_{2,1}}(h_{2,2}^{-1})^{1+\lambda_{2,2}}\left(\Psi_{3,\lambda_1, \lambda_{2,1}, \lambda_{2,2}
    }(\zeta, y, x, w^*, h_1)\right)^\dagger d\xi d\zeta\notag\\
    &\quad+\int\int \Psi_{2,\lambda_1, \lambda_{2,1}, \lambda_{2,2}
    }(\xi, y, x, w^*, h_1) E\left[n\delta\hat\phi_{W_2}(\zeta)\delta\hat\theta^{\dagger}(\xi)\right] \left(\Psi_{1, \lambda_1, \lambda_{2,1}, \lambda_{2,2}}(\zeta, y, x, w^*, h_1)\right)^\dagger d\xi d\zeta \notag\\
    &\quad +\int\int  (h_{2,1}^{-1})^{1+\lambda_{2,1}}(h_{2,2}^{-1})^{1+\lambda_{2,2}}\Psi_{3,\lambda_1, \lambda_{2,1}, \lambda_{2,2}
    }(\zeta, y, x, w^*, h_1)\times\notag\\
    &\quad\quad\quad\quad E\left[nh_{2,1}^{1+\lambda_{2,1}}h_{2,2}^{1+\lambda_{2,2}}\delta\hat\phi_{f^{(\lambda_{2,1}, \lambda_{2,2})}
    _{Y,X,W_2}(y,x,\cdot)}(\zeta)\delta\hat\theta^{\dagger}(\xi)\right]\times\left(\Psi_{1, \lambda_1, \lambda_{2,1}, \lambda_{2,2}}(\xi, y, x, w^*, h_1)\right)^\dagger d\xi d\zeta\notag\\
    &\quad +\int\int  (h_{2,1}^{-1})^{1+\lambda_{2,1}}(h_{2,2}^{-1})^{1+\lambda_{2,2}}\Psi_{3,\lambda_1, \lambda_{2,1}, \lambda_{2,2}}(\zeta, y, x, w^*, h_1)\times\notag\\ 
    &\quad\quad\quad\quad E\left[nh_{2,1}^{1+\lambda_{2,1}}h_{2,2}^{1+\lambda_{2,2}}\delta\phi_{f^{(\lambda_{2,1}, \lambda_{2,2})}_{Y,X,W_2}(y,x,\cdot)}(\zeta)\delta\hat\phi_{W_2}^{\dagger}(\xi)\right]\times\left(\Psi_{2,\lambda_1, \lambda_{2,1}, \lambda_{2,2}}(\xi, y, x, w^*, h_1)\right)^\dagger d\xi d\zeta \label{Omega}
\end{align}
\end{footnotesize}
Note that I have
\begin{footnotesize}
\begin{align*}
    E\left[n\delta\hat\theta(\xi)\delta\hat\theta^{\dagger}(\zeta)\right]&=E\left[n\left(\hat\theta(\xi)-\theta(\xi)\right)\left(\hat\theta^{\dagger}(\zeta)-\theta^{\dagger}(\zeta)\right)\right]\\
    &=E\left[\left(W_1e^{\mathbf{i}\xi W_2}-\theta(\xi)\right)\left(W_1e^{-\mathbf{i}\zeta W_2}-\theta(-\zeta)\right)\right]\\
    &=E\left[W_1^2e^{\mathbf{i}(\xi-\zeta) W_2}\right]-\theta(\xi)E\left[W_1e^{-\mathbf{i}\zeta W_2}\right]-E\left[W_1e^{\mathbf{i}\xi W_2}\right]\theta^{\dagger}(\zeta)+\theta(\xi)\theta^{\dagger}(\zeta)\\
    &=E\left[W_1^2e^{\mathbf{i}(\xi-\zeta) W_2}\right]-\theta(\xi)\theta(-\zeta),
\end{align*}
\begin{align*}
    E\left[n\delta\hat\theta(\xi)\delta\hat\phi_{W_2}^{\dagger}(\zeta)\right]&=E\left[n\left(\hat\theta(\xi)-\theta(\xi)\right)\left(\hat\phi_{W_2}^{\dagger}(\zeta)-\phi_{W_2}^{\dagger}(\zeta)\right)\right]\\
    &=E\left[\left(W_1e^{\mathbf{i}\xi W_2}-\theta(\xi)\right)\left(e^{-\mathbf{i}\zeta W_2}-\phi_{W_2}^{\dagger}(\zeta)\right)\right]\\
    &=E\left[W_1e^{\mathbf{i}\xi W_2}e^{-\mathbf{i}\zeta W_2}\right]-\theta(\xi)E\left[e^{-\mathbf{i}\zeta W_2}\right]-E\left[W_1e^{\mathbf{i}\xi W_2}\right]\phi^{\dagger}_{W_2}(\zeta)+\theta(\xi)\phi^{\dagger}_{W_2}(\zeta)\\
    &=E\left[W_1e^{\mathbf{i}(\xi-\zeta) W_2}\right]-\theta(\xi)\phi_{W_2}(-\zeta),
\end{align*}
\begin{align*}
    &\quad E\left[n\delta\hat\theta(\xi)h_{2,1}^{1+\lambda_{2,1}}h_{2,2}^{1+\lambda_{2,2}}\delta\hat\phi^\dagger_{f^{(\lambda_{2,1}, \lambda_{2,2})}_{Y,X,W_2}(y,x,\cdot)}(\zeta)\right]\\
    &=E\left[\left(W_1e^{\mathbf{i}\xi W_2}-\theta(\xi)\right)\times\right.\\
    &\quad\quad\quad\quad\left.\left(e^{-\mathbf{i}\zeta W_2}G_Y\left(\frac{y-Y}{h_{2,1}}\right)G^{(\lambda_{2,1}, \lambda_{2,2})}_X\left(\frac{x-X}{h_{2,2}}\right)-h_{2,1}^{1+\lambda_{2,1}}h_{2,2}^{1+\lambda_{2,2}}\phi^\dagger_{f^{(\lambda_{2,1}, \lambda_{2,2})}_{Y,X,W_2}(y,x,\cdot)}(\zeta)\right)\right]\\
    &=E\left[W_1e^{\mathbf{i}(\xi-\zeta) W_2}G_Y\left(\frac{y-Y}{h_{2,1}}\right)G^{(\lambda_{2,1}, \lambda_{2,2})}_X\left(\frac{x-X}{h_{2,2}}\right)\right]-\theta(\xi)h_{2,1}^{1+\lambda_{2,1}}h_{2,2}^{1+\lambda_{2,2}}\phi_{f^{(\lambda_{2,1}, \lambda_{2,2})}_{Y,X,W_2}(y,x,\cdot)}(-\zeta),\\
    \\
    &\quad E\left[n\delta\hat\phi_{W_2}(\xi)\delta\hat\phi_{W_2}^{\dagger}(\zeta)\right]=E\left[\left(e^{\mathbf{i}\xi W_2}-\phi_{W_2}(\xi)\right)\left(e^{-\mathbf{i}\zeta W_2}-\phi_{W_2}^{\dagger}(\zeta)\right)\right]\\
    &=E\left[e^{\mathbf{i}(\xi-\zeta) W_2}\right]-\phi_{W_2}(\xi)\phi_{W_2}( -\zeta),
\end{align*}
\begin{align*}
    &\quad E\left[n\delta\hat\phi_{W_2}(\xi)h_{2,1}^{1+\lambda_{2,1}}h_{2,2}^{1+\lambda_{2,2}}\delta\hat\phi^\dagger_{f^{(\lambda_{2,1}, \lambda_{2,2})}_{Y,X,W_2}(y,x,\cdot)}(\zeta)\right]\\
    &=E\left[\left(e^{\mathbf{i}\xi W_2}-\phi_{W_2}(\xi)\right)\times\right.\\
    &\quad\quad\quad\quad\left.\left(e^{-\mathbf{i}\zeta W_2}G_Y\left(\frac{y-Y}{h_{2,1}}\right)G^{(\lambda_{2,1}, \lambda_{2,2})}_X\left(\frac{x-X}{h_{2,2}}\right)-h_{2,1}^{1+\lambda_{2,1}}h_{2,2}^{1+\lambda_{2,2}}\phi^\dagger_{f^{(\lambda_{2,1}, \lambda_{2,2})}_{Y,X,W_2}(y,x,\cdot)}(\zeta)\right)\right]\\
    &=E\left[e^{\mathbf{i}(\xi-\zeta) W_2}G_Y\left(\frac{y-Y}{h_{2,1}}\right)G^{(\lambda_{2,1}, \lambda_{2,2})}_X\left(\frac{x-X}{h_{2,2}}\right)\right]-\phi_{W_2}(\xi) h_{2,1}^{1+\lambda_{2,1}}h_{2,2}^{1+\lambda_{2,2}}\phi_{f^{(\lambda_{2,1}, \lambda_{2,2})}_{Y,X,W_2}(y,x,\cdot)}(-\zeta),
\end{align*}
\begin{align*}
    &\quad E\left[nh_{2,1}^{1+\lambda_{2,1}}h_{2,2}^{1+\lambda_{2,2}}\delta \hat\phi_{f^{(\lambda_{2,1}, \lambda_{2,2})}_{Y,X,W_2}(y,x,\cdot)}(\xi)h_{2,1}^{1+\lambda_{2,1}}h_{2,2}^{1+\lambda_{2,2}}\delta \hat\phi^\dagger_{f^{(\lambda_{2,1}, \lambda_{2,2})}_{Y,X,W_2}(y,x,\cdot)}(\zeta)\right]\\
    &=E\left[\left(e^{\mathbf{i}\xi W_2}G_Y\left(\frac{y-Y}{h_{2,1}}\right)G^{(\lambda_{2,1}, \lambda_{2,2})}_X\left(\frac{x-X}{h_{2,2}}\right)-h_{2,1}^{1+\lambda_{2,1}}h_{2,2}^{1+\lambda_{2,2}}\phi_{f^{(\lambda_{2,1}, \lambda_{2,2})}_{Y,X,W_2}(y,x,\cdot)}(\xi)\right)\times\right.\\
    &\quad\quad\quad\quad\quad\quad\quad\quad\left.\left(e^{-\mathbf{i}\zeta W_2}G_Y\left(\frac{y-Y}{h_{2,1}}\right)G^{(\lambda_{2,1}, \lambda_{2,2})}_X\left(\frac{x-X}{h_{2,2}}\right)-h_{2,1}^{1+\lambda_{2,1}}h_{2,2}^{1+\lambda_{2,2}}\phi^\dagger_{f^{(\lambda_{2,1}, \lambda_{2,2})}_{Y,X,W_2}(y,x,\cdot)}(\zeta)\right)\right]\\
    &=E\left[e^{\mathbf{i}(\xi-\zeta)W_2}\left(G_Y\left(\frac{y-Y}{h_{2,1}}\right)G^{(\lambda_{2,1}, \lambda_{2,2})}_X\left(\frac{x-X}{h_{2,2}}\right)\right)^2\right]\\
    &\quad\quad\quad\quad-h_{2,1}^{2+2\lambda_{2,1}}h_{2,2}^{2+2\lambda_{2,2}}\phi_{f^{(\lambda_{2,1}, \lambda_{2,2})}_{Y,X,W_2}(y,x,\cdot)}(\xi)\phi_{f^{(\lambda_{2,1}, \lambda_{2,2})}_{Y,X,W_2}(y,x,\cdot)}(-\zeta).
\end{align*}
\end{footnotesize}
By Assumption \ref{finite variance},
\begin{footnotesize}
\begin{align*}
    \left| E\left[n\delta\hat\theta(\xi)\delta\hat\theta^{\dagger}(\zeta)\right]\right|&=\left|E\left[W_1^2e^{\mathbf{i}(\xi-\zeta) W_2}\right]-\theta(\xi)\theta(-\zeta)\right|\\
    &\leq E\left[W_1^2\left|e^{\mathbf{i}(\xi-\zeta) W_2}\right|\right]+E\left[\left|W_1\right|\left|e^{\mathbf{i}\xi W_2}\right|\right]E\left[\left|W_1\right|\left|e^{-\mathbf{i}\zeta W_2}\right|\right]\\
    &\leq E\left[W_1^2\right]+E\left[\left|W_1\right|\right]E\left[\left|W_1\right|\right]\preceq 1,
\end{align*}
\begin{align*}
    \left|E\left[n\delta\hat\theta(\xi)\delta\hat\phi_{W_2}^{\dagger}(\zeta)\right]\right|&=\left|\theta(\xi-\zeta)-\theta(\xi)\phi_{W_2}(-\zeta)\right|\\
    &\leq E\left[\left|W_1\right|\left|e^{\mathbf{i}(\xi-\zeta) W_2}\right|\right]+E\left[\left|W_1\right|\left|e^{\mathbf{i}\xi W_2}\right|\right]E\left[\left|e^{-\mathbf{i}\zeta W_2}\right|\right]\\
    &\leq 2E[|W_1|]\preceq 1,
\end{align*}
\end{footnotesize}
\begin{scriptsize}
\begin{align*}
    &\quad\sup_{x\in\mathbb{S}_X}\left|E\left[n\delta\hat\theta(\xi)h_{2,1}^{1+\lambda_{2,1}}h_{2,2}^{1+\lambda_{2,2}}\delta\hat\phi^\dagger_{f^{(\lambda_{2,1}, \lambda_{2,2})}_{Y,X,W_2}(y,x,\cdot)}(\zeta)\right]\right|\\
    &=\sup_{x\in\mathbb{S}_X}\left|E\left[W_1e^{\mathbf{i}(\xi-\zeta) W_2}G_Y\left(\frac{y-Y}{h_{2,1}}\right)G^{(\lambda_{2,1}, \lambda_{2,2})}_X\left(\frac{x-X}{h_{2,2}}\right)\right]-\theta(\xi)h_{2,1}^{1+\lambda_{2,1}}h_{2,2}^{1+\lambda_{2,2}}\phi_{f^{(\lambda_{2,1}, \lambda_{2,2})}_{Y,X,W_2}(y,x,\cdot)}(-\zeta)\right|\\
    &\leq E\left[\left|W_1\right|\left|e^{\mathbf{i}(\xi-\zeta) W_2}\right|\sup_{x\in\mathbb{S}_X}\left|G_Y\left(\frac{y-Y}{h_{2,1}}\right)G^{(\lambda_{2,1}, \lambda_{2,2})}_X\left(\frac{x-X}{h_{2,2}}\right)\right|\right]\\
    &\quad\quad\quad\quad+E\left[\left|W_1\right|\left|e^{\mathbf{i}\xi W_2}\right|\right]E\left[\left|e^{-\mathbf{i}\zeta W_2}\right|\sup_{x\in\mathbb{S}_X}\left|G_Y\left(\frac{y-Y}{h_{2,1}}\right)G^{(\lambda_{2,1}, \lambda_{2,2})}_X\left(\frac{x-X}{h_{2,2}}\right)\right|\right]\\
    &\leq 2E[|W_1|]\preceq 1,
\end{align*}    
\end{scriptsize}
where the last line follows by Assumption \ref{phik}.
Following the same logic, I have that
\begin{align*}
    &\left|E\left[n\delta\hat\phi_{W_2}(\xi)\delta\hat\phi^\dagger_{W_2}(\zeta)\right] \right|\preceq 1,\\
    &\sup_{x\in\mathbb{S}_X}\left|E\left[n\delta\hat\phi_{W_2}(\xi)h_{2,1}^{1+\lambda_{2,1}}h_{2,2}^{1+\lambda_{2,2}}\delta\hat\phi^\dagger_{f^{(\lambda_{2,1}, \lambda_{2,2})}_{Y,X,W_2}(y,x,\cdot)}(\zeta)\right]\right|\preceq 1,\\
    &\sup_{x\in\mathbb{S}_X}\left|E\left[nh_{2,1}^{2+2\lambda_{2,1}}h_{2,2}^{2+2\lambda_{2,2}}\delta\hat\phi_{f^{(\lambda_{2,1}, \lambda_{2,2})}_{Y,X,W_2}(y,x,\cdot)}(\xi)\delta\hat\phi^\dagger_{f^{(\lambda_{2,1}, \lambda_{2,2})}_{Y,X,W_2}(y,x,\cdot)}(\zeta)\right]\right|\preceq 1,\\
    &\left|E\left[n\delta\hat\phi_{W_2}(\zeta)\delta\hat\theta^{\dagger}(\xi)\right]\right| \preceq 1,\\
    &\sup_{x\in\mathbb{S}_X}\left|E\left[nh_{2,1}^{1+\lambda_{2,1}}h_{2,2}^{1+\lambda_{2,2}}\delta\hat\phi_{f^{(\lambda_{2,1}, \lambda_{2,2})}_{Y,X,W_2}(y,x,\cdot)}(\zeta)\delta\hat\theta^{\dagger}(\xi)\right]\right|\preceq 1,\\
    &\sup_{x\in\mathbb{S}_X}\left|E\left[nh_{2,1}^{1+\lambda_{2,1}}h_{2,2}^{1+\lambda_{2,2}}\delta\hat\phi_{f^{(\lambda_{2,1}, \lambda_{2,2})}_{Y,X,W_2}(y,x,\cdot)}(\zeta)\delta\hat\phi^{\dagger}_{W_2}(\xi)\right]\right| \preceq 1.
\end{align*}
It follows from Equation (\ref{Omega}) that
\begin{scriptsize}
\begin{align*}
    &\quad\Omega_{\lambda_1, \lambda_{2,1}, \lambda_{2,2}}(y,x,w^*,h)\\
    &\leq \int\int \left|\Psi_{1, \lambda_1, \lambda_{2,1}, \lambda_{2,2}}(\xi, y, x, w^*, h_1)\right|  \left|\left(\Psi_{1, \lambda_1, \lambda_{2,1}, \lambda_{2,2}}(\zeta, y, x, w^*, h_1)\right)^\dagger\right| d\xi d\zeta \notag\\
    &\quad +\int\int \left|\Psi_{2,\lambda_1, \lambda_{2,1}, \lambda_{2,2}}(\xi, y, x, w^*, h_1)\right| \left|\left(\Psi_{2,\lambda_1, \lambda_{2,1}, \lambda_{2,2}}(\zeta, y, x, w^*, h_1)\right)^\dagger\right|  d\xi d\zeta\notag\\
    &\quad +\int\int \left|(h_{2,1}^{-1})^{1+\lambda_{2,1}}(h_{2,2}^{-1})^{1+\lambda_{2,2}}\Psi_{3,\lambda_1, \lambda_{2,1}, \lambda_{2,2}}(\xi, y, x, w^*, h_1)\right|\times\\
    &\quad\quad\quad\quad\quad\quad\left|\left((h_{2,1}^{-1})^{1+\lambda_{2,1}}(h_{2,2}^{-1})^{1+\lambda_{2,2}}\Psi_{3,\lambda_1, \lambda_{2,1}, \lambda_{2,2}}(\zeta, y, x, w^*, h_1)\right)^\dagger\right|   d\xi d\zeta\notag\\
    &\quad+\int\int \left|\Psi_{1, \lambda_1, \lambda_{2,1}, \lambda_{2,2}}(\xi, y, x, w^*, h_1)\right| \left|\left(\Psi_{2,\lambda_1, \lambda_{2,1}, \lambda_{2,2}}(\zeta, y, x, w^*, h_1)\right)^\dagger\right|  d\xi d\zeta \notag\\
    &\quad +\int\int \left|\Psi_{1, \lambda_1, \lambda_{2,1}, \lambda_{2,2}}(\xi, y, x, w^*, h_1)\right| \left|\left((h_{2,1}^{-1})^{1+\lambda_{2,1}}(h_{2,2}^{-1})^{1+\lambda_{2,2}}\Psi_{3,\lambda_1, \lambda_{2,1}, \lambda_{2,2}}(\zeta, y, x, w^*, h_1)\right)^\dagger\right|  d\xi d\zeta\notag\\
    &\quad +\int\int \left|\Psi_{2,\lambda_1, \lambda_{2,1}, \lambda_{2,2}}(\xi, y, x, w^*, h_1)\right|\left|\left((h_{2,1}^{-1})^{1+\lambda_{2,1}}(h_{2,2}^{-1})^{1+\lambda_{2,2}}\Psi_{3,\lambda_1, \lambda_{2,1}, \lambda_{2,2}}(\zeta, y, x, w^*, h_1)\right)^\dagger\right|  d\xi d\zeta\notag\\
    &\quad+\int\int \left|\Psi_{2,\lambda_1, \lambda_{2,1}, \lambda_{2,2}}(\xi, y, x, w^*, h_1)\right| \left|\left(\Psi_{1, \lambda_1, \lambda_{2,1}, \lambda_{2,2}}(\zeta, y, x, w^*, h_1)\right)^\dagger\right| d\xi d\zeta \notag\\
    &\quad +\int\int  \left|(h_{2,1}^{-1})^{1+\lambda_{2,1}}(h_{2,2}^{-1})^{1+\lambda_{2,2}}\Psi_{3,\lambda_1, \lambda_{2,1}, \lambda_{2,2}}(\zeta, y, x, w^*, h_1)\right| \left|\left(\Psi_{1, \lambda_1, \lambda_{2,1}, \lambda_{2,2}}(\xi, y, x, w^*, h_1)\right)^\dagger\right|  d\xi d\zeta\notag\\
    &\quad +\int\int  \left|(h_{2,1}^{-1})^{1+\lambda_{2,1}}(h_{2,2}^{-1})^{1+\lambda_{2,2}}\Psi_{3,\lambda_1, \lambda_{2,1}, \lambda_{2,2}}(\zeta, y, x, w^*, h_1)\right| \left|\left(\Psi_{2,\lambda_1, \lambda_{2,1}, \lambda_{2,2}}(\xi, y, x, w^*, h_1)\right)^\dagger\right|  d\xi d\zeta\\
    &=\left(\int \left|\Psi_{1, \lambda_1, \lambda_{2,1}, \lambda_{2,2}}(\xi, y, x, w^*, h_1)\right|d\xi+\int \left|\Psi_{2,\lambda_1, \lambda_{2,1}, \lambda_{2,2}}(\xi, y, x, w^*, h_1)\right|d\xi\right.\\
    &\quad\quad\quad\quad\quad\quad\quad\quad\quad\quad\quad\quad\quad\quad\left.+\int \left|(h_{2,1}^{-1})^{1+\lambda_{2,1}}(h_{2,2}^{-1})^{1+\lambda_{2,2}}\Psi_{3,\lambda_1, \lambda_{2,1}, \lambda_{2,2}}(\xi, y, x, w^*, h_1)\right| d\xi\right)^2\\
    &\leq \left(\int \Psi_{1, \lambda_1, \lambda_{2,1}, \lambda_{2,2}}^+(\xi,h_1)d\xi+\int \Psi_{2,\lambda_1, \lambda_{2,1}, \lambda_{2,2}}^+(\xi,h_1)d\xi+\int (h_{2,1}^{-1})^{1+\lambda_{2,1}}(h_{2,2}^{-1})^{1+\lambda_{2,2}}\Psi_{3,\lambda_1, \lambda_{2,1}, \lambda_{2,2}}^+(\xi,h) d\xi\right)^2\\
    &=\left(\Psi_{\lambda_1, \lambda_{2,1}, \lambda_{2,2}}^+(h)\right)^2,
\end{align*}
\end{scriptsize}
where $\Psi_{1, \lambda_1, \lambda_{2,1}, \lambda_{2,2}}^+(\xi,h_1), \Psi_{2,\lambda_1, \lambda_{2,1}, \lambda_{2,2}}^+(\xi,h_1), \Psi_{3,\lambda_1, \lambda_{2,1}, \lambda_{2,2}}^+(\xi,h)$, and $\Psi_{\lambda_1, \lambda_{2,1}, \lambda_{2,2}}^+(h)$ are as defined in Lemma \ref{orderPsi} and 
\begin{scriptsize}
\begin{align*}
       &\quad\Psi_{\lambda_1, \lambda_{2,1}, \lambda_{2,2}}^+(h)\\
       &=O\left(\max\left\{(h_1^{-1})^{1+\gamma_*},(h_{2,1}^{-1})^{1+\lambda_{2,1}}(h_{2,2}^{-1})^{1+\lambda_{2,2}}\right\}(h_1^{-1})^{1-\gamma_2+\gamma_{\phi}+\lambda_1}\exp\left(\left(\alpha_{\phi}\mathbf{1}\{\beta_{\phi}=\beta_2\}-\alpha_2\right)\left(h_1^{-1}\right)^{\beta_2}\right)\right).
\end{align*} 
\end{scriptsize}
Hence I have proved Eq. (\ref{sqrtsupOmega})\\
\indent Next I show Eq. (\ref{supL}). From Eq. (\ref{L}) I have
\begin{scriptsize}
\begin{align*}
    &\quad\sup_{(y, x, w^*)\in\mathbb{S}_{(Y,X,W^*)}}\left|L_{\lambda_1, \lambda_{2,1}, \lambda_{2,2}}(y,x,w^*,h)\right|\\&=\sup_{(y, x, w^*)\in\mathbb{S}_{(Y,X,W^*)}} \Bigg|\int \Psi_{1, \lambda_1, \lambda_{2,1}, \lambda_{2,2}}(\xi, y, x, w^*, h_1)\left(\hat E[W_1e^{\mathbf{i}\xi W_2}]-E[W_1e^{\mathbf{i}\xi W_2}]\right)d\xi\\
    &\quad\quad\quad\quad+\int \Psi_{2,\lambda_1, \lambda_{2,1}, \lambda_{2,2}}(\xi, y, x, w^*, h_1)\left(\hat E[e^{\mathbf{i}\xi W_2}]-E[e^{\mathbf{i}\xi W_2}]\right)d\xi\notag\\
    &\quad\quad\quad\quad+\int \Psi_{3,\lambda_1, \lambda_{2,1}, \lambda_{2,2}}(\xi, y, x, w^*, h_1)\times\\
    &\quad\quad\quad\quad\quad\quad\left(\hat E\left[e^{\mathbf{i}\xi W_2}\frac{1}{h_{2,1}^{1+\lambda_{2,1}}h_{2,2}^{1+\lambda_{2,2}}}G_Y\left( \frac{y-Y}{h_{2,1}}\right)G^{(\lambda_{2,1}, \lambda_{2,2})}_X\left(\frac{x-X}{h_{2,2}}\right)\right]\right.\\
    &\quad\quad\quad\quad\quad\quad\quad\quad\left.-E\left[e^{\mathbf{i}\xi W_2}\frac{1}{h_{2,1}^{1+\lambda_{2,1}}h_{2,2}^{1+\lambda_{2,2}}}G_Y\left( \frac{y-Y}{h_{2,1}}\right)G^{(\lambda_{2,1}, \lambda_{2,2})}_X\left(\frac{x-X}{h_{2,2}}\right)\right]\right)d\xi\Bigg|\\
    &\leq \int \left(\sup_{(y, x, w^*)\in\mathbb{S}_{(Y,X,W^*)}}\Psi_{1, \lambda_1, \lambda_{2,1}, \lambda_{2,2}}(\xi, y, x, w^*, h_1)\right)\left|\hat E[W_1e^{\mathbf{i}\xi W_2}]-E[W_1e^{\mathbf{i}\xi W_2}]\right|d\xi\\
    &\quad+\int \left(\sup_{(y, x, w^*)\in\mathbb{S}_{(Y,X,W^*)}}\Psi_{2,\lambda_1, \lambda_{2,1}, \lambda_{2,2}}(\xi, y, x, w^*, h_1)\right)\left|\hat E[e^{\mathbf{i}\xi W_2}]-E[e^{\mathbf{i}\xi W_2}]\right|d\xi\notag\\
    &\quad+\int \left(\sup_{(y, x, w^*)\in\mathbb{S}_{(Y,X,W^*)}}(h_{2,1}^{-1})^{1+\lambda_{2,1}}(h_{2,2}^{-1})^{1+\lambda_{2,2}}\Psi_{3,\lambda_1, \lambda_{2,1}, \lambda_{2,2}}(\xi, y, x, w^*, h_1)\right)\times\\   &\quad\sup_{x\in\mathbb{S}_X}\left|\hat E\left[e^{\mathbf{i}\xi W_2}G_Y\left(\frac{y-Y}{h_{2,1}}\right)G^{(\lambda_{2,1}, \lambda_{2,2})}_X\left(\frac{x-X}{h_{2,2}}\right)\right]-E\left[e^{\mathbf{i}\xi W_2}G_Y\left(\frac{y-Y}{h_{2,1}}\right)G^{(\lambda_{2,1}, \lambda_{2,2})}_X\left(\frac{x-X}{h_{2,2}}\right)\right]\right|d\xi\\
    &=\int \Psi_{1, \lambda_1, \lambda_{2,1}, \lambda_{2,2}}^+(\xi, h_1)\left|\hat E[W_1e^{\mathbf{i}\xi W_2}]-E[W_1e^{\mathbf{i}\xi W_2}]\right|d\xi\\
    &\quad+\int \Psi_{2,\lambda_1, \lambda_{2,1}, \lambda_{2,2}}^+(\xi, h_1)\left|\hat E[e^{\mathbf{i}\xi W_2}]-E[e^{\mathbf{i}\xi W_2}]\right|d\xi\notag\\
    &\quad+\int (h_{2,1}^{-1})^{1+\lambda_{2,1}}(h_{2,2}^{-1})^{1+\lambda_{2,2}}\Psi_{3,\lambda_1, \lambda_{2,1}, \lambda_{2,2}}^+(\xi, h_1)\times \notag\\  
    &\quad\sup_{x\in\mathbb{S}_X}\left|\hat E\left[e^{\mathbf{i}\xi W_2}G_Y\left(\frac{y-Y}{h_{2,1}}\right)G^{(\lambda_{2,1}, \lambda_{2,2})}_X\left(\frac{x-X}{h_{2,2}}\right)\right]-E\left[e^{\mathbf{i}\xi W_2}G_Y\left(\frac{y-Y}{h_{2,1}}\right)G^{(\lambda_{2,1}, \lambda_{2,2})}_X\left(\frac{x-X}{h_{2,2}}\right)\right]\right| d\xi
\end{align*}
\end{scriptsize}
where $\Psi_{1, \lambda_1, \lambda_{2,1}, \lambda_{2,2}}^+(\xi, h_1)$, $\Psi_{2,\lambda_1, \lambda_{2,1}, \lambda_{2,2}}^+(\xi, h_1)$ and $\Psi_{3,\lambda_1, \lambda_{2,1}, \lambda_{2,2}}^+(\xi, h_1)$ are defined in Lemma \ref{orderPsi}. The integrals are finite for two reasons. First, with probability approaching 1, I have $\left|\hat E[W_1e^{\mathbf{i}\xi W_2}]-E[W_1e^{\mathbf{i}\xi W_2}]\right| \leq 2 E[\left|W_1\right|]+\epsilon\leq \infty$, for some positive real number $\epsilon$. This follows from Assumption \ref{finite variance}, the fact that 
\begin{footnotesize}
\begin{align*}
    \left|\hat E[W_1e^{\mathbf{i}\xi W_2}]-E[W_1e^{\mathbf{i}\xi W_2}]\right|&\leq \left|\hat E[W_1e^{\mathbf{i}\xi W_2}]\right|+\left|E[W_1e^{\mathbf{i}\xi W_2}]\right|\\
    &\leq \hat E[\left|W_1e^{\mathbf{i}\xi W_2}\right|]+E[\left|W_1e^{\mathbf{i}\xi W_2}\right|]\\
    &\leq \hat E[\left|W_1\right|]+E[\left|W_1\right|],
\end{align*}
\end{footnotesize}
and that $\hat E[\left|W_1\right|]\xrightarrow[]{p}E[\left|W_1\right|]$. Also, it's easy to show that $\left|\hat E[e^{\mathbf{i}\xi W_2}]-E[e^{\mathbf{i}\xi W_2}]\right|\leq 2+\epsilon<\infty$ and that $\sup_{x\in\mathbb{S}_X}\left|\hat E\left[e^{\mathbf{i}\xi W_2}G_Y\left(\frac{y-Y}{h_{2,1}}\right)G^{(\lambda_{2,1}, \lambda_{2,2})}_X\left(\frac{x-X}{h_{2,2}}\right)\right]-E\left[e^{\mathbf{i}\xi W_2}G_Y\left(\frac{y-Y}{h_{2,1}}\right)G^{(\lambda_{2,1}, \lambda_{2,2})}_X\left(\frac{x-X}{h_{2,2}}\right)\right]\right|\preceq 1$. Second, it follows by Lemma \ref{orderPsi} that $\int \Psi^+_{j,\lambda_1, \lambda_{2,1}, \lambda_{2,2}}(\xi,h_1)d\xi<\infty$ for $j=1,2$ and that  $\int (h_{2,1}^{-1})^{1+\lambda_{2,1}}(h_{2,2}^{-1})^{1+\lambda_{2,2}}\Psi^+_{3,\lambda_1, \lambda_{2,1}, \lambda_{2,2}}(\xi,h_1)d\xi<\infty$.\\
\indent By Assumption \ref{finite variance},
\begin{footnotesize}
\begin{align*}
    E\left[\left(\hat E[W_1e^{\mathbf{i}\xi W_2}]-E[W_1e^{\mathbf{i}\xi W_2}]\right)^2\right]&\leq \frac{1}{n} E\left[\left( W_1e^{\mathbf{i}\xi W_2}-E[W_1e^{\mathbf{i}\xi W_2}]\right)^2\right]\leq \frac{1}{n} E\left[\left( W_1e^{\mathbf{i}\xi W_2}\right)^2\right]\\
    &\leq \frac{1}{n} E\left[ W_1^2\right]=O\left(\frac{1}{n}\right),
\end{align*}
\end{footnotesize}
which implies $\left|\hat E[W_1e^{\mathbf{i}\xi W_2}]-E[W_1e^{\mathbf{i}\xi W_2}]\right|=O_p(n^{-1/2})$. Similarly, I also have $\left|\hat E[e^{\mathbf{i}\xi W_2}]-E[e^{\mathbf{i}\xi W_2}]\right|=O_p(n^{-1/2})$.\\
\indent By the conclusions above and Lemma \ref{uniform rate hat phi f}, I have 
\begin{scriptsize}
\begin{align*}
    &\quad \sup_{(y, x, w^*)\in\mathbb{S}_{(Y,X,W^*)}}\left|L_{\lambda_1, \lambda_{2,1}, \lambda_{2,2}}(y,x,w^*,h)\right|\\
    &\leq \int \Psi_{1, \lambda_1, \lambda_{2,1}, \lambda_{2,2}}^+(\xi, h_1)\left|\hat E[W_1e^{\mathbf{i}\xi W_2}]-E[W_1e^{\mathbf{i}\xi W_2}]\right|d\xi\\
    &\quad+\int \Psi_{2,\lambda_1, \lambda_{2,1}, \lambda_{2,2}}^+(\xi, h_1)  \left|\hat E[e^{\mathbf{i}\xi W_2}]-E[e^{\mathbf{i}\xi W_2}]\right|d\xi\notag\\
    &\quad+\int \Psi_{3,\lambda_1, \lambda_{2,1}, \lambda_{2,2}}^+(\xi, h_1)\times\\  
    &\quad\quad\quad \sup_{(y,x)\in\mathbb{S}_{(Y,X)}}\left|\hat E\left[e^{\mathbf{i}\xi W_2}\frac{1}{h_{2,1}^{1+\lambda_{2,1}}h_{2,2}^{1+\lambda_{2,2}}}G_Y^{(\lambda_{2,1})}\left(\frac{y-Y}{h_{2,1}}\right)G_X^{(\lambda_{2,2})}\left(\frac{x-X}{h_{2,2}}\right)\right]\right.\\
    &\quad\quad\quad\quad\quad\quad\quad\quad\quad\quad\left.-E\left[e^{\mathbf{i}\xi W_2}\frac{1}{h_{2,1}^{1+\lambda_{2,1}}h_{2,2}^{1+\lambda_{2,2}}}G_Y^{(\lambda_{2,1})}\left(\frac{y-Y}{h_{2,1}}\right)G_X^{(\lambda_{2,2})}\left(\frac{x-X}{h_{2,2}}\right)\right]\right|d\xi\\
    &\preceq n^{-1/2} \times\\
    &\quad\left(\int \Psi_{1, \lambda_1, \lambda_{2,1}, \lambda_{2,2}}^+(\xi, h_1)d\xi+\int \Psi_{2,\lambda_1, \lambda_{2,1}, \lambda_{2,2}}^+(\xi, h_1)d\xi+\int (h_{2,1}^{-1})^{1+\lambda_{2,1}}(h_{2,2}^{-1})^{1+\lambda_{2,2}}\Psi_{3,\lambda_1, \lambda_{2,1}, \lambda_{2,2}}^+(\xi, h_1) d\xi\right)\\
    &=n^{-1/2} \Psi^+_{\lambda_1, \lambda_{2,1}, \lambda_{2,2}}(h),
\end{align*}
\end{scriptsize}
where 
\begin{scriptsize}
\begin{align*}
    &\quad\Psi^+_{\lambda_1, \lambda_{2,1}, \lambda_{2,2}}(h)\\
    &=O\left(\max\left\{(h_1^{-1})^{1+\gamma_*},(h_{2,1}^{-1})^{1+\lambda_{2,1}}(h_{2,2}^{-1})^{1+\lambda_{2,2}}\right\}(h_1^{-1})^{1-\gamma_2+\gamma_{\phi}+\lambda_1}\exp\left(\left(\alpha_{\phi}\mathbf{1}\{\beta_{\phi}=\beta_2\}-\alpha_2\right)\left(h_1^{-1}\right)^{\beta_2}\right)\right),
\end{align*}
\end{scriptsize}
as shown in Lemma \ref{orderPsi}. \\
\indent (ii) Next I show asymptotic normality.  I apply Lemma \ref{CLTgeneral} to 
\begin{footnotesize}
\begin{align*}
    &l_{\lambda_1, \lambda_{2,1}, \lambda_{2,2}}(y, x, w^*,h_n;Y, X, W_1, W_2)\\
    &=\int \Psi_{1, \lambda_1, \lambda_{2,1}, \lambda_{2,2}}(\xi, y, x, w^*, h_1)\left(W_1e^{\mathbf{i}\xi W_2}-E[W_1e^{\mathbf{i}\xi W_2}]\right)d\xi\notag\\
    &\quad+\int \Psi_{2,\lambda_1, \lambda_{2,1}, \lambda_{2,2}}(\xi, y, x, w^*, h_1)\left(e^{\mathbf{i}\xi W_2}-E[e^{\mathbf{i}\xi W_2}]\right)d\xi\notag\\
    &\quad+\int \Psi_{3,\lambda_1, \lambda_{2,1}, \lambda_{2,2}}(\xi, y, x, w^*, h_1)\times\\
    &\quad\quad\quad\quad\left(\frac{1}{h_{2,1}^{1+\lambda_{2,1}}h_{2,2}^{1+\lambda_{2,2}}}G_Y\left( \frac{y-Y}{h_{2,1}}\right)G^{(\lambda_{2,1}, \lambda_{2,2})}_X\left(\frac{x-X}{h_{2,2}}\right)e^{\mathbf{i}\xi W_2}\right.\\
    &\quad\quad\quad\quad\quad\quad\left.-E\left[\frac{1}{h_{2,1}^{1+\lambda_{2,1}}h_{2,2}^{1+\lambda_{2,2}}}G_Y\left( \frac{y-Y}{h_{2,1}}\right)G^{(\lambda_{2,1}, \lambda_{2,2})}_X\left(\frac{x-X}{h_{2,2}}\right)e^{\mathbf{i}\xi W_2}\right]\right)d\xi,
\end{align*}
\end{footnotesize}
where $A=Y,X,W_1,W_2$ Previous argument ensures that for fixed $h$, $\Omega_{\lambda_1, \lambda_{2,1}, \lambda_{2,2}}(y, x, w^*,h)<\infty$. The desired conclusion follows.
\end{proof}

\begin{lemma}\label{Slemma6}(\cite{schennach2004estimation} Lemma 6) Let $A$ and $W_2$ be random variables satisfying $E[|A|^2]<\infty$ and $E[|A||W_2|]<\infty$ and let $(A_j,W_{2,j})_{j=1,...,n}$ be a corresponding IID sample. Then for any $u,U\geq 0$ and $\epsilon>0$,
\begin{align*}
    \sup_{\zeta\in[-Un^u, Un^u]}\left|\hat E[A\exp\left(\mathbf{i}\zeta W_2\right)]-E[A\exp\left(\mathbf{i}\zeta W_2\right)]\right|=o_p\left(n^{-1/2+\epsilon}\right).
\end{align*}
\end{lemma}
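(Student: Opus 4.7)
The plan is a standard truncation-plus-chaining argument. Write $D_n(\zeta)\equiv \hat E[A\exp(\mathbf{i}\zeta W_2)]-E[A\exp(\mathbf{i}\zeta W_2)]$ and, for a truncation level $M_n\to\infty$ to be chosen, decompose $A_j=A_j^{(1)}+A_j^{(2)}$ with $A_j^{(1)}\equiv A_j\mathbf{1}\{|A_j|\leq M_n\}$ and $A_j^{(2)}\equiv A_j\mathbf{1}\{|A_j|>M_n\}$. This induces the decomposition $D_n(\zeta)=D_n^{(1)}(\zeta)+D_n^{(2)}(\zeta)$. The idea is to handle the tail part $D_n^{(2)}$ by crude Markov/moment bounds that kill it uniformly in $\zeta$, and to handle the bounded part $D_n^{(1)}$ by discretizing the growing interval $[-Un^u,Un^u]$ and applying Hoeffding's inequality at the grid points together with a Lipschitz control off-grid.

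For the tail part, note that uniformly in $\zeta$,
\begin{align*}
    \sup_{\zeta\in\mathbb{R}}\bigl|D_n^{(2)}(\zeta)\bigr|\leq \frac{1}{n}\sum_{j=1}^n |A_j|\mathbf{1}\{|A_j|>M_n\}+E\bigl[|A|\mathbf{1}\{|A|>M_n\}\bigr].
\end{align*}
By $E[|A|^2]<\infty$ and Cauchy--Schwarz, $E[|A|\mathbf{1}\{|A|>M_n\}]\leq M_n^{-1}E[|A|^2]$, and Markov's inequality controls the empirical analogue at the same rate. Choosing $M_n=n^{1/2-\epsilon/2}$ therefore yields $\sup_\zeta|D_n^{(2)}(\zeta)|=O_p(n^{-1/2+\epsilon/2})=o_p(n^{-1/2+\epsilon})$.

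For the bounded part $D_n^{(1)}$, place a uniform grid $\zeta_1<\cdots<\zeta_{N_n}$ on $[-Un^u,Un^u]$ with spacing $\delta_n=n^{-a}$ for some large $a$ (so $N_n=O(n^{u+a})$). Each summand $A_j^{(1)}e^{\mathbf{i}\zeta W_{2,j}}$ has modulus bounded by $M_n$, so applying Hoeffding's inequality to its real and imaginary parts gives, for any $\zeta$,
\begin{align*}
    P\bigl(|D_n^{(1)}(\zeta)|>n^{-1/2+\epsilon/2}\bigr)\leq C\exp\bigl(-c\,n\,n^{-1+\epsilon}/M_n^2\bigr)=C\exp(-c\,n^{2\epsilon}),
\end{align*}
with the chosen $M_n$. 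A union bound over the $N_n=O(n^{u+a})$ grid points still leaves a super-polynomially small probability, so $\max_k|D_n^{(1)}(\zeta_k)|=o_p(n^{-1/2+\epsilon})$. Off-grid, the Lipschitz constant of $\zeta\mapsto D_n^{(1)}(\zeta)$ is bounded by $\frac{1}{n}\sum_j|A_j^{(1)}||W_{2,j}|+E[|A||W_2|]=O_p(1)$ thanks to the assumption $E[|A||W_2|]<\infty$ and the law of large numbers. Hence the oscillation between consecutive grid points is $O_p(\delta_n)=O_p(n^{-a})$, which is negligible for $a>1/2$.

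Combining the two parts yields the stated $o_p(n^{-1/2+\epsilon})$ bound. The only delicate point in the argument is calibrating $M_n$: it must be large enough that the truncation tail is $o(n^{-1/2+\epsilon})$, yet small enough that the Hoeffding exponent $n t^2/M_n^2$ diverges polynomially in $n$. The choice $M_n=n^{1/2-\epsilon/2}$ with $t=n^{-1/2+\epsilon/2}$ satisfies both requirements simultaneously, and this balancing is the one nontrivial piece of the proof; everything else reduces to moment inequalities and a routine chaining argument.
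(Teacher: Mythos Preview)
The paper does not supply its own proof of this lemma; it is quoted verbatim from \cite{schennach2004estimation}. So there is no in-paper argument to compare against, and your overall strategy (truncation plus discretization of the growing interval, with a Lipschitz control off-grid via the moment condition $E[|A||W_2|]<\infty$) is exactly the standard one used for this result.

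However, your calibration step contains an arithmetic error that breaks the argument. With $t=n^{-1/2+\epsilon/2}$ and $M_n=n^{1/2-\epsilon/2}$ you have $n t^2=n^{\epsilon}$ and $M_n^2=n^{1-\epsilon}$, so the Hoeffding exponent is $n t^2/M_n^2=n^{2\epsilon-1}$, not $n^{2\epsilon}$. For any $\epsilon<1/2$ this tends to zero, so the tail probability does not decay at all and the union bound over $N_n$ grid points explodes. Moreover, this cannot be repaired by a different choice of $M_n$: the tail bound $E[|A|\mathbf 1\{|A|>M_n\}]\leq M_n^{-1}E[|A|^2]$ forces $M_n\gg n^{1/2-\epsilon}$, while a useful Hoeffding exponent at any level $t=n^{-1/2+\epsilon'}$ with $\epsilon'<\epsilon$ forces $M_n=o(n^{\epsilon'})$; these two ranges are disjoint for small $\epsilon$.

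The fix is to replace Hoeffding by Bernstein's inequality, which uses the variance $\operatorname{Var}(A^{(1)}e^{\mathbf{i}\zeta W_2})\leq E[|A|^2]$ rather than the crude bound $M_n^2$. With the same $t$ and $M_n$ the Bernstein exponent is of order $n t^2/(E[|A|^2]+M_n t)\asymp n^{\epsilon}$, which diverges polynomially and easily survives the union bound over $N_n=O(n^{u+a})$ grid points. With this one substitution the rest of your proof goes through as written.
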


\begin{proof}[\proofname\ of Theorem \ref{Remainder}]
Plug (\ref{hatphiXW2/hatphiW2}) (\ref{delta hat q W1}) and (\ref{expQ}) into
\begin{footnotesize}
\begin{align*}
    &\quad\hat g_{\lambda_1,\lambda_{2,1}, \lambda_{2,2}}(y, x, w^*, h)-g_{\lambda_1,\lambda_{2,1}, \lambda_{2,2}}(y, x, w^*, h_1)\\
    &=\frac{1}{2\pi}\int \left(-\mathbf{i}t\right)^{\lambda_2}e^{-\mathbf itw^*}\phi_{K}(h_1t)\\
    &\quad\quad\times\left[\frac{\hat\phi_{f^{(\lambda_{2,1},\lambda_{2,2})}_{Y,X,W_2}(y,x,\cdot)}(t)}{\hat\phi_{W_2}(t)}\exp\left(\int_0^{t}\frac{\mathbf{i}\hat\theta(\xi)}{\hat \phi_{W_2}(\xi)}d\xi\right)-\frac{\phi_{f^{(\lambda_{2,1},\lambda_{2,2})}_{Y,X,W_2}(y,x,\cdot)}(t)}{\phi_{W_2}(t)}\exp\left(\int_0^{t}\frac{\mathbf{i}\theta(\xi)}{ \phi_{W_2}(\xi)}d\xi\right)\right]dt.
\end{align*}
\end{footnotesize}
and remove terms linear in $\delta \hat{\theta}_{}(t)$, $\delta \hat{\phi}_{W_{2}}(\xi)$, and $\delta \hat\phi_{f^{(\lambda_{2,1},\lambda_{2,2})}_{Y,X,W^*}(y,x,\cdot)}$. For notation simplicity, I write $h$ instead of $h_n$ here. I can then find that $\left|\hat g_{\lambda_1,\lambda_{2,1}, \lambda_{2,2}}(y, x, w^*, h)-g_{\lambda_1,\lambda_{2,1}, \lambda_{2,2}}(y, x, w^*, h_1)\right|\preceq \frac{1}{2\pi}\sum_{l=1}^7R_{l,\lambda_1,\lambda_{2,1},\lambda_{2,2}}$, where
\begin{footnotesize}
\begin{align*}
    R_{1, \lambda_1,\lambda_{2,1},\lambda_{2,2}}&= \int_0^\infty|t|^{\lambda_1} \left|\phi_{K}(h_1t)\right|\left|\delta_{1} \hat{q}_{\lambda_2}(t)\right|\left|\phi_{W^*}(t)\right|\left(\int_0^t \left|\delta_1\hat q_{W_1}(\xi)\right|d\xi\right) dt\\
    R_{2,\lambda_1,\lambda_{2,1},\lambda_{2,2}}&= \int_0^\infty|t|^{\lambda_1} \left|\phi_{K}(h_1t)\right|\left|\delta_{2} \hat{q}_{\lambda_2}(t)\right|\left|\phi_{W^*}(t)\right|dt\\
    R_{3,\lambda_1,\lambda_{2,1},\lambda_{2,2}}  &= \int_0^\infty|t|^{\lambda_1} \left|\phi_{K}(h_1t)\right|\left|\delta_{2} \hat{q}_{\lambda_2}(t)\right|\left|\phi_{W^*}(t)\right|\left(\int_0^t \left|\delta_1\hat q_{W_1}(\xi)\right|d\xi\right)  dt\\
    R_{4,\lambda_1,\lambda_{2,1},\lambda_{2,2}} &= \int_0^\infty|t|^{\lambda_1} \left|\phi_{K}(h_1t)\right|\left|q_{\lambda_2}( t)\right|\left|\phi_{W^*}(t)\right|\left(\int_0^t \left|\delta_2\hat q_{W_1}(\xi)\right|d\xi\right)  dt\\
    R_{5,\lambda_1,\lambda_{2,1},\lambda_{2,2}}&= \int_0^\infty|t|^{\lambda_1} \left|\phi_{K}(h_1t)\right|\left|\delta \hat q_{\lambda_2}(t)\right|\left|\phi_{W^*}(t)\right|\left(\int_0^t \left|\delta_2\hat q_{W_1}(\xi)\right|d\xi\right)  dt\\
    R_{6,\lambda_1,\lambda_{2,1},\lambda_{2,2}}&= \int_0^\infty|t|^{\lambda_1} \left|\phi_{K}(h_1t)\right|\left|q_{\lambda_2}(t)\right|\left|\phi_{W^*}(t)\right|\frac{1}{2}\exp\left(\left|\delta \bar Q(t)\right|\right)\left(\int_0^t \left|\delta\hat q_{W_1}(\xi)\right|d\xi\right)^2  dt\\
    R_{7,\lambda_1,\lambda_{2,1},\lambda_{2,2}}&= \int_0^\infty|t|^{\lambda_1} \left|\phi_{K}(h_1t)\right|\left|\delta \hat q_{\lambda_2}(t)\right|\left|\phi_{W^*}(t)\right|\frac{1}{2}\exp\left(\left|\delta \bar Q(t)\right|\right)\left(\int_0^t \left|\delta\hat q_{W_1}(\xi)\right|d\xi\right)^2 dt,
\end{align*}    
\end{footnotesize}
where $q_{\lambda_2}(t)$, $\delta\hat q_{\lambda_2}(t)$, $\delta_1\hat q_{\lambda_2}(t)$, $\delta_2\hat q_{\lambda_2}(t)$, $q_{W_1}(t)$, $\delta \hat q_{W_1}(t)$, $\delta_1 \hat q_{W_1}(t)$, $\delta_2 \hat q_{W_1}(t)$, and $\delta\bar Q(t)$ were defined in the proof of Lemma \ref{decomposition}.\\
\indent Lemma \ref{Slemma6} gives that for any $\epsilon>0$,
\begin{scriptsize}
\begin{align*}
    &\quad\sup_{(y,x)\in\mathbb{S}_(Y,X)}\sup_{\ \xi\in[-h_{1n}^{-1},h_{1n}^{-1}]} \left|\hat E\left[e^{\mathbf{i}\xi W_2 }\frac{1}{h_{2,1}^{1+\lambda_{2,1}}h_{2,2}^{1+\lambda_{2,2}}}G^{(\lambda_{2,1})}_Y\left(\frac{y-Y}{h_{2n,1}}\right)G^{(\lambda_{2,2})}_X\left(\frac{x-X}{h_{2n,2}}\right)\right]\right.\\
    &\quad\quad\quad\quad\quad\quad\quad\quad\quad\quad\left.-E\left[e^{\mathbf{i}\xi W_2 }\frac{1}{h_{2,1}^{1+\lambda_{2,1}}h_{2,2}^{1+\lambda_{2,2}}}G^{(\lambda_{2,1})}_Y\left(\frac{y-Y}{h_{2n,1}}\right)G^{(\lambda_{2,2})}_X\left(\frac{x-X}{h_{2n,2}}\right)\right]\right|\\
    &=\left(h_{2,1}^{-1}\right)^{1+\lambda_{2,1}}(h_{2,2}^{-1})^{1+\lambda_{2,2}}\sup_{x\in\mathbb{S}_X} \left|G_Y^{(\lambda_{2,1})}\left(\frac{y-Y}{h_{2n,1}}\right)G^{(\lambda_{2,2})}_X\left(\frac{x-X}{h_{2n,2}}\right)\right| \sup_{\ \xi\in[-h_{1n}^{-1},h_{1n}^{-1}]} \left|\hat E\left[e^{\mathbf{i}\xi W_2 }\right]-E\left[e^{\mathbf{i}\xi W_2 }\right]\right|\\
    &=o_p\left(\left(h_{2,1}^{-1}\right)^{1+\lambda_{2,1}}(h_{2,2}^{-1})^{1+\lambda_{2,2}}n^{-1/2+\epsilon}\right). 
\end{align*}
\end{scriptsize}
By (\ref{Edeltahatphi}) I have that 
\begin{scriptsize}
\begin{align*}
    &\quad\sup_{(y,x)\in\mathbb{S}_(Y,X)}\sup_{\ \xi\in[-h_{1n}^{-1},h_{1n}^{-1}]}\left|E\left[e^{\mathbf{i}\xi W_2 }\frac{1}{h_{2,1}^{1+\lambda_{2,1}}h_{2,2}^{1+\lambda_{2,2}}}G^{(\lambda_{2,1})}_Y\left(\frac{y-Y}{h_{2n,1}}\right)G^{(\lambda_{2,2})}_X\left(\frac{x-X}{h_{2n,2}}\right)\right]-\phi_{f^{(\lambda_2)}_{Y,X,W_2}(y,x,\cdot)}(\xi)\right|\\
    &\preceq \sup_{\ \xi\in[-h_{1n}^{-1},h_{1n}^{-1}]}\left|\phi_{W_2}(\xi)\right|O\left(\left(\frac{\bar\xi_{G_Y}}{h_{2,1}}\right)^{1+\gamma_{f_1}}\left(\frac{\bar\xi_{G_X}}{h_{2,2}}\right)^{1+\gamma_{f_2}}\exp\left(\alpha_{f_1}\left(\frac{\bar\xi_{G_Y}}{h_{2,1}}\right)^{\beta_{f_1}}+\alpha_{f_2}\left(\frac{\bar\xi_{G_X}}{h_{2,2}}\right)^{\beta_{f_2}}\right)\right)\\
    &\preceq O\left(\left(\frac{\bar\xi_{G_Y}}{h_{2,1}}\right)^{1+\gamma_{f_1}}\left(\frac{\bar\xi_{G_X}}{h_{2,2}}\right)^{1+\gamma_{f_2}}\exp\left(\alpha_{f_1}\left(\frac{\bar\xi_{G_Y}}{h_{2,1}}\right)^{\beta_{f_1}}+\alpha_{f_2}\left(\frac{\bar\xi_{G_X}}{h_{2,2}}\right)^{\beta_{f_2}}\right)\right)\\
    &=o\left(\left(h_{2,1}^{-1}\right)^{1+\lambda_{2,1}}(h_{2,2}^{-1})^{1+\lambda_{2,2}}n^{-1/2+\epsilon}\right),
\end{align*}
\end{scriptsize}
where the last inequality holds under Assumption \ref{bandwidth2}. Combining results from above, I have that for any $\epsilon>0$,
\begin{footnotesize}
\begin{align*}
    &\quad\sup_{(y,x)\in\mathbb{S}_(Y,X)}\sup_{\ \xi\in[-h_{1n}^{-1},h_{1n}^{-1}]} \left|\hat\phi_{f^{(\lambda_2)}_{Y,X,W_2}(y,x,\cdot)}(\xi)-\phi_{f^{(\lambda_2)}_{Y,X,W_2}(y,x,\cdot)}(\xi)\right| \\
    &=\sup_{(y,x)\in\mathbb{S}_(Y,X)}\sup_{\ \xi\in[-h_{1n}^{-1},h_{1n}^{-1}]} \left|\hat E\left[e^{\mathbf{i}\xi W_2 }\frac{1}{h_{2,1}^{1+\lambda_{2,1}}h_{2,2}^{1+\lambda_{2,2}}}G^{(\lambda_{2,1})}_Y\left(\frac{y-Y}{h_{2n,1}}\right)G^{(\lambda_{2,2})}_X\left(\frac{x-X}{h_{2n,2}}\right)\right]\right.\\ &\quad\quad\quad\quad\quad\quad\quad\quad\quad\quad\quad\quad\quad\quad\left.-E\left[e^{\mathbf{i}\xi W_2 }\frac{1}{h_{2,1}^{1+\lambda_{2,1}}h_{2,2}^{1+\lambda_{2,2}}}G^{(\lambda_{2,1})}_Y\left(\frac{y-Y}{h_{2n,1}}\right)G^{(\lambda_{2,2})}_X\left(\frac{x-X}{h_{2n,2}}\right)\right]\right|\\
    &\quad+\sup_{(y,x)\in\mathbb{S}_(Y,X)}\sup_{\ \xi\in[-h_{1n}^{-1},h_{1n}^{-1}]} \left|E\left[e^{\mathbf{i}\xi W_2 }\frac{1}{h_{2,1}^{1+\lambda_{2,1}}h_{2,2}^{1+\lambda_{2,2}}}G^{(\lambda_{2,1})}_Y\left(\frac{y-Y}{h_{2n,1}}\right)G^{(\lambda_{2,2})}_X\left(\frac{x-X}{h_{2n,2}}\right)\right]\right.\\
    &\quad\quad\quad\quad\quad\quad\quad\quad\quad\quad\quad\quad\quad\quad \left.-\phi_{f^{(\lambda_2)}_{Y,X,W_2}(y,x,\cdot)}(\xi)\right|\\
    &=o_p\left(\left(h_{2,1}^{-1}\right)^{1+\lambda_{2,1}}(h_{2,2}^{-1})^{1+\lambda_{2,2}}n^{-1/2+\epsilon}\right).
\end{align*}
\end{footnotesize}
I define $\Upsilon(h_1)$ and $\hat\Phi_n$ as
\begin{footnotesize}
\begin{align*}
    \Upsilon(h_{1n}) &\equiv (1+h_{1n}^{-1})\left(\sup_{\xi\in[-h^{-1},h^{-1}]}\frac{|\phi_{W^*}^{'}(\xi)|}{|\phi_{W^*}(\xi)|}\right)\left(\sup_{\xi\in[-h_{1n}^{-1},h_{1n}^{-1}]} \left|\phi_{W_2}(\xi)\right|^{-1}\right)\\
    &=O\left((1+h_{1n}^{-1})^{1+\gamma_{*}-\gamma_2}\exp\left(-\alpha_2\left(h_{1n}^{-1}\right)^{\beta_2}\right)\right)\\
    \hat\Phi_{n,\lambda_2}&\equiv\max\Bigg\{\sup_{\xi\in[-h_{1n}^{-1},h_{1n}^{-1}]} \left|\hat\theta(\xi)-\theta(\xi)\right|, \sup_{\xi\in[-h_{1n}^{-1},h_{1n}^{-1}]} \left|\hat\phi_{W_2}(\xi)-\phi_{W_2}(\xi)\right|,\\ &\quad\quad\quad\quad\quad\quad\sup_{(y,x)\in\mathbb{S}_(Y,X)}\sup_{\ \xi\in[-h_{1n}^{-1},h_{1n}^{-1}]} \left|\hat\phi_{f^{(\lambda_{2,1},\lambda_{2,2})}_{Y,X,W_2}(y,x,\cdot)}(\xi)-\phi_{f^{(\lambda_{2,1},\lambda_{2,2})}_{Y,X,W_2}(y,x,\cdot)}(\xi)\right|\Bigg\}\\
    &=o_p\left((h_{2,1}^{-1})^{1+\lambda_{2,1}}(h_{2,2}^{-1})^{1+\lambda_{2,2}}n^{-1/2+\epsilon}\right) 
\end{align*}
\end{footnotesize}
for any $\epsilon>0$. The latter order of magnitude follows from Lemma \ref{Slemma6} and Assumption \ref{bandwidth2} and \ref{finite W2 mmts}. Then $R_{1,\lambda_1, \lambda_{2,1}, \lambda_{2,2}}-R_{7,\lambda_1, \lambda_{2,1}, \lambda_{2,2}}$ can be bounded in terms of $\Psi^+_{\lambda_1,\lambda_{2,1}, \lambda_{2,2}}(h_{1n}), \Upsilon(h_{1n})$, and $\hat\Phi_{n,\lambda_2}$. Note that under Assumption \ref{bandwidth2},
\begin{footnotesize}
\begin{align*}
    &\quad\sup_{\xi\in[-h_{1n}^{-1},h_{1n}^{-1}]} \frac{\hat\Phi_{n,\lambda_2}}{\left|\phi_{W_2}(\xi)\right|}\\
    &\preceq \hat\Phi_{n,\lambda_2} \Upsilon(h_{1n})\\
    &=o_p\left((h_{2,1}^{-1})^{1+\lambda_{2,1}}(h_{2,2}^{-1})^{1+\lambda_{2,2}}n^{-1/2+\epsilon}\right)O\left((1+h_{1n}^{-1})^{1+\gamma_{*}-\gamma_2}\exp\left(-\alpha_2\left(h_{1n}^{-1}\right)^{\beta_2}\right)\right)\\
    &=o_p(1).
\end{align*}
\end{footnotesize}
Now I have
\begin{scriptsize}
\begin{align*}
    R_1&\leq 2\int_0^\infty |t|^{\lambda_1}\left|\phi_{K}(h_1t)\right|\left(\frac{1}{\left|\phi_{W_2}(t)\right|}+\frac{\left|\phi_{f^{(\lambda_{2,1},\lambda_{2,2})}_{Y,X,W_2}(y,x,\cdot)}(t)\right|}{\left|\phi_{W_2}( t)\right|^2}\right)\hat\Phi_{n,\lambda_2}\left|\phi_{W^*}(t)\right|\left(\int_0^t \left|\delta_1\hat q_{W_1}(\xi)\right|d\xi\right) dt\\
    &\preceq\Upsilon(h_1)\hat\Phi_{n,\lambda_2}\int_0^\infty|t|^{\lambda_1} \left|\phi_{K}(h_1t)\right|\left(1+\frac{\left|\phi_{f^{(\lambda_{2,1},\lambda_{2,2})}_{Y,X,W_2}(y,x,\cdot)}(t)\right|}{\left|\phi_{W_2}( t)\right|}\right)\left|\phi_{W^*}(t)\right|\left(\int_0^t \left|\delta_1\hat q_{W_1}(\xi)\right|d\xi\right) d\tau dt \\
    &= \Upsilon(h_1)\hat\Phi_{n,\lambda_2}\int_0^\infty\Bigg[\int_\xi^\infty |t|^{\lambda_1}\left|\phi_{K}(h_1t)\right|\left(1+\frac{\left|\phi_{f^{(\lambda_{2,1},\lambda_{2,2})}_{Y,X,W_2}(y,x,\cdot)}(t)\right|}{\left|\phi_{W_2}( t)\right|}\right)\left|\phi_{W^*}(t)\right| dt\Bigg] \left|\delta_1\hat q_{W_1}(\xi)\right|d\xi\\
    &=\Upsilon(h_1)\hat\Phi_{n,\lambda_2}\int_0^\infty\Bigg[\int_\xi^\infty |t|^{\lambda_1}\left|\phi_{K}(h_1t)\right|\left(\left|\phi_{W^*}(t)\right|+\left|\phi_{f^{(\lambda_{2,1},\lambda_{2,2})}_{Y,X,W^*}(y,x,\cdot)}(t)\right|\right) dt\Bigg] \left|\delta_1\hat q_{W_1}(\xi)\right|d\xi\\
    &\preceq\Upsilon(h_1)\hat\Phi_{n,\lambda_2}^2\int_0^\infty\Bigg[\int_\xi^\infty|t|^{\lambda_1}\left|\phi_{K}(h_1t)\right|\left(\left|\phi_{W^*}(t)\right|+\left|\phi_{f^{(\lambda_{2,1},\lambda_{2,2})}_{Y,X,W^*}(y,x,\cdot)}(t)\right|\right)  dt\Bigg]\\
    &\quad\quad\quad\quad\quad\quad\quad\quad\times\left(1+\frac{\left|\theta(\xi)\right|}{\left|\phi_{W_2}(\xi)\right|}\right)\frac{1}{\left|\phi_{W_2}(\xi)\right|}d\xi\\
    &\preceq \Upsilon(h_1)\hat\Phi_{n,\lambda_2}^2 \Psi^+_{\lambda_1,\lambda_{2,1}, \lambda_{2,2}}(h)\\
    &=o_p\left(h_{2,1}^{-2}(h_{2,2}^{-1})^{2+2\lambda_2}n^{-1+2\epsilon}(h_{1n}^{-1})^{1+\gamma_{*}-\gamma_2}\exp\left(-\alpha_2\left(h_{1n}^{-1}\right)^{\beta_2}\right)\right)\times\\
    &O\left(\max\left\{(h_1^{-1})^{1+\gamma_*},(h_{2,1}^{-1})^{1+\lambda_{2,1}}(h_{2,2}^{-1})^{1+\lambda_{2,2}}\right\}(h_1^{-1})^{1-\gamma_2+\gamma_{\phi}+\lambda_1}\exp\left(\left(\alpha_{\phi}\mathbf{1}\{\beta_{\phi}=\beta_2\}-\alpha_2\right)\left(h_1^{-1}\right)^{\beta_2}\right)\right).
\end{align*}
\end{scriptsize}
\indent If further Assumption \ref{bandwidth2} holds, I just need to show that
\begin{align*}
    o_p\left(h_{2,1}^{-2}(h_{2,2}^{-1})^{2+2\lambda_2}n^{-1/2+2\epsilon}(h_{1n}^{-1})^{1+\gamma_{*}-\gamma_2}\exp\left(-\alpha_2\left(h_{1n}^{-1}\right)^{\beta_2}\right)\right)=o_p(1).
\end{align*}
If $\beta_2\neq 0$, $h_{1n}^{-1}=O\left(\left(\ln n\right)^{1/\beta_2-\eta}\right)$,  $h_{2n,2}^{-1}=O\left(n^{(8+4\lambda_2)^{-1}-\eta}\right)$, so that
\begin{footnotesize}
\begin{align*}
    &\quad o_p\left(n^{-1/2+2\epsilon}h_{2,1}^{-2}(h_{2,2}^{-1})^{2+2\lambda_2}(h_{1n}^{-1})^{1+\gamma_{*}-\gamma_2}\exp\left(-\alpha_2\left(h_{1n}^{-1}\right)^{\beta_2}\right)\right)\\
    &=o_p\left(n^{-1/2+2\epsilon}n^{1/2-2\eta(2+\lambda_2)}(h_{1n}^{-1})^{1+\gamma_{*}-\gamma_2}\exp\left(-\alpha_2\left(h_{1n}^{-1}\right)^{\beta_2}\right)\right)\\
    &=o_p\left(n^{2\epsilon-2\eta(2+\lambda_2)}(\left(\ln n\right)^{1/\beta_2-\eta})^{1+\gamma_{*}-\gamma_2}\exp\left(-\alpha_2\left(\ln n\right)^{1-\eta\beta_2}\right)\right)\\
    &=o_p\left(\exp\left[-\alpha_2\left(\ln n\right)^{1-\eta\beta_2}+\left(2\epsilon-2\eta(2+\lambda_2)\right)\ln n+\left(1+\gamma_*-\gamma_2\right)\left(1/\beta_2-\eta\right)\ln\left(\ln n\right)\right]\right)\\
    &=o_p(1),
\end{align*}    
\end{footnotesize}
where the equality follows since $\left(\ln n\right)^{1-\eta\beta_2}$ and $\ln\left(\ln n\right)$ are dominated by $\ln n$ and by picking $\epsilon<\eta(2+\lambda_2)$.\\
\indent If $\beta_2=0$, $h_{1n}^{-1}=O\left(n^{(4+4\gamma_*-4\gamma_2)^{-1}-\eta}\right)$, $h_{2n,2}^{-1}=O\left(n^{(16+8\lambda_2)^{-1}-\eta}\right)$ and with $\epsilon<\eta$,
\begin{align*}
    &\quad o_p\left(n^{-1/2+2\epsilon}(h_{1n}^{-1})^{1+\gamma_{*}-\gamma_2}(h_{2,1}^{-2}(h_{2,2}^{-1})^{2+2\lambda_2}\right)\\
    &= o_p\left(n^{-1/2+2\epsilon}(n^{(4+4\gamma_*-4\gamma_2)^{-1}-\eta})^{1+\gamma_{*}-\gamma_2}n^{1/4-2\eta(2+\lambda_2)}\right)\\
    &=o_p\left(n^{2\epsilon-\eta(5+\gamma_*-\gamma_2+2\lambda_2)}\right)\\
    &=o_p(1).
\end{align*}
The remaining terms are similarly bounded
\begin{footnotesize}
\begin{align*}
    &\quad R_{2,\lambda_1, \lambda_{2,1}, \lambda_{2,2}}\\
    &\leq \int_0^\infty|t|^{\lambda_1} \left|\phi_{K}(h_1t)\right|\Bigg|\frac{\left|\phi_{f^{(\lambda_{2,1},\lambda_{2,2})}_{Y,X,W_2}(y,x,\cdot)}(t)\right|}{\left|\phi_{W_2}(t)\right|^2}\frac{1}{\left|\phi_{W_2}(t)\right|}\hat\Phi^2_{n,\lambda_2}\left|1+o_p(1)\right|^{-1}\\
    &\quad\quad+\frac{1}{|\phi_{W_2}(t)|^2}\hat\Phi^2_{n,\lambda_2}\left|1+o_p(1)\right|^{-1}\Bigg|\left|\phi_{W^*}(t)\right|  dt\\
    &\preceq \Upsilon(h_1)\hat\Phi^2_{n,\lambda_2}\left|1+o_p(1)\right|^{-1}\int_0^\infty |t|^{\lambda_1}\left|\phi_{K}(h_1t)\right|\frac{1}{\left|\phi_{W_2}(t)\right|}\left|\frac{\left|\phi_{f^{(\lambda_{2,1},\lambda_{2,2})}_{Y,X,W_2}(y,x,\cdot)}(t)\right|}{\left|\phi_{W_2}(t)\right|}+1\right|\left|\phi_{W^*}(t)\right| dt\\
    &\preceq \Upsilon(h_1)\hat\Phi^2_{n,\lambda_2}\left|1+o_p(1)\right|^{-1}\Bigg(\int_0^\infty|t|^{\lambda_1} \left|\phi_{K}(h_1t)\right|\frac{\left|\phi_{f^{(\lambda_{2,1},\lambda_{2,2})}_{Y,X,W^*}(y,x,\cdot)}(t)\right|}{\left|\phi_{W_2}(t)\right|} dt\\
    &\quad\quad\quad\quad\quad\quad\quad\quad\quad\quad\quad\quad\quad\quad\quad\quad\quad\quad +\int_0^\infty|t|^{\lambda_1} \left|\phi_{K}(h_1t)\right|\frac{\left|\phi_{W^*}(t)\right|}{\left|\phi_{W_2}(t)\right|} dt\Bigg)\\
    &\preceq \Upsilon(h_1)\hat\Phi^2_{n,,\lambda_2}\Psi_{\lambda_1,\lambda_{2,1}, \lambda_{2,2}}^+(h)\left(1+o_p(1)\right)^{-1};\\
    &\quad R_{3,\lambda_1, \lambda_{2,1}, \lambda_{2,2}}\\
    &\preceq \Upsilon(h_1)\hat\Phi_{n,\lambda_2}R_{2,\lambda_1, \lambda_{2,1}, \lambda_{2,2}}=o_p(1)R_{2,\lambda_1, \lambda_{2,1}, \lambda_{2,2}};\\
    &\quad R_{4,\lambda_1, \lambda_{2,1}, \lambda_{2,2}} \\
    &= \int_0^\infty|t|^{\lambda_1} \left|\phi_{K}(h_1t)\right|\left|\phi_{f^{(\lambda_{2,1},\lambda_{2,2})}_{Y,X,W^*}(y,x,\cdot)}(t)\right|\left(\int_0^t \left|\delta_2\hat q_{W_1}(\xi)\right|d\xi\right)  dt\\
    &\preceq\Upsilon(h_1)\hat\Phi^2_{n,\lambda_2}\left|1+o_p(1)\right|^{-1}\int_0^\infty \frac{\int_\xi^\infty|t|^{\lambda_1}\left|\phi_{K}(h_1t)\right|\left|\phi_{f^{(\lambda_{2,1},\lambda_{2,2})}_{Y,X,W^*}(y,x,\cdot)}(t)\right|d t}{\left|\phi_{W_2}(\xi)\right|}d\xi\\
    &=\Upsilon(h_1)\hat\Phi^2_{n,\lambda_2}\Psi_{\lambda_1,\lambda_{2,1}, \lambda_{2,2}}^+(h)(1+o_p(1));\\
    &\quad R_{5,\lambda_1, \lambda_{2,1}, \lambda_{2,2}}\\
    &\leq \int_0^\infty|t|^{\lambda_1} \left|\phi_{K}(h_1t)\right|\left(\frac{1}{\left|\phi_{W_2}(t)\right|}+\frac{\left|\phi_{f^{(\lambda_{2,1},\lambda_{2,2})}_{Y,X,W_2}(y,x,\cdot)}(t)\right|}{\left|\phi_{W_2}(t)\right|^2}\right)\\&\quad\times\hat\Phi_{n,\lambda_2}\left|1+o_p(1)\right|^{-1}\left|\phi_{W^*}(t)\right|\left(\int_0^t \left|\delta_2\hat q_{W_1}(\xi)\right|d\xi\right) dt\\
    &\preceq \Upsilon(h_1)\hat\Phi_{n,\lambda_2}\left|1+o_p(1)\right|^{-1}\int_0^\infty \left|\phi_{K}(h_1t)\right|\left(\left|\phi_{W^*}(t)\right|+\left|\phi_{f^{(\lambda_{2,1},\lambda_{2,2})}_{Y,X,W^*}(y,x,\cdot)}(t)\right|\right)\\&\quad\times\left(\int_0^t \left|\delta_2\hat q_{W_1}(\xi)\right|d\xi\right) d\tau dt\\
    &\preceq \Upsilon(h_1)\hat\Phi_{n,\lambda_2}\left(1+o_p(1)\right)R_{4,\lambda_1, \lambda_{2,1}, \lambda_{2,2}}=o_p(1)R_{4,\lambda_1, \lambda_{2,1}, \lambda_{2,2}};\\
    &\quad R_{6,\lambda_1, \lambda_{2,1}, \lambda_{2,2}}\\
    &\leq \int_0^\infty|t|^{\lambda_1} \left|\phi_{K}(h_1t)\right|\left|\phi_{f^{(\lambda_{2,1},\lambda_{2,2})}_{Y,X,W^*}(y,x,\cdot)}(t)\right|\frac{1}{2}\exp\left(\int_0^t\left|\delta \hat q_{W_1}(\xi)\right|d\xi\right)\left(\int_0^t \left|\delta\hat q_{W_1}(\xi)\right|d\xi\right)^2 dt\\
    &\leq \frac{1}{2}\exp\left(o_p(1)\right)\int_0^\infty |t|^{\lambda_1}\left|\phi_{K}(h_1 t)\right|\left|\phi_{f^{(\lambda_{2,1},\lambda_{2,2})}_{Y,X,W^*}(y,x,\cdot)}(t)\right|\left(\int_0^t \left|\delta\hat q_{W_1}(\xi)\right|d\xi\right)^2 dt\\
    &\preceq\frac{1}{2}\exp\left(o_p(1)\right) \Upsilon(h_1)\hat\Phi_{n,\lambda_2}^2\left|1+o_p(1)\right|^{-1}\int_0^\infty |t|^{\lambda_1}\left|\phi_{K}(h_1 t)\right|\left|\phi_{f^{(\lambda_{2,1},\lambda_{2,2})}_{Y,X,W^*}(y,x,\cdot)}(t)\right|\\
    &\quad\quad\quad\quad\quad\quad\quad\quad\quad\quad\quad\quad\quad\quad\quad\quad\times\left(\int_0^t \frac{1}{\left|\phi_{W_2}(\xi)\right|}+\frac{\left|\theta(\xi)\right|}{\left|\phi_{W_2}(\xi)\right|^2}d\xi\right)dt\\
    &\preceq\frac{1}{2}\exp\left(o_p(1)\right) \Upsilon(h_1)\hat\Phi_{n,\lambda_2}^2\left|1+o_p(1)\right|^{-1}\int_0^\infty \left(\int_{\xi}^\infty|t|^{\lambda_1}\left|\phi_{K}(h_1 t)\right|\left|\phi_{f^{(\lambda_{2,1},\lambda_{2,2})}_{Y,X,W^*}(y,x,\cdot)}(t)\right|dt\right)\\  &\quad\quad\quad\quad\quad\quad\quad\quad\quad\quad\quad\quad\quad\quad\quad\quad\times \left(\frac{1}{\left|\phi_{W_2}(\xi)\right|}+\frac{\left|\theta(\xi)\right|}{\left|\phi_{W_2}(\xi)\right|^2}\right)d\xi\\
    &\preceq O_p(1)\Upsilon(h_1)\hat\Phi_{n,\lambda_2}^2\Psi_{\lambda_1,\lambda_{2,1}, \lambda_{2,2}}^+(h)\\
    &\quad R_{7,\lambda_1, \lambda_{2,1}, \lambda_{2,2}}\\
    &\leq \int_0^\infty|t|^{\lambda_1} \left|\phi_{K}(h_1t)\right|\left(1+\frac{\left|\phi_{f^{(\lambda_{2,1},\lambda_{2,2})}_{Y,X,W_2}(y,x,\cdot)}(t)\right|}{\left|\phi_{W_2}(t)\right|}\right)\Upsilon(h_1)\hat\Phi_{n,\lambda_2}\left|1+o_p(1)\right|^{-1}\left|\phi_{W^*}(t)\right|\\
    &\quad\quad\times\frac{1}{2}\exp\left(\int_0^t\left|\delta \hat q_{W_1}(\xi)\right|d\xi\right)\left(\int_0^t \left|\delta\hat q_{W_1}(\xi)\right|d\xi\right)^2 dt\\
    &\preceq \Upsilon(h_1)\hat\Phi_{n,\lambda_2}\left|1+o_p(1)\right|^{-1}\int_0^\infty |t|^{\lambda_1}\left|\phi_{K}(h_1t)\right|\left(\left|\phi_{W^*}(t)\right|+\left|\phi_{f^{(\lambda_{2,1},\lambda_{2,2})}_{Y,X,W^*}(y,x,\cdot)}(t)\right|\right)\\
    &\quad\quad\times\frac{1}{2}\exp\left(\int_0^t\left|\delta \hat q_{W_1}(\xi)\right|d\xi\right)\left(\int_0^t \left|\delta\hat q_{W_1}(\xi)\right|d\xi\right)^2 dt\\
    &\preceq \Upsilon(h_1)\hat\Phi_{n,\lambda_2}\left|1+o_p(1)\right|^{-1} R_{6,\lambda_1, \lambda_{2,1}, \lambda_{2,2}} = o_p(1)R_{6,\lambda_1, \lambda_{2,1}, \lambda_{2,2}}.
\end{align*}
\end{footnotesize}
\end{proof}

\begin{proof}[\proofname\ of Theorem \ref{unif convergent rate}]
In Lemma \ref{file7}, let $Z=(X, W^*)$. By (\ref{pmpx}), for any value $(y, x,w^*)\in \mathbb{S}_{Y, X, W^*}$, and any value $\delta\in[0,1]$, I have that $\mathbf{WLAR}(x)=\tilde\Gamma(f)$ and that $\rho(y,x)=\tilde\Xi(f)$. (i) By Lemma \ref{file7},
\begin{align*}
    &\sup_{(y,x,w^*)\in \mathbb  S_{\tau}}\left|\widehat{\rho(y,x)}-\rho(y,x)\right|\\
    &\leq \frac{O_p\left(\epsilon_{n,0}\right)+O_p\left(\tilde\epsilon_{n,0,0,0}\right)}{\tau^2}+O_p\left(\epsilon_{n,0}\right)+O_p\left(\tilde\epsilon_{n,0,0,1}\right)+\frac{O_p\left(\epsilon_{n,1}\right)+O_p\left(\tilde\epsilon_{n,1,0,0}\right)}{\tau^2}\\
    &\leq O_p\left(\tilde\epsilon_{n,0,0,1}\right)+\frac{O_p\left(\epsilon_{n,1}\right)+O_p\left(\tilde\epsilon_{n,1,0,0}\right)}{\tau^2}.
\end{align*}
The last inequality follows since $\epsilon_{n,0}$ is of smaller order than $\epsilon_{n,1}$, $\tilde\epsilon_{n,0,0,0}$ is of smaller order than $\tilde\epsilon_{n,1,0,0}$. Choose $\tau_n$ such that $\tau_n>0, \tau_n\rightarrow 0$ as $n\rightarrow\infty$, and that $\frac{\epsilon_{n,1}}{\tau^2}\rightarrow 0$ and $\frac{\tilde\epsilon_{n,1,0,0}}{\tau^2}\rightarrow 0$, I can get the desired result.\\
(ii) A direct application of Lemma \ref{file8} gives the desired result. 
\end{proof}
\subsection{Appendix C: Asymptotic Normality}
For asymptotic normality, I need to place a lower bound on $\Omega_{\lambda_1,\lambda_{2,1}, \lambda_{2,2}}(y,x,w^*.h_n)$ relative to $B_{\lambda_1,\lambda_{2,1}, \lambda_{2,2}}(y,x,w^*.h_n)$ and $R_{\lambda_1,\lambda_{2,1}, \lambda_{2,2}}(y,x,w^*.h_n)$. The following assumption is stated at a high level. More primitive sufficient conditions can be derived using techniques of \cite{schennach2004nonparametric}. Combining this assumption with the results from Theorem \ref{biasrate},\ref{Omegaprop} and \ref{Remainder} yields a corollary establishing the asymptotic normality of $\hat g_{\lambda_1,\lambda_{2,1}, \lambda_{2,2}} (y,x,w^*,h_n)$.
\begin{assumption}\label{n^1/2Omega^-1/2BR}(High Level)
For given $\lambda_1, \lambda_{2,1}, \lambda_{2,2}\in\{0,1\}$ and given $j\in\{1,2\}$, $h_n\rightarrow 0$ at a rate such that for each $(y,x,w^*)\in\mathbb{S}_{(Y,X,W^*)}$ such that $\Omega_{\lambda_1,\lambda_{2,1}, \lambda_{2,2}}(y,x,w^*,h_n)>0$ for all $n$ sufficiently large, I have $n^{1/2}\left(\Omega_{\lambda_1,\lambda_{2,1}, \lambda_{2,2}}(y,x,w^*,h_n)\right)^{-1/2}\left|B_{\lambda_1,\lambda_{2,1}, \lambda_{2,2}}(y,x,w^*,h_n)\right|\xrightarrow{p}0$, and $n^{1/2}\left(\Omega_{\lambda_1,\lambda_{2,1}, \lambda_{2,2}}(y,x,w^*,h_n)\right)^{-1/2}\left|R_{\lambda_1,\lambda_{2,1}, \lambda_{2,2}}(y,x,w^*,h_n)\right|\xrightarrow{p}0$.
\end{assumption}

\begin{corollary}\label{cltg}
If the conditions of Theorem \ref{Omegaprop} and Assumption \ref{n^1/2Omega^-1/2BR} hold, then for each $(y,x,w^*)\in\mathbb{S}_{\tau}$ such that $\Omega_{\lambda_1,\lambda_{2,1}, \lambda_{2,2}}(y,x,w^*,h_n)>0$ for all $n$ sufficiently large I have
\begin{align*}
    &\quad n^{1/2}\left(\Omega_{\lambda_1,\lambda_{2,1}, \lambda_{2,2}} (y,x,w^*,h_n)\right)^{-1/2} \left(\hat g_{\lambda_1,\lambda_{2,1}, \lambda_{2,2}} (y,x,w^*,h_n)-g_{\lambda_1,\lambda_{2,1}, \lambda_{2,2}}(y,x,w^*)\right)\\
    &\xrightarrow{d}N(0,1).
\end{align*}
\end{corollary}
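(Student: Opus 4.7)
The plan is to combine the decomposition from Lemma \ref{decomposition} with the central limit result of Theorem \ref{Omegaprop}(ii), using Assumption \ref{n^1/2Omega^-1/2BR} to kill the bias and remainder contributions. Concretely, for fixed $(y,x,w^*)\in\mathbb{S}_{(Y,X,W^*)}$ with $\Omega_{\lambda_1,\lambda_{2,1},\lambda_{2,2}}(y,x,w^*,h_n)>0$ for all $n$ large, Lemma \ref{decomposition} yields
\begin{align*}
\hat g_{\lambda_1,\lambda_{2,1},\lambda_{2,2}}(y,x,w^*,h_n) - g_{\lambda_1,\lambda_{2,1},\lambda_{2,2}}(y,x,w^*) = B_n + L_n + R_n,
\end{align*}
where I abbreviate $B_n \equiv B_{\lambda_1,\lambda_{2,1},\lambda_{2,2}}(y,x,w^*,h_n)$ and similarly for $L_n$ and $R_n$. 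Multiplying by $n^{1/2}\Omega_{\lambda_1,\lambda_{2,1},\lambda_{2,2}}^{-1/2}(y,x,w^*,h_n)$ gives a three-term sum that I will analyze term by term.

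For the dominant (linear) term, Theorem \ref{Omegaprop}(ii) directly delivers
\begin{align*}
n^{1/2}\bigl(\Omega_{\lambda_1,\lambda_{2,1},\lambda_{2,2}}(y,x,w^*,h_n)\bigr)^{-1/2} L_n \xrightarrow{d} N(0,1).
\end{align*}
For the bias and remainder terms, Assumption \ref{n^1/2Omega^-1/2BR} is designed precisely to guarantee
\begin{align*}
n^{1/2}\bigl(\Omega_{\lambda_1,\lambda_{2,1},\lambda_{2,2}}(y,x,w^*,h_n)\bigr)^{-1/2} |B_n| \xrightarrow{p} 0, \qquad n^{1/2}\bigl(\Omega_{\lambda_1,\lambda_{2,1},\lambda_{2,2}}(y,x,w^*,h_n)\bigr)^{-1/2} |R_n| \xrightarrow{p} 0.
\end{align*}
Applying Slutsky's theorem to combine these three pieces then yields the claimed limiting $N(0,1)$ distribution for the full standardized error.

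Because the bias rate (Theorem \ref{biasrate}), the remainder rate (Theorem \ref{Remainder}), and the asymptotic variance order $\sup \Omega$ (Theorem \ref{Omegaprop}(i)) are all already quantified, there is no hard calculation left at this level: the only genuine obstacle is that Assumption \ref{n^1/2Omega^-1/2BR} is stated in pointwise form and requires a \emph{lower} bound on $\Omega_{\lambda_1,\lambda_{2,1},\lambda_{2,2}}(y,x,w^*,h_n)$ rather than the upper bound supplied by Theorem \ref{Omegaprop}(i). The paper itself flags this gap and defers primitive sufficient conditions to future work, so at the level of the corollary I simply invoke the high-level assumption and cite the Slutsky argument above. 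The proof is therefore a short three-line chain: decompose by Lemma \ref{decomposition}, pass to the limit in the linear piece via Theorem \ref{Omegaprop}(ii), and dispose of the other two pieces by Assumption \ref{n^1/2Omega^-1/2BR} and Slutsky.
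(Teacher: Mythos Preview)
Your proposal is correct and matches the paper's own approach: the corollary is stated immediately after the sentence ``Combining this assumption with the results from Theorem \ref{biasrate}, \ref{Omegaprop} and \ref{Remainder} yields a corollary establishing the asymptotic normality of $\hat g_{\lambda_1,\lambda_{2,1}, \lambda_{2,2}} (y,x,w^*,h_n)$,'' and no further proof is given because the argument is exactly the decomposition--CLT--Slutsky chain you wrote out. Your commentary about the lower-bound gap in Assumption \ref{n^1/2Omega^-1/2BR} also aligns with the paper's own caveat that primitive conditions are deferred to future work.
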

In the next few lemmas, I show the asymptotic normality of $\widehat{\rho(y,x)}$ under some high-level assumptions. The asymptotic normality of $\widehat{\mathbf{WLAR}(x)}$ can be shown following the same logic. I first state a Lemma in a similar fashion as Lemma \ref{decomposition}
\begin{lemma}\label{decomp pmpx}
For any value $(y,x,w^*)\in \mathbb{S}_{\tau}$, I have that
\begin{align*}
    \widehat{\rho(y,x)}-\rho(y,x)=BD\tilde\Xi_n+LD\tilde\Xi_n+RD\tilde\Xi_n+R\tilde\Xi_n,
\end{align*}
where $R\tilde\Xi_n$ corresponds to the nonlinear part of functional $\tilde\Xi$; $BD\tilde\Xi_n$ and $RD\tilde\Xi_n$ are the bias term contained in the linear part of functional $\tilde\Xi$; $LD\tilde\Xi_n\equiv\hat E\left[lD\tilde\Xi_n\right]$ is the linear term\footnote{Linear in terms of $\hat E\left[W_1e^{i\xi W_2}\right]$, $\hat E\left[e^{i\xi W_2}\right]$ and $\hat E\left[e^{i\xi W_2}\frac{1}{h_{2,1}^{1+\lambda_{2,1}}h_{2,2}^{1+\lambda_{2,2}}}G_Y\left(\frac{y-Y}{h_{2,1}}\right)G_X^{(\lambda_{2,2})}\left(\frac{x-X}{h_{2,2}}\right)\right]$} contained in the linear part of functional $\tilde\Xi$, and is equal to $\hat E\left[\sum_{k=1}^{11}\tilde s_k(y,x,w^*)lD\tilde\Xi_{n,k}\right]$ in which
\begin{align*}
    lD\tilde\Xi_{n,1} &=\int l_{0,0,0}(y,x,w^*,h_n)dy \\
    lD\tilde\Xi_{n,2}&=\int_{-\infty}^{y} l_{0,0,0}(y,x,w^*,h_n )dy\\
    lD\tilde\Xi_{n,3}&=\int l_{0,0,1}(y,x,w^*,h_n) dy\\
    lD\tilde\Xi_{n,4}&=\int_{-\infty}^{y} l_{0,0,1}(y,x,w^*,h_n) dy\\
    lD\tilde\Xi_{n,5}&=l_{0,0,0}(y,x,w^*,h_n)\\
    lD\tilde\Xi_{n,6}&=\int l_{1,0,0}(y,x,w^*,h_n) dy\\
    lD\tilde\Xi_{n,7}&=\int_{-\infty}^{y} l_{1,0,0}(y,x,w^*,h_n) dy\\
    lD\tilde\Xi_{n,8}&=\int \tilde l_{0,0}(s,w^*,h_n) ds\\
    lD\tilde\Xi_{n,9}&=\int_{-\infty}^x \tilde l_{0,0}(s,w^*,h_n) ds\\
    lD\tilde\Xi_{n,10}&=\int \tilde l_{1,0}(s,w^*,h_n) ds\\
    lD\tilde\Xi_{n,11}&=\int_{-\infty}^x \tilde l_{1,0}(s,w^*,h_n) ds.
\end{align*}
\begin{scriptsize}
\begin{align*}
    &\tilde s_1(y,x,w^*) =-\frac{F_{Y\mid X=x, W^*=w^*}(y)}{f_{X,W^*}(x,w^*)f_{Y,X,W^*}(y,x,w^*)}\left(\frac{\partial f_{X,W^*}(x,w^*)}{\partial x}+\frac{\partial f_{X,W^*}(x,w^*)}{\partial w^*}\frac{1}{\tilde\Psi_{(1)}(f)}\right)\\
    &\quad\quad\quad\quad\quad\quad-\frac{F_{X\mid W^*=w^*}(x)\frac{\partial f_{W^*}(w^*)}{\partial w^*}-\int_{-\infty}^x\frac{\partial f_{X,W^*}(s,w^*)}{\partial w^*}ds}{\tilde f^2(x,w^*)}\times\frac{\Psi_{(2)}(f)}{\tilde\Psi^2_{(1)}(f)}\\
    &\tilde s_2(y,x,w^*) =\frac{1}{f_{X,W^*}(x,w^*)f_{Y,X,W^*}(y,x,w^*)}\left(\frac{\partial f_{X,W^*}(x,w^*)}{\partial x}+\frac{\partial f_{X,W^*}(x,w^*)}{\partial w^*}\frac{1}{\tilde\Psi_{(1)}(f)}\right)\\
    &\tilde s_3(y,x,w^*) =\frac{F_{Y\mid X=x, W^*=w^*}(y)}{f_{Y,X,W^*}(y,x,w^*)}\\
    &\tilde s_4(y,x,w^*) =-\frac{1}{f_{Y,X,W^*}(y,x,w^*)}\\
    &\tilde s_5(y,x,w^*) =-\frac{1}{f^2(y,x,w^*)}\left[F_{Y\mid X=x, W^*=w^*}(y)\frac{\partial f_{X,W^*}(x,w^*)}{\partial x}-\int_{-\infty}^{y}\frac{\partial f_{Y,X,W^*}(y,x,w^*)}{\partial x}dy\right.\\
    &\quad\quad\quad\quad\quad\quad+\left.\left(F_{Y\mid X=x, W^*=w^*}(y)\frac{\partial f_{X,W^*}(x,w^*)}{\partial w^*}-\int_{-\infty}^{y}\frac{\partial f_{Y,X,W^*}(y,x,w^*)}{\partial w^*}dy\right)\frac{1}{\tilde \Psi_{(1)}(f)}\right]\\
    &\tilde s_6(y,x,w^*) =\frac{F_{Y\mid X=x, W^*=w^*}(y)}{f_{Y,X,W^*}(y,x,w^*)}\frac{1}{\tilde \Psi_{(1)}(f)}\\
    &\tilde s_7(y,x,w^*) =-\frac{1}{f_{Y,X,W^*}(y,x,w^*)}\frac{1}{\tilde \Psi_{(1)}(f)}\\
    &\tilde s_8(y,x,w^*) = -\frac{F_{X\mid W^*=w^*}(x)\frac{\partial f_{W^*}(w^*)}{\partial w^*}}{f_{W^*}(w^*)f_{X,W^*}(x,w^*)}\times \frac{\Psi_{(2)}(f)}{\tilde\Psi^2_{(1)}(f)}\\
    &\tilde s_9(y,x,w^*) = \frac{\frac{\partial f_{W^*}(w^*)}{\partial w^*}}{f_{W^*}(w^*)f_{X,W^*}(x,w^*)}\times \frac{\Psi_{(2)}(f)}{\tilde\Psi^2_{(1)}(f)}\\
    &\tilde s_{10}( y,x,w^*) = \frac{F_{X\mid W^*=w^*}(x)}{f_{X,W^*}(x,w^*)}\times \frac{\Psi_{(2)}(f)}{\tilde\Psi^2_{(1)}(f)}\\
    &\tilde s_{11}( y,x,w^*) =-\frac{1}{f_{X,W^*}(x,w^*)}\times\frac{\Psi_{(2)}(f)}{\tilde\Psi^2_{(1)}(f)},
\end{align*}

\end{scriptsize}
where $\alpha(f)\equiv F^{-1}_{Y\mid X=x,W^*=w^*}(\delta)$, $\Phi_{(2)}(f)\equiv \frac{\partial F_{Y|X=x, W^*=w^*}^{-1}(\delta)}{\partial w^*}$, \\
$\Psi_{(2)}(f)\equiv- \left.\frac{\partial F_{Y|X=x, W^*=w^*}(y)}{\partial w^*}\right/\frac{\partial F_{Y|X=x, W^*=w^*}(y)}{\partial y}$, and $\tilde\Psi_{(1)}(f)=- \left.\frac{\partial F_{X|W^*=w^*}(x)}{\partial w^*}\right/\frac{\partial F_{X|W^*=w^*}(x)}{\partial x}$.
\end{lemma}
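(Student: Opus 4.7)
The plan is to identify $\rho(y,x)$ with $\tilde\Xi(f)$, where $f \equiv f_{Y,X,W^*}$ and $\tilde\Xi$ is the functional introduced in Lemma \ref{file7} with $Z = (X, W^*)$; equation (\ref{pmpxbar}) gives exactly $\rho(y,x) = \Psi_{(1)}(f) + \Psi_{(2)}(f)/\tilde\Psi(f) = \tilde\Xi(f)$. The plug-in estimator $\widehat{\rho(y,x)}$ is obtained by replacing the density $f$, together with its derivatives $\partial f/\partial x$ and $\partial f/\partial w^*$, by the kernel deconvolution estimators $\hat g_{0,0,0}$, $\hat g_{0,0,1}$, and $\hat g_{1,0,0}$ from Lemma \ref{convolution thm}. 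I treat this jointly as a perturbation $\hat f - f$ of $\tilde\Xi$.

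First, I invoke Lemma \ref{file7} to split the functional error into a linear and a nonlinear piece,
\begin{align*}
\widehat{\rho(y,x)} - \rho(y,x) = D\tilde\Xi(f;\,\hat f - f) + R\tilde\Xi(f;\,\hat f - f),
\end{align*}
and set $R\tilde\Xi_n \equiv R\tilde\Xi(f;\,\hat f - f)$, which is the claimed nonlinear remainder. Second, I expand $D\tilde\Xi(f;\,\cdot)$ using the formulas from Lemmas \ref{file1}, \ref{file2}, \ref{file6}, and \ref{file7}. The Fr\'echet derivative $D\tilde\Xi(f;h)$ is an affine combination of the three blocks $D\Psi_{(1)}(f;h)$, $D\Psi_{(2)}(f;h)$, and $D\tilde\Psi(f;h)$, each of which is a linear functional of $h$ through its pointwise value, its two first-order partials in $x$ and $w^*$, and the one-dimensional partial integrals $\int_{-\infty}^y$, $\int_{-\infty}^x$, $\int dy$, and $\int\int dy\,ds$. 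Reading off the coefficients of each of these eleven distinct linear templates produces precisely the eleven weights $\tilde s_1(y,x,w^*), \ldots, \tilde s_{11}(y,x,w^*)$ listed in the lemma and identifies the eleven integrand shapes that define $lD\tilde\Xi_{n,1}, \ldots, lD\tilde\Xi_{n,11}$.

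Third, I substitute the pointwise decomposition $\hat g_{\lambda_1,\lambda_{2,1},\lambda_{2,2}} - g_{\lambda_1,\lambda_{2,1},\lambda_{2,2}} = B_{\lambda} + L_{\lambda} + R_{\lambda}$ from Lemma \ref{decomposition} into each of the eleven linear functionals. Because $D\tilde\Xi(f;\cdot)$ is linear and because Fubini allows the partial integrals to commute with the $B/L/R$ decomposition, I can regroup terms: bias contributions aggregate into $BD\tilde\Xi_n$, remainder contributions into $RD\tilde\Xi_n$, and the variance parts, using $L_{\lambda}(y,x,w^*,h) = \hat E[l_{\lambda}(y,x,w^*,h;Y,X,W_1,W_2)]$ together with the fact that the $\tilde s_k$ are non-random, collapse into $LD\tilde\Xi_n = \hat E[\sum_{k=1}^{11}\tilde s_k(y,x,w^*)\,lD\tilde\Xi_{n,k}]$ with the integrands exactly as stated.

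The main obstacle is the bookkeeping of the eleven coefficients $\tilde s_k$. Each $\tilde s_k$ must be extracted from a rational expression involving $\Psi_{(1)}(f)$, $\Psi_{(2)}(f)$, $\tilde\Psi_{(1)}(f)$, $\Psi_{1,(j)}(f)$, and the conditional CDFs $F_{Y\mid X,W^*}$ and $F_{X\mid W^*}$, and matching each to the correct one of the eleven linear-functional templates requires tracking which of the three blocks $D\Psi_{(1)}, D\Psi_{(2)}, D\tilde\Psi$ contributes to each template and carefully distinguishing between the $(y,x,w^*)$-pointwise derivative estimators and their $y$-integrated or $x$-integrated counterparts. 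Once this algebraic identification is carried out, the decomposition of the lemma follows immediately from the linearity of $D\tilde\Xi(f;\cdot)$ together with the substitution from Lemma \ref{decomposition}.
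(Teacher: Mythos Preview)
Your proposal is correct and follows essentially the same approach as the paper: identify $\rho(y,x)=\tilde\Xi(f)$ via equation (\ref{pmpxbar}), apply the functional linearization from Lemma \ref{file7} to obtain $D\tilde\Xi_n+R\tilde\Xi_n$, read off the eleven coefficients $\tilde s_k$ from the explicit forms of $D\Psi_{(1)}$, $D\Psi_{(2)}$, $D\tilde\Psi$ in Lemma \ref{file2}, and then insert the $B+L+R$ decomposition from Lemma \ref{decomposition} into each of the eleven linear templates, using Fubini to pull $\hat E$ outside the $y$- and $x$-integrals. The paper's own proof is exactly this, stated somewhat more tersely.
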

\begin{proof}[\proofname\ of Lemma \ref{decomp pmpx}]
\indent By (\ref{pmpxbar}), for any value $(y, x,w^*)\in \mathbb{S}_{\tau}$, I have that
\begin{align*}
    \widehat{\rho(y,x)}-\rho(y,x)=D\tilde\Xi_n+R\tilde\Xi_n.
\end{align*}
By Lemma \ref{file7},  
\begin{align*}
    D\tilde\Xi_n = \sum_{k=1}^{11}\tilde s_k(x,w^*,\delta)D\tilde\Xi_{n,k},
\end{align*}
where $\tilde s_k(x,w^*,\delta)$ is as stated in the Lemma.
and
\begin{align*}
    &D\tilde\Xi_{n,1}=\int \left(\hat{g}_{0,0,0}(y,x,w^*,h_n)-g_{0,0,0}(y,x,w^*,h_n)\right) dy\\
    &D\tilde\Xi_{n,2}=\int_{-\infty}^{y} \left(\hat{g}_{0,0,0}(y,x,w^*,h_n)-g_{0,0,0}(y,x,w^*,h_n)\right) dy\\
    &D\tilde\Xi_{n,3}=\int \left(\hat{g}_{0,1,1}(y,x,w^*,h_n)-g_{0,1,1}(y,x,w^*,h_n)\right) dy\\
    &D\tilde\Xi_{n,4}=\int_{-\infty}^{y} \left(\hat{g}_{0,1,1}(y,x,w^*,h_n)-g_{0,1,1}(y,x,w^*,h_n)\right) dy\\
    &D\tilde\Xi_{n,5}=\hat{g}_{0,0,0}(y,x,w^*,h_n)-g_{0,0,0}(y,x,w^*,h_n)\\
    &D\tilde\Xi_{n,6}=\int \left(\hat{g}_{1,0,0}(y,x,w^*,h_n)-g_{1,0,0}(y,x,w^*,h_n)\right) dy\\
    &D\tilde\Xi_{n,7}=\int_{-\infty}^{y} \left(\hat{g}_{1,0,0}(y,x,w^*,h_n)-g_{1,0,0}(y,x,w^*,h_n)\right) dy\\
    &D\tilde\Xi_{n,8}=\int\left(\hat{\tilde g}_{0,0}(s,w^*,h_n)-\tilde g_{0,0}(s,w^*,h_n)\right) ds\\
    &D\tilde\Xi_{n,9}=\int_{-\infty}^x \left(\hat{\tilde g}_{0,0}(s,w^*,h_n)-\tilde g_{0,0}(s,w^*,h_n)\right) ds\\
    &D\tilde\Xi_{n,10}=\int\left(\hat{\tilde g}_{1,0}(s,w^*,h_n)-\tilde g_{1,0}(s,w^*,h_n)\right) ds\\
    &D\tilde\Xi_{n,11}=\int_{-\infty}^x \left(\hat{\tilde g}_{1,0}(s,w^*,h_n)-\tilde g_{1,0}(s,w^*,h_n)\right) ds.
\end{align*}
By Lemma \ref{decomposition}, a first order expansion of $D\tilde\Xi_n$ can be written as
\begin{align*}
    D\tilde\Xi_n &=BD\tilde\Xi_n+LD\tilde\Xi_n+RD\tilde\Xi_n\\&= BD\tilde\Xi_n+\sum_{k=1}^{11}\tilde s_k(x,w^*,\delta)LD\tilde\Xi_{n,k}+RD\tilde\Xi_n
\end{align*}
where 
\begin{footnotesize}
\begin{align*}
    LD\tilde\Xi_{n,1} &=\int L_{0,0,0}(y,x,w^*,h_n) dy= \int \hat E\left[l_{0,0,0}(y,x,w^*,h_n) \right]dy =  \hat E\left[\int l_{0,0,0}(y,x,w^*,h_n)dy \right]\\
    LD\tilde\Xi_{n,2}&=\int_{-\infty}^{y} L_{0,0,0}(y,x,w^*,h_n) dy=\int_{-\infty}^{y} \hat E\left[l_{0,0,0}(y,x,w^*,h_n)\right] dy\\
    &=\hat E\left[\int_{-\infty}^{y} l_{0,0,0}(y,x,w^*,h_n )dy\right]\\
    LD\tilde\Xi_{n,3}&=\int L_{0,0,1}(y,x,w^*,h_n) dy= \int \hat E\left[l_{0,0,1}(y,x,w^*,h_n)\right] dy\\
    &= \hat E\left[\int l_{0,0,1}(y,x,w^*,h_n) dy\right]\\
    LD\tilde\Xi_{n,4}&=\int_{-\infty}^{y} L_{0,0,1}(y,x,w^*,h_n) dy= \int_{-\infty}^{y} \hat E\left[l_{0,0,1}(y,x,w^*,h_n)\right] dy\\
    &= \hat E\left[\int_{-\infty}^{y} l_{0,0,1}(y,x,w^*,h_n) dy\right]\\
    LD\tilde\Xi_{n,5}&=L_{0,0,0}(y,x,w^*,h_n)=\hat E\left[l_{0,0,0}(y,x,w^*,h_n)\right]\\
    LD\tilde\Xi_{n,6}&=\int L_{1,0,0}(y,x,w^*,h_n) dy= \int \hat E\left[l_{1,0,0}(y,x,w^*,h_n)\right] dy\\
    &= \hat E\left[\int l_{1,0,0}(y,x,w^*,h_n) dy\right]\\
    LD\tilde\Xi_{n,7}&=\int_{-\infty}^{y} L_{1,0,0}(y,x,w^*,h_n) dy= \int_{-\infty}^{y} \hat E\left[l_{1,0,0}(y,x,w^*,h_n)\right] dy\\
    &= \hat E\left[\int_{-\infty}^{y} l_{1,0,0}(y,x,w^*,h_n) dy\right]\\
    LD\tilde\Xi_{n,8}&=\int\tilde L_{0,0}(s,w^*,h_n) ds=\int \hat E\left[\tilde l_{0,0}(s,w^*,h_n)\right] ds=\hat E\left[\int \tilde l_{0}(s,w^*,h_n) ds\right]\\
    LD\tilde\Xi_{n,9}&=\int_{-\infty}^x \tilde L_{0,0}(s,w^*,h_n) ds=\int_{-\infty}^x \hat E\left[\tilde l_{0,0}(s,w^*,h_n)\right] ds=\hat E\left[\int_{-\infty}^x \tilde l_{0,0}(s,w^*,h_n) ds\right]\\
    LD\tilde\Xi_{n,10}&=\int \tilde L_{1,0}(s,w^*,h_n)ds=\int \hat E\left[\tilde l_{1,0}(s,w^*,h_n)\right] ds=\hat E\left[\int \tilde l_{1,0}(s,w^*,h_n) ds\right]\\
    LD\tilde\Xi_{n,11}&=\int_{-\infty}^x \tilde L_{1,0}(s,w^*,h_n)ds=\int_{-\infty}^x \hat E\left[\tilde l_{1,0}(s,w^*,h_n)\right] ds=\hat E\left[\int_{-\infty}^x \tilde l_{1,0}(s,w^*,h_n) ds\right].
\end{align*}    
\end{footnotesize}
Then I can write
\begin{align*}
    LD\tilde\Xi_n &= \hat E\left[lD\tilde\Xi_n\right]\\
    &=\hat E\left[\sum_{k=1}^{11}\tilde s_k(x,w^*,\delta)lD\tilde\Xi_{n,k}\right],
\end{align*}
where $lD\tilde\Xi_n$ are as stated in the Lemma.
\end{proof}
\begin{lemma}\label{Omegaprop Xi}
Suppose the conditions of Lemma \ref{decomposition} hold. (i) Then for each $(y, x, w^*)\in\mathbb{S}_{\tau}$, $E[LD\tilde\Xi_n]=0$, and if Assumption \ref{finite variance} also holds, then $E[LD\tilde\Xi_n^2]=n^{-1}\tilde\Omega(y,x,w^*,h)$, where $\tilde\Omega(y,x,w^*,h)\equiv E\left[lD\tilde\Xi_n^2\right]<\infty$.\\
\indent (ii) If Assumption \ref{finite variance} also holds, and if for each $(y, x,w^*)\in\mathbb{S}_{\tau}$, $\tilde\Omega(y,x,w^*,h_n)>0$ for all $n$ sufficiently large, then for each $(y,x, w^*)\in\mathbb{S}_{\tau}$
\begin{align*}
    n^{1/2}\left(\tilde\Omega(y,x,w^*,h_n)\right)^{-1/2}LD\tilde\Xi_n\xrightarrow{d}N(0,1).
\end{align*}
\end{lemma}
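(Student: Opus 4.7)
The plan is to exploit the representation established in Lemma \ref{decomp pmpx}, namely
\[
LD\tilde\Xi_n=\hat E\Bigl[\sum_{k=1}^{11}\tilde s_k(y,x,w^*)\,lD\tilde\Xi_{n,k}\Bigr],
\]
so that $LD\tilde\Xi_n$ is a sample average over the IID sequence $\{(Y_j,X_j,W_{1,j},W_{2,j})\}$ of a \emph{fixed} (for given $(y,x,w^*,h_n)$) function of the data. Once this is recognized, parts (i) and (ii) are obtained, respectively, by a direct moment computation and by invoking the general CLT already stated as Lemma \ref{CLTgeneral}.

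For part (i), I would first observe that each building block $lD\tilde\Xi_{n,k}$ is either a value or an integral (over a subset of the compact support $\mathbb{S}_Y$) of one of the influence functions $l_{\lambda_1,\lambda_{2,1},\lambda_{2,2}}(y,x,w^*,h_n;\,\cdot\,)$ from Lemma \ref{decomposition}. Theorem \ref{Omegaprop}(i) establishes $E[l_{\lambda_1,\lambda_{2,1},\lambda_{2,2}}]=0$, so Fubini (justified by the finite-support bounds in Assumptions \ref{cpctcontdiff}, \ref{finite variance} and the integrability of the $\Psi$-kernels in Lemma \ref{orderPsi}) gives $E[lD\tilde\Xi_{n,k}]=0$ for each $k$. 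Since the coefficients $\tilde s_k(y,x,w^*)$ are nonrandom and bounded on $\mathbb{S}_\tau$ (by construction of $\mathbb{S}_\tau$, $f_{X,W^*}$, $f_{Y,X,W^*}$, and $\tilde\Psi_{(1)}(f)$ are bounded away from zero there, while the numerators are bounded by Assumption \ref{cpctcontdiff}), linearity of expectation yields $E[LD\tilde\Xi_n]=0$. The IID property then gives $E[LD\tilde\Xi_n^2]=n^{-1}\tilde\Omega(y,x,w^*,h)$ with $\tilde\Omega(y,x,w^*,h)=E[(lD\tilde\Xi_n)^2]$. Finiteness of $\tilde\Omega$ follows by expanding the square, bounding the 11 diagonal and 110 cross terms using Cauchy--Schwarz, and using Theorem \ref{Omegaprop}(i) together with the compactness of $\mathbb{S}_Y$ (which makes integrals of the form $\int l\,dy$ a finite linear combination over $y$ of second-moment-finite variables).

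For part (ii), the plan is to apply Lemma \ref{CLTgeneral} with $J=11$, $A=(Y,X,W_1,W_2)$, and $P_{n,k}(A)=\tilde s_k(y,x,w^*)\,lD\tilde\Xi_{n,k}$. Then $\sum_{k=1}^{11}P_{n,k}(A)=lD\tilde\Xi_n$, and the hypothesis $\tilde\Omega(y,x,w^*,h_n)>0$ for large $n$ is exactly the requirement $\sigma_n^2=\operatorname{Var}\bigl(\sum_k P_{n,k}(A)\bigr)>0$. The conclusion $n^{1/2}\tilde\Omega^{-1/2}LD\tilde\Xi_n\xrightarrow{d}N(0,1)$ is then immediate from Lemma \ref{CLTgeneral}.

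The main obstacle I anticipate is the bookkeeping for part (i) finiteness: each $lD\tilde\Xi_{n,k}$ is itself a triple integral in the $\xi$-variables (through the $\Psi$-representation of $l$) and, for $k\in\{2,4,7,9,11\}$, an additional $y$- or $x$-integral. Bounding $E[(lD\tilde\Xi_n)^2]$ then reduces, via Fubini and Cauchy--Schwarz, to bounds of the form \[E\Bigl[\Bigl(\int \Psi^+_{\lambda_1,\lambda_{2,1},\lambda_{2,2}}(\xi,h_1)\bigl|e^{\mathbf{i}\xi W_2}\bigr|\,d\xi\Bigr)^2\Bigr]\le\Bigl(\int\Psi^+_{\lambda_1,\lambda_{2,1},\lambda_{2,2}}(\xi,h_1)\,d\xi\Bigr)^2,\] which are finite by Lemma \ref{orderPsi}; the analogous bounds involving the $G_Y$, $G_X$ kernels use Assumption \ref{phik} and Assumption \ref{finite variance}. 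Once these component bounds are in place, the Cauchy--Schwarz argument for cross terms and the boundedness of $\tilde s_k$ on $\mathbb{S}_\tau$ deliver $\tilde\Omega<\infty$. No fundamentally new tools beyond those already developed for Theorem \ref{Omegaprop} are required; the argument is essentially a vector-valued version of the scalar result, applied term by term to the linearization of the functional $\tilde\Xi$.
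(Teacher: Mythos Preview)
Your proposal is correct and follows essentially the same route as the paper: use the representation $LD\tilde\Xi_n=\hat E\bigl[\sum_k\tilde s_k\,lD\tilde\Xi_{n,k}\bigr]$ from Lemma \ref{decomp pmpx}, obtain mean zero and finite second moments term by term from Theorem \ref{Omegaprop}(i) via Cauchy--Schwarz and compactness of the support, and then invoke Lemma \ref{CLTgeneral} for the CLT. The only cosmetic difference is that the paper bounds $E\bigl[(\int l\,dy)^2\bigr]$ by Jensen and Tonelli rather than by your Fubini/Cauchy--Schwarz route, but the substance is identical.
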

\begin{proof}[\proofname\ of Lemma \ref{Omegaprop Xi}]
(i) The fact that $E\left[LD\tilde\Xi_n\right]=0$ follows directly from Theorem \ref{Omegaprop}. Next I show that for all $(y,x,w^*)\in\mathbb{S}_\tau$, $\tilde\Omega(y,x,w^*,h)<\infty$. It follows by Cauchy Schwartz that 
\begin{align*}
    \tilde\Omega(y,x,w^*,h_n)&\equiv E\left[lD\tilde\Xi_n^2\right]\\
    &\preceq \sum_{k=1}^{11}\tilde s_k^2(y,x,w^*)E\left[lD\tilde\Xi_{n,k}^2\right].
\end{align*}
Note that 
\begin{align*}
    E\left[lD\tilde\Xi_{n,1}^2\right]&=E\left[\left(\int l_{0,0,0}(y,x,w^*,h_n)dy\right)^2\right]\\
    &\leq E\left[\int \left(l_{0,0,0}(y,x,w^*,h_n)\right)^2dy\right]\\
    &= \int E\left[ \left(l_{0,0,0}(y,x,w^*,h_n)\right)^2\right]dy
\end{align*}
where the second line follows from Jensen's Inequality and the third line follows from Tonelli's Theorem. By Theorem \ref{Omegaprop}, $\int E\left[ \left(l_{0,0,0}(y,x,w^*,h)\right)^2\right]dy<\infty$ for each $h$. Conclusions for other values of $k\in\{1,2,3,...,11\}$ follows similarly. Then by Assumption \ref{cpctcontdiff}, I have that $\tilde\Omega(y,x,w^*,h)<\infty$ for all values $(y,x,w^*)\in\mathbb{S}_\tau$.\\
\indent (ii) To show asymptotic normality, I apply Lemma \ref{CLTgeneral} to $lD\tilde\Xi_n=\sum_{k=1}^{11}\tilde s_k(x,w^*,\delta)lD\tilde\Xi_{n,k}$. By previous argument, $\tilde\Omega(y,x,w^*,h)<\infty$ for all values $(y,x,w^*)\in\mathbb{S}_\tau$. I've assumed that for $n$ sufficiently large, $\tilde\Omega(y,x,w^*,h_n)>0$. The conclusion follows.
\end{proof}
Finally, I state the asymptotic normality result for the estimator of $\rho(y,x)$ under a high-level assumption:
\begin{assumption}\label{n^1/2Omega^-1/2RDXi}(High Level)
For given $\lambda_1, \lambda_{2,1}, \lambda_{2,2}\in\{0,1\}$ and given $j\in\{1,2\}$, $h_n\rightarrow 0$ at a rate such that for each $(y,x,w^*)\in \mathbb{S}_{\tau}$ such that $\tilde\Omega(y,x,w^*,h_n)>0$ for all $n$ sufficiently large, I have $n^{1/2}\left(\tilde\Omega(y,x,w^*,h_n)\right)^{-1/2}\left|BD\tilde\Xi_n\right|\xrightarrow{p}0$ and $n^{1/2}\left(\tilde\Omega(y,x,w^*,h_n)\right)^{-1/2}\left|RD\tilde\Xi_n\right|\xrightarrow{p}0$.
\end{assumption}

\begin{theorem}\label{clt}
If the conditions of Theorem \ref{Omegaprop} and Assumption \ref{n^1/2Omega^-1/2BR}, \ref{n^1/2Omega^-1/2RDXi} hold, then for each $(y,x,w^*)\in \mathbb{S}_{\tau}$ such that $\tilde\Omega(y,x,w^*,h_n)>0$ for all $n$ sufficiently large I have
\begin{align*}
    n^{1/2}\left(\tilde\Omega (y,x,w^*,h_n)\right)^{-1/2} \left(\widehat{\rho(y,x)}-\rho(y,x)\right)\xrightarrow{d}N(0,1).
\end{align*}
\end{theorem}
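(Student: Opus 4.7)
The plan is to start from the four-term decomposition proved in Lemma \ref{decomp pmpx},
\begin{equation*}
\widehat{\rho(y,x)}-\rho(y,x)=BD\tilde\Xi_n+LD\tilde\Xi_n+RD\tilde\Xi_n+R\tilde\Xi_n,
\end{equation*}
multiply both sides by $n^{1/2}\tilde\Omega^{-1/2}(y,x,w^*,h_n)$, and show that only the rescaled linear piece $n^{1/2}\tilde\Omega^{-1/2}LD\tilde\Xi_n$ retains a nondegenerate limit while each of the other three rescaled terms is $o_p(1)$. Since Lemma \ref{Omegaprop Xi}(ii) already supplies $n^{1/2}\tilde\Omega^{-1/2}LD\tilde\Xi_n\xrightarrow{d}N(0,1)$, Slutsky's theorem will then deliver the desired $N(0,1)$ limit.

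Two of the three asymptotically negligible pieces, $BD\tilde\Xi_n$ and $RD\tilde\Xi_n$, are exactly the quantities whose rescaled absolute values are forced to $o_p(1)$ by Assumption \ref{n^1/2Omega^-1/2RDXi}, so they can be invoked directly. The only remaining ingredient is the Fr\'echet-type nonlinear functional remainder $R\tilde\Xi_n$ controlled in Lemma \ref{file7}. For this term I would take $f$ to be the true joint density and $h=\hat g_{0,0,0}(\cdot,h_n)-f$ in the $R\tilde\Xi$ bound of that lemma to obtain, uniformly on $\mathbb{S}_\tau$,
\begin{equation*}
|R\tilde\Xi_n|\leq \frac{e_2}{\tau^3}\|h\|_\infty^2+\frac{e_2}{\tau^3}\|h\|_\infty\|\partial_{w^*}h\|_\infty+e_2\|h\|_\infty\|\partial_x h\|_\infty+\frac{e_2}{\tau^3}\|\partial_{w^*}h\|_\infty^2,
\end{equation*}
and then control the sup-norm factors by applying Corollary \ref{supg} to $\hat g_{0,0,0}$, $\hat g_{0,0,1}$, and $\hat g_{1,0,0}$, which supplies uniform rates of the form $O_p(\epsilon_{n,\lambda_1})+O_p(\tilde\epsilon_{n,\lambda_1,\lambda_{2,1},\lambda_{2,2}})$ with the appropriate indices. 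The claim reduces to verifying
\begin{equation*}
n^{1/2}\tilde\Omega^{-1/2}(y,x,w^*,h_n)\,|R\tilde\Xi_n|\xrightarrow{p}0,
\end{equation*}
which is a squared-rate strengthening of Assumption \ref{n^1/2Omega^-1/2BR}: every product $\epsilon\cdot\tilde\epsilon$ or $\tilde\epsilon^2$ arising on the right-hand side of the preceding bound is strictly of smaller order than any of the individual rates $\tilde\epsilon_{n,\cdot,\cdot,\cdot}$ that Assumption \ref{n^1/2Omega^-1/2BR} already renders negligible against $\tilde\Omega^{1/2}/\sqrt{n}$.

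The main obstacle, and the reason the paper keeps Assumptions \ref{n^1/2Omega^-1/2BR} and \ref{n^1/2Omega^-1/2RDXi} at a high level, lies in comparing the scalar normalization $\tilde\Omega$ to the component variances $\Omega_{\lambda_1,\lambda_{2,1},\lambda_{2,2}}$ on which Corollary \ref{supg} and Assumption \ref{n^1/2Omega^-1/2BR} are calibrated. Since $\tilde\Omega$ is the variance of $\hat E[lD\tilde\Xi_n]$ and $lD\tilde\Xi_n$ is, by construction in Lemma \ref{decomp pmpx}, a linear combination of integrals and derivatives of $l_{\lambda_1,\lambda_{2,1},\lambda_{2,2}}$ with data-independent weights $\tilde s_k(y,x,w^*)$ that are bounded on $\mathbb{S}_\tau$ by Assumption \ref{cpctcontdiff} and $\tau>0$, the comparison is routine in principle but requires a genuine lower bound on $\tilde\Omega$ of the same order as the dominant component $\Omega$, parallel to but more delicate than the upper bound in Theorem \ref{Omegaprop}. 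Once such a lower bound is granted, the quadratic-in-rate control on $R\tilde\Xi_n$ is dominated by the linear-in-rate hypotheses of Assumption \ref{n^1/2Omega^-1/2BR}, the three rescaled error pieces other than $LD\tilde\Xi_n$ are $o_p(1)$, and Slutsky's theorem completes the proof.
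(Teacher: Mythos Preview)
The paper does not supply an explicit proof of Theorem~\ref{clt}; it is presented as an immediate consequence of the decomposition in Lemma~\ref{decomp pmpx}, the central limit theorem for the linear piece in Lemma~\ref{Omegaprop Xi}(ii), and the high-level Assumptions~\ref{n^1/2Omega^-1/2BR} and~\ref{n^1/2Omega^-1/2RDXi}, combined via Slutsky. Your skeleton---multiply the four-term decomposition by $n^{1/2}\tilde\Omega^{-1/2}$, invoke Lemma~\ref{Omegaprop Xi}(ii) for $LD\tilde\Xi_n$, and kill the rest---is exactly this implicit argument.

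You are more careful than the paper in one respect: you notice that Assumption~\ref{n^1/2Omega^-1/2RDXi} as written covers only $BD\tilde\Xi_n$ and $RD\tilde\Xi_n$, leaving the Fr\'echet remainder $R\tilde\Xi_n$ unaccounted for. This is a genuine omission in the paper's stated hypotheses. Your proposed route---bound $R\tilde\Xi_n$ via Lemma~\ref{file7} and control the sup-norms with Corollary~\ref{supg}---is the natural one, and your diagnosis that it ultimately requires a \emph{lower} bound on $\tilde\Omega$ comparable to the dominant component $\Omega_{\lambda_1,\lambda_{2,1},\lambda_{2,2}}$ is correct. That lower bound is nowhere supplied in the paper, so your argument cannot be closed from the paper's results alone; it would need an additional primitive condition of the kind the author flags as future work.

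A simpler fix, fully in the spirit of the paper's high-level approach, is to absorb $R\tilde\Xi_n$ into Assumption~\ref{n^1/2Omega^-1/2RDXi} itself: assume in addition that $n^{1/2}\tilde\Omega^{-1/2}(y,x,w^*,h_n)\,|R\tilde\Xi_n|\xrightarrow{p}0$. With that amendment the theorem follows in one line by Slutsky, and the paper's implicit proof becomes complete. Your more ambitious attempt to \emph{derive} negligibility of $R\tilde\Xi_n$ from the other assumptions is admirable but not what the paper does, and it cannot be finished without the missing lower bound on $\tilde\Omega$.
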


\bibliographystyle{econ}
\bibliography{newproject}
\end{document}